\newcommand{\floor}[1]{\lfloor {#1} \rfloor}
\pgfplotsset{compat=newest}
\newtheorem*{theorem}{Theorem}
\newtheorem*{remark}{Remark}
\colorlet{punct}{red!60!black}
\definecolor{background}{HTML}{EEEEEE}
\definecolor{delim}{RGB}{20,105,176}
\colorlet{numb}{magenta!60!black}
\definecolor{dkgreen}{rgb}{0,0.6,0}
\definecolor{gray}{rgb}{0.5,0.5,0.5}
\definecolor{mauve}{rgb}{0.58,0,0.82}
\definecolor{ion}{RGB}{253, 202, 2}
\newcommand{\makelogo}{%
	\centering
        \phantom{cos}\\\vspace{4cm}
	\textsc{University of Naples Federico II}\\
	\textsc{University of Camerino}\\
	\textsc{National Research Council of Italy}
	\\[1.1cm]
	\includegraphics[width=0.2\textwidth]{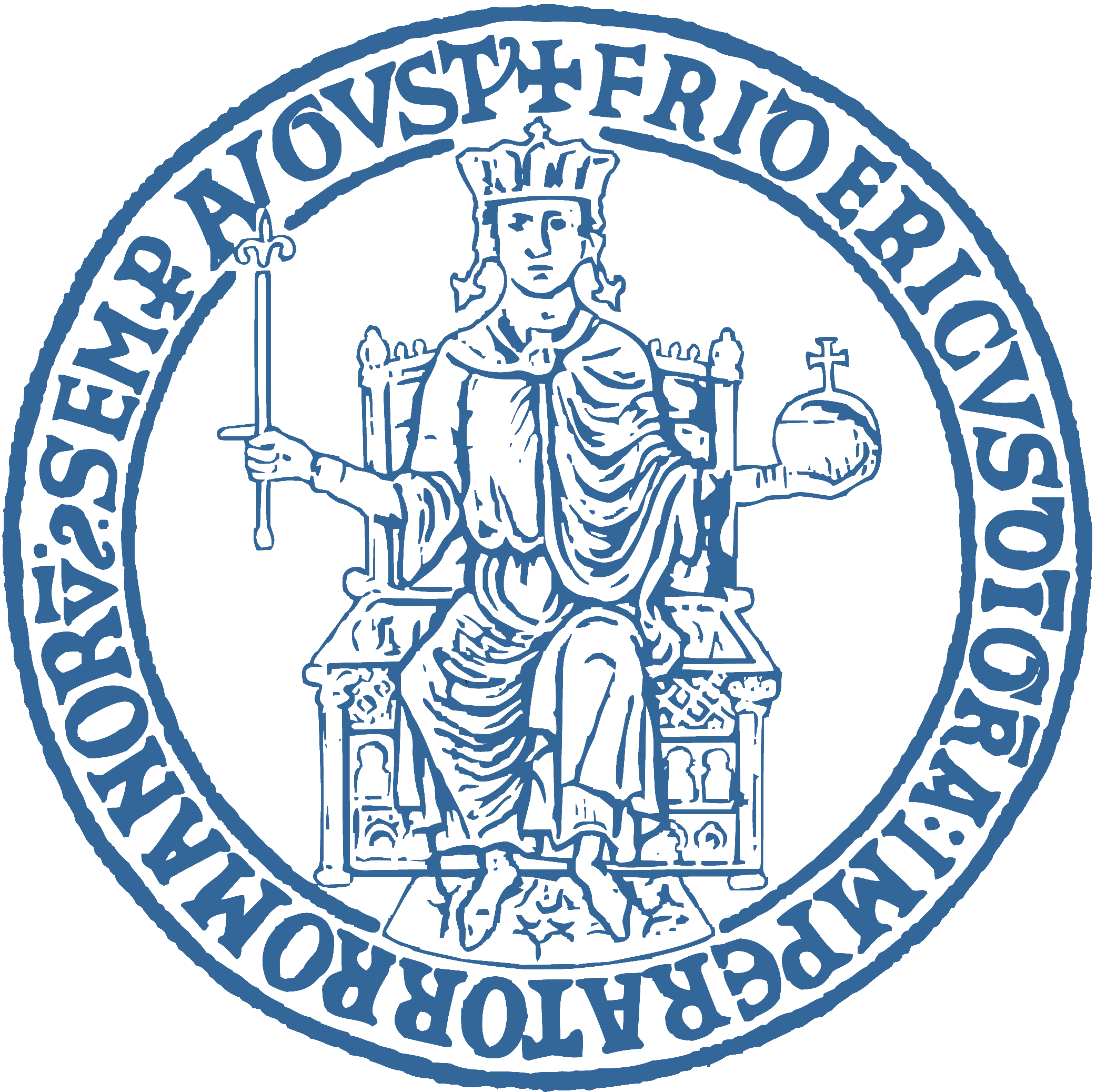}\hspace{1cm} 
	\includegraphics[width=0.155\textwidth]{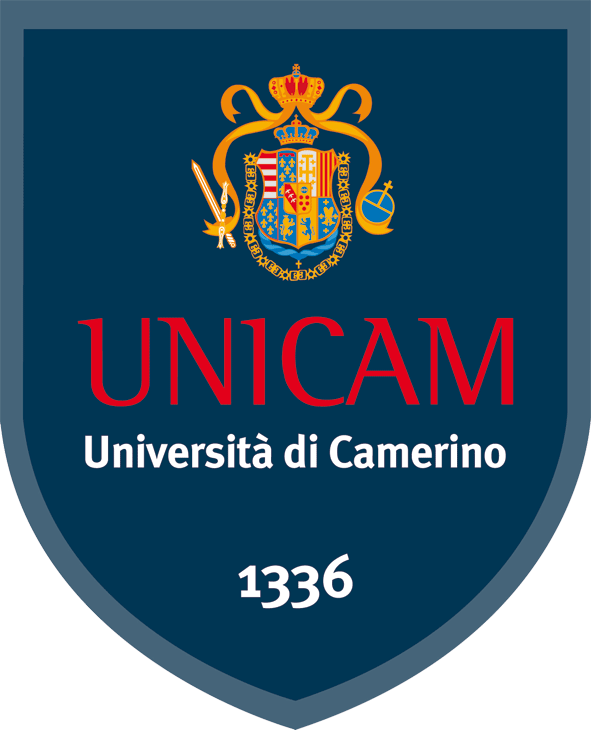}\hspace{0.8cm}
	\includegraphics[width=0.215\textwidth]{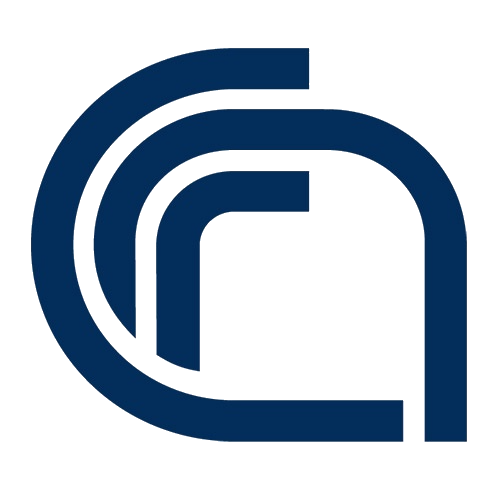}
	\\[1.1cm]
	\textsc{Doctor of Philosophy in Quantum Technologies}
	\\[4cm]
	\textbf{\href{http://www.fedoatd.unina.it/961/}{\large{Architectures and circuits for distributed quantum computing}}}
	\\[1cm]
	\textbf{\href{https://linktr.ee/daniele.cuomo}{\large{Daniele Cuomo}}}
        \vfill
	\par
}
\begin{document}
	\onehalfspacing
	\author{Daniele Cuomo}
	\title{Architectures and circuits for distributed quantum computing}

	{\frontmatter
	\let\cleardoublepage\clearpage
    \makelogo
    \thispagestyle{empty}

\null\vfill

\section*{\centering\large Abstract}
\pagenumbering{roman}
\begin{flushleft}
This thesis treats networks providing \textit{quantum computation} based on \textit{distributed paradigms}.
Compared to architectures relying on one processor, a network promises to be more \textit{scalable} and less \textit{fault-prone}.

Developing a distributed system able to provide practical quantum computation comes with many challenges, each of which need to be faced with careful analysis in order to create a massive integration of several components properly engineered.

In accordance with hardware technologies, currently under construction around the globe, \textit{telegates} represent the fundamental inter-processor operations. Each telegate consists of several tasks: i) entanglement generation and distribution, ii) local operations, and iii) classical communications. Entanglement generation and distribution is an expensive resource, as it is time-consuming.

The main contribution of this thesis is on the definition of \textit{compilers} that minimize the impact of telegates on the overall \textit{fidelity}. Specifically, we give rigorous formulations of the subject problem, allowing us to identify the inter-dependence between computation and communication. With the support of some of the best tools for reasoning -- i.e. network optimization, circuit manipulation, group theory and \texttt{ZX}-calculus -- we found new perspectives on the way a distributed quantum computing system should evolve.
\end{flushleft}

\vfill\null
            
		\tableofcontents
            
		\mainmatter
	}
	\let\cleardoublepage\clearpage{
            \pagenumbering{arabic}
		
\begin{refsection}
\chapter{Introduction}
\label{ch:techs}
\thispagestyle{empty}
\newpage

Distributed quantum computing is one of the most appealing applications in the panorama of quantum technologies. In fact, distributed architectures could be our bridge to step beyond the current NISQ era \cite{Preskill2018, cuomo2020towards, gibney2019quantum, gambetta2020ibm, CacCalTaf-19,VanDev-16}. This explains the wide interest for a large-scale integration of quantum technologies. By inter-connecting spatially distributed quantum processors, we would achieve a scalable architecture resistant to noise. The general trend \cite{gambetta2020ibm, eddins2022doubling, liu2022design, gold2021entanglement, kielpinski2002architecture, stephenson2020high} shows a common belief in distributed \textcolor{black}{(and quasi-distributed, or multi-core)} architectures as physical substrate, allowing a modular and horizontal scale-up of computing resources, rather than relying on vertical scale-up, coming from single hardware advancements.
On the flip side, by linking distributed quantum processors, several new challenges arise \cite{Kim-08,PirBra-16,DurLamHeu-17,WehElkHan-18,Cas-18,cuomo2020towards}.

Being up-to-date with technologies currently under construction around the globe is a mandatory step to create a realistic system. We hence begin -- in Sec. \ref{sec:tech} -- by reviewing some of the most promising hardware technologies. This provide us (and the reader) with a realistic perspective of the fundamental components belonging a distributed quantum computing system. Once done that we propose -- in Sec. \ref{sec:stack} -- a full-stack development meant to be modular and prone to future changes. 
 The stack is indeed already an extension of one of our first proposal, available in \cite{cuomo2020towards}.

In Ch. \ref{ch:essentials}, we abstract from hardware technologies, by introducing the reader to the fundamental tools concerning distributed computing paradigms. We use the most affirmed language to express quantum computation, i.e. the standard quantum circuit model. We start with some basics on unitary \textit{synthesis} and \textit{decomposition}, important also for local computational. We then extend the subject to work on distributed architectures.

For the sake of completeness, In Ch. \ref{ch:noise} we provide a framework concerning quantum noise. Handling noise is probably, to date, the hardest challenge we are facing, as any model struggle to be scalable for hardware. Hence the aim of this chapter is to give a perspective of the magnitude of the problem, which -- sooner or later -- will be part of a distributed computing system. The framework is strengthen by experimental results.

We conclude the thesis with what is our main contribution to research. Thanks to the knowledge gathered by us throughout the aforementioned chapters, we could minutely investigate one essential component for any system implementation of practical value. Such a component is commonly referred as \textit{compiler}, which we formulate with mathematical rigor in Ch. \ref{ch:compile}.
With the support of some of the best tools for reasoning -- i.e. network optimization, circuit manipulation, group theory and \texttt{ZX}-calculus -- we managed to give a first complete model for the compilation problem on distributed architectures. Every feature characterizing the problem is treated minutely. We could separate the problem in several parts, each of which is tackled with a dedicated optimizer.

We also use our model as benchmarker for different network topologies. It is indeed important, in the development of a practical system, that the project follows a \textit{co-design} line, where each component is designed to fit at its best with the other components.

\newpage
\section{Technologies for distributed quantum computing}
\subsection{Stationary-flying transduction}
\label{sec:tech}

Qubit-qubit interaction generally works by means of some \textit{transducer}. A transducer can be seen as a physical interface ``converting quantum signals from one form of energy to another'' \cite{lauk2020perspectives}. It is especially true, in a distributed setting, that a transducer is able to move an information stored into some \textit{stationary qubit} -- e.g. a trapped-ion, a transmon or a quantum dot -- into some flying object, usually photons. A photon is therefore an information carrier or medium, able to cover a long distance. Therefore, the medium can be used to make distant qubits interact.

The ability to engineer efficient transducers allows us to rethink at quantum architectures as to be scalable and modular. Depending on the transducer \cite{schutz2017universal,lauk2020perspectives,brubaker2022optomechanical,mckenna2020cryogenic,zeuthen2020figures,kielpinski2002architecture,fang2014one,ZhoWanZou-20,KraRanHam-21,gold2021entanglement}, different kinds of distributed architecture arise. For the sake of understanding qubit-qubit interaction in a distributed setting, we now consider distributed ion-traps architectures.

Scaling up a single ion-trap is challenging \cite{moody20222022}. On the other hand, they represent a promising technology for integration within a distributed architecture, as a result of high gate fidelity \cite{harty2014high,ballance2014high} and long life-time \cite{wehner2018quantum,wang2021single}. In what comes next, we consider a cavity-based integration.

\subsubsection{Cavities and photon emission}
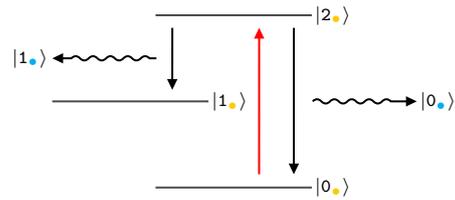
\begin{wrapfigure}{r}{7cm}
        \centering
        \tikzset{every picture/.style={line width=0.75pt}} 

\begin{tikzpicture}[x=0.65pt,y=0.65pt,yscale=-1,xscale=1]

\draw [color={rgb, 255:red, 74; green, 74; blue, 74 }  ,draw opacity=1 ][line width=0.75]    (210.33,300) -- (300.33,300) ;
\draw [color={rgb, 255:red, 74; green, 74; blue, 74 }  ,draw opacity=1 ][line width=0.75]    (210.33,200) -- (300.33,200) ;
\draw [color={rgb, 255:red, 74; green, 74; blue, 74 }  ,draw opacity=1 ][line width=0.75]    (150.83,250) -- (240.83,250) ;
\draw    (220,207.33) -- (219.97,240.33) ;
\draw [shift={(219.97,243.33)}, rotate = 270.05] [fill={rgb, 255:red, 0; green, 0; blue, 0 }  ][line width=0.08]  [draw opacity=0] (6.25,-3) -- (0,0) -- (6.25,3) -- cycle    ;
\draw [color={rgb, 255:red, 255; green, 0; blue, 0 }  ,draw opacity=1 ]   (270,210.33) -- (269.97,292.33) ;
\draw [shift={(270,207.33)}, rotate = 90.02] [fill={rgb, 255:red, 255; green, 0; blue, 0 }  ,fill opacity=1 ][line width=0.08]  [draw opacity=0] (6.25,-3) -- (0,0) -- (6.25,3) -- cycle    ;
\draw    (290,207.33) -- (289.97,289.33) ;
\draw [shift={(289.97,292.33)}, rotate = 270.02] [fill={rgb, 255:red, 0; green, 0; blue, 0 }  ][line width=0.08]  [draw opacity=0] (6.25,-3) -- (0,0) -- (6.25,3) -- cycle    ;
\draw    (300.83,250) .. controls (302.5,248.33) and (304.16,248.33) .. (305.83,250) .. controls (307.5,251.67) and (309.16,251.67) .. (310.83,250) .. controls (312.5,248.33) and (314.16,248.33) .. (315.83,250) .. controls (317.5,251.67) and (319.16,251.67) .. (320.83,250) .. controls (322.5,248.33) and (324.16,248.33) .. (325.83,250) .. controls (327.5,251.67) and (329.16,251.67) .. (330.83,250) .. controls (332.5,248.33) and (334.16,248.33) .. (335.83,250) .. controls (337.5,251.67) and (339.16,251.67) .. (340.83,250) .. controls (342.5,248.33) and (344.16,248.33) .. (345.83,250) -- (349.83,250) -- (357.83,250) ;
\draw [shift={(360.83,250)}, rotate = 180] [fill={rgb, 255:red, 0; green, 0; blue, 0 }  ][line width=0.08]  [draw opacity=0] (6.25,-3) -- (0,0) -- (6.25,3) -- cycle    ;
\draw    (153.83,225) -- (161.83,225) .. controls (163.5,223.33) and (165.16,223.33) .. (166.83,225) .. controls (168.5,226.67) and (170.16,226.67) .. (171.83,225) .. controls (173.5,223.33) and (175.16,223.33) .. (176.83,225) .. controls (178.5,226.67) and (180.16,226.67) .. (181.83,225) .. controls (183.5,223.33) and (185.16,223.33) .. (186.83,225) .. controls (188.5,226.67) and (190.16,226.67) .. (191.83,225) .. controls (193.5,223.33) and (195.16,223.33) .. (196.83,225) .. controls (198.5,226.67) and (200.16,226.67) .. (201.83,225) .. controls (203.5,223.33) and (205.16,223.33) .. (206.83,225) -- (210.83,225) -- (210.83,225) ;
\draw [shift={(150.83,225)}, rotate = 0] [fill={rgb, 255:red, 0; green, 0; blue, 0 }  ][line width=0.08]  [draw opacity=0] (6.25,-3) -- (0,0) -- (6.25,3) -- cycle    ;

\draw (300.33,300) node [anchor=west] [inner sep=0.75pt]    {$_{|\texttt{0}_{\textcolor{ion}{\bullet}}\rangle}$};
\draw (300.33,200) node [anchor=west] [inner sep=0.75pt]    {$_{|\texttt{2}_{\textcolor{ion}{\bullet}}\rangle}$};
\draw (240.83,250) node [anchor=west] [inner sep=0.75pt]    {$_{|\texttt{1}_{\textcolor{ion}{\bullet}}\rangle}$};
\draw (150.83,225) node [anchor=east] [inner sep=0.75pt]    {$_{|\texttt{1}_{\textcolor{cyan}{\bullet}}\rangle}$};
\draw (360.83,250) node [anchor=west] [inner sep=0.75pt]    {$_{|\texttt{0}_{\textcolor{cyan}{\bullet}}\rangle}$};

\end{tikzpicture}
        \caption{Simplified energy level structure of an ion ($\textcolor{ion}{\bullet}$) and relative photon emission ($\textcolor{cyan}{\bullet}$).}
        \label{fig:3-levels}
        \hrulefill
    \end{wrapfigure}
Considering the scenario of qubits stored on different processors, to couple them, the physical setting needs to \textit{scatter} quantum information outside a processor and reach the other one. This can be done by means of a single photon, canalized within an optical fiber.

In order to achieve such a configuration, we here consider ions able to be modeled as a three-levels system -- see Fig. \ref{fig:3-levels}. Such a system depicts the experimental set-up proposed in \cite{monroe2014large}\footnote{The interested reader can find other settings at Refs. \cite{gao2021optimisation,salmon2022gauge}.}. The specifics of the system comes from the \textit{ion species} selected to encode quantum information \cite{bruzewicz2019trapped}.
By placing such an ion within a \textit{cavity}, this creates an ion-cavity system, where now the ion interact with the {cavity mode}. The cavity has the role of collecting and scatter outside the system the photon emitted by the ion. Fig. \ref{fig:cavity} shows a pictorial representation of the ion-cavity able to do so.

    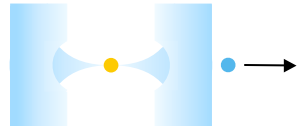
\begin{wrapfigure}{r}{7cm}
    \centering
  
\tikzset {_m63vhi5g5/.code = {\pgfsetadditionalshadetransform{ \pgftransformshift{\pgfpoint{0 bp } { 0 bp }  }  \pgftransformrotate{0 }  \pgftransformscale{2 }  }}}
\pgfdeclarehorizontalshading{_tutiilkno}{150bp}{rgb(0bp)=(0.63,0.86,1);
rgb(40bp)=(0.63,0.86,1);
rgb(49.25bp)=(0.8,0.92,1);
rgb(62.5bp)=(0.94,0.98,1);
rgb(100bp)=(0.94,0.98,1)}

  
\tikzset {_uio1of5eq/.code = {\pgfsetadditionalshadetransform{ \pgftransformshift{\pgfpoint{0 bp } { 0 bp }  }  \pgftransformrotate{0 }  \pgftransformscale{2 }  }}}
\pgfdeclarehorizontalshading{_j7mwgzsga}{150bp}{rgb(0bp)=(0.63,0.86,1);
rgb(40bp)=(0.63,0.86,1);
rgb(49.25bp)=(0.8,0.92,1);
rgb(62.5bp)=(0.94,0.98,1);
rgb(100bp)=(0.94,0.98,1)}

  
\tikzset {_l792xrlh4/.code = {\pgfsetadditionalshadetransform{ \pgftransformshift{\pgfpoint{0 bp } { 0 bp }  }  \pgftransformrotate{0 }  \pgftransformscale{2 }  }}}
\pgfdeclarehorizontalshading{_s9c60vty4}{150bp}{rgb(0bp)=(0.63,0.86,1);
rgb(40bp)=(0.63,0.86,1);
rgb(49.25bp)=(0.8,0.92,1);
rgb(62.5bp)=(0.94,0.98,1);
rgb(100bp)=(0.94,0.98,1)}

  
\tikzset {_j48dzssom/.code = {\pgfsetadditionalshadetransform{ \pgftransformshift{\pgfpoint{0 bp } { 0 bp }  }  \pgftransformrotate{0 }  \pgftransformscale{2 }  }}}
\pgfdeclarehorizontalshading{_hfeyozird}{150bp}{rgb(0bp)=(0.63,0.86,1);
rgb(40bp)=(0.63,0.86,1);
rgb(49.25bp)=(0.8,0.92,1);
rgb(62.5bp)=(0.94,0.98,1);
rgb(100bp)=(0.94,0.98,1)}

  
\tikzset {_27bb4olq4/.code = {\pgfsetadditionalshadetransform{ \pgftransformshift{\pgfpoint{0 bp } { 0 bp }  }  \pgftransformrotate{0 }  \pgftransformscale{2 }  }}}
\pgfdeclarehorizontalshading{_in77jy1o9}{150bp}{rgb(0bp)=(0.94,0.98,1);
rgb(37.5bp)=(0.94,0.98,1);
rgb(49.25bp)=(0.8,0.92,1);
rgb(62.5bp)=(0.63,0.86,1);
rgb(100bp)=(0.63,0.86,1)}

  
\tikzset {_elp34n4bv/.code = {\pgfsetadditionalshadetransform{ \pgftransformshift{\pgfpoint{0 bp } { 0 bp }  }  \pgftransformrotate{0 }  \pgftransformscale{2 }  }}}
\pgfdeclarehorizontalshading{_5fm5dxd46}{150bp}{rgb(0bp)=(0.94,0.98,1);
rgb(37.5bp)=(0.94,0.98,1);
rgb(49.25bp)=(0.8,0.92,1);
rgb(62.5bp)=(0.63,0.86,1);
rgb(100bp)=(0.63,0.86,1)}

  
\tikzset {_1wo9vzrg3/.code = {\pgfsetadditionalshadetransform{ \pgftransformshift{\pgfpoint{0 bp } { 0 bp }  }  \pgftransformrotate{0 }  \pgftransformscale{2 }  }}}
\pgfdeclarehorizontalshading{_likqzrrjj}{150bp}{rgb(0bp)=(0.94,0.98,1);
rgb(37.5bp)=(0.94,0.98,1);
rgb(49.25bp)=(0.8,0.92,1);
rgb(62.5bp)=(0.63,0.86,1);
rgb(100bp)=(0.63,0.86,1)}

  
\tikzset {_fleysy0s0/.code = {\pgfsetadditionalshadetransform{ \pgftransformshift{\pgfpoint{0 bp } { 0 bp }  }  \pgftransformrotate{0 }  \pgftransformscale{2 }  }}}
\pgfdeclarehorizontalshading{_gzmijusxk}{150bp}{rgb(0bp)=(0.94,0.98,1);
rgb(37.5bp)=(0.94,0.98,1);
rgb(49.25bp)=(0.8,0.92,1);
rgb(62.5bp)=(0.63,0.86,1);
rgb(100bp)=(0.63,0.86,1)}
\tikzset{every picture/.style={line width=0.75pt}} 

\begin{tikzpicture}[x=0.75pt,y=0.75pt,yscale=-1,xscale=1]

\draw [color={rgb, 255:red, 0; green, 0; blue, 0 }  ,draw opacity=1 ][line width=0.75]    (596.25,259.73) -- (616.65,259.93) ;
\draw  [draw opacity=0][shading=_tutiilkno,_m63vhi5g5] (505.15,251.82) .. controls (505.15,251.82) and (505.15,251.82) .. (505.15,251.82) .. controls (500.65,247.33) and (493.31,247.39) .. (488.74,251.96) .. controls (484.17,256.53) and (484.11,263.87) .. (488.6,268.37) .. controls (493.09,272.86) and (500.44,272.79) .. (505.01,268.22) .. controls (510.52,262.71) and (518.74,259.9) .. (529.68,259.81) .. controls (518.74,259.9) and (510.57,257.24) .. (505.15,251.82) -- cycle ;
\draw  [draw opacity=0][shading=_j7mwgzsga,_uio1of5eq] (498.05,228.74) .. controls (494.9,230.35) and (491.94,232.47) .. (489.31,235.11) .. controls (475.64,248.78) and (475.79,271.08) .. (489.63,284.93) .. controls (492.14,287.43) and (494.93,289.49) .. (497.89,291.1) -- (507.96,272.2) .. controls (506.82,271.56) and (505.74,270.76) .. (504.77,269.78) .. controls (499.28,264.3) and (499.14,255.55) .. (504.45,250.25) .. controls (505.49,249.2) and (506.67,248.37) .. (507.92,247.75) -- cycle ;
\draw  [draw opacity=0][shading=_s9c60vty4,_l792xrlh4] (479.41,290.62) -- (496.21,290.62) -- (496.21,247.42) -- (507.92,247.42) -- (507.92,228.82) -- (479.41,228.82) -- cycle ;
\draw  [draw opacity=0][shading=_hfeyozird,_j48dzssom] (479.41,230.1) -- (496.21,230.1) -- (496.21,273.3) -- (507.92,273.3) -- (507.92,291.1) -- (479.41,291.1) -- cycle ;
\draw  [draw opacity=0][shading=_in77jy1o9,_27bb4olq4] (554.32,251.84) .. controls (554.32,251.84) and (554.32,251.84) .. (554.32,251.84) .. controls (554.32,251.84) and (554.32,251.84) .. (554.32,251.84) .. controls (558.81,247.35) and (566.15,247.41) .. (570.72,251.98) .. controls (575.29,256.55) and (575.35,263.89) .. (570.86,268.39) .. controls (566.37,272.88) and (559.03,272.82) .. (554.46,268.25) .. controls (548.95,262.74) and (540.72,259.93) .. (529.79,259.84) .. controls (540.72,259.93) and (548.9,257.26) .. (554.32,251.84) -- cycle ;
\draw  [draw opacity=0][fill={rgb, 255:red, 253; green, 202; blue, 2 }  ,fill opacity=1 ] (533.46,259.84) .. controls (533.46,261.87) and (531.82,263.51) .. (529.79,263.51) .. controls (527.76,263.51) and (526.11,261.87) .. (526.11,259.84) .. controls (526.11,257.81) and (527.76,256.16) .. (529.79,256.16) .. controls (531.82,256.16) and (533.46,257.81) .. (533.46,259.84) -- cycle ;
\draw  [draw opacity=0][shading=_5fm5dxd46,_elp34n4bv] (561.41,228.77) .. controls (564.56,230.38) and (567.51,232.5) .. (570.15,235.13) .. controls (583.81,248.79) and (583.67,271.1) .. (569.83,284.94) .. controls (567.25,287.52) and (564.38,289.63) .. (561.32,291.25) -- (551.52,272.5) .. controls (552.72,271.84) and (553.86,271) .. (554.88,269.98) .. controls (560.46,264.4) and (560.6,255.49) .. (555.19,250.09) .. controls (554.13,249.03) and (552.94,248.18) .. (551.66,247.55) -- cycle ;
\draw  [draw opacity=0][shading=_likqzrrjj,_1wo9vzrg3] (580.06,290.64) -- (563.26,290.64) -- (563.26,247.44) -- (551.54,247.44) -- (551.54,228.84) -- (580.06,228.84) -- cycle ;
\draw  [draw opacity=0][shading=_gzmijusxk,_fleysy0s0] (580.04,229.3) -- (563.24,229.3) -- (563.24,272.5) -- (551.52,272.5) -- (551.52,291) -- (580.04,291) -- cycle ;
\draw  [draw opacity=0][fill={rgb, 255:red, 0; green, 0; blue, 0 }  ,fill opacity=1 ] (616.03,256.64) -- (616.03,256.64) -- (622.37,260.06) -- (616.03,263.48) -- (616.03,263.48) -- cycle ;
\draw  [draw opacity=0][fill={rgb, 255:red, 83; green, 183; blue, 236 }  ,fill opacity=1 ] (590.85,257.13) .. controls (592.28,258.56) and (592.28,260.89) .. (590.85,262.32) .. controls (589.41,263.76) and (587.09,263.76) .. (585.65,262.32) .. controls (584.22,260.89) and (584.22,258.56) .. (585.65,257.13) .. controls (587.09,255.69) and (589.41,255.69) .. (590.85,257.13) -- cycle ;

\end{tikzpicture}
    \caption{Exemplary representation of an ion-cavity system emitting a photon. For technical details the reader may start from \cite{kaya2022cavity}.}
    \label{fig:cavity}
    \hrulefill
\end{wrapfigure}
The first step taking place is the excitation of the ion $\ket{\texttt{0}_{\textcolor{ion}{\bullet}}}\rightarrow \ket{\texttt{2}_{\textcolor{ion}{\bullet}}}$ -- i.e. the red arrow in Fig. \ref{fig:3-levels}. Ideally, a spontaneous decay of the ion brings its energy with equal probabilities to one of the two lowest (and computationally relevant) states -- i.e. $\ket{\texttt{0}}_{\textcolor{ion}{\bullet}}$ and $\ket{\texttt{1}_{\textcolor{ion}{\bullet}}}$. Furthermore, this happens with the emission of a photon which is coherent with the state of the ion. 


As we anticipated, the scattered photon can be canalized within an optical fiber; the final configuration of the ion-fiber system is in the superposition $\sfrac{1}{\sqrt{2}}(\ket{\texttt{0}_{\textcolor{ion}{\bullet}}\texttt{0}_{\textcolor{cyan}{\bullet}}} + \ket{\texttt{1}_{\textcolor{ion}{\bullet}}\texttt{1}_{\textcolor{cyan}{\bullet}}})$.



A pictorial representation of a single node of the distributed architecture is shown in Fig. \ref{fig:trap-cavity}. The cavity is pointing at one of the ions, which is coloured differently as an ion-trap may be composed by different ion species\footnote{Which brings to the classification in \textit{communication} and \textit{computation} (or data) qubits \cite{cuomo2021optimized,CacCalTaf-19,CalCac-20,cuomo2020towards}.}, depending on whether it is meant to perform computation or communication.


To achieve the \textit{non-local coupling} we need to consider two ion-traps generating and distributing the entanglement (at the same time), after which, a protocol called \textit{entanglement swapping} completes the process. A \textit{control system} will take care of accomplishing the task.

\begin{figure}[h]
    \centering
    \includegraphics[scale=0.24]{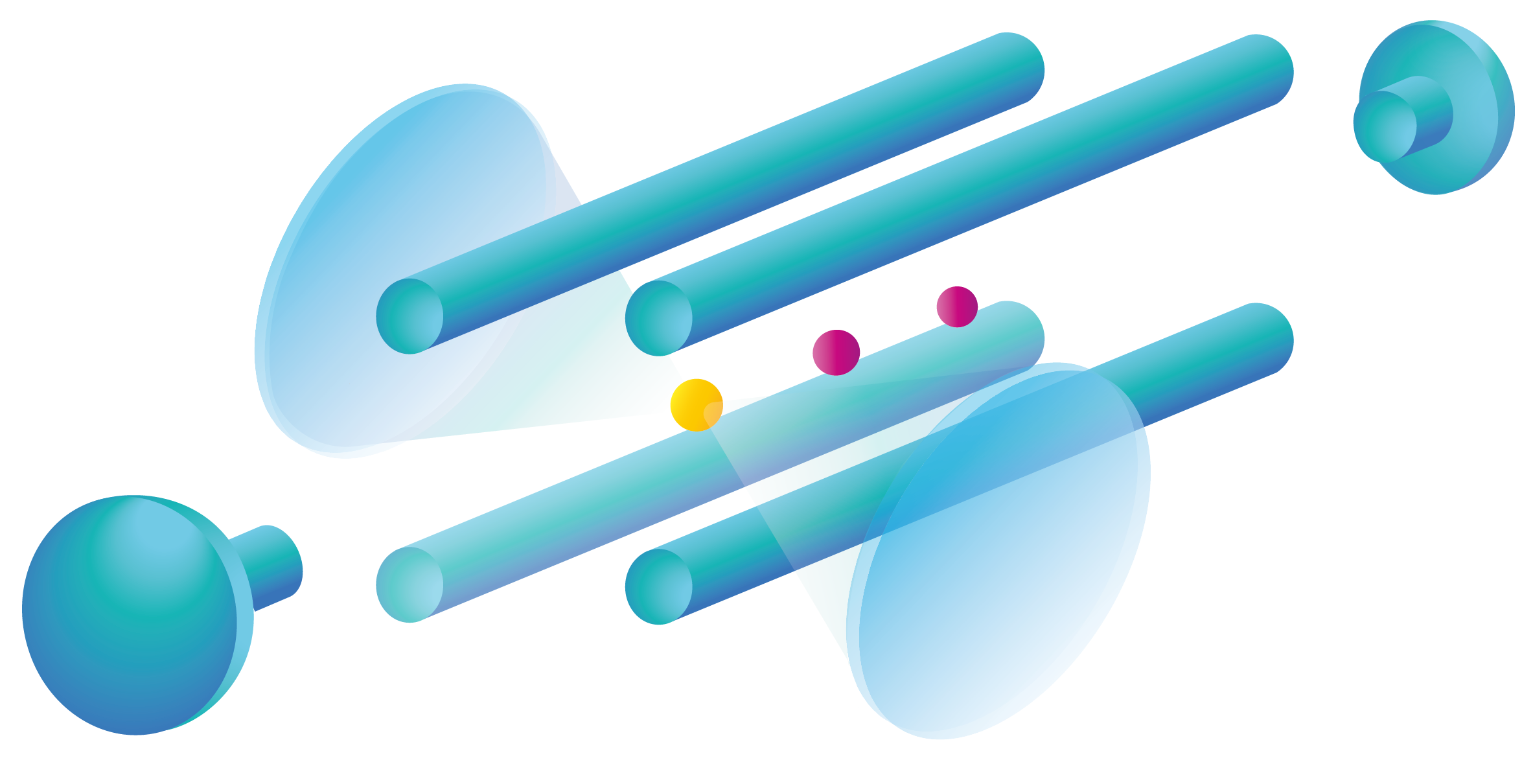}
    \caption{An ion-trap embedded with a cavity pointing at a single ion. Representation inspired by a linear design \cite{bruzewicz2019trapped,paul1990electromagnetic}.}
    \label{fig:trap-cavity}
\end{figure}


\subsection{Control system}
\label{sec:control-sys}
Summing up, to establish entanglement, the ions are simultaneously excited to an electronic state that spontaneously decays, during which a single photon each is emitted whose polarization is entangled with the ion's internal state. These photons are collected into optical fibres using free-space optics and sent to a common \textit{terminal}. 

The terminal take care of detecting the photons by means of a probabilistic \textit{Bell state measurement}. This projects the ions into a maximally entangled state, heralded by the coincident detection of the pair of photons -- see Sec. \ref{sec:analyser} for details. This is commonly referred as \textit{entanglement swapping}.

It is important that the ion-traps are synchronized, so that the photon reach the terminal at the same time\footnote{Without synchronization, one can retrieve the likelihood of each photon source. This makes the photons distinguishable, causing a loss in fidelity \cite{stephenson2019entanglement}.}. Classical synchronization protocol would take care of this by means of a master-clock. An experimental settings is available in Ref. \cite{nadlinger2022experimental}.

Ultimately, to achieve scalability we need to consider the case of several processors. In the most basic scenario, all the processors are \textit{centralized}, in the sense the all of them are wired to a common terminal to perform bell state measurement. Such a setting is the first example of scalable distributed architectures. The problems arising from such a setting is a \textit{scheduling} problem. A multiplexer taking deterministic choices would be enough to ensure that all the processors are carefully scheduled to not create overlaps. An experimental settings, where four processors are scheduled by means of a multiplexer, is available in Ref. \cite{oi2006scalable}.

\subsection{Bell state analyser}
\label{sec:analyser}

To keep the discussion easy we explained the main procedures by means of three-levels systems for the ions. However, each of this state should be split to create several possible configuration. This doesn't change the whole protocol, but it has consequences on the possible outcomes obtained by measuring the photons. Different ion species and kind of measurement lead to different configurations. In general, we distinguish three main outcomes:
\begin{itemize}
    \item One or both photons failed to be detected by the measurement. This means that the whole procedure is basically wasted time, as the nodes need to attempt again from scratch. This possibility can be a serious problem to practical computation, as a low \textit{success rate} leads to long waiting times.
    \item The protocol succeeded and the ions are in one of the four Bell states $\{\ket{\Phi^{\texttt{+}}},\ket{\Phi^{\texttt{-}}},\ket{\Psi^{\texttt{+}}},\ket{\Psi^{\texttt{-}}}\}$.
    \item The protocol partially succeeded. Namely, a superposition between two bell states has been created, e.g. $\ket{\Phi^{\pm}}$ or $\ket{\Psi^{\pm}}$. This scenario may occur for several reasons -- depending also from the employed physical settings -- it can be caused by \textit{dark measurements} \cite{stephenson2019entanglement}, a rare and negligible scenario. Otherwise, it may come, for example, from two \textit{clicks} coming from the same detector.
\end{itemize}

\begin{wrapfigure}{r}{5.1cm}
    \centering
  
\tikzset {_vr3acslw4/.code = {\pgfsetadditionalshadetransform{ \pgftransformshift{\pgfpoint{0 bp } { 0 bp }  }  \pgftransformrotate{0 }  \pgftransformscale{2 }  }}}
\pgfdeclarehorizontalshading{_qwjpneii6}{150bp}{rgb(0bp)=(0.94,0.98,1);
rgb(37.5bp)=(0.94,0.98,1);
rgb(49.25bp)=(0.8,0.92,1);
rgb(62.5bp)=(0.63,0.86,1);
rgb(100bp)=(0.63,0.86,1)}
\tikzset{_c4uexefq8/.code = {\pgfsetadditionalshadetransform{\pgftransformshift{\pgfpoint{0 bp } { 0 bp }  }  \pgftransformrotate{0 }  \pgftransformscale{2 } }}}
\pgfdeclarehorizontalshading{_1d7srl9yt} {150bp} {color(0bp)=(transparent!99);
color(37.5bp)=(transparent!99);
color(49.25bp)=(transparent!50);
color(62.5bp)=(transparent!44.99999999999999);
color(100bp)=(transparent!44.99999999999999) } 
\pgfdeclarefading{_qwlcmtedi}{\tikz \fill[shading=_1d7srl9yt,_c4uexefq8] (0,0) rectangle (50bp,50bp); } 

  
\tikzset {_kqwhemy0x/.code = {\pgfsetadditionalshadetransform{ \pgftransformshift{\pgfpoint{89.1 bp } { -108.9 bp }  }  \pgftransformscale{1.32 }  }}}
\pgfdeclareradialshading{_bgjffjk90}{\pgfpoint{-72bp}{88bp}}{rgb(0bp)=(1,1,1);
rgb(0bp)=(1,1,1);
rgb(25bp)=(0,0,0);
rgb(400bp)=(0,0,0)}

  
\tikzset {_b6z0jvn0g/.code = {\pgfsetadditionalshadetransform{ \pgftransformshift{\pgfpoint{0 bp } { 0 bp }  }  \pgftransformrotate{0 }  \pgftransformscale{2 }  }}}
\pgfdeclarehorizontalshading{_scwsrws83}{150bp}{rgb(0bp)=(0.94,0.98,1);
rgb(37.5bp)=(0.94,0.98,1);
rgb(49.25bp)=(0.8,0.92,1);
rgb(62.5bp)=(0.63,0.86,1);
rgb(100bp)=(0.63,0.86,1)}
\tikzset{_6u8z2oi3f/.code = {\pgfsetadditionalshadetransform{\pgftransformshift{\pgfpoint{0 bp } { 0 bp }  }  \pgftransformrotate{0 }  \pgftransformscale{2 } }}}
\pgfdeclarehorizontalshading{_w6o7t3gc8} {150bp} {color(0bp)=(transparent!99);
color(37.5bp)=(transparent!99);
color(49.25bp)=(transparent!50);
color(62.5bp)=(transparent!44.99999999999999);
color(100bp)=(transparent!44.99999999999999) } 
\pgfdeclarefading{_ga7qz4u6p}{\tikz \fill[shading=_w6o7t3gc8,_6u8z2oi3f] (0,0) rectangle (50bp,50bp); } 

  
\tikzset {_gg3nzbb1r/.code = {\pgfsetadditionalshadetransform{ \pgftransformshift{\pgfpoint{89.1 bp } { -108.9 bp }  }  \pgftransformscale{1.32 }  }}}
\pgfdeclareradialshading{_ydlslgru7}{\pgfpoint{-72bp}{88bp}}{rgb(0bp)=(1,1,1);
rgb(0bp)=(1,1,1);
rgb(25bp)=(0,0,0);
rgb(400bp)=(0,0,0)}

  
\tikzset {_rcaruaq5p/.code = {\pgfsetadditionalshadetransform{ \pgftransformshift{\pgfpoint{89.1 bp } { -108.9 bp }  }  \pgftransformscale{1.32 }  }}}
\pgfdeclareradialshading{_ip4gx3285}{\pgfpoint{-72bp}{88bp}}{rgb(0bp)=(1,1,1);
rgb(0bp)=(1,1,1);
rgb(25bp)=(0,0,0);
rgb(400bp)=(0,0,0)}

  
\tikzset {_0qzuw01gm/.code = {\pgfsetadditionalshadetransform{ \pgftransformshift{\pgfpoint{0 bp } { 0 bp }  }  \pgftransformrotate{0 }  \pgftransformscale{2 }  }}}
\pgfdeclarehorizontalshading{_dr3uqyfwt}{150bp}{rgb(0bp)=(0.94,0.98,1);
rgb(37.5bp)=(0.94,0.98,1);
rgb(49.25bp)=(0.8,0.92,1);
rgb(62.5bp)=(0.63,0.86,1);
rgb(100bp)=(0.63,0.86,1)}
\tikzset{_h6rtsectd/.code = {\pgfsetadditionalshadetransform{\pgftransformshift{\pgfpoint{0 bp } { 0 bp }  }  \pgftransformrotate{0 }  \pgftransformscale{2 } }}}
\pgfdeclarehorizontalshading{_gr5pe2u4w} {150bp} {color(0bp)=(transparent!99);
color(37.5bp)=(transparent!99);
color(49.25bp)=(transparent!50);
color(62.5bp)=(transparent!44.99999999999999);
color(100bp)=(transparent!44.99999999999999) } 
\pgfdeclarefading{_r7x9si0lr}{\tikz \fill[shading=_gr5pe2u4w,_h6rtsectd] (0,0) rectangle (50bp,50bp); } 

  
\tikzset {_v25ihwrd4/.code = {\pgfsetadditionalshadetransform{ \pgftransformshift{\pgfpoint{89.1 bp } { -108.9 bp }  }  \pgftransformscale{1.32 }  }}}
\pgfdeclareradialshading{_bfporumwl}{\pgfpoint{-72bp}{88bp}}{rgb(0bp)=(1,1,1);
rgb(0bp)=(1,1,1);
rgb(25bp)=(0,0,0);
rgb(400bp)=(0,0,0)}
\tikzset{every picture/.style={line width=0.75pt}} 

\begin{tikzpicture}[x=0.75pt,y=0.75pt,yscale=-1,xscale=1]

\draw [color={rgb, 255:red, 0; green, 0; blue, 0 }  ,draw opacity=0.7 ]   (567.39,224.84) -- (519.67,176.77) ;
\draw [color={rgb, 255:red, 0; green, 0; blue, 0 }  ,draw opacity=0.7 ]   (589.07,176.77) -- (610.47,199.01) ;
\draw [color={rgb, 255:red, 0; green, 0; blue, 0 }  ,draw opacity=0.7 ]   (589.07,176.77) -- (611.07,155.02) ;
\draw [color={rgb, 255:red, 0; green, 0; blue, 0 }  ,draw opacity=0.7 ]   (541.39,224.84) -- (589.07,176.77) ;
\path  [shading=_qwjpneii6,_vr3acslw4,path fading= _qwlcmtedi ,fading transform={xshift=2}] (554.29,199.27) -- (566.93,211.89) -- (554.49,224.41) -- (541.85,211.79) -- cycle ; 
 \draw   (554.29,199.27) -- (566.93,211.89) -- (554.49,224.41) -- (541.85,211.79) -- cycle ; 

\path  [shading=_bgjffjk90,_kqwhemy0x] (599.62,154.28) -- (604.78,148.9) .. controls (607.62,145.93) and (612.79,146.29) .. (616.31,149.72) .. controls (619.84,153.15) and (620.39,158.33) .. (617.54,161.31) -- (612.39,166.69) -- cycle ; 
 \draw   (599.62,154.28) -- (604.78,148.9) .. controls (607.62,145.93) and (612.79,146.29) .. (616.31,149.72) .. controls (619.84,153.15) and (620.39,158.33) .. (617.54,161.31) -- (612.39,166.69) -- cycle ; 

\path  [shading=_scwsrws83,_b6z0jvn0g,path fading= _ga7qz4u6p ,fading transform={xshift=2}] (589.02,189.26) -- (601.65,176.8) -- (589.11,164.28) -- (576.49,176.75) -- cycle ; 
 \draw   (589.02,189.26) -- (601.65,176.8) -- (589.11,164.28) -- (576.49,176.75) -- cycle ; 

\draw    (576.49,176.75) -- (601.65,176.8) ;

\path  [shading=_ydlslgru7,_gg3nzbb1r] (598.97,200.36) -- (604.26,205.5) .. controls (607.18,208.34) and (612.29,207.78) .. (615.67,204.25) .. controls (619.05,200.72) and (619.42,195.55) .. (616.5,192.71) -- (611.21,187.57) -- cycle ; 
 \draw   (598.97,200.36) -- (604.26,205.5) .. controls (607.18,208.34) and (612.29,207.78) .. (615.67,204.25) .. controls (619.05,200.72) and (619.42,195.55) .. (616.5,192.71) -- (611.21,187.57) -- cycle ; 

\draw [color={rgb, 255:red, 0; green, 0; blue, 0 }  ,draw opacity=0.7 ]   (519.67,176.77) -- (498.26,199.01) ;
\draw [color={rgb, 255:red, 0; green, 0; blue, 0 }  ,draw opacity=0.7 ]   (519.67,176.77) -- (497.66,155.02) ;
\path  [shading=_ip4gx3285,_rcaruaq5p] (509.26,153.75) -- (503.94,148.58) .. controls (501,145.73) and (495.85,146.26) .. (492.42,149.78) .. controls (488.99,153.29) and (488.59,158.46) .. (491.53,161.31) -- (496.84,166.48) -- cycle ; 
 \draw   (509.26,153.75) -- (503.94,148.58) .. controls (501,145.73) and (495.85,146.26) .. (492.42,149.78) .. controls (488.99,153.29) and (488.59,158.46) .. (491.53,161.31) -- (496.84,166.48) -- cycle ; 

\path  [shading=_dr3uqyfwt,_0qzuw01gm,path fading= _r7x9si0lr ,fading transform={xshift=2}] (519.72,189.34) -- (507.13,176.77) -- (519.62,164.2) -- (532.21,176.77) -- cycle ; 
 \draw   (519.72,189.34) -- (507.13,176.77) -- (519.62,164.2) -- (532.21,176.77) -- cycle ; 

\draw    (532.25,176.75) -- (507.09,176.8) ;
\path  [shading=_bfporumwl,_v25ihwrd4] (509.83,200.24) -- (504.55,205.46) .. controls (501.63,208.34) and (496.48,207.81) .. (493.04,204.29) .. controls (489.61,200.76) and (489.18,195.57) .. (492.1,192.69) -- (497.38,187.47) -- cycle ; 
 \draw   (509.83,200.24) -- (504.55,205.46) .. controls (501.63,208.34) and (496.48,207.81) .. (493.04,204.29) .. controls (489.61,200.76) and (489.18,195.57) .. (492.1,192.69) -- (497.38,187.47) -- cycle ; 

\draw    (554.29,199.27) -- (554.49,224.41) ;

\end{tikzpicture}
    \caption{A common setting for the Bell state measurement \cite{lutkenhaus1999bell}.}
    \label{fig:bsm}
    \hrulefill
\end{wrapfigure}
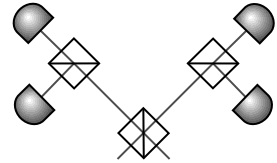
Different Bell measurement settings brings to different sets of heralded entanglements \cite{stephenson2020high,valivarthi2014efficient,lutkenhaus1999bell, mattle1996dense}. We here consider the quite general case of Ref. \cite{lutkenhaus1999bell}, as we think this case may be particularly efficient for distributed computation. A pictorial representation of the setting is reported in Fig. \ref{fig:bsm}. In fact such a configuration brings to a 50\% chance of success -- i.e. two clicks on different detectors -- and 50\% of partial success -- i.e. two clicks on the same detector. Namely, when the protocol succeeds, the final state is in $\{\ket{\Psi^{\texttt{-}}},\ket{\Psi^{\texttt{+}}}\}$, while in case of partial success, the output has an ambiguous phase: $\ket{\Phi^{\pm}}$.


The ideal result coming from performing entanglement generation and distribution followed by entanglement swapping is a maximally entangled state between distant qubits. In practice, this is not achievable as each of the complicated techniques we described are in general not perfect, resulting in a state slightly different from a Bell pair. One can evaluate the final distributed state in terms of \textit{fidelity} with some target Bell state. E.g.,
\begin{equation}
    \mathfrak{f} = \bra{\Phi^{\texttt{+}}}\sigma\ket{\Phi^{\texttt{+}}}.
\end{equation}
Where $\sigma$ is the generated state. For example, consider the experiments reported in Ref. \cite{lutkenhaus1999bell,stephenson2020high}. The author's proposal starts with the generation of a non-maximally entangled state ion-photon
\begin{equation}
    \sqrt{\frac{2}{3}}\ket{\texttt{0}_{\textcolor{ion}{\bullet}}\texttt{0}_{\textcolor{cyan}{\bullet}}} + \sqrt{\frac{1}{3}}\ket{\texttt{1}_{\textcolor{ion}{\bullet}}\texttt{1}_{\textcolor{cyan}{\bullet}}}.
\end{equation}
However, after the collection into the single mode fiber, the state gets projected to the Bell pair
\begin{equation}
    \frac{1}{\sqrt{2}}(\ket{\texttt{0}_{\textcolor{ion}{\bullet}}\texttt{0}_{\textcolor{cyan}{\bullet}}} + \ket{\texttt{1}_{\textcolor{ion}{\bullet}}\texttt{1}_{\textcolor{cyan}{\bullet}}}).
\end{equation}

Once performed the entanglement swap, the ion-ion average fidelity is $0.94$; a promising result. 
Unfortunately, in the perspective of practical computation, the fidelity needs to be some $\mathfrak{f} = 1 - \varepsilon$ with $\varepsilon$ small enough to keep the error rate \textit{manageable}, e.g. by means of \textit{error correction schemes} -- treated in Ch. \ref{ch:noise}. A possible solution is called \textit{entanglement distillation} \cite{rozpkedek2018optimizing,kalb2017entanglement,hu2021long} (or purification). However, choosing the best approach is not trivial and may very depend on the architecture specifics.



\section{Envisioning the full system}
\label{sec:stack}
We reported several important research fields deeply related to the implementation of a first distributed and scalable architecture. We introduced the required technologies to achieve qubit-qubit interaction when these are arbitrarily far apart.

\begin{wrapfigure}{r}{7.6cm}
    \centering
    \includegraphics[scale = 0.33]{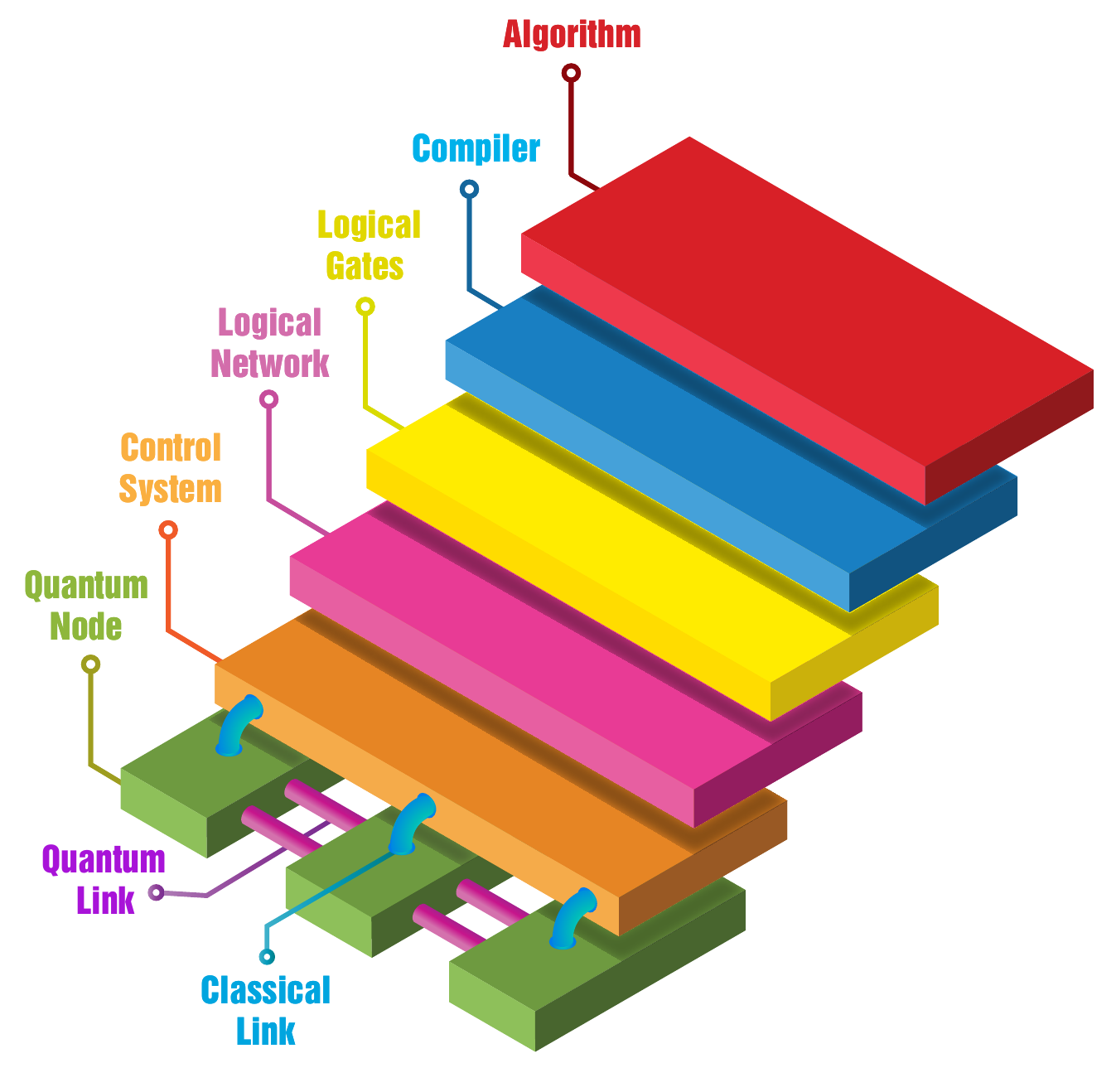}
    \caption{Full-stack development of a distributed quantum computing framework.}
    \label{fig:layers}
    \hrulefill
\end{wrapfigure}
The considered literature gives a perspective on how to integrate multiple quantum processors into a scalable architecture, able to perform distributed quantum computation.

Stemming from the above overview, we now need to extract a full-stack development, by identifying the most important roles -- and dependencies -- and we will need to engineer an \textit{ecosystem} \cite{cuomo2020towards}, providing a framework for distributed quantum computation. As we are facing the early stage of quantum computation and distributed architecture, it is wise to focus on the main challenges and assigning them to a proper \textit{entity}\footnote{As usual in the engineer terminology, an entity is something quite abstract, which needs to be defined in terms of its roles and relations with other entities.}. In fact, any proposal now is highly prone to changes, because of the continuous growing of the field \cite{gibney2019quantum} and the huge advancing in both technology and information theory.

We propose a design that starts from a bottom-up reasoning, stacking up a number of layers where the lower ones provide some \textit{resources} and flexibility which the upper layers can rely on. More precisely, consider Fig. \ref{fig:layers}: this shows a linear stack where each layer represents one of the fundamental subjects necessary to create a practical framework. 

\subsubsection{Level 1}
Not surprisingly, the first layer is a mere pictorial representation of the distributed hardware. For the sake of clarity, we depicted a network composed by three quantum nodes. We already discussed how such a network may be achieved by focusing on ion-traps integrated with cavities, Bell state analysers and multiplexers -- see Secs. \ref{sec:tech} and \ref{sec:analyser}. 

A quantum node refers to the full hardware set-up working locally, which includes, of course, the quantum processor. The nodes are inter-connected by \textbf{quantum links} wherein mediums carries quantum information.
Such a set-up can be achieved, for example, by means of \cite{oi2006scalable}:
\begin{itemize}
    \item ion-traps, cavity-based transducers as quantum nodes and
    \item optical fibers, multiplexers and bell-state analysers as quantum links.
\end{itemize}


\subsubsection{Level 2}
As explained in Sec. \ref{sec:control-sys}, as minimum requirement for the system to be \textit{operative}, the network needs to be carefully handled by a classical \textbf{control system}, which cares about synchronization and scheduling in real time.
Because of its role, the control system is mainly physical. In fact, it is directly connected to each quantum node by means of \textbf{classical links}. The linkage will be also used to gather classical information coming from \textit{measurement-based computation} \cite{nielsen2003quantum}.

Advancements in the physical network\footnote{Up to the development of a \textit{quantum internet} \cite{Kim-08,WehElkHan-18,van2022quantum,CacCalTaf-19,cuomo2020towards,KozWehVan-21,DurLamHeu-17,pant2019routing}} will allow to \textit{evolve} the layer, up to becoming a \textit{quantum control system} -- as envisioned for example in Refs. \cite{koudia2022deep,illiano2022quantum,avis2022analysis,wallnofer2019multipartite} --, a quantum control system will be able to optimize the efficiency of the network, having access to a wider spectrum of resources -- e.g. high-dimensional entangled states -- which can be manipulated to optimize the network efficiency, by means of communication protocol fundamentally based on quantum communication theory. A quantum control system may also be responsible for the definition of a (distributed) error correction scheme -- treated in Ch. \ref{ch:noise}.

\subsubsection{Level 3}
The control system deals with \textit{real time tasks} and, if engineered properly, it is able to guarantee a \textbf{logical network} with a time domain that can be \textit{discrete}. Specifically, a control system provides a \textit{topology} running in well-defined \textit{time slots}. This means that any kind of unexpected delay is negligible to the upper-layers. As central part of the stack, generating a logical network is a critical step. In fact, it is the layer where real-time tasks meet logical computation. Hence, a logical network should provide an array of logical resources which are meaningful to computation.

\subsubsection{Level 4}
As said above, the logical network provides an array of logical resources. Because of their importance a layer is dedicated to represent such resources. The set of \textbf{logical gates} will be the fundamental components upon which building a \textit{computational paradigm}, specifically designed to work on a server farm.

Logical gates can be local -- e.g. multi-qubit gates \cite{lu2019global,casanova2012quantum,ivanov2015efficient,martinez2016compiling} -- and non-local -- i.e. telegates --.  In fact, while local gates operates on logical nodes, telegates operates throughout the logical network. This makes the latter resource much more expensive, which deserve a dedicated analysis, especially considering the degree of novelty in the context of computational paradigms that may arise. Refer to Sec. \ref{sec:e-comp} for details.

\subsubsection{Level 5}
We entitled this layer \textbf{compiler} as its role is highly related to the already well-established branch of research in the context of local quantum computation \cite{madden2022best,hillmich2021exploiting,burgholzer2022limiting, MasFalMos-08,SirSanCol-18,WilBurZul-19,LiDinXie-19,ZulWil-19,ItoRayIma-19,ZhaZheZha-20,KarTezPet-20,MorParRes-21,MarMorRoc-21, BooDoBec-18,FerAmo-21}. A compiler has the crucial role to mask all the underlying stack to an \textit{algorithm designer}. In fact an algorithm is generally written to solve some problem which goes beyond the architecture meant to process it. This means that the designer works in an agnostic fashion, without considering the constraints coming from the stack. Which is why a compiler is the final necessary optimizer, able to transform an abstract algorithm into a logical algorithm, compliant with the logical resources given by the bottom layers. Our research project mainly focus on the \textit{standardization} of such a layer and we detailed our efforts in Ch. \ref{ch:compile}.

\subsubsection{Level 6}
The upper layer is simply an \textbf{algorithm}, an abstract input, which the framework takes charge of and carefully spread throughout the whole stack in order to be processed. 

Once the framework is ready-to-go, it will be able to accept some \textit{groups of algorithms}\footnote{Parallel-based algorithms -- e.g. see Ref. \cite{cleve2000fast} for a parallel algorithm solving the quantum Fourier transform -- are natively meant to work on distributed architectures.}, up to \textit{universal} groups. Refer to Ch. \ref{ch:essentials} for details.

\printbibliography[title=References,heading=subbibintoc]
\end{refsection}
		\begin{refsection}
\chapter{Quantum logic essentials}
\label{ch:essentials}
\thispagestyle{empty}
\newpage

    \section{Quantum programming}
    \label{sec:programming}
        \subsection{Universality}
            \label{sec:qlogic}
            In this section we report some preliminaries that explain how a quantum processor can run any algorithm by means of a restricted set of gates. Such a restricted set is called \textit{universal} for this reason.
            We briefly get through the Boolean logic -- which governs classical computation -- and how we can express any Boolean function within the quantum framework. This gives provide the reader with a perspective of how quantum logic can express a wider group of functions. 
            \subsubsection{Boolean logic}
                \begin{wrapfigure}{r}{4.2cm}
                    \centering
                    \begin{quantikz}[thin lines, row sep={0.7cm,between origins}, column sep={0.4cm}]
                        \lstick{$_{\ket{\texttt{b}_1}}$}& \ctrl{2} & \qw\rstick{$_{\ket{\texttt{b}_1}}$}\\
                        \lstick{$_{\ket{\texttt{b}_2}}$}& \ctrl{1} & \qw\rstick{$_{\ket{\texttt{b}_2}}$}\\
                        \lstick{$_{\ket{\texttt{1}}}$}& \targ{} & \qw\rstick{$_{\ket{\lnot (\texttt{b}_1\land\texttt{b}_2)}}$}
                    \end{quantikz}
                    \caption{$\lnot(\texttt{b}_1\land\texttt{b}_2)$ operator by means of a Toffoli.}
                    \label{fig:toffoli}
                    \hrulefill
                \end{wrapfigure}
                Classical computation is deeply based on Boolean logic. Since classical technologies are really advanced and benefits from many years of research on physical implementations, several Boolean operators find direct implementation as classical gates.
                However, our interested is narrowed to how we can express any boolean function by means of quantum operators. Hence we can restrict the discussion to a single logical operator:  $\lnot(\texttt{b}_1\land\texttt{b}_2)$. In fact, for any Boolean variables $\texttt{b}_1,\texttt{b}_2 \in \{\texttt{0},\texttt{1}\}$, the $\lnot(\texttt{b}_1\land\texttt{b}_2)$ operator\footnote{The value is true iff no more than one variable is true.} is universal to Boolean logic \cite{vaughan1942william}, hence we need to find a quantum operator able to realize it. The Toffoli operator \cite{toffoli1980reversible} does the job. Formally, a classical state $\texttt{b}_1\cdot \texttt{b}_2$ gets encoded in the quantum state $\ket{\texttt{b}_1}\otimes\ket{\texttt{b}_2}\otimes\ket{\texttt{1}}$. By applying the Toffoli operator to the encoded system, the last qubit encodes the Boolean state $\lnot(\texttt{b}_1\land\texttt{b}_2)$. This is shown in Figure \ref{fig:toffoli}.
                
                To date, there is no direct physical implementation for the Toffoli operator. It needs to be expressed as a composition of quantum operators physically realizable.

            \subsubsection{Quantum logic}
            We here report a step-by-step introduction to quantum logic and what a quantum processors should have to provide universal computing.
            Quantum computing works with the logic of pure states. A Hilbert space $\mathbb{H}$ of dimension $d$ is closed under the \textit{unitary group} of degree $d$. This means that for any pure state $\ket{\varphi} \in \mathbb{H}$ and any unitary operator $\texttt{U}$, it results $\texttt{U}\ket{\varphi} \in \mathbb{H}$.
            
            A generic quantum algorithm can be expressed as a system initialized to $\ket{\texttt{0}}^{\otimes d}$ and a unitary $\texttt{U}$ operating on it. Figure \ref{fig:algo} gives a circuit representation.
                    \begin{figure}[h]
                        \centering
                        \begin{quantikz}[row sep={1cm,between origins},column sep=0.5cm]
                        	\lstick{$_{\ket{\texttt{0}}^{\otimes d}}$} &\gate[style={fill=violet!10}]{_{\texttt{U}}} & \qw \rstick{$_{\texttt{U}\ket{\texttt{0}}^{\otimes d}}$}
                        \end{quantikz}
                        \caption{Generic algorithm expressed as a unitary, operating over a $\ket{\texttt{0}}^{\otimes d}$ state.}
                        \label{fig:algo}
                    \end{figure}
            
            Since quantum processors do not supply as primitive operator a generic unitary, this is subject to one or more steps of \textit{decomposition} or \textit{synthesis}. A generic decomposition is showed in circuit of Figure \ref{fig:uk}.
            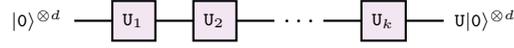
\begin{figure}[h]
                \centering
                        \begin{quantikz}[row sep={1cm,between origins},column sep=0.5cm]
                        	\lstick{$_{\ket{\texttt{0}}^{\otimes d}}$} &\gate[style={fill=violet!10}]{_{\texttt{U}_1}} &\gate[style={fill=violet!10}]{_{\texttt{U}_2}} & \ \ldots\ \qw&\gate[style={fill=violet!10}]{_{\texttt{U}_k}}  & \qw \rstick{$_{\texttt{U}\ket{\texttt{0}}^{\otimes d}}$}
                        \end{quantikz}
                \caption{Generic decomposition of $\texttt{U}$ into $k$ unitaries.}
                \label{fig:uk}
            \end{figure}
            
            A universal operator set should be \textit{efficient}, in the sense that the overhead caused by the decomposition from a $d$-degree unitary to the operator set is upper-bounded by some polynomial function. 
            There are several (historically) important results showing how such a requirement is achievable.
            We start from one coming from the work done in \cite{barenco1995universal,divincenzo1995two}, showing that, given a generic $2$-degree unitary $\texttt{U}$ -- i.e. it operates over a single qubit --, the following operator is universal:
            \begin{equation}
                \label{eq:cu}
                \land(\texttt{U}) \equiv \ket{\texttt{0}}\bra{\texttt{0}}\otimes \mathds{1} + \ket{\texttt{1}}\bra{\texttt{1}}\otimes\texttt{U}
            \end{equation}
            
            \begin{wrapfigure}{r}{4.5cm}
                \centering
                    \begin{quantikz}[thin lines, row sep={0.7cm,between origins}, column sep={0.4cm}]
                        \qw & \ctrl{1} & \qw\\
                    	\qw & \gate[style={fill=violet!10}]{_{\texttt{U}}}  & \qw
                    \end{quantikz}
                \caption{Circuit representation for the $\land(\texttt{U})$ operator.}
                \label{fig:cu}
                \hrulefill
            \end{wrapfigure}
            \noindent{}This is a controlled-\texttt{U} operator in Feynmann's notation \cite{feynman1985quantum}. We will use this notation throughout this thesis because of its versatility. Even if it is a bit outdated, we think it helps to highlight the logic behind the operators; it also helps us to keep consistency throughout the chapters. In the circuit model the subject operator appears as in Figure \ref{fig:cu}.
            
            One can notice that the group of equation \eqref{eq:cu} is pretty compact, as it involves only 2-qubit operators where one of them act always as control.
            Nevertheless, some decomposition step is necessary to get closer to what real processors can actually offer. To this aim, we need to introduce some further operators. The first one is known as \textit{special unitary}. We refer to this operator as $\texttt{V}_{\alpha, \beta, \delta}$ and it is defined as follows \cite{barenco1995elementary}:
            \begin{equation}
                \texttt{V}_{\alpha, \beta, \delta} \equiv \begin{pmatrix}
                        e^{\sfrac{\texttt{i}(\alpha \texttt{+} \beta)}{2}}\cos{\sfrac{\delta}{2}} & e^{\sfrac{\texttt{i}(\alpha \texttt{-} \beta)}{2}}\sin{\sfrac{\delta}{2}}\\
                        -e^{\sfrac{\texttt{i}(\texttt{-}\alpha \texttt{+} \beta)}{2}}\sin{\sfrac{\delta}{2}} & e^{\sfrac{\texttt{i}(\texttt{-}\alpha \texttt{-} \beta)}{2}}\cos{\sfrac{\delta}{2}}
                    \end{pmatrix}
            \end{equation}
            The second operator represent a \textit{global phase shift}:
            \begin{equation}
                \texttt{G}_{\gamma} \equiv \begin{pmatrix}
                        e^{\texttt{i}\gamma} & 0\\
                        0 & e^{\texttt{i}\gamma}
                    \end{pmatrix}
            \end{equation}
            It results that special unitaries composed with global phase shifts characterizes the group of unitaries. It follows the same statement in matrix form:
            \begin{equation}
                \label{eq:u=gv}
                \texttt{U}_{\gamma, \alpha, \beta, \delta} \equiv \texttt{G}_{\gamma}\texttt{V}_{\alpha, \beta, \delta}
            \end{equation}
            As consequence, the first decomposition we can apply comes from the circuit equivalence of Figure \ref{fig:cgcv}.
            \begin{figure}[h]
                \centering
                \begin{quantikz}[thin lines, row sep={0.7cm,between origins}, column sep={0.4cm}]
                    \qw & \ctrl{1} & \qw\\
                	\qw & \gate[style={fill=violet!10}]{_{\texttt{U}_{\gamma, \alpha, \beta, \delta}}} & \qw
                \end{quantikz}
                $\ \ \equiv$
                \begin{quantikz}[thin lines, row sep={0.7cm,between origins}, column sep={0.4cm}]
                    \qw & \ctrl{1} & \ctrl{1} & \qw\\
                    \qw & \gate[style={fill=violet!10}]{_{\texttt{V}_{\alpha, \beta, \delta}}} & \gate[style={fill=violet!10}]{_{\texttt{G}_{\gamma}}} & \qw
                \end{quantikz}
                \caption{First decomposition as a result of equivalence \eqref{eq:u=gv}.}
                \label{fig:cgcv}
            \end{figure}
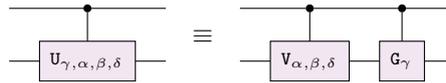
            
            A conditioned global phase shift is equivalent to a \textit{relative phase shift}. Thus, let $\texttt{R}_{\gamma}$ such an operator, defined as follows:
            \begin{equation}
                \texttt{R}_{\gamma} \equiv \begin{pmatrix}
                        1 & 0\\
                        0 & e^{\texttt{i}\gamma}
                    \end{pmatrix}
            \end{equation}
            
            \begin{wrapfigure}{r}{5.5cm}
                \centering
                \begin{quantikz}[thin lines, row sep={0.3cm,between origins}, column sep={0.4cm}]
                    \qw & \ctrl{2} & \qw & & & \gate[style={fill=violet!10}]{_{\texttt{R}_{\gamma}}} & \qw \\
                    & & &\equiv& & &\\
                    \qw & \gate[style={fill=violet!10}]{_{\texttt{G}_{\gamma}}} & \qw & & & \qw & \qw
                \end{quantikz}
                \caption{Circuit representation of the equivalence $\land(\texttt{G}_{\gamma}) \equiv \texttt{R}_{\gamma}\otimes\mathds{1}$.}
                \label{fig:cg=r}
                \hrulefill
            \end{wrapfigure}
            \noindent{}The above statement allows us to \textit{synthesize} $\land(\texttt{G}_{\gamma})$ into $\texttt{R}_{\gamma}\otimes\mathds{1}$. Figure \ref{fig:cg=r} shows the corresponding circuit equivalence.
            
            The final step to achieve universal computing through physically realizable operators is to decompose $\land(\texttt{V}_{\alpha, \beta, \delta})$. To this aim, let us introduce the rotational operators over the Pauli axes. Namely, $\texttt{X}_{\gamma} \equiv \texttt{e}^{\texttt{-i}\texttt{X}\sfrac{\gamma}{2}}$, $\texttt{Y}_{\gamma} \equiv \texttt{e}^{\texttt{-i}\texttt{Y}\sfrac{\gamma}{2}}$ and $\texttt{Z}_{\gamma} \equiv \texttt{e}^{\texttt{-i}\texttt{Z}\sfrac{\gamma}{2}}$.
            Notice also that
            \begin{equation}
                \texttt{e}^{\texttt{-i}\texttt{E}\sfrac{\gamma}{2}} = \cos{\sfrac{\gamma}{2}} - \texttt{i}\texttt{E}\sin{\sfrac{\gamma}{2}},
            \end{equation}
            holds for any $\texttt{E} \in \{\texttt{X},\texttt{Y},\texttt{Z}\}$.
            Therefore, whenever $\gamma = \pi$, each rotational operator relates to the corresponding Pauli operator \texttt{X}, \texttt{Y} or \texttt{Z}. Formally, they are equivalent up to the global phase $\texttt{G}_{\sfrac{\texttt{-}\pi}{2}} \equiv -\texttt{i}$. E.g. $\texttt{X}_{\pi} \cong \texttt{X}$.
        
            Now we proceed by reporting the results coming from \cite{barenco1995elementary}. Let \texttt{A}$_{\alpha,\delta}$, \texttt{B}$_{\delta,\alpha,\beta}$, \texttt{C}$_{\alpha,\beta}$ be a triplet of special unitaries defined as follows:
            \begin{itemize}
                \item \texttt{A}$_{\alpha,\delta} = \texttt{Y}_{\sfrac{\delta}{2}}\texttt{Z}_{\alpha}$;
                \item \texttt{B}$_{\delta,\alpha,\beta} = \texttt{Y}_{\sfrac{\texttt{-}\delta}{2}}\texttt{Z}_{\sfrac{\texttt{-}(\alpha\texttt{+}\beta)}{2}}$;
                \item \texttt{C}$_{\alpha,\beta} = \texttt{Z}_{\sfrac{(\beta\texttt{-}\alpha)}{2}}$.
            \end{itemize}
            With such a triplet, together with $\land(\texttt{X})$ we are able to decompose $\land(\texttt{V}_{\alpha, \beta, \delta})$, according to the circuit equivalence shown in Figure \ref{fig:cv_dec}.
            \begin{figure}[h]
                \centering
                \begin{quantikz}[thin lines, row sep={0.7cm,between origins}, column sep={0.4cm}]
                    \qw & \ctrl{1} & \qw\\
                	\qw & \gate[style={fill=violet!10}]{_{\texttt{V}_{\alpha, \beta, \delta}}} & \qw
                \end{quantikz}
                $\ \ \equiv$
                \begin{quantikz}[thin lines, row sep={0.7cm,between origins}, column sep={0.4cm}]
                    \qw & \qw & \ctrl{1} & \qw & \ctrl{1} & \qw & \qw\\
                    \qw & \gate[style={fill=violet!10}]{_{\texttt{C}_{\alpha,\beta}}} & \targ{} & \gate[style={fill=violet!10}]{_{\texttt{B}_{\delta,\alpha,\beta}}} & \targ{} & \gate[style={fill=violet!10}]{_{\texttt{A}_{\alpha,\delta}}} & \qw
                \end{quantikz}
                \caption{Decomposition of $\land(\texttt{V}_{\alpha, \beta, \delta})$.}
                \label{fig:cv_dec}
            \end{figure}
            
            We can finally show the universality by composing the two results and getting a decomposition for $\land(\texttt{U}_{\gamma, \alpha, \beta, \delta})$ -- see Figure \ref{fig:cu_dec}.
            \begin{figure}[h]
                \centering
                \begin{quantikz}[thin lines, row sep={0.3cm,between origins}, column sep={0.4cm}]
                    \qw & \ctrl{2} & \qw & & & \qw & \ctrl{2} & \qw & \ctrl{2} & \qw &\gate[style={fill=violet!10}]{_{\texttt{R}_{\gamma}}}& \qw\\
                    & & &\ \equiv\ & & &\\
                    \qw & \gate[style={fill=violet!10}]{_{\texttt{U}_{\gamma, \alpha, \beta, \delta}}} & \qw & & & \gate[style={fill=violet!10}]{_{\texttt{C}_{\alpha,\beta}}} & \targ{} & \gate[style={fill=violet!10}]{_{\texttt{B}_{\delta,\alpha,\beta}}} & \targ{} & \gate[style={fill=violet!10}]{_{\texttt{A}_{\alpha,\delta}}} & \qw & \qw
                \end{quantikz}
                \caption{Decomposition of $\land(\texttt{U}_{\gamma,\alpha, \beta, \delta})$.}
                \label{fig:cu_dec}
            \end{figure}

            \begin{theorem}
                The \texttt{IBM} gate set \cite{ibm-gateset,ibm-resonance,rzx-gate,rigetti2010fully,mckay2017efficient} is universal.
            \end{theorem}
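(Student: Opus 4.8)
The plan is to reduce the claim to the decomposition chain already assembled in Figures \ref{fig:cgcv}--\ref{fig:cu_dec}, so that the only remaining work is to re-express each primitive appearing there in terms of the IBM native gates. Recall that equation \eqref{eq:cu}, together with \cite{barenco1995universal,divincenzo1995two}, tells us that the family $\land(\texttt{U})$ ranging over all single-qubit $\texttt{U}$ is already universal; and Figure \ref{fig:cu_dec} realizes an arbitrary $\land(\texttt{U}_{\gamma,\alpha,\beta,\delta})$ using only the controlled-not $\land(\texttt{X})$, the uncontrolled rotations $\texttt{Y}_\theta,\texttt{Z}_\theta$ hidden inside the triplet $\texttt{A}_{\alpha,\delta},\texttt{B}_{\delta,\alpha,\beta},\texttt{C}_{\alpha,\beta}$, and the relative phase $\texttt{R}_\gamma$ on the control line (the latter produced by the synthesis of Figure \ref{fig:cg=r}). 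Hence it suffices to show that the IBM gate set generates each of $\land(\texttt{X})$, $\texttt{Y}_\theta$, $\texttt{Z}_\theta$ and $\texttt{R}_\gamma$.

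First I would fix the native set explicitly: the virtual phase gate $\texttt{R}_z(\lambda)\cong\texttt{Z}_\lambda$, the fixed rotation $\sqrt{\texttt{X}}\cong\texttt{X}_{\sfrac{\pi}{2}}$, and the two-qubit cross-resonance gate $\texttt{R}_{zx}$ \cite{mckay2017efficient,rzx-gate,ibm-resonance}. The relative phase $\texttt{R}_\gamma$ coincides with $\texttt{Z}_\gamma$ up to the global phase $\texttt{G}_{\sfrac{\gamma}{2}}$, so it is native; and since the target-side gates $\texttt{A}_{\alpha,\delta},\texttt{B}_{\delta,\alpha,\beta},\texttt{C}_{\alpha,\beta}$ are uncontrolled, any such global phase is immaterial and may be discarded. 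Thus $\texttt{R}_\gamma$ and $\texttt{Z}_\theta$ are immediate.

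Next I would recover the missing rotation axis. An Euler-angle argument writes every special unitary---in particular $\texttt{Y}_\theta$---as a $\texttt{Z}$-$\sqrt{\texttt{X}}$-$\texttt{Z}$-$\sqrt{\texttt{X}}$-$\texttt{Z}$ word for suitable angles, up to global phase; this is the standard two-$\sqrt{\texttt{X}}$ normal form, and it realizes $\texttt{Y}_\theta$ (indeed any single-qubit rotation) from the native single-qubit set. Finally, the entangler: conjugating $\texttt{R}_{zx}(\sfrac{\pi}{2})$ by single-qubit gates---the echoed cross-resonance construction---yields $\land(\texttt{X})$ up to single-qubit corrections, which are themselves native by the previous step. (The same conclusion follows verbatim for the $\texttt{U}_3$-plus-$\land(\texttt{X})$ presentation of the set, since $\texttt{U}_3$ is a general single-qubit unitary and $\land(\texttt{X})$ is then primitive.)

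Collecting these, every gate in Figure \ref{fig:cu_dec} becomes a finite word in the IBM native set, so $\land(\texttt{U}_{\gamma,\alpha,\beta,\delta})$ is realizable for arbitrary parameters; universality then follows from \eqref{eq:cu} and \cite{barenco1995universal,divincenzo1995two}. The main obstacle I anticipate is bookkeeping rather than conceptual: one must verify the echoed cross-resonance identity for $\land(\texttt{X})$ exactly---including the single-qubit echo pulses and the accompanying global phase---and confirm that the two-$\sqrt{\texttt{X}}$ normal form covers the full rotation, with care at the degenerate angles where the Euler decomposition is non-unique. None of this affects completeness; it only fixes the explicit angle formulas.
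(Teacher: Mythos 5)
Your proposal is correct and takes essentially the same route as the paper: reduce to the $\land(\texttt{U}_{\gamma,\alpha,\beta,\delta})$ decomposition already built in the chapter, observe $\texttt{R}_{\gamma}\cong\texttt{Z}_{\gamma}$ is native, and synthesize the missing $\texttt{Y}$-rotation from $\texttt{X}_{\sfrac{\pi}{2}}$ pulses interleaved with $\texttt{Z}$-rotations (the paper uses the explicit identity $\texttt{Y}_{\delta}\equiv\texttt{X}_{\sfrac{\pi}{2}}\texttt{Z}_{\delta+\pi}\texttt{X}_{\sfrac{3\pi}{2}}$, which is your two-$\sqrt{\texttt{X}}$ Euler form with fixed outer angles). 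The only deviation is that the paper takes $\land(\texttt{X})$ as a member of the gate set $\{\texttt{Z}_{\gamma},\texttt{X}_{\sfrac{\pi}{2}},\texttt{X},\land(\texttt{X})\}$, so your echoed cross-resonance reconstruction of $\land(\texttt{X})$ from $\texttt{R}_{zx}$ is correct but unnecessary extra work rather than a different argument.
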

            \begin{proof}
                All the processors supplied by the \texttt{IBM} cloud have gate set $\{\texttt{Z}_{\gamma}, \texttt{X}_{\sfrac{\pi}{2}}, \texttt{X}, \land(\texttt{X})\}$. To prove the universality of such a set, notice that $\texttt{R}_{\gamma} \cong \texttt{Z}_{\gamma}$. Furthermore, any special unitary $\texttt{V}_{\alpha,\beta,\delta}$ can be factorized as follows \cite{barenco1995elementary}:
                \[\texttt{V}_{\alpha,\beta,\delta} \equiv \texttt{Z}_{\alpha}\texttt{Y}_{\delta}\texttt{Z}_{\beta}\]
                Since an \texttt{IBM} processor can run natively a generic $\texttt{Z}_{\gamma}$ gate, we only need to synthesise $\texttt{Y}_{\delta}$. 
                This can be done, by using operations from the gate set only:
                \[\texttt{Y}_{\delta} \equiv \texttt{X}_{\sfrac{\pi}{2}} \texttt{Z}_{\delta\texttt{+}\pi}\texttt{X}_{\sfrac{3\pi}{2}}\]
                In conclusion, since the gate set also provides $\land(\texttt{X})$, any $\land(\texttt{U})$ can be realized through a composition of operations coming from the \texttt{IBM} gate set.
            \end{proof}
            Not all the existing processors are universal. \texttt{D-Wave} ones are an example. In fact, the company goal is to build specific-purpose processors, meant to explore the field of optimization problems through \textit{quantum annealing} procedures \cite{dwave-opt}.

    \subsubsection{Relation between classical and quantum logic}
    \label{sec:decomp}
    It has been shown \cite{shi2003both} that, to achieve quantum universality, one can start from an operator universal in the Boolean functions, i.e. the Toffoli, and by adding only a single 1-qubit operator, the operator set is quantum universal. The subject operator can be expressed as $\texttt{X}_{\sfrac{\pi}{2}}\texttt{Z}_{\sfrac{\pi}{2}}\texttt{X}_{\sfrac{\pi}{2}}$\footnote{As well as $\texttt{Y}_{\sfrac{\pi}{2}}\texttt{X}$ or as the more common \textit{Hadamard} gate \texttt{H}. We are not going to use the latter in this thesis as we opted to keep the treating closer to real gate implementations.}. As stated in \cite{aharonov2003simple}, this ``[...] can be interpreted as 
    saying that Fourier transform is really all there is to quantum computation on top of classical'', since $\texttt{X}_{\sfrac{\pi}{2}}\texttt{Z}_{\sfrac{\pi}{2}}\texttt{X}_{\sfrac{\pi}{2}}$ corresponds to a Fourier transform.
    

    \subsection{The Clifford group}
    \label{sec:cliff}
    The Clifford group $\mathbb{C}$ dues its importance to its implication in fault-tolerant computation \cite{anderson2014fault}, simulation \cite{gottesman1998theory} and benchmarking \cite{magesan2012characterizing}. Such a group is generated by 3 operators:
    \begin{equation}
        \label{eq:cliff}
        \mathbb{C} \equiv \langle \land(\texttt{X}), \texttt{X}_{\sfrac{\pi}{2}}, \texttt{Z}_{\sfrac{\pi}{2}}\rangle.
    \end{equation}
    $\mathbb{C}$ can be efficiently simulated by a classical computer \cite{gidney2021stim}. This has as comeback that one can evaluate fault-tolerant protocols classically. As drawback, it is not universal. However to achieve universality while at the same time providing an operator set which can realize any quantum evolution efficiently one need to add a single 1-qubit operator, usually assumed to be $\texttt{R}_{\sfrac{\pi}{4}}$\footnote{Commonly referred as the \texttt{T} gate.} or the corresponding in rotational terms $\texttt{Z}_{\sfrac{\pi}{4}}$. In fact, any unitary of dimensionality $2$ can be approximated efficiently and with arbitrary precision \cite{dawson2005solovay, kliuchnikov2013synthesis,vatan2004optimal}. Formally, by properly composing single qubit operators coming from $\mathbb{C}$ -- i.e.  $\texttt{X}_{\sfrac{\pi}{2}}, \texttt{Z}_{\sfrac{\pi}{2}}$ -- together with $\texttt{Z}_{\sfrac{\pi}{4}}$ one can achieve universality in the \textit{dense} sense, by only introducing a polynomial overhead\footnote{Other extensions of $\mathbb{C}$ may be of interest. E.g. in Ref. \cite{glaudell2021optimal} authors consider $\land(\texttt{Z}_{\sfrac{\pi}{2}})$ in their generator set.}.

    Since we are facing with a discrete operator set composed by even fractions of $\pi$ only, we can re-state the nomenclature. Namely, the same universal operator set can be expressed in the following intuitive way:
    \begin{equation}
        \label{eq:univ_sqrt}
        \mathbb{C}^{\texttt{+}} \equiv \langle \land(\texttt{X}), \texttt{X}^{\sfrac{1}{2}}, \texttt{Z}^{\sfrac{1}{2}}, \texttt{Z}^{\sfrac{1}{4}}\rangle.
    \end{equation}
    This nomenclature stresses the logic behind the relation Pauli-rotational gates, as the power function degree says how many times one needs to apply the rotational gate to simulate a Pauli operator.



\subsection{Programming in higher order framework}
\label{sec:logical-ifo}
        \subsubsection{Time-ordered framework}
        So far we have implicitly assumed that a quantum evolution undergoes a time-ordered definition. Formally, consider two unitaries $\texttt{U}$ and $\texttt{V}$ and a state $\ket{\vartheta}$. Then, any time-ordered framework force us to chose whether $\texttt{U}$ operates on $\ket{\vartheta}$ before or after $\texttt{V}$. In circuit representation, these two cases are respectively
            \begin{center}
                    \begin{quantikz}[thin lines, row sep={0.3cm,between origins}, column sep={0.4cm}]
                    	\lstick[]{$_{\ket{\vartheta}}$}&\gate[style={fill=violet!10}]{_{\texttt{U}}} & \gate[style={fill=violet!10}]{_{\texttt{V}}} & \qw \rstick[]{$_{\texttt{VU}\ket{\vartheta}}$}
                    \end{quantikz}
            \end{center}
            and
            \begin{center}
                    \begin{quantikz}[thin lines, row sep={0.3cm,between origins}, column sep={0.4cm}]
                    	\lstick[]{$_{\ket{\vartheta}}$}&\gate[style={fill=violet!10}]{_{\texttt{V}}} & \gate[style={fill=violet!10}]{_{\texttt{V}}} & \qw \rstick[]{$_{\texttt{UV}\ket{\vartheta}}$}
                    \end{quantikz}
            \end{center}
            
        However quantum mechanics allows to think of more general frameworks, where, not only states, but also operators can be superposed to create more complex systems. Such systems may bring new non-classical advantages. Attempts in formally designing a higher-order framework is an active branch of research \cite{chiribella2008quantum,ChiKri-19}, but this is out of scope. Rather, in the following, we work with a higher-order \textit{oracle} implementing an \textit{indefinite casual order}.
        
        \subsubsection{Indefinite causal orders}
        \label{sec:soco_comp}
        The \textit{indefinite causal order} is an interesting property of quantum mechanics. In brief, it is a quantum evolution where two or more operations occur, but the order in which they occur is causally ordered by an extra quantum system. This creates a superposition of causal orders among those operations. Such a resource can be used for several purposes. In Refs. \cite{felce2020quantum,simonov2022work,guha2020thermodynamic}, the indefinite causal orders is considered for thermalization protocols. A big line of investigation deals with enhancing quantum communication \cite{EblSalChi-18, SalEblChi-18,CalCac-20, koudia2022deep}; we gave the first experimental witness for such a resource \cite{cuomo2021experiencing}\footnote{We report this in Ch. \ref{ch:noise}.}. As regards computation, indefinite causal order would speed up tasks that involves permuting operations \cite{Chi-12, AraCosBru-14, colnaghi2012quantum}. In Ref. \cite{procopio2015experimental} authors implement a photonic-based discrimination protocol solved by \textit{superposing unitaries}.

       
            The indefinite causal orders for computation consists essentially on superposing two or more unitaries. The superposition is then coherently conditioned to an auxiliary qubit, called control qubit. 
            For the case of 2 unitaries $\texttt{U},\texttt{V}$, the superposing unitary operator $\texttt{S}$ can be defined as follows:
            \begin{equation}
                \texttt{S} =
                \begin{bmatrix} 
                \texttt{UV} & \mathbf{0}\\
                \mathbf{0} & {\texttt{VU}} 
            \end{bmatrix}.
            \end{equation}
            By treating the operator $\texttt{S}$ as an oracle -- i.e. it takes a unit of time to run -- it brings the advantage of evaluating two different orders at the same time. This advantage comes from the fact that we are extending the standard framework -- which originally could only superpose quantum states -- to being able to superpose unitaries. This can be used, for example, in Information Processing to distinguish between different evolution by means of a single check \cite{procopio2015experimental}. This is unthinkable with standard frameworks where the order of execution of the unitaries must be defined.
            
            However, implementing $\texttt{S}$ does not necessarily reflect the advantage coming from theory. It is necessary to make a distinction to what is a real implementation of $\texttt{S}$ and what is more like a \textit{simulation}. In other words, implementing $\texttt{S}$ means that the physical settings preserve the theoretical advantages. Only in this case one can treat $\texttt{S}$ as an oracle. It is an open question if such a real implementation will ever be possible \cite{vilasini2022embedding, RubRozFei-17}. In the attempt of finding a solution, a framework meant to superpose \textit{gravitational fields} has been proposed \cite{paunkovic2020causal}.
            
            Stemming from the above, what we can do now is to realize $\texttt{S}$ by means of simulation. Namely, an equivalent evolution which does not preserve the speed-up advantage. An example is shown in Figure \ref{fig:ico-sim}.
            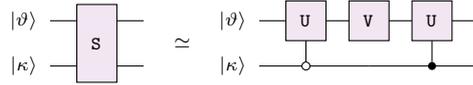
\begin{figure}[h]
                \centering
                \begin{quantikz}[thin lines, row sep={0.3cm,between origins}, column sep={0.4cm}]
                    \\
                    \lstick{$_{\ket{\vartheta}}$}&\gate[3, nwires = {2},style={fill=violet!10}]{_{\texttt{S}}} &\qw\\
                    &&&\\
                    \lstick{$_{\ket{\kappa}}$}&& \qw\\
                \end{quantikz}
                $_{\simeq}$
                \begin{quantikz}[thin lines, row sep={0.3cm,between origins}, column sep={0.4cm}]
                \\
            	    \lstick{$_{\ket{\vartheta}}$}&\gate[style={fill=violet!10}]{_{\texttt{U}}} & \gate[style={fill=violet!10}]{_{\texttt{V}}} &\gate[style={fill=violet!10}]{_{\texttt{U}}} & \qw \\
                    &&&&\\
            	    \lstick{$_{\ket{\kappa}}$}& \octrl{-2} & \qw & \ctrl{-2} & \qw\\
                \end{quantikz}
                \caption{Simulation of oracle $\texttt{S}$; losing the theoretical advantage.}
                \label{fig:ico-sim}
            \end{figure}
            This example immediately shows the theoretical loss, as it requires 2 use for one of the unitaries \cite{ChiDarPer-13}. Specifically, since $\texttt{U}$ operators run under complementary conditions, one of them does not run, meaning that one time step is always dedicated to perform an identity operation $\mathds{1}$ -- an idle time.
            
            \begin{wrapfigure}{r}{5cm}
                \centering
                \begin{quantikz}[thin lines, row sep={0.6cm,between origins}, column sep={0.4cm}]
                	    \lstick{$_{\ket{\vartheta}}$}&\targX{}&\gate[style={fill=violet!10}]{_{\texttt{U}}} & \gate[style={fill=violet!10}]{_{\texttt{V}}}&\qw\\
                	    \lstick{$_{\ket{\texttt{0}}}$}&\swap{-1}&\gate[style={fill=violet!10}]{_{\texttt{V}}} & \gate[style={fill=violet!10}]{_{\texttt{U}}}&\qw\\
                	    \lstick{$_{\ket{\kappa}}$}&\ctrl{-2}&\qw&\qw&\qw
                \end{quantikz} 
                \caption{Simulation of oracle $\texttt{S}$ by means an auxiliary qubit.}
                \label{fig:ico_aux}
                \hrulefill
            \end{wrapfigure}
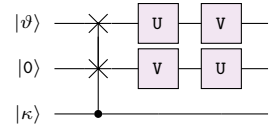
            A simulation which looks more adapt to reflect the natural behaviour of $\texttt{S}$ is represented in Figure \ref{fig:ico_aux}. This circuit makes use of an auxiliary qubit initialized to $\ket{\texttt{0}}$. A swap conditioned on $\ket{\kappa}$ is the only potentially entangling gate between the three states. However, $\ket{\texttt{0}}$ has no impact on the overall system\footnote{I.e., no phase shift leaks in nor out of state $\ket{\texttt{0}}$.}, up to the permutation caused by the swap operation\footnote{Which can be retrieved through a non-destructive measurement $\texttt{Z}\otimes\texttt{Z}$. Non-destructive measurements are introduced in Chapter \ref{ch:noise}.}. It follows that by tracing out such a sub-system, the subject circuit implements $\texttt{S}$. Thanks to the auxiliary qubit, the two different orders can be expressed in parallel.
        
        Nevertheless notice that it make use of a complex gate at the beginning -- i.e. a controlled swap. However, to date, there is no technology providing such a gate natively, hence it is necessary to consider it as an oracle to not lose the advantage. The best we can do now is providing an efficient decomposition for it, aware of the fact that we have partial knowledge of the input -- i.e. auxiliary qubit being in state $\ket{\texttt{0}}$. Figure \ref{fig:cswap_opt} shows an optimized decomposition w.r.t. the standard one \cite{shende2009cnot}.
        \begin{figure}[h]
            \centering
            \begin{quantikz}[thin lines, row sep={0.6cm,between origins}, column sep={0.3cm}]
    	            & \qw & \ctrl{1} & \qw & \qw& \qw & \ctrl{1} & \qw & \qw& \qw & \qw & \targ{} & \qw\\
                	\lstick{$_{\ket{\texttt{0}}}$}&\gate[style={fill=red!10}]{_{{\texttt{X}^{\texttt{-}\sfrac{1}{2}}}}} & \targ{} & \gate[style={fill=green!10}]{_{{\texttt{Z}^{\texttt{-}\sfrac{1}{4}}}}} & \targ{} & \gate[style={fill=green!10}]{_{{\texttt{Z}^{\sfrac{1}{4}}}}} & \targ{} & \gate[style={fill=green!10}]{_{{\texttt{Z}^{\texttt{-}\sfrac{1}{4}}}}} & \targ{} &\gate[style={fill=green!10}]{_{{\texttt{Z}^{\sfrac{1}{4}}}}}& \gate[style={fill=red!10}]{_{{\texttt{X}^{\sfrac{1}{2}}}}} & \ctrl{-1}&\qw \rstick{$\ {\cong}$}\\
                	& \qw & \qw & \qw & \ctrl{-1} & \qw & \qw  & \qw& \ctrl{-1} & \qw&\qw & \qw & \qw
                \end{quantikz}
                \begin{quantikz}[thin lines, row sep={0.6cm,between origins}, column sep={0.3cm}]
                \\
                &\swap{2}&\qw\\
                \lstick{$_{\ket{\texttt{0}}}$}&\targX{}&\qw\\
                &\ctrl{-2}&\qw\\
                \end{quantikz}
                \caption{Optimized decomposition of the controlled swap operation.}
                \label{fig:cswap_opt}
            \end{figure}
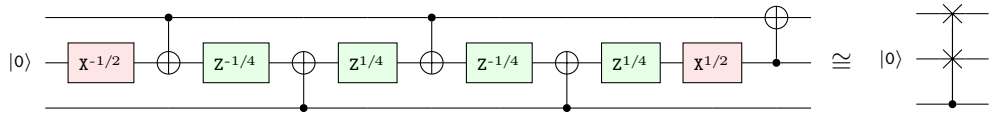
            
        Unfortunately, as long as native controlled swap are not physically implemented, the speed-up advantage -- w.r.t. a time-ordered framework -- is still just theoretical and it cannot be witnessed on real implementation. This is especially true when trying to superpose more than two unitaries. Some investigations are available in Refs. \cite{renner2022computational, colnaghi2012quantum}, where the computational speed up is preserved as long as the swaps are considered oracles.
        
        Nevertheless, a non-native implementation of indefinite causal orders may still bring some sort of quantum advantage, as it can be used for \textit{magnitude amplification} \cite{zavatta2011high,kim2008scheme,zavatta2009experimental}\footnote{Useful, e.g., to implement Grover's algorithm oracle \cite{zalka1999grover}}. In Section \ref{sec:soco}, we treat this concept -- also experimentally -- to enhance communication capacity.

    \section{Entanglement-based computation}
    \label{sec:e-comp}
    Entanglement is probably the most fascinating property coming from quantum mechanics. It is also the most promising resource.
    
    To our purpose we focus on one of the four Bell states. These states are fully entangled, meaning that their correlation is maximal and the system is said to be close.
    Let us introduce the $\ket{\Phi^{\texttt{+}}}$ state, defined as follow:
    \[\ket{\Phi^{\texttt{+}}} = \frac{1}{\sqrt{2}}(\ket{\texttt{00}} + \ket{11}).\]
    Since the two system in $\ket{\Phi^{\texttt{+}}}$ present a \textit{non-local} correlation, this state can be used to perform non-local operations. Physically speaking, this means that one need to perform what is called \textit{entanglement generation and distribution} \cite{cacciapuoti2019quantum, cacciapuoti2020entanglement, cuomo2020towards}:
    \begin{itemize}
        \item generation; despite the non-local correlation, the generation happens between system which are in proximity one another. Generating a maximally entangled state with high fidelity is generally hard and time-consuming.
        \item distribution; once that the entanglement is ready, it is possible to relocate the two systems. The entanglement is, in principle, preserved.
    \end{itemize}
    Below we outline some of the most promising resources for distributed quantum computing, all exploiting entanglement. Most of what follows comes from the work published in \cite{cuomo2021optimized}.
    Before proceeding, we need to introduce a formalism for \textit{measurement-based computation}.
    
    \subsubsection{Measurement-based computation}
    A computing paradigm is called measurement-based whenever the quantum computation is interleaved by measurements acting on a sub-system \cite{nielsen2003quantum}. The output of a measurement is then used to perform classical conditioned quantum operations.
        
    A measurement over a Pauli axis $\texttt{E}$ generates a Boolean $\texttt{b}$:
        \begin{center}
                \begin{quantikz}[thin lines,row sep={0.7cm,between origins}, column sep={0.4cm}]
                    &\measure[style={fill=gray!10}]{_{\langle \texttt{E}\rangle,\texttt{b}}}
                \end{quantikz}
        \end{center}
        The output can then be used to choose whether performing or not a unitary operation $\texttt{U}$ over another qubit:
        \begin{center}
                \begin{quantikz}[thin lines,row sep={0.7cm,between origins}, column sep={0.4cm}]
                    &\gate[style={fill=gray!10}]{_{\texttt{U}^{\texttt{b}}}}&\qw
                \end{quantikz}
        \end{center}

Whenever to a Pauli gate follows a Pauli measurement, there is no need to apply the former. Precisely, instead of applying the quantum gate followed by the measurement, one can always perform the measurement and then applying a classical correction on the output, corresponding to the Pauli gate. Such a technique may be referred as \textit{pushing} technique as, intuitively, one can picture this manipulation as pushing the gate beyond the measurement. See circuits in Figure \ref{circ:push} for an example.

        \begin{figure}[h]
            \centering
             \begin{subfigure}[b]{0.48\textwidth}
                \centering
                \begin{quantikz}[thin lines,row sep={0.7cm,between origins}, column sep={0.4cm}]
                &\gate[style={fill=red!10}]{_{\texttt{X}}} & \measure[style={fill=gray!10}]{_{\langle \texttt{Z}\rangle,\texttt{b}_1}}
            \end{quantikz}
            $_{\ \ \equiv}$
            \begin{quantikz}[thin lines,row sep={0.7cm,between origins}, column sep={0.4cm}]
                & \measure[style={fill=gray!10}]{_{\langle \texttt{Z}\rangle,\lnot\texttt{b}_1}}
            \end{quantikz}
             \end{subfigure}
             \hfill
             \begin{subfigure}[b]{0.48\textwidth}
                \centering
                \begin{quantikz}[thin lines,row sep={0.7cm,between origins}, column sep={0.4cm}]
                &\gate[style={fill=green!10}]{_{\texttt{Z}}} & \measure[style={fill=gray!10}]{_{\langle \texttt{X}\rangle,\texttt{b}_1}}
            \end{quantikz}
            $_{\ \ \equiv}$
            \begin{quantikz}[thin lines,row sep={0.7cm,between origins}, column sep={0.4cm}]
                & \measure[style={fill=gray!10}]{_{\langle \texttt{X}\rangle,\lnot\texttt{b}_1}}
            \end{quantikz}
             \end{subfigure}
             \caption{Pushing technique to avoid quantum gates.}
             \label{circ:push}
        \end{figure}
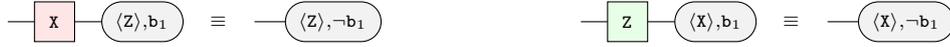

        \subsection{Teleportation}
        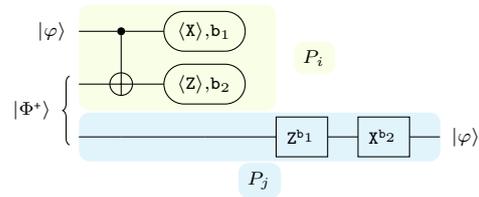
\begin{wrapfigure}{r}{7cm}
                \centering
                \begin{quantikz}[thin lines, row sep={0.7cm,between origins}, column sep={0.4cm}]
            		\lstick[]{$_{\ket{\varphi}}$}& \ctrl{1}\gategroup[wires=2,steps=2,style={draw=none,rounded corners,inner xsep=8pt,inner ysep=-1pt, fill=lime!10}, background, label style={rounded corners,label position=above, xshift = 1.8cm,  yshift=-0.85cm, fill=lime!10}, background]{$_{P_i}$} & \measure[style={fill=lime!10}]{_{\langle {\texttt{X}}\rangle,\texttt{b}_1}} \\
            		\lstick[2]{$_{\ket{\Phi^{\texttt{+}}}}$}& \targ{} & \measure[style={fill=lime!10}]{_{\langle {\texttt{Z}}\rangle,\texttt{b}_2}} \\
            		& \qw\gategroup[wires=1,steps=4,style={draw=none,rounded corners,inner xsep=8pt,inner ysep=-1pt, fill=cyan!10}, background,label style={rounded corners,label position=below, yshift=-0.37cm, fill=cyan!10}, background]{$_{P_j}$} & \qw & \gate[style={fill=cyan!10}]{_{\texttt{Z}^{\texttt{b}_1}}} & \gate[style={fill=cyan!10}]{_{\texttt{X}^{\texttt{b}_2}}} & \qw \rstick[]{$_{\ket{\varphi}}$}
            	\end{quantikz}
                \caption{Quantum teleportation protocol between two parties.}
                \label{fig:teleport}
                \hrulefill
            \end{wrapfigure}
        The first resource we report is called \textit{teleportation} \cite{cuomo2020towards}. It is a protocol that, by fact, teleport quantum information from a system to another, by means of entanglement. Figure \ref{fig:teleport} shows the protocol steps in circuit representation. The wires are grouped by color, representing different processor, or memories. A generic processor is referred as $P_i$. In the case of teleportation 2 processors $P_i,P_j$ are involved, which has the roles of \textit{sender} and \textit{receiver}. At the beginning of the protocol, the quantum information $\ket{\varphi}$ is located within the sender, together with half of the Bell state $\ket{\Phi^{\texttt{+}}}$. The receiver need to store the other part of the Bell state. At this point the receiver need to wait that the receiver perform a few operations meant to ``inject'' the information within its half part of the entangled state. Because of the entanglement, now the quantum information is already spread all over the system, which means that $P_j$, the receiver, also has partial knowledge of it.
        
        The end of this part consists on measuring the entire system in the orthogonal basis $\texttt{X}\otimes\texttt{Z}$. This is a necessary step in order to ensure that the receiver will get exactly $\ket{\varphi}$. In fact, the measurement will produce two boolean values, $\texttt{b}_1,\texttt{b}_2$, that the receiver will need to correct its state.

        To understand how such a resource can be used in computation, consider that it is often the case where two states need to interact but they are stored in such a position that does not allow them to do so, unless some middle step is performed to approach them. This means that a \textit{routing} protocol would take care of moving those states up to a couple of qubits which are able to interact one another. This can be done by means of teleportation \cite{hillmich2021exploiting}. However, in local quantum computation, it is more common to find some smart criteria to re-arrange the state storage when necessary, by means of \textit{swapping} protocols \cite{zulehner2019compiling, o2019generalized, madden2022best}. 
        Even if this is the most common approach, it may be smart as well to instead consider teleportation as an alternative to swapping protocols. 
        
        When considering distributed architecture, this are generally assumed to deeply exploit entanglement for their interconnection. Hence, it is more likely to see in the future proposal exploiting teleportation, rather than swapping.
        

        \subsection{Non-local operations}
        \label{sec:rcx}
        \begin{figure}[h]
                \centering
                \begin{quantikz}[thin lines, row sep={0.7cm,between origins}, column sep={0.4cm}]
            		\lstick[]{$_{\ket{\varphi}}$}& \ctrl{1}\gategroup[wires=2,steps=3,style={draw=none,rounded corners,inner xsep=8pt,inner ysep=-1pt, fill=lime!10}, background, label style={rounded corners,label position=above,  yshift=0.08cm, fill=lime!10}, background]{$_{P_i}$} &\qw & \gate[style={fill=lime!10}]{_{\texttt{Z}^{\texttt{b}_2}}} &  \qw \\
            		\lstick[2]{$_{\ket{\Phi^{\texttt{+}}}}$}& \targ{} & \measure[style={fill=lime!10}]{_{\langle {\texttt{Z}}\rangle,\ \texttt{b}_1}} \\
            		& \ctrl{1}\gategroup[wires=2,steps=3,style={draw=none,rounded corners,inner xsep=8pt,inner ysep=-1pt, fill=cyan!10}, background,label style={rounded corners,label position=below, yshift=-0.37cm, fill=cyan!10}, background]{$_{P_j}$} & \measure[style={fill=cyan!10}]{_{\langle {\texttt{X}}\rangle,\ \texttt{b}_2}}\\
            		\lstick[]{$_{\ket{\vartheta}}$}  & \targ{} & \qw & \gate[style={fill=cyan!10}]{_{\texttt{X}^{\texttt{b}_1}}} & \qw
            	\end{quantikz}
            	$\ \ \equiv$
            	\begin{quantikz}[thin lines, row sep={0.7cm,between origins}, column sep={0.4cm}]
                    \lstick[]{$_{\ket{\varphi}}$}&\ctrl{1}\gategroup[wires=1,steps=1,style={draw=none,rounded corners,inner xsep=8pt,inner ysep=5pt, fill=lime!10}, background, label style={rounded corners,label position=above,  yshift=0.08cm, fill=lime!10}, background]{$_{P_i}$} & \qw\\
                    \lstick[]{$_{\ket{\vartheta}}$}& \targ{}\gategroup[wires=1,steps=1,style={draw=none,rounded corners,inner xsep=8pt,inner ysep=2.5pt, fill=cyan!10}, background,label style={rounded corners,label position=below, yshift=-0.37cm, fill=cyan!10}, background]{$_{P_j}$} & \qw
                \end{quantikz}
                \caption{Tele-gate performing $\land(\texttt{X})$ between qubits belonging different processors.}
                \label{fig:rcx}
            \end{figure}
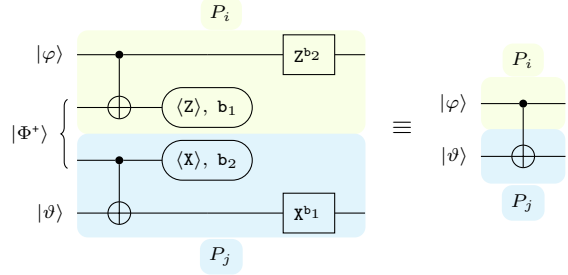
        The teleportation protocol results to be a basic example of a wider class of teleporting protocols. It is in fact possible to, not only teleport state, but entire operations. For this reason the procedure we report here can be also referred as \textit{tele-gate}.
        
        We already discussed the importance of the $\land(\texttt{X})$ operator for quantum computation in Section \ref{sec:qlogic}. We here report a way to perform such an operation between states belonging to different processors by means of non-local operations. In fact, since we are under the assumption the processors inter-connectivity is fundamentally based on distributed entangled states, a $\land(\texttt{X})$ operator can be implemented within a few steps. These steps are shown in Figure \ref{fig:rcx}.

        Notice that performing a non-local operation is not limited to the consumption of a Bell state $\ket{\Phi^{+}}$. Rather, as discussed in Sec. \ref{sec:importance}, one can use any Bell state. For the sake of simplicity, we will always refer to $\ket{\Phi^{+}}$ states.

         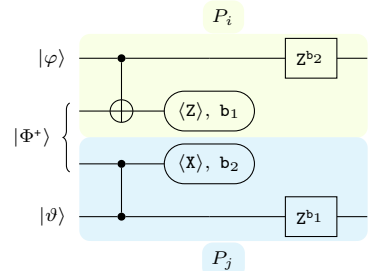
\begin{wrapfigure}{r}{7cm}
                \centering
                \begin{quantikz}[thin lines, row sep={0.7cm,between origins}, column sep={0.4cm}]
            		\lstick[]{$_{\ket{\varphi}}$}& \ctrl{1}\gategroup[wires=2,steps=3,style={draw =none,rounded corners,inner xsep=8pt,inner ysep=-1pt, fill=lime!10}, background, label style={rounded corners,label position=above,  yshift=0.08cm, fill=lime!10}, background]{$_{P_i}$} &\qw & \gate[style={fill=lime!10}]{_{\texttt{Z}^{\texttt{b}_2}}} &  \qw \\
            		\lstick[2]{$_{\ket{\Phi^{\texttt{+}}}}$}& \targ{} & \measure[style={fill=lime!10}]{_{\langle {\texttt{Z}}\rangle,\ \texttt{b}_1}} \\
            		& \ctrl{1}\gategroup[wires=2,steps=3,style={draw=none,rounded corners,inner xsep=8pt,inner ysep=-1pt, fill=cyan!10}, background,label style={rounded corners,label position=below, yshift=-0.37cm, fill=cyan!10}, background]{$_{P_j}$} & \measure[style={fill=cyan!10}]{_{\langle {\texttt{X}}\rangle,\ \texttt{b}_2}}\\
            		\lstick[]{$_{\ket{\vartheta}}$}  & \control{} & \qw & \gate[style={fill=cyan!10}]{_{\texttt{Z}^{\texttt{b}_1}}} & \qw
            	\end{quantikz}
                \caption{Tele-gate performing $\land(\texttt{Z})$ between qubits belonging to different processors.}
                \label{fig:rcz}
                \hrulefill
            \end{wrapfigure}
        Similarly to the teleportation protocol -- see Figure \ref{fig:teleport} --, there is a step meant two inject the control and target states within the entangled state. Thanks to this step, some quantum information is exchanged between the two parties. However, to ensure the equivalence with the subject operator, a measurement step is necessary. Hence, the state that was originally fully entangled in $\ket{\Phi^{\texttt{+}}}$ is measured in
        the orthogonal basis $\texttt{Z}\otimes\texttt{X}$. The output $\texttt{b}_1,\texttt{b}_2$ is then subject to a cross communication through classical channels and eventually used to perform Pauli corrections, i.e., the last step of the procedure: $\texttt{Z}^{\texttt{b}_2}$ over the control qubit and $\texttt{X}^{\texttt{b}_1}$ over the target qubit.

            Mindful of Section \ref{sec:qlogic}, we don't need further non-local operations in order to achieve universal computation. However, it is useful to know that other tele-gates are certainly possible. E.g. Figure \ref{fig:rcz} shows the procedure to perform a $\land(\texttt{Z})$ non-local operator.

        \subsection{Entanglement swap}
        Here we report a protocol known as \textit{entanglement swap}. It is a promising procedure as it models scalable distributed architecture. Figure \ref{fig:eswap} shows the parties involved and their actions. Specifically, assume that two processors $P_i,P_j$ cannot rely on a direct inter-connection through an entanglement pair. However they both share an interconnection with a \textit{middle} processor $P_k$. The middle processor has therefore stored in his memory two half of entangled pairs. By means of a local $\land(\texttt{X})$ on his system, $P_k$ inter-connects $P_i$ with $P_j$. As usual, an orthogonal measurement $\texttt{X}\otimes\texttt{Z}$ is necessary to apply eventual corrections over $P_i$ and $P_j$, which at the end of the procedure share a fully entangled state, ensuring the new inter-connection.
        
            \begin{wrapfigure}{r}{7cm}
                \centering
                \begin{quantikz}[thin lines, row sep={0.7cm,between origins}, column sep={0.4cm}]
            		\lstick[2]{$_{\ket{\Phi^{\texttt{+}}}}$}& \qw\gategroup[wires=1,steps=3,style={draw=none,rounded corners,inner xsep=8pt,inner ysep=-1pt, fill=lime!10}, background, label style={rounded corners,label position=above,  yshift=0.08cm, fill=lime!10}, background]{$_{P_i}$} & \qw  & \gate[style={fill=lime!10}]{_{\texttt{Z}^{\texttt{b}_2}}} & \qw\rstick[4]{$_{\ket{\Phi^{\texttt{+}}}}$} \\
            		 & \ctrl{1}\gategroup[wires=2,steps=2,style={draw=none,rounded corners,inner xsep=8pt,inner ysep=-1pt, fill=orange!10}, background, label style={rounded corners,label position=above, xshift = 1.8cm,  yshift=-0.85cm, fill=orange!10}, background]{$_{P_k}$}  & \measure[style={fill=orange!10}]{_{\langle {\texttt{X}}\rangle,\ \texttt{b}_1}} \\
            		\lstick[2]{$_{\ket{\Phi^{\texttt{+}}}}$}& \targ{} & \measure[style={fill=orange!10}]{_{\langle {\texttt{Z}}\rangle,\ \texttt{b}_2}}\\
            		& \qw\gategroup[wires=1,steps=3,style={draw=none,rounded corners,inner xsep=8pt,inner ysep=-1pt, fill=cyan!10}, background,label style={rounded corners,label position=below, yshift=-0.37cm, fill=cyan!10}, background]{$_{P_j}$} & \qw & \gate[style={fill=cyan!10}]{_{\texttt{X}^{\texttt{b}_1}}} & \qw
            	\end{quantikz}
                \caption{The entanglement swap protocol inter-connects two processors by means of an intermediate one, which has direct connection with both of them.}
                \label{fig:eswap}
                \hrulefill
            \end{wrapfigure}
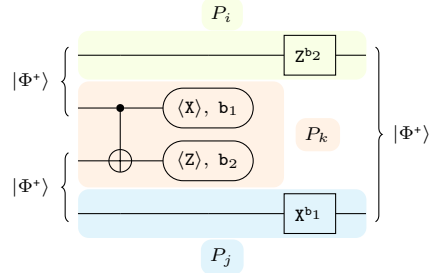
        Depending on the time model adopted when dealing with this procedure, the inter-connectivity among the processors find different treating. For example, the work done in \cite{ferrari2021compiler} has a more dynamic-like approach, making a distinction between \textit{link} and \textit{virtual link}. Such a choice probably comes from an interest in modeling a network of quantum technologies, where it is more common to deal with \textit{online} combinatorial problems \cite{hentenryck2006online}. Instead, our focus here is to smartly model distributed architecture meant to perform algorithms. This brings us to see the inter-connectivity within a more static time model; which translates into a simpler modeling for the connectivity. We explain this in detail within next sub-section \ref{sec:e-path}.
        \subsection{Entanglement paths}
        \label{sec:e-path}
        As our focus is on the treating of distributed quantum computing, we here provides a non-local $\land(\texttt{X})$, which makes use of entanglement swaps.
        The circuit in Figure \ref{fig:rcx_swap} comes from the combination of the basic implementation of a non-local $\land(\texttt{X})$ -- see Section \ref{sec:rcx} -- with the entanglement swap protocol.
            \begin{figure}[h]
                \centering
                \begin{quantikz}[thin lines, row sep={0.7cm,between origins}, column sep={0.4cm}]
                    \lstick[]{$_{\ket{\varphi}}$} & \ctrl{1}\gategroup[wires=2,steps=3,style={draw=none,rounded corners,inner xsep=8pt,inner ysep=-1pt, fill=lime!10}, background, label style={rounded corners,label position=above,  yshift=0.08cm, fill=lime!10}, background]{$_{P_i}$}& \qw &\gate[style={fill=lime!10}]{_{\texttt{Z}^{\texttt{b}_2\oplus \texttt{b}_4}}} & \qw\\
            		\lstick[2]{$_{\ket{\Phi^{\texttt{+}}}}$}& \targ{} & \measure[style={fill=lime!10}]{_{\langle {\texttt{Z}}\rangle,\ \texttt{b}_1}} \\
            		& \ctrl{1}\gategroup[wires=2,steps=2,style={draw=none,rounded corners,inner xsep=8pt,inner ysep=-1pt, fill=orange!10}, background, label style={rounded corners,label position=above, xshift = 1.8cm,  yshift=-0.85cm, fill=orange!10}, background]{$_{P_k}$} & \measure[style={fill=orange!10}]{_{\langle {\texttt{X}}\rangle,\ \texttt{b}_2}}\\
            		\lstick[2]{$_{\ket{\Phi^{\texttt{+}}}}$}& \targ{}  & \measure[style={fill=orange!10}]{_{\langle {\texttt{Z}}\rangle,\ \texttt{b}_3}}\\
            		& \ctrl{1}\gategroup[wires=2,steps=3,style={draw=none,rounded corners,inner xsep=8pt,inner ysep=-1pt, fill=cyan!10}, background,label style={rounded corners,label position=below, yshift=-0.37cm, fill=cyan!10}, background]{$_{P_j}$} & \measure[style={fill=cyan!10}]{_{\langle {\texttt{X}}\rangle,\ \texttt{b}_4}}\\
            		\lstick[]{$_{\ket{\vartheta}}$}& \targ{} & \qw & \gate[style={fill=cyan!10}]{_{\texttt{X}^{\texttt{b}_1\oplus \texttt{b}_3}}} & \qw
            	\end{quantikz}
                $\ \ \equiv$
            	\begin{quantikz}[thin lines, row sep={0.7cm,between origins}, column sep={0.4cm}]
                    \lstick[]{$_{\ket{\varphi}}$}&\ctrl{1}\gategroup[wires=1,steps=1,style={draw=none,rounded corners,inner xsep=8pt,inner ysep=5pt, fill=lime!10}, background, label style={rounded corners,label position=above,  yshift=0.08cm, fill=lime!10}, background]{$_{P_i}$} & \qw\\
                    \lstick[]{$_{\ket{\vartheta}}$}& \targ{}\gategroup[wires=1,steps=1,style={draw=none,rounded corners,inner xsep=8pt,inner ysep=2.5pt, fill=cyan!10}, background,label style={rounded corners,label position=below, yshift=-0.37cm, fill=cyan!10}, background]{$_{P_j}$} & \qw
                \end{quantikz}
                \caption{Tele-gate $\land(\texttt{X})$ by means of entanglement swap.}
                \label{fig:rcx_swap}
            \end{figure}
        It is important to notice that all the measurements happen at the same time. Even more important to know is that this result can be generalized to any number of middle processors $P_{k_1},P_{k_2},\dots,P_{k_m}$. For this reason we refer to $\{P_i, P_{k_1},P_{k_2},\dots,P_{k_m}, P_j\}$ as an \textit{entanglement path} of length $m+1$. We now give an inductive proof for this result.
        
        \begin{theorem}
        \label{th:path}
            An entanglement path $\{P_{i_1}, P_{i_2}, \dots, P_{i_m}\}$ has an implementation with depth 4.
        \end{theorem}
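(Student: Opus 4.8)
The plan is to prove the statement by induction on the number $m$ of processors on the path, exhibiting for each $m$ a concrete circuit whose operations split into exactly four layers: (1) a round of local $\land(\texttt{X})$ gates, (2) a round of single-qubit Pauli measurements on the communication qubits, (3) one round of classical communication, and (4) a round of single-qubit Pauli corrections applied only at the two endpoints. The whole point of the argument is that these four layers are fixed in number: no matter how many intermediate processors sit on the path, every local gate belongs to layer (1), every measurement to layer (2), the cross-communication of outcomes to layer (3), and every correction to layer (4), so the depth never exceeds $4$.

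First I would fix the construction for a generic path $\{P_{i_1}, \ldots, P_{i_m}\}$ by chaining the single-hop gadget: each interior processor holds two halves of neighbouring Bell pairs and applies one local $\land(\texttt{X})$ between them (the entanglement-swap gate of Figure \ref{fig:eswap}), while the two endpoints apply the injecting $\land(\texttt{X})$ of the basic tele-gate (Figure \ref{fig:rcx}). Because all these $\land(\texttt{X})$ act on pairwise-disjoint qubits they commute and form a single depth-$1$ layer, and likewise all communication qubits are measured simultaneously, giving a second depth-$1$ layer. The base case $m=2$ (a direct connection through one Bell pair) is exactly the depth-$4$ tele-gate of Figure \ref{fig:rcx}, while $m=3$ is the depth-$4$ circuit already displayed in Figure \ref{fig:rcx_swap}; both are simply read off.

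For the inductive step I would assume a path on $m$ processors admits a depth-$4$ implementation whose endpoint corrections are $\texttt{Z}^{c}$ and $\texttt{X}^{c'}$, with $c,c'$ certain parities of the measurement bits, and then insert one further processor, which adds one Bell pair, one local $\land(\texttt{X})$, and two outcomes $b,b'$. The new gate joins layer (1) and the new measurements join layer (2); the crucial claim is that the byproduct operators produced by the new swap need not be corrected locally but can be deferred to the endpoints, where they merely flip the accumulated parities to $c \oplus b'$ and $c' \oplus b$. This rests on two facts already available in the excerpt: Pauli operators propagate through $\land(\texttt{X})$ by the standard control/target rules, and a Pauli gate preceding a Pauli measurement can be pushed past it at the cost of a classical bit-flip of its outcome (the pushing technique of Figure \ref{circ:push}). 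Since these parities are purely classical quantities, they are absorbed into the single classical round (3) and the single correction layer (4), so the depth remains $4$.

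The hard part will be the byproduct-operator bookkeeping inside the inductive step: I must track how each accumulated $\texttt{X}$- and $\texttt{Z}$-correction commutes through every downstream $\land(\texttt{X})$ and every downstream Pauli measurement, and verify that the net classical corrections emerging at $P_{i_1}$ and $P_{i_m}$ are precisely the XOR parities generalising those of Figure \ref{fig:rcx_swap}, and that they restore the intended non-local $\land(\texttt{X})$ rather than leaving a residual Pauli error. Getting the alternation of $\texttt{X}$-type and $\texttt{Z}$-type measurements along the chain correct is the delicate point; once that invariant is established, the depth bound is immediate from the layered structure, and the parallelisability of the gates and measurements is what prevents the depth from growing with $m$.
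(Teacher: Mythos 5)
Your proposal is correct and takes essentially the same route as the paper: induction on the path length, with all local $\land(\texttt{X})$ gates parallelised into one layer and all measurements into another, and every intermediate Pauli byproduct pushed past the downstream measurements (the pushing technique plus the $\land(\texttt{X})$ propagation rules) so that only XOR-parity corrections of the form $\texttt{Z}^{\texttt{b}_1\oplus\texttt{b}_3\oplus\cdots}$ and $\texttt{X}^{\texttt{b}_2\oplus\texttt{b}_4\oplus\cdots}$ survive at the two endpoints, whence the depth stays fixed at $4$. The only divergence is one of scope: the stated theorem concerns the chained entanglement swaps alone, while the endpoint injection $\land(\texttt{X})$ gates you include belong to the paper's subsequent tele-gate theorem — folding them into a single argument is harmless and changes nothing essential.
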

        \begin{proof}
           Consider an entanglement path of length 2. A naive realization consists on putting in strict sequence two entanglement swaps:
           \begin{center}
               \begin{quantikz}[thin lines,row sep={0.65cm,between origins}, column sep ={0.2cm}]
        		 \lstick[2]{$_{\ket{\Phi^{\texttt{+}}}}$}& \qw & \qw  & \gate[style={fill=green!10}]{_{\texttt{Z}^{\texttt{b}_1}}} & \qw  & \qw &\gate[style={fill=green!10}]{_{\texttt{Z}^{\texttt{b}_3}}}\\
        		& \ctrl{1} &  \measure[style={fill=gray!10}]{_{\langle {\texttt{X}}\rangle, \texttt{b}_1}} \\
        	    \lstick[2]{$_{\ket{\Phi^{\texttt{+}}}}$}& \targ{} &  \measure[style={fill=gray!10}]{_{\langle {\texttt{Z}}\rangle, \texttt{b}_2}}\\
        		&\qw& \qw &\gate[style={fill=red!10}]{_{\texttt{X}^{\texttt{b}_2}}} & \ctrl{1}  & \measure[style={fill=gray!10}]{_{\langle {\texttt{X}}\rangle, \texttt{b}_3}}\\
        		 \lstick[2]{$_{\ket{\Phi^{\texttt{+}}}}$} &\qw&\qw &\qw & \targ{} & \measure[style={fill=gray!10}]{_{\langle {\texttt{Z}}\rangle, \texttt{b}_4}}\\
        		 &\qw & \qw &\qw & \qw & \qw   & \gate[style={fill=red!10}]{_{\texttt{X}^{\texttt{b}_4}}}
        	\end{quantikz}
           \end{center}
           Pauli gates are the only ones we are going to optimize; since the others are independent and no optimization can be applied. What follows is the base case for the induction:
            \begin{center}
                \begin{quantikz}[thin lines,row sep={0.65cm,between origins}, column sep ={0.2cm}]
            		& \gate[style={fill=green!10}]{_{\texttt{Z}^{\texttt{b}_{1}}}} & \qw & \qw & \gate[style={fill=green!10}]{_{\texttt{Z}^{\texttt{b}_{3}}}} & \qw\\
            		&\gate[style={fill=green!10}]{_{\texttt{X}^{\texttt{b}_{2}}}} & \ctrl{1} & \measure[style={fill=gray!10}]{_{\langle {\texttt{X}}\rangle, \texttt{b}_3}}\\
            		& \qw  &\targ{} & \measure[style={fill=gray!10}]{_{\langle {\texttt{Z}}\rangle, \texttt{b}_4}}\\
            		& \qw & \qw & \qw  & \gate[style={fill=red!10}]{_{\texttt{X}^{\texttt{b}_{4}}}} & \qw
            	\end{quantikz}
            	$\equiv$
            	\begin{quantikz}[thin lines,row sep={0.65cm,between origins}, column sep ={0.2cm}]
            		& \qw & \qw & \gate[style={fill=green!10}]{_{\texttt{Z}^{\texttt{b}_1\oplus \texttt{b}_3}}}&\\
            		& \ctrl{1} &\measure[style={fill=gray!10}]{_{\langle {\texttt{X}}\rangle, \texttt{b}_3}}\\
            		& \targ{} & \measure[style={fill=gray!10}]{_{\langle {\texttt{Z}}\rangle, \texttt{b}_4}}\\
            		& \qw & \qw & \gate[style={fill=red!10}]{_{\texttt{X}^{\texttt{b}_2\oplus \texttt{b}_4}}} &
            	\end{quantikz}
                \end{center} 
            The r.h.s. of the above equation has post-processing composed by $\texttt{Z}^{\texttt{b}_1 \oplus \texttt{b}_3}$ on first qubit and $\texttt{X}^{\texttt{b}_2 \oplus \texttt{b}_4}$ on last qubit. Notice that the measurements are independent from other operations.
           
            By assuming that such a shape is preserved in the inductive step, we show that this transformation can be applied to any length $m$:
            \begin{center}
                \begin{quantikz}[thin lines,row sep={0.65cm,between origins}, column sep ={0.2cm}]
        		& \gate[style={fill=green!10}]{_{\texttt{Z}^{\texttt{b}_1 \oplus \texttt{b}_3 \oplus\cdots \oplus \texttt{b}_{2m\texttt{-}3}}}}& \qw&\qw &\gate[style={fill=green!10}]{_{\texttt{Z}^{\texttt{b}_{2m\texttt{-}1}}}} &\\
        		& \gate[style={fill=red!10}]{_{\texttt{X}^{\texttt{b}_1 \oplus \texttt{b}_3 \oplus\cdots \oplus \texttt{b}_{2m\texttt{-}2}}}} & \ctrl{1}& \measure[style={fill=gray!10}]{_{\langle {\texttt{X}}\rangle,\texttt{b}_{2m\texttt{-}1}}}\\
        		& \qw & \targ{} & \measure[style={fill=gray!10}]{_{\langle {\texttt{Z}}\rangle, \texttt{b}_{2m\phantom{\texttt{-}1}}}}\\
        		& \qw & \qw &\qw &\gate[style={fill=red!10}]{_{\texttt{X}^{\texttt{b}_{2m\phantom{\texttt{-}1}}}}} &
        	\end{quantikz}
        	$\equiv$
        	\begin{quantikz}[thin lines,row sep={0.65cm,between origins}, column sep ={0.2cm}]
        		& \qw&\qw &\gate[style={fill=green!10}]{_{\texttt{Z}^{\texttt{b}_1 \oplus \texttt{b}_3 \oplus\cdots \oplus \texttt{b}_{2m\texttt{-}1}}}} &\\
        		& \ctrl{1}& \measure[style={fill=gray!10}]{_{\langle {\texttt{X}}\rangle, \texttt{b}_{2m\texttt{-}1}}}\\
        		& \targ & \qw & \measure[style={fill=gray!10}]{_{\langle {\texttt{Z}}\rangle, \texttt{b}_{2m\phantom{\texttt{-}1}}}}\\
        		& \qw & \qw  &\gate[style={fill=red!10}]{_{\texttt{X}^{\texttt{b}_1 \oplus \texttt{b}_3 \oplus\cdots \oplus \texttt{b}_{2m\phantom{\texttt{-}1}}}}} &
        	\end{quantikz}
            \end{center}
            This proves that we can always consider an entanglement path $\{P_{i_1}, P_{i_2}, \dots, P_{i_m}\}$ to have circuit depth 4.
        \end{proof}
        
        We just showed an efficient implementation for the entanglement path. Now we do one last step to exploit such a result and performing a generalized remote operation efficiently. 
        
        \begin{theorem}
            A tele-gate of entanglement path $\{P_{i_1}, P_{i_2}, \dots, P_{i_{m\texttt{+}2}}\}$ has depth 4.
        \end{theorem}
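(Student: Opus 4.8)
The plan is to reduce the statement to Theorem \ref{th:path} together with the non-local $\land(\texttt{X})$ of Section \ref{sec:rcx}. A tele-gate along the path $\{P_{i_1},\dots,P_{i_{m\texttt{+}2}}\}$ is exactly the circuit of Figure \ref{fig:rcx_swap} with $m$ interior processors: the two endpoints $P_{i_1}$ and $P_{i_{m\texttt{+}2}}$ hold the data qubits $\ket{\varphi}$ and $\ket{\vartheta}$ and inject them (via a local $\land(\texttt{X})$) into their respective halves of a Bell pair, while the $m$ interior processors $P_{i_2},\dots,P_{i_{m\texttt{+}1}}$ run the entanglement swaps that stitch these two halves together. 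First I would invoke Theorem \ref{th:path} on the interior chain: its depth-4 compression puts every swap $\land(\texttt{X})$ into a single layer, every interior measurement into a single layer, and pushes all interior Pauli byproducts to the two ends as $\texttt{Z}$- and $\texttt{X}$-type corrections.

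Next I would fold the two injection gates into this compressed circuit. The key observation is that each injection $\land(\texttt{X})$ acts on a support (a data qubit together with the local Bell half at an endpoint) that is disjoint from every swap $\land(\texttt{X})$ and from the other injection; hence all of these controlled-NOTs commute and occupy one common layer. Likewise the two endpoint Bell halves are measured -- in $\texttt{Z}$ at $P_{i_1}$ and in $\texttt{X}$ at $P_{i_{m\texttt{+}2}}$, as in Figure \ref{fig:rcx} -- so their measurements join the single interior measurement layer. What remains is to collapse the corrections: using the pushing technique of Figure \ref{circ:push}, I would propagate every byproduct through the $\land(\texttt{X})$'s and Pauli measurements it precedes, so that all $\texttt{Z}$-type bits accumulate on $\ket{\varphi}$ and all $\texttt{X}$-type bits on $\ket{\vartheta}$, yielding a single $\texttt{Z}^{\texttt{b}_2\oplus\texttt{b}_4\oplus\cdots}$ on the control and $\texttt{X}^{\texttt{b}_1\oplus\texttt{b}_3\oplus\cdots}$ on the target, exactly mirroring the base case $\texttt{Z}^{\texttt{b}_2\oplus\texttt{b}_4}$, $\texttt{X}^{\texttt{b}_1\oplus\texttt{b}_3}$ of Figure \ref{fig:rcx_swap}. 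Since $\ket{\varphi}$ and $\ket{\vartheta}$ are never measured and live on distinct qubits, these two corrections fill one final layer. The construction therefore occupies the same four time slots -- injection and swap $\land(\texttt{X})$'s, simultaneous measurements, classical combination, and end corrections -- as Theorem \ref{th:path}.

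The main obstacle I expect is the bookkeeping of this byproduct propagation rather than any conceptual difficulty: I must check that pushing the $\texttt{X}$-byproduct created by an injection gate through the \emph{adjacent} swap $\land(\texttt{X})$ and its $\texttt{Z}$/$\texttt{X}$ measurements deposits it on the correct opposite endpoint, and -- crucially -- that no byproduct ever demands a Pauli correction on an interior qubit \emph{before} that qubit is measured, since such a correction would insert an extra layer and break the depth-4 bound. Establishing that the corrections only ever pile up on the two unmeasured data qubits (again via Figure \ref{circ:push}, where a Pauli preceding a complementary Pauli measurement merely flips the classical outcome) is what makes the parallelized, constant-depth circuit legitimate, and this can be carried out by the same induction on $m$ used in Theorem \ref{th:path}.
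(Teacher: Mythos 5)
Your proposal is correct and follows essentially the same route as the paper: it invokes the entanglement-path theorem to compress the interior swap chain, then propagates (pushes) the accumulated Pauli byproducts through the two injection $\land(\texttt{X})$ gates and past the endpoint measurements so that all corrections land only on the two unmeasured data qubits, leaving the measurements independent and the depth at 4. The extra bookkeeping you flag (disjoint supports of injection and swap gates, no correction ever needed on an interior qubit before its measurement) is exactly what the paper's single propagation-equivalence circuit encodes, so there is no substantive difference.
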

        \begin{proof}
        The theorem above allows us to assume that, to perform a remote operation by using a path of length $m$, the computing qubits interact only with two communications qubits and depend only by Pauli operations $\texttt{Z}^{\texttt{b}_1 \oplus \texttt{b}_3 \oplus\cdots \oplus \texttt{b}_{2m-1}}$ and $\texttt{X}^{\texttt{b}_2 \oplus \texttt{b}_4 \oplus\cdots \oplus \texttt{b}_{2m}}$. We can further \textit{propagate} such operations as follows:
        \begin{center}
            \begin{quantikz}[thin lines,row sep={0.65cm,between origins}, column sep ={0.2cm}]
        		& \qw& \ctrl{1} & \qw &\gate[style={fill=green!10}]{_{\texttt{Z}^{\texttt{b}_{2m\texttt{+}2}}}} & \\
        		&\gate[style={fill=green!10}]{_{\texttt{Z}^{\texttt{b}_1 \oplus \texttt{b}_3 \oplus\cdots \oplus \texttt{b}_{2m\texttt{-}1}}}}& \targ{} & \measure[style={fill=gray!10}]{_{\langle {\texttt{Z}}\rangle, \texttt{b}_{2m\texttt{+}1}}}\\
        		&\gate[style={fill=red!10}]{_{\texttt{X}^{\texttt{b}_2 \oplus \texttt{b}_4 \oplus\cdots \oplus \texttt{b}_{2m\phantom{\texttt{-}1}}}}}& \ctrl{1} & \measure[style={fill=gray!10}]{_{\langle {\texttt{X}}\rangle, \texttt{b}_{2m\texttt{+}2}}}\\
        		&\qw& \targ{} & \qw & \gate[style={fill=red!10}]{_{\texttt{X}^{\texttt{b}_{2m\texttt{+}1}}}} &
        	\end{quantikz}
        	$\equiv$
        	\begin{quantikz}[thin lines,row sep={0.65cm,between origins}, column sep ={0.2cm}]
        		& \ctrl{1} & \qw  & \gate[style={fill=green!10}]{_{\texttt{Z}^{\texttt{b}_1 \oplus \texttt{b}_3 \oplus\cdots \oplus \texttt{b}_{2m\texttt{-}1} \oplus \texttt{b}_{2m\texttt{+}2}}}} & \\
        		& \targ{} & \measure[style={fill=gray!10}]{_{\langle {\texttt{Z}}\rangle, \texttt{b}_{2m\texttt{+}1}}}\\
        		& \ctrl{1} & \measure[style={fill=gray!10}]{_{\langle {\texttt{X}}\rangle, \texttt{b}_{2m\texttt{+}2}}}\\
        		& \targ{} & \qw & \gate[style={fill=red!10}]{_{\texttt{X}^{\texttt{b}_2 \oplus \texttt{b}_4 \oplus\cdots \oplus \texttt{b}_{2m\phantom{\texttt{+}1}} \oplus \texttt{b}_{2m\texttt{+}1}}}} &
        	\end{quantikz}
        \end{center}
        In this way the measurements are independent and the depth of the circuit is not increased.
        \end{proof}

\subsection{Entanglement Trees}
\label{sec:rcxx}
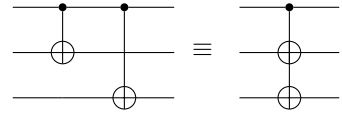
\begin{wrapfigure}{r}{5cm}
    \centering
    \begin{quantikz}[thin lines,row sep={0.6cm,between origins}]
        &\ctrl{1} & \ctrl{2} & \qw\\
        &\targ{} & \qw & \qw\\
        & \qw & \targ{} & \qw 
    \end{quantikz}
    $\ \equiv$
    \begin{quantikz}[thin lines,row sep={0.6cm,between origins}]
        &\ctrl{2} & \qw\\
        &\targ{} & \qw\\
        & \targ{} & \qw 
    \end{quantikz}
    \caption{Circuit representation of the equivalence \eqref{eq:cxx}.}
    \label{fig:cxx}
	\hrulefill
\end{wrapfigure}
Driven again by the aim of finding smart strategies to perform non-local gates with low consumption of entanglement links; we now investigate a way to perform multiple operations \textbf{by means of the same link}. To get a first intuition of what we are going to show, consider the following equivalence:
\begin{equation}
    \label{eq:cxx}
    \land(\mathds{1}\otimes\texttt{X})\cdot\land(\texttt{X}\otimes\mathds{1}) \equiv \land(\texttt{X}\otimes\texttt{X});
\end{equation}
which has circuit representation reported in Figure \ref{fig:cxx}. Behind its simplicity, equivalence \eqref{eq:cxx} gives us a different perspective to implement multiple non-local operations by means of one entanglement link. This happens for the case of two processors $P_i,P_j$, but it can be generalized. In fact, as showed in \cite{andres2019automated,yimsiriwattana2004generalized}, any $\land(\texttt{X}^{\otimes m})$ -- with system of dimensionality $n>m$ and control qubit being stored in a different processor w.r.t. the target one -- admits an implementation that needs one entanglement link only. 

\begin{figure}[h]
    \centering
    \begin{quantikz}[transparent,thin lines, row sep={0.7cm,between origins}, column sep={0.4cm}]
            \lstick[]{$_{\ket{\varphi}}$} & \ctrl{1}\gategroup[wires=2,steps=3,style={draw=none,rounded corners,inner xsep=8pt,inner ysep=-1pt, fill=lime!10}, background, label style={rounded corners,label position=above, xshift = 2.5cm,  yshift=-0.85cm, fill=lime!10}, background]{$_{P_i}$} & \qw  & \gate{_{\texttt{Z}^{\texttt{b}_2\oplus\texttt{b}_4}}} & \qw\\
	    \lstick[2]{$_{\ket{\Phi^{\texttt{+}}}}$} & \targ{}  & \measure[style={fill=lime!10}]{_{\langle \texttt{Z}\rangle,\texttt{b}_1}} &\\
		& \ctrl{2}\gategroup[wires=3,steps=3,style={draw=none,rounded corners,inner xsep=8pt,inner ysep=-1pt, fill=orange!10}, background, label style=        {rounded corners,label position=above, xshift = 2.5cm,  yshift=-1.2cm, fill=orange!10}, background]{$_{P_k}$} & \measure[style={fill=orange!10}]          {_{\langle \texttt{X}\rangle,\texttt{b}_2}} & \\
		\lstick[]{$_{\ket{\psi}}$} &  \targ{}  &\qw & \gate{_{\texttt{X}^{\texttt{b}_1}}} & \qw\\
		\lstick[2]{$_{\ket{\Phi^{\texttt{+}}}}$}& \targ{}  &\measure[style={fill=orange!10}]{_{\langle \texttt{Z}\rangle,\texttt{b}_3}} &\\
		& \ctrl{1}\gategroup[wires=2,steps=3,style={draw=none,rounded corners,inner xsep=8pt,inner ysep=-1pt, fill=cyan!10}, background, label style={rounded corners,label position=above, xshift = 2.5cm,  yshift=-0.85cm, fill=cyan!10}, background]{$_{P_j}$}  & \measure[style={fill=cyan!10}]{_{\langle \texttt{X}\rangle,\texttt{b}_4}} &\\
		\lstick[]{$_{\ket{\vartheta}}$} & \targ{}  & \qw  & \gate{_{\texttt{X}^{\texttt{b}_1 \oplus \texttt{b}_3}}} & \qw
    \end{quantikz}
    $\ \ \ \equiv$
    \begin{quantikz}[thin lines, row sep={0.7cm,between origins}, column sep={0.4cm}]
        \lstick[]{$_{\ket{\varphi}}$}&\ctrl{2}\gategroup[wires=1,steps=1,style={draw=none,rounded corners,inner xsep=8pt,inner ysep=5pt, fill=lime!10}, background, label style={rounded corners,label position=above,  xshift = 1cm, yshift = -0.5cm, fill=lime!10}, background]{$_{P_i}$} & \qw\\
        \lstick[]{$_{\ket{\psi}}$}& \targ{}\gategroup[wires=1,steps=1,style={draw=none,rounded corners,inner xsep=8pt,inner ysep=2.5pt, fill=orange!10}, background,label style={rounded corners,label position=below, xshift = 1cm, yshift = 0.225cm, fill=orange!10}, background]{$_{P_k}$} & \qw\\
        \lstick[]{$_{\ket{\vartheta}}$}& \targ{}\gategroup[wires=1,steps=1,style={draw=none,rounded corners,inner xsep=8pt,inner ysep=2.5pt, fill=cyan!10}, background,label style={rounded corners,label position=below, xshift = 1cm, yshift = 0.225cm, fill=cyan!10}, background]{$_{P_j}$} & \qw
    \end{quantikz}
    \caption{Remote implementation of $\land(\texttt{X}\otimes\texttt{X})$.}
    \label{fig:2rcxx}
\end{figure}
Stemming from equation \eqref{eq:cxx}, the two target systems are essentially independent, up to the common control qubit. Hence, there is no reason to restrict them to be part of the same processor. For the case under consideration  -- i.e. $\land(\texttt{X}\otimes\texttt{X})$ --, the maximum number of processors is three. Figure \ref{fig:2rcxx} shows the circuit protocol to perform $\land(\texttt{X}\otimes\texttt{X})$, where the system is spread over three processors.

Thanks to the entanglement path definition given in Sec. \ref{sec:e-path}, such a protocol runs efficiently also for system linked only by an entanglement path.

We will use this technique in Ch. \ref{ch:compile} to minimize entanglement links consumption, up to gain optimal solutions in some interesting scenarios.

\printbibliography[title=References,heading=subbibintoc]
\end{refsection}
		\begin{refsection}
\chapter{Quantum noise and how to handle it}
\label{ch:noise}
\thispagestyle{empty}
\newpage

\section{Quantum noise}
    Assuming a good characterizing model for the noise affecting quantum information can be challenging. A highly generic one can be described as the evolution $\mathcal{N}$ on an $d$-dimensional system, where a state of interest $\sigma$ evolves together with the environment in a state $\rho$ \cite{nielsen2010quantum}:
            \[\mathcal{N}(\sigma) = \texttt{Tr}_{\texttt{env}}\big(\texttt{U}(\sigma \otimes \rho)\texttt{U}^{\dagger}\big).\]
            Let $\{\ket{e_k}\}_{k}$ be an orthonormal basis for the environment. One can assume the environment system as being in the state $\rho = \ket{e_0}\bra{e_0}$. This assumption is non-restrictive as the basis is generic and if the considered environment was not pure, one can always introduce an extra \textit{reference} system to purify $\rho$. It follows that
            \[\mathcal{N}(\sigma) = \sum_{k}\bra{e_k}\texttt{U}(\sigma \otimes \ket{e_0}\bra{e_0})\texttt{U}^{\dagger}\ket{e_k}.\]
            By defining $\texttt{N}_k = \bra{e_k}\texttt{U}\ket{e_0}$, known as \textit{Kraus} operator, $\mathcal{N}$ becomes
            \begin{equation}
                \label{eq:kraus}
                \mathcal{N}(\sigma) = \sum_{k}\texttt{N}_k\sigma \texttt{N}_k^{\dagger}.
            \end{equation}
            Decoherence is the most problematic noise affecting information. This can be represented with the Kraus formalism in terms of Pauli operators acting on each qubit independently \cite{dennis2002topological,mancini2020quantum, berg2022probabilistic}, hence the set $\{\texttt{N}_k\}_{k}$ as the form $\{\sqrt{\texttt{p}_k}\ \texttt{E}_k\}_{k}$, such that $\sum_{k} \texttt{p}_k = 1$ and $\texttt{E}_k$ belongs to the \textit{Pauli group} $\mathbb{P} \equiv \{\pm \mathds{1},\pm \texttt{i}\mathds{1},\pm \texttt{X},\pm \texttt{i}\texttt{X},\pm \texttt{Y},\pm \texttt{i}\texttt{Y},\pm \texttt{Z},\pm \texttt{i}\texttt{Z}\}^{\otimes d}$.
    
    A further refinement is achievable. In fact, from an information perspective, the global phase has no relevance. Hence, by considering the ``quotient'' group\footnote{Being aware of the equivalence $\texttt{XZ} \equiv -\texttt{iY}$.} 
    \begin{equation}
    \label{eq:quot-pauli}
        \mathbb{E} \equiv \{\mathds{1},\texttt{X},\texttt{Z}, \texttt{XZ}\}^{\otimes d} \subset \mathbb{P},
    \end{equation}
    one can assume $\{\texttt{E}_k\}_{k} \subset \mathbb{E}$.

    A comparison among Pauli noise evolution for a 2-dimensional state is given in Fig. \ref{fig:pauli-bloch}.
    \begin{figure}[h]
            \centering
            \begin{subfigure}{0.3\textwidth}
                \centering
                \includegraphics[scale=0.1]{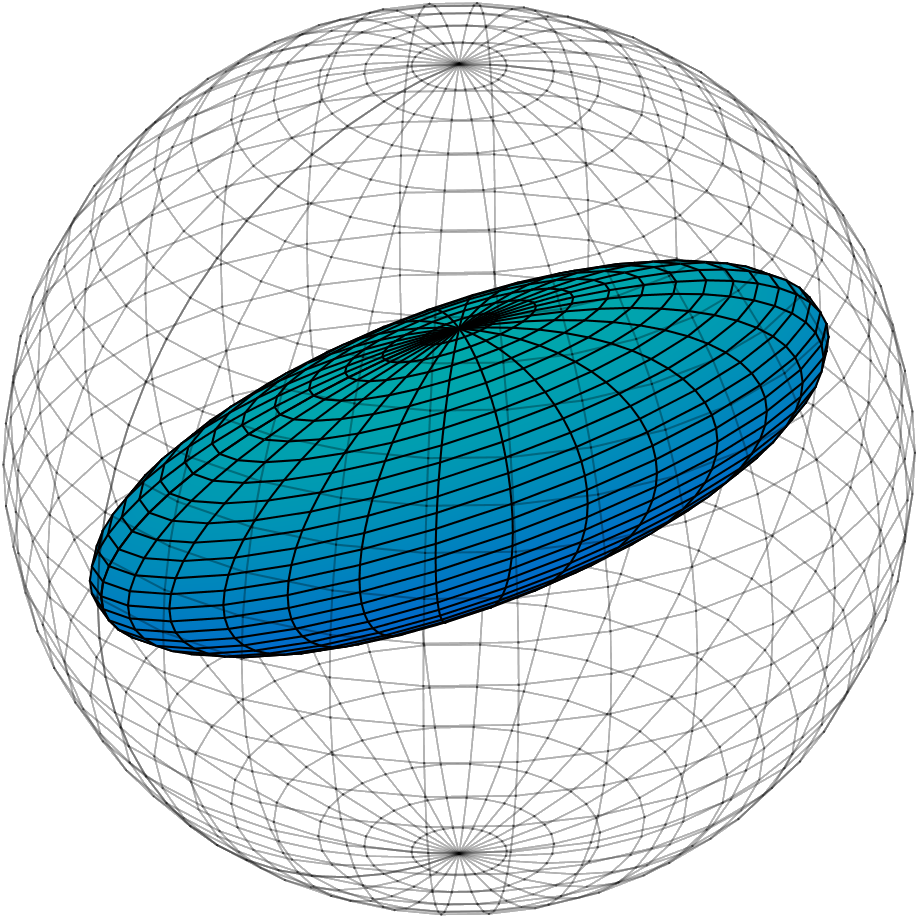}
                \caption*{$\{\texttt{X},\mathds{1}\}$ and $\texttt{p} = 0.3$.}
            \end{subfigure}
            \begin{subfigure}{0.3\textwidth}
                \centering
                \includegraphics[scale=0.1]{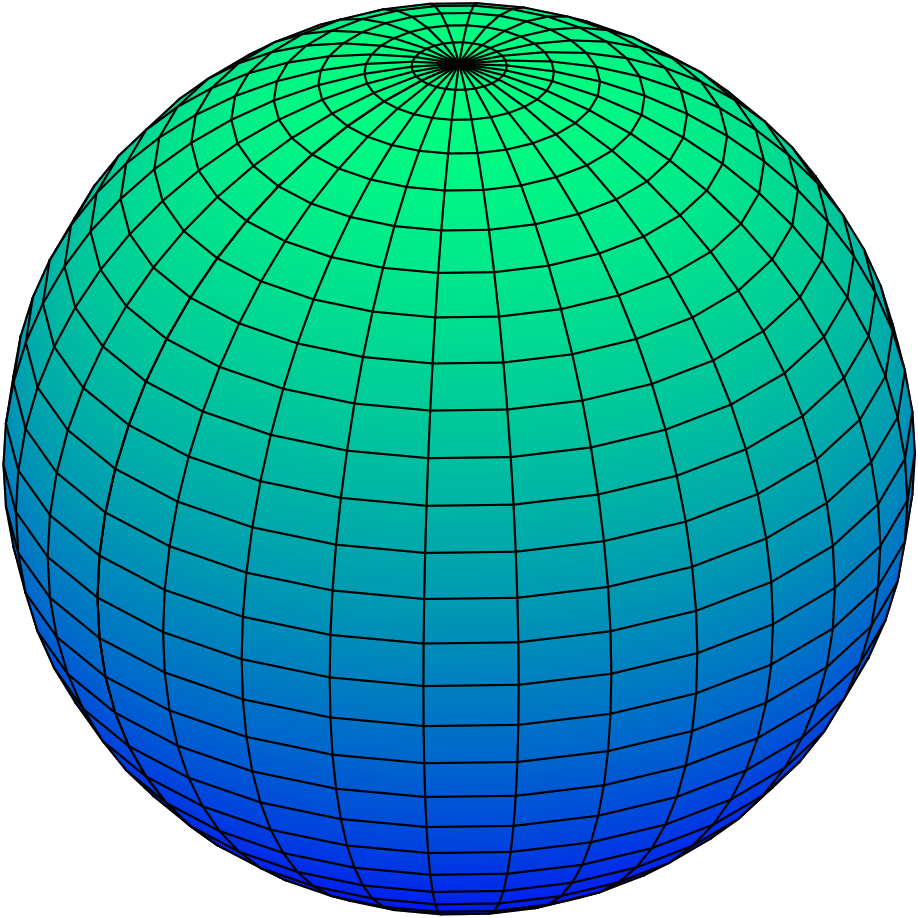}
                \caption*{$\{\mathds{1}\}$, i.e., a noiseless evolution.}
            \end{subfigure} 
           \begin{subfigure}{0.3\textwidth}
           \centering
                \includegraphics[scale=0.1]{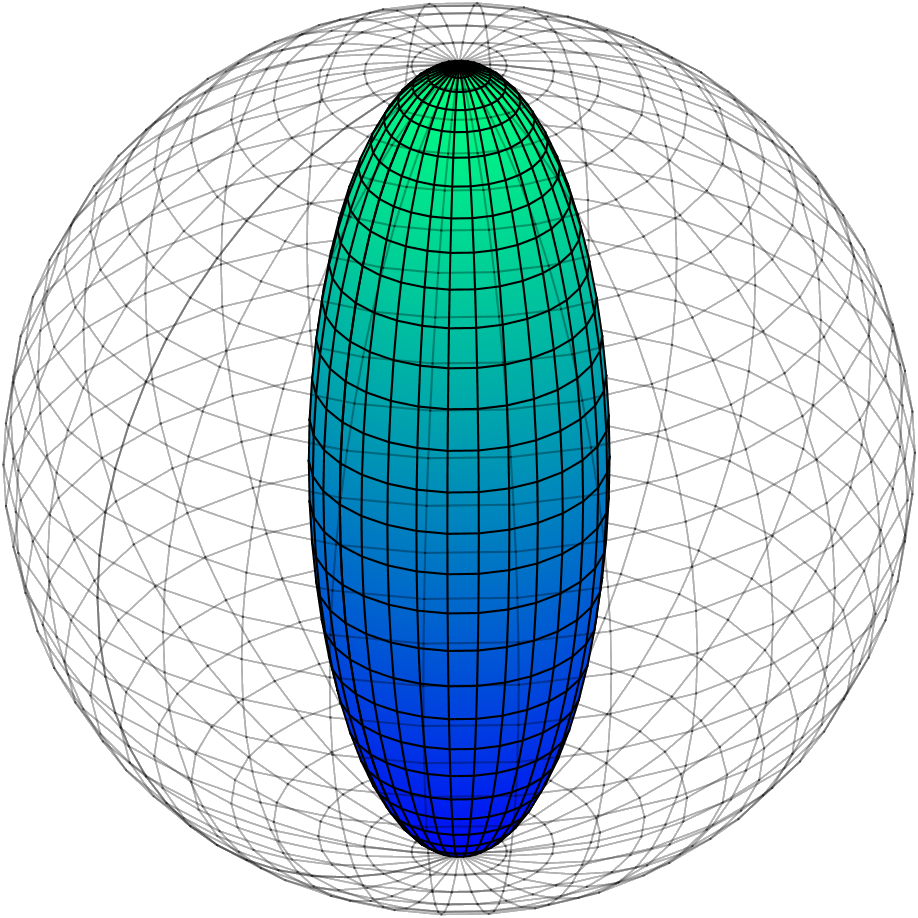}
                \caption*{$\{\texttt{Z},\mathds{1}\}$ and $\texttt{p} = 0.3$.}
            \end{subfigure}
            \caption{Example of Pauli noise for 2-dimensional Hilbert space in Bloch sphere representation.}
            \label{fig:pauli-bloch}
    \end{figure}
        
    \section{Estimating an evolution}
        \label{sec:qpt}
            For a given evolution $\mathcal{N}$, another useful representation of its action on a state $\sigma$ is through its \textit{Choi matrix} \cite{choi1975completely} $\varsigma_{_\mathcal{N}}$. When the Kraus operators of $\mathcal{N}$ are known -- i.e. $\{\sqrt{\texttt{p}_k}\ \texttt{E}_k\}_k$ -- the Choi matrix is immediately defined as \cite{de2019fault}:
            \begin{equation}
                \label{eq:choi}
                \varsigma_{_\mathcal{N}} = \sum_{n,m}\sqrt{\texttt{p}_{n}\texttt{p}_{m}} |\texttt{E}_n\rangle\rangle\langle\langle \texttt{E}_m|,
            \end{equation}
            with $|\texttt{E}\rangle\rangle$ being the \textit{vectorization} of $\texttt{E}$.
            When the evolution is \textit{unitary} -- i.e. $\mathcal{U}(\sigma) = \texttt{U}\sigma\texttt{U}^\dagger$ --, equation \eqref{eq:choi} simplifies to 
            \begin{equation}
                \label{eq:u_choi}
            \varsigma_{_\mathcal{U}} = |\texttt{U}\rangle\rangle\langle\langle \texttt{U}|.
            \end{equation}
            
            Working with Choi matrices allows to compute the fidelity of some unknown evolution w.r.t. a target one. For example, given a target unitary evolution $\mathcal{U}$ and an \textit{experimental} evolution $\mathcal{E}$, both operating on $d$-dimensional Hilbert space. The following function is a proper fidelity for the two channels:
            \begin{equation}
                \mathfrak{f}(\mathcal{U},\mathcal{E}) = \frac{1}{d} \cdot \texttt{Tr}(\sqrt{\sqrt{\varsigma_{_\mathcal{U}}}\sqrt{\varsigma_{_\mathcal{E}}}})^2.
            \end{equation}
            
            Estimating $\varsigma_{_\mathcal{E}}$ is an expensive procedure based on a orthogonal set of measurements. A sufficient and common one is the Pauli group. In the circuit model we refer to this procedure as follows
            \begin{center}
                \begin{quantikz}[row sep={0.7cm,between origins}, column sep={0.4cm}]
                    \lstick[]{$_{\sigma}$}&\gate[style={fill=lime!10}]{_{\mathcal{E}}}&\measure[style={fill=lime!10}]{_{\langle {\texttt{X}}, {\texttt{Y}}, {\texttt{Z}}\rangle}}
                \end{quantikz}
            \end{center}
            The circuit above represent a \textit{state tomography} -- i.e., how $\mathcal{E}$ affects $\sigma$. It works by running $k$ sets of experiments to obtain an estimator for $\{\texttt{p}_k\}_k$.
            
            Similarly, a \textit{process tomography} starts from an orthogonal set of input states in order to fully characterize $\varsigma_{\mathcal{E}}$. In the circuit model we express such an estimation as
            \begin{center}
                \begin{quantikz}[row sep={0.7cm,between origins}, column sep={0.4cm}]
                    \meterD[style={fill=lime!10}]{_{\ket{\texttt{0},\texttt{1},\texttt{+},\texttt{i}}}}&\gate[style={fill=lime!10}]{_{\mathcal{E}}} &\measure[style={fill=lime!10}]{_{\langle {\texttt{X}}, {\texttt{Y}},{\texttt{Z}}\rangle}}
                \end{quantikz}
            \end{center}
            Tomography methods are intractable when considering high-dimensional systems.
            

    \section{Noise canceling through indefinite causal orders}
    \label{sec:soco}
    Similarly to what we have done in Sec. \ref{sec:logical-ifo}, we now report the possible advantages coming from the indefinite causal order framework, applied to non-unitaries. Namely, noisy channels are superposed in order to increase the overall capacity \cite{EblSalChi-18, SalEblChi-18,CalCac-20, koudia2022deep}
    
     There are several ways to physically implement an indefinite order. Most of realizations are photonic-based \cite{ProMogAra-15, RubRozFei-17, GuoHuHou-20, FitJonVed-15, GosGiaKew-18}, but it is not the only way. Indeed, within this Section we go over the work in Ref. \cite{cuomo2021experiencing}; presenting an implementation with a programmable technology, based on superconductors \cite{ibmq, Cas-17}.
     

        The Indefinite Causal Orders for two evolution $\mathcal{N} \mapsto \{\texttt{N}_n\}_n$ and ${\mathcal{M}} \mapsto \{\texttt{M}_m\}_m$, is given by \cite{ChiDarPer-13}:
        \begin{equation}
            \label{Eq:ico}
            \mathcal{S}(\sigma, \varkappa) = \sum_{nm} \texttt{S}_{nm} (\sigma \otimes \varkappa) \texttt{S}_{nm}^\dagger,
        \end{equation}
        where $\varkappa$ is a \textit{control state} and $\{\texttt{S}_{nm}\}_{nm}$ denotes the set of Kraus operators such that
        \begin{equation}
        \texttt{S}_{nm} = \texttt{N}_n \texttt{M}_m \otimes \ket{\texttt{0}}\bra{\texttt{0}} + {\texttt{M}}_m \texttt{N}_n  \otimes \ket{\texttt{1}}\bra{\texttt{1}}    
        \end{equation}
        Therefore, under the assumption of errors in $\mathbb{E}$ -- see Eq. \ref{eq:quot-pauli} --, it is also true that
        \begin{equation}
        \texttt{S}_{nm} = \sqrt{\texttt{p}_{n}\texttt{p}_{m}}
            \begin{bmatrix} 
                \texttt{E}_{nm} & \mathbf{0}\\
                \mathbf{0} & {\texttt{E}}_{mn} 
            \end{bmatrix}.
        \end{equation}
        As $\texttt{N}_n \texttt{M}_m = \sqrt{\texttt{p}_{n}\texttt{p}_{m}}\ \texttt{E}_{nm}$ and $\texttt{M}_m \texttt{N}_n = \sqrt{\texttt{p}_{m}\texttt{p}_{n}}\ \texttt{E}_{mn}$, with $\texttt{E}_{nm},\texttt{E}_{mn}$ being in the Pauli group.
    
        As discussed in Sec. \ref{sec:programming}, higher-ordered circuit frameworks are not presented within this thesis. An \textit{oracle} is sufficient to our purpose. 
        \begin{center}
            \begin{quantikz}[thin lines, row sep={0.5cm,between origins}, column sep={0.4cm}]
                \lstick{$_{\sigma}$} & \gate[2,style={fill=violet!10}]{_{\mathcal{S}}} & \qw \rstick[2]{$_{\sum_{nm}\texttt{S}_{nm}(\sigma \otimes \varkappa) \texttt{S}_{nm}^{\dagger}}$}\\
                \lstick{$_{\varkappa}$}&  & \qw
            \end{quantikz}
        \end{center}
        
        
             Our aim here is to report our work published as in \cite{cuomo2021experiencing}. Namely, experiencing and evaluating the indefinite causal order within a \textit{Noisy Intermediate-Scale Quantum} (NISQ) architecture \cite{Preskill2018}, based on superconductors. NISQ architectures are widespread and they promise to be resources of practical interest in the next future. Furthermore, the design is likely to rapidly evolve, also by considering the Indefinite Causal Order as resource. Our hope is to enrich the knowledge on the capabilities of current quantum technologies, with the long-term goal of contributing to shape future architecture designs.
            
            The experiment set-up is meant to witness the communication advantage, resulting from a specific case of the subject evolution. 
            According to \textit{quantum Shannon theory} \cite{Wil-13}, the capacity is a metric to quantify the ability for a noisy channel to convey quantum information, without destroying it. A channel with null capacity destroys the \textit{coherence} of the quantum information, meaning that it is not possible to retrieve the original information. By superposing two or more null capacity channels, the result is a new channel with a not-null capacity, an interesting behaviour from a practical point of view \cite{SalEblChi-18,CalCac-20}.

            In the following we consider the \textit{bit-flip} channel $\mathcal{X} \mapsto \{\texttt{X}, \mathds{1}\}$ and the \textit{phase-flip} channel ${\mathcal{Z}} \mapsto \{\texttt{Z}, \mathds{1}\}$, having noise probability, respectively, \texttt{p} and \texttt{q}.
            Ultimately, by preparing the control state to be $\varkappa = \ket{\texttt{+}}\bra{\texttt{+}}$, it occurs the evolution represented in Figure \ref{fig:ico_comm} \cite{SalEblChi-18}.
            \begin{figure}[h]
                \centering
                \label{eq:qs_xz}
                \begin{quantikz}[thin lines, row sep={0.5cm,between origins}, column sep={0.4cm}]
                    \lstick{$_{\sigma}$} & \gate[2,style={fill=violet!10}]{_{\mathcal{S}}} & \qw \rstick[2]{$_{(\ \cdots\ ) \otimes \ket{\texttt{+}}\bra{\texttt{+}}\ +\  \texttt{pq} ( \texttt{Y} \sigma \texttt{Y} ) \otimes \ket{\texttt{-}}\bra{\texttt{-}}}$}\\
                    \meterD[style={fill=cyan!10}]{_{\ket{\texttt{+}}}}&  & \qw
                \end{quantikz}
                \caption{Superposition of causal orders for bit-flip and phase-flip.}
                \label{fig:ico_comm}
            \end{figure}
            
            According to the bottleneck inequality \cite{Wil-13}, given the composite operation $\mathcal{Z}\circ\mathcal{X}$ and let $\mathfrak{C}(\cdot)$ be the quantum capacity, the following upper-bound holds \cite{SalEblChi-18}:
            \begin{equation}
                \mathfrak{C}(\mathcal{Z}\circ\mathcal{X}) \leq 1 - \texttt{max}\{\mathfrak{h}_{\texttt{2}}(\texttt{p}), \mathfrak{h}_{\texttt{2}}(\texttt{q})\},
            \end{equation}
            where $\mathfrak{h}_{\texttt{2}}$ denotes the binary Shannon entropy. Also, the same inequality holds for $\mathcal{X}\circ\mathcal{Z}$. 
            
            Whenever both $\texttt{p}$ and $\texttt{q}$ are equal to $\sfrac{1}{2}$, we have that both the configurations are characterized by a null capacity, i.e., $\mathfrak{C}(\mathcal{Z} \circ \mathcal{X}) = \mathfrak{C}(\mathcal{X} \circ \mathcal{Z}) = 0$.
            
            Let us now superpose the two noises. Accordingly, with probability $\texttt{pq}$ the output of circuit \ref{fig:ico_comm} is given by the second addendum, namely, $\big( \texttt{Y} \sigma \texttt{Y} \big) \otimes \ket{\texttt{-}}\bra{\texttt{-}}$.
            As consequence the capacity of $\mathcal{S}$ is lower-bounded by $\mathfrak{C}(\mathcal{S}) \geq \sfrac{1}{4}$, as shown in \cite{CalCac-20}.
            Specifically, $\sigma$ occurs to pass through $\mathcal{Y}(\sigma) = \texttt{Y}\sigma\texttt{Y}$, coherently with control state being $\ket{\texttt{-}}\bra{\texttt{-}}$. Therefore, it is possible to exploit the control state to gain a heralded unitary evolution $\mathcal{Y}$ via post-selection through the occurrence of $\ket{\texttt{-}}\bra{\texttt{-}}$. Since $\mathcal{Y}$ is unitary, it is also reversible, therefore we can restore the information, gaining a perfect transmission of $\sigma$, i.e.,
            \begin{equation}
                \label{eq:ppy}
                \mathcal{Y}\circ\mathcal{Y}(\sigma) = \texttt{Y}\texttt{Y}\sigma\texttt{Y}\texttt{Y} = \sigma.
            \end{equation}

            \subsection{Quantum simulation}
            In this section we present our steps to realize $\mathcal{S}$ of Figure \ref{fig:ico_comm}.
            We already discussed that the state of the art on quantum technologies doesn't allow any native implementation of the indefinite causal orders.
            Hence our goal is to witness the communication advantage through quantum simulation.
            
            \begin{wrapfigure}{r}{6cm}
                \centering
                \begin{quantikz}[thin lines, row sep={0.6cm,between origins}, column sep={0.4cm}]
                    \lstick{$_{\ket{\vartheta}}$} &\gate[style={fill=violet!10}]{_{\{\texttt{E},\mathds{1}\}}} & \qw  \rstick{$\ _{\mapsto}$}
                \end{quantikz}
                \begin{quantikz}[thin lines, row sep={0.6cm,between origins}, column sep={0.4cm}]
    	            \lstick{$_{\ket{\vartheta}}$} &\gate[style={fill=violet!10}]{_{\texttt{E}}} & \qw \\
		            \meterD[style={fill=cyan!10}]{_{\ket{\texttt{p}}}}& \ctrl{-1} & \qw
                \end{quantikz}
                \caption{Stinespring dilation simulating a generic Pauli noise \texttt{E}.}
                \label{fig:dilation}
                \hrulefill
             \end{wrapfigure}
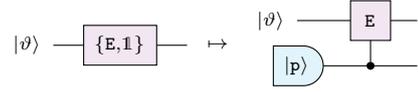
            When considering a non-unitaries, as in the case of our interest, the overhead grows up w.r.t. superposing unitaries. The reason is that non-unitaries are naturally harder to engineer and need to be simulated as well.
            Specifically, for any Hilbert space $\mathbb{H}$, the corresponding set of density states lies in the convex map $\mathcal{C}(\mathbb{H})$.
            
            According to the \textit{Stinespring dilation} \cite{Sti-55}, one can always associate to a non-unitary evolution $\mathcal{N}: \mathcal{C}(\mathbb{H}) \rightarrow \mathcal{C}(\mathbb{H})$ a unitary one, $\mathcal{A}_{\mathcal{N}}$, defined as follows:
            \begin{equation}
                \label{Eq:3.1}
                \mathcal{A}_{\mathcal{N}}: \mathcal{C}(\mathbb{H})\otimes\mathcal{C}(\mathbb{A}) \rightarrow \mathcal{C}(\mathbb{H})\otimes\mathcal{C}(\mathbb{A})
            \end{equation}
            where $\mathcal{C}(\mathbb{A})$ is an auxiliary system with associated basis $\big\{\ket{a_v}\bra{a_w}\big\}_{vw}$. Since $\mathcal{A}_{\mathcal{N}}$ is unitary, it has direct realization with the circuit algorithm.
            To simulate $\mathcal{N}$ from a realization of $\mathcal{A}_{\mathcal{N}}$, one need to \textit{discard} the auxiliary system. In terms of operations, discarding the auxiliary system means applying a \textit{partial trace} $\texttt{Tr}_2: \mathcal{C}(\mathbb{H})\otimes\mathcal{C}(\mathbb{A}) \rightarrow \mathcal{C}(\mathbb{H})$. Specifically, for a generic state $\sigma = \sum_{ijvw}c_{ijvw}\big(\ket{\vartheta_i}\bra{\vartheta_j}\otimes\ket{a_v}\bra{a_w}\big)$, the partial trace outputs the following \cite{RiePol-11}:
            \begin{equation}
                \label{Eq:3.2}
                \texttt{Tr}_2(\sigma) = \sum_{ijvw}c_{ijvw}\ket{\vartheta_i}\bra{\vartheta_j}\langle{a_w}|{a_v}\rangle.
            \end{equation}
            Since $\mathbb{H}$ and $\mathbb{A}$ are taken to be generic systems, equation \eqref{Eq:3.2} has a direct generalization to the form $\texttt{Tr}_{i_1,\dots,i_k}$, tracing out subsystems indexed by ${i_1},\dots,{i_k}$.
            
            In summary, we just outlined a method to realize an operation $\mathcal{N}$, involving two steps:
            \begin{enumerate}
                \item realizing the circuit $\mathcal{A}_{\mathcal{N}}$;
                \item discarding the auxiliary system with a partial trace $\texttt{Tr}_2$.
            \end{enumerate}

            To our purpose, we apply this method, restricted to evolution $\{\texttt{E}, \mathds{1}\}$. In circuit representation this is shown in Figure \ref{fig:dilation}.
            To superpose them we need an extra qubit, which encodes the control system. The final realization is shown in Figure \ref{circ:ico_ij}.
            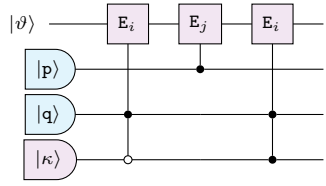
\begin{figure}[h]
                \centering
                \begin{quantikz}[thin lines, row sep={0.6cm,between origins}, column sep={0.4cm}]
                	\lstick{$_{\ket{\vartheta}}$} &\gate[style={fill=violet!10}]{_{\texttt{E}_i}} & \gate[style={fill=violet!10}]{_{\texttt{E}_j}} &\gate[style={fill=violet!10}]{_{\texttt{E}_i}} & \qw \\
                    \meterD[style={fill=cyan!10}]{_{\ket{\texttt{p}}}} & \qw & \ctrl{-1} & \qw & \qw\\
            		\meterD[style={fill=cyan!10}]{_{\ket{\texttt{q}}}}& \ctrl{-2} & \qw & \ctrl{-2} & \qw\\
            		\meterD[style={fill=violet!10}]{_{\ket{\kappa}}} & \octrl{-1} & \qw & \ctrl{-1} & \qw
                \end{quantikz}
                \caption{Circuit implementing an indefinite causal orders between two Pauli noise $\texttt{E}_i, \texttt{E}_j$.}
                \label{circ:ico_ij}
            \end{figure}

            \subsection{Physical setting}
            Starting from circuit \ref{circ:ico_ij}, we can do a step closer towards the real physical setting meant for us to witness the communication advantage of the indefinite causal order.
            To this aim we need to simulate the evolution from Figure \ref{fig:ico_comm}, for the case of i.i.d. probabilities -- i.e. $p = q = \sfrac{1}{2}$.
            Figure \ref{circ:ico_xz} shows the physical setting we used for our experiments.
            Notice that we added specific state preparations and measurements for information and control qubits. This update express a process tomography settings, explained in detail in Section \ref{sec:qpt}.
            Second and third qubits are not measured, rather, their final output is ignored, which naturally express trace out over those systems -- i.e. $\texttt{Tr}_{2,3}$.
            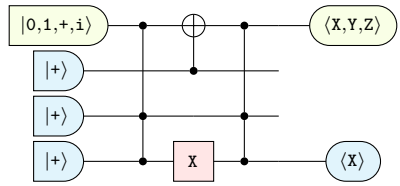
\begin{figure}[h]
                \centering
                \begin{quantikz}[thin lines,row sep={0.6cm,between origins}, column sep={0.4cm}]
                	\meterD[style={fill=lime!10}]{_{\ket{\texttt{0},\texttt{1},\texttt{+},\texttt{i}}}}&\control{} & \targ{} &\control{} & \qw & \measure[style={fill=lime!10}]{_{\langle {\texttt{X}}, {\texttt{Y}}, {\texttt{Z}}\rangle}} \\
            		\meterD[style={fill=cyan!10}]{_{\ket{\texttt{+}}}} & \qw & \ctrl{-1} & \qw & \qw\\
            		 \meterD[style={fill=cyan!10}]{_{\ket{\texttt{+}}}}& \ctrl{-2} & \qw & \ctrl{-2} & \qw\\
            		 \meterD[style={fill=cyan!10}]{_{\ket{\texttt{+}}}} & \ctrl{-1} & \gate[style={fill=red!10}]{_{\texttt{X}}} & \ctrl{-1} & \qw & \measure[style={fill=cyan!10}]{_{\langle {\texttt{X}}\rangle}}
                \end{quantikz}
                \caption{Circuit representation of the quantum experiment setting we used to witness the communication advantage.}
                \label{circ:ico_xz}
            \end{figure}
            
            At the end each run a post-selection occurs. Namely, coherently with our discussion of Section \ref{sec:soco}, we only keep those outputs where the control qubit results in the state $\ket{\texttt{-}}$. To the given set of outputs, we then apply a \textbf{classical bit-flip} in case the information qubit was subject to a measurement $\langle\texttt{X}\rangle$ or $\langle\texttt{Z}\rangle$. In fact, aware of the fact that the communication advantage comes from the post-processing of equation \eqref{eq:ppy}, which is a Pauli operation, this can be computed classically and has an effect only when measuring $\langle\texttt{X}\rangle$ and $\langle\texttt{Z}\rangle$.
            
            
            \subsubsection{Circuit decomposition and optimization}
            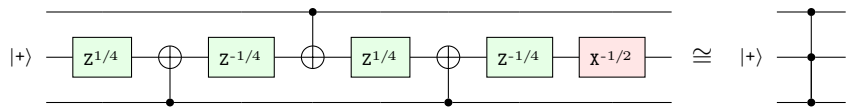
\begin{figure}[h]
                \centering
                \begin{quantikz}[thin lines, row sep={0.6cm,between origins}, column sep={0.35cm}]
    	            & \qw & \qw & \qw & \ctrl{+1}& \qw & \qw & \qw & \qw & \qw\\
                	\lstick{$_{\ket{\texttt{+}}}$}& \gate[style={fill=green!10}]{_{\texttt{Z}^{\sfrac{1}{4}}}} & \targ{} & \gate[style={fill=green!10}]{_{\texttt{Z}^{\texttt{-}\sfrac{1}{4}}}} & \targ{} & \gate[style={fill=green!10}]{_{\texttt{Z}^{\sfrac{1}{4}}}} & \targ{} & \gate[style={fill=green!10}]{_{\texttt{Z}^{\texttt{-}\sfrac{1}{4}}}} & \gate[style={fill=red!10}]{_{\texttt{X}^{\texttt{-}\sfrac{1}{2}}}} & \qw \rstick{$\ {\cong}$}\\
                	& \qw & \ctrl{-1} & \qw & \qw & \qw & \ctrl{-1} & \qw & \qw & \qw
                \end{quantikz}
                \begin{quantikz}[thin lines, row sep={0.6cm,between origins}, column sep={0.4cm}]
                \\
                &\ctrl{2}&\qw\\
                \lstick{$_{\ket{\texttt{+}}}$}&\ctrl{1}&\qw\\
                &\control{}&\qw\\
                \end{quantikz}
                \caption{Whenever the second wire takes $\ket{\texttt{+}}$ as input, the gate r.h.s. is equivalent, up to a global phase, to the decomposition l.h.s.}
                \label{fig:ccz_+}
            \end{figure}
            
            As already discussed in Section \ref{sec:soco_comp}, it is often the case that a real quantum technology doesn't supply natively a gate. For our setting of Figure \ref{circ:ico_xz}, this is the case of the 3-qubits gates, which have expensive decomposition \cite{shende2009cnot}.
            
            \begin{wrapfigure}{r}{6.2cm}
                \centering
                \includegraphics[scale=0.4]{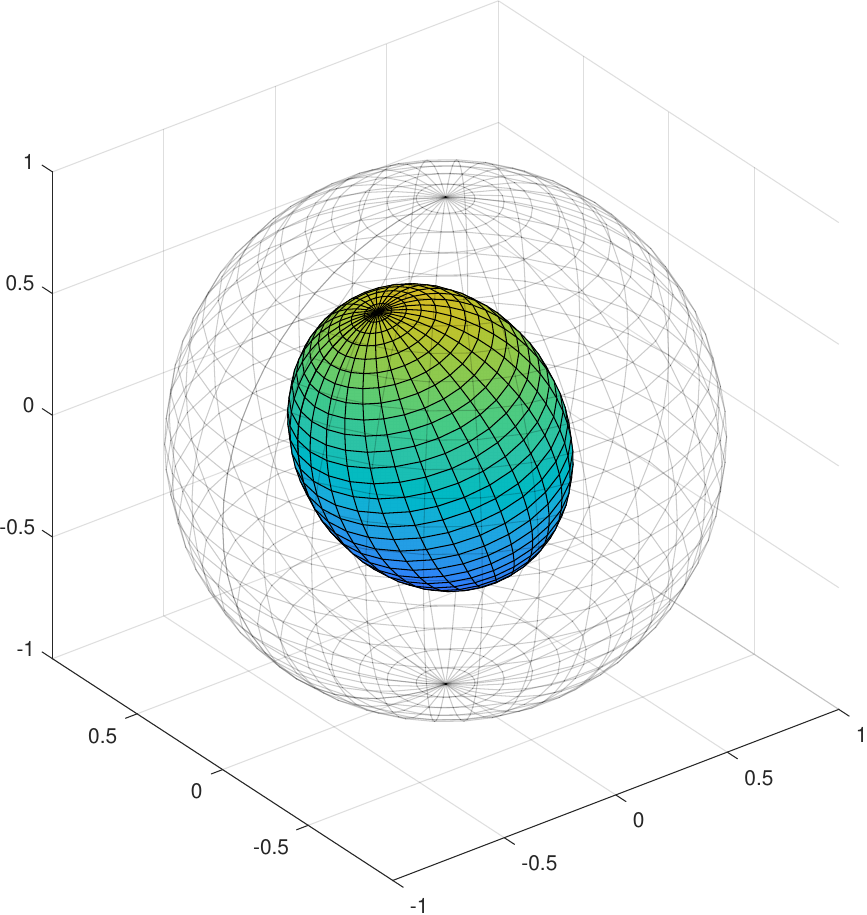}
                \caption{Bloch sphere representation of the experimental characterization for the physical setting of Figure \ref{circ:ico_xz}.}
                \label{fig:bloch_ico}
                \hrulefill
            \end{wrapfigure}
            To witness the communication advantage we managed to do some optimization on the first occurrence of such a gate.
            The rationale behind the optimization is that the gate decomposition could be simpler in case we have some knowledge of the input. We indeed know there is at least one qubit prepared in the state $\ket{\texttt{+}}$. This is enough to use the decomposition in Figure \ref{fig:ccz_+} instead of the standard one.

            The final result is shown by plotting the characterization of the channel as a bloch sphere -- See Figure \ref{fig:bloch_ico}. It represents the experimental characterization for the physical setting of Figure \ref{circ:ico_xz}. Gray sphere represents the ideal sphere, corresponding to a set of pure states. The inside coloured sphere is the deformation induced by the imperfections caused by the employed technology, which in this case is the \texttt{santiago} processor provided by \texttt{IBM}\footnote{The processor has been retired at the time of writing.}, which has a \textit{quantum volume} \cite{cross2019validating} of 16.

    \section{Modeling faulty gates}
     \begin{wrapfigure}{r}{6.2cm}
            \centering
            \begin{quantikz}[thin lines, row sep={0.6cm,between origins}, column sep={0.25cm}]
    	            &\qw &\gate[style={fill=violet!10},nwires={3},4]{_{\mathcal{N}}}&\qw&\gate[style={fill=lime!10},nwires={3},4]{_{\mathcal{U}}}&\qw&\qw\\
    	            &\qw&&\qw&&\qw&\qw\\
    	            &\overset{{\vdots}}{{\phantom{c}}}&&&&\overset{{\vdots}}{{\phantom{c}}}&\\
    	            &\qw&&\qw&&\qw&\qw
            \end{quantikz}
            \caption{Faulty operation $\mathcal{U}$ expressed as the composition $\mathcal{U}\circ\mathcal{N}$.}
            \label{fig:pauli-lind}
            \hrulefill
        \end{wrapfigure}
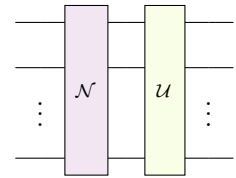
        According to the Pauli-Lindblad master equation \cite{breuer2002theory, berg2022probabilistic,schlosshauer2007decoherence,cacciapuoti2019toward}, a faulty operation can be modeled as $\mathcal{U}\circ\mathcal{N}$. In words, the model allows to think of a faulty operator as the composition of an ideal operator $\mathcal{U}$ preceeded by some Pauli-noise $\mathcal{N}$ -- see Fig. \ref{fig:pauli-lind}.
        
        For numerical evaluations, a common assumption is that a quantum evolution undergoes a depolarization \cite{nielsen2003quantum} $\mathcal{D}$, which can be expressed as a mixture of Pauli errors:
         \begin{equation}
            \label{eq:dep}
                \mathcal{D}(\sigma) = \sqrt{1 - \frac{3\lambda}{4}}\mathds{1}\sigma\mathds{1} + \sqrt{\frac{\lambda}{4}}(\texttt{X}\sigma\texttt{X} + \texttt{Z}\sigma\texttt{Z} + \texttt{Y}\sigma\texttt{Y}).
            \end{equation}

        Starting from equation \eqref{eq:dep}, it is possible to relate $\lambda$ to a triplet of probabilities for the Pauli errors \texttt{X}, \texttt{Y} and \texttt{Z}. A method to do that is outlined in Ref. \cite{dep-prob}.
        Furthermore, according to Ref. \cite{dennis2002topological}, the approximation
            \begin{equation}
            \label{eq:dep2}
                \mathcal{D}(\sigma) \approx (1-\texttt{p})^2\sigma + \texttt{p}(1-\texttt{p})(\texttt{X}\sigma\texttt{X} + \texttt{Z}\sigma\texttt{Z}) + \texttt{p}^2\texttt{XZ}\sigma\texttt{XZ}.
            \end{equation}

        \noindent{}is a good model for decoherence and it applies independently to each qubit of the system -- i.e., $\mathcal{D}^{\otimes d}$, for a $d$-dimensional system.
        
        The joint result of equation \eqref{eq:dep2} together with the Pauli-Lindblad assumption, allows to invastigate circuit as a composition of ideal operators affected by some single qubit Pauli error that affect the logic of computation. Such errors propagate among the circuit coherently with the logical operators. As basic example, consider the operator $\land(\texttt{X})$. According to equation \eqref{eq:dep2}, a Pauli error may precede its execution and propagates through the system as in Figure \ref{circ:prop}.
        
        Despite the simplicity of the example, the two propagation rules shown in Figure \ref{circ:prop} are complete, as $\land(\texttt{X})$ is the only non-single qubit necessary for universal computation\footnote{We discussed this in detail in Section \ref{sec:programming}}.
        
        \begin{figure}[h]
            \centering
             \begin{subfigure}[b]{0.48\textwidth}
                 \centering
                 \begin{quantikz}[thin lines, row sep={0.35cm,between origins}, column sep={0.35cm}]
    	            &\qw&\ctrl{2}&\qw & & & \ctrl{2}&\gate[style={fill=green!10}]{_{\texttt{Z}}}&\qw\\
    	            & & & &_{\rightarrow} & & &\\
    	            &\gate[style={fill=green!10}]{_{\texttt{Z}}}&\targ{}&\qw & & & \targ{}&\gate[style={fill=green!10}]{_{\texttt{Z}}}&\qw\\
                \end{quantikz}
             \end{subfigure}
             \hfill
             \begin{subfigure}[b]{0.48\textwidth}
                 \centering
                 \begin{quantikz}[thin lines, row sep={0.35cm,between origins}, column sep={0.35cm}]
    	            &\gate[style={fill=red!10}]{_{\texttt{X}}}&\ctrl{2}&\qw & & & \ctrl{2}&\gate[style={fill=red!10}]{_{\texttt{X}}}&\qw\\
    	            & & & &_{\rightarrow} & & &\\
    	            &\qw&\targ{}&\qw & & & \targ{}&\gate[style={fill=red!10}]{_{\texttt{X}}}&\qw\\
                \end{quantikz}
             \end{subfigure}
             \caption{Propagation over $\land(\texttt{X})$ operator.}
             \label{circ:prop}
        \end{figure}
        As regards single-qubit operators, consider operators $\texttt{Z}^{\sfrac{1}{2}}, \texttt{X}^{\sfrac{1}{2}}$, necessary to generate the Clifford group as section \ref{sec:cliff}. In such a case, orthogonal Pauli errors invert their logic, as in circuits of Figure \ref{circ:inv_logic}
        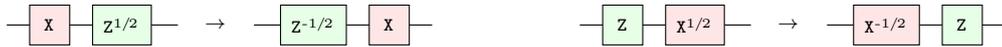
\begin{figure}[h]
            \centering
             \begin{subfigure}[b]{0.48\textwidth}
                \centering
                \begin{quantikz}[thin lines, row sep={0.35cm,between origins}, column sep={0.35cm}]
    	            &\gate[style={fill=red!10}]{_{\texttt{X}}}&\gate[style={fill=green!10}]{_{\texttt{Z}^{\sfrac{1}{2}}}}&\qw
                \end{quantikz}
                $\ \ _{\rightarrow}$
                \begin{quantikz}[thin lines, row sep={0.35cm,between origins}, column sep={0.35cm}]
    	            &\gate[style={fill=green!10}]{_{\texttt{Z}^{\texttt{-}\sfrac{1}{2}}}}&\gate[style={fill=red!10}]{_{\texttt{X}}}&\qw
                \end{quantikz}
             \end{subfigure}
             \hfill
             \begin{subfigure}[b]{0.48\textwidth}
                 \centering
                 \begin{quantikz}[thin lines, row sep={0.35cm,between origins}, column sep={0.35cm}]
    	            &\gate[style={fill=green!10}]{_{\texttt{Z}}}&\gate[style={fill=red!10}]{_{\texttt{X}^{\sfrac{1}{2}}}}&\qw
                \end{quantikz}
                $\ \ _{\rightarrow}$
                \begin{quantikz}[thin lines, row sep={0.35cm,between origins}, column sep={0.35cm}]
    	            &\gate[style={fill=red!10}]{_{\texttt{X}^{\texttt{-}\sfrac{1}{2}}}}&\gate[style={fill=green!10}]{_{\texttt{Z}}}&\qw
                \end{quantikz}
             \end{subfigure}
             \caption{Pauli noise affecting orthogonal single-qubit operators.}
             \label{circ:inv_logic}
        \end{figure}

        \section{Error correction and logical computing}
        \subsection{Code functions}
            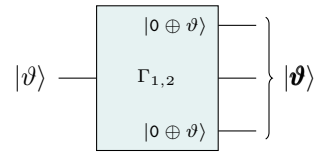
\begin{wrapfigure}{r}{5cm}
                \centering
                 \begin{quantikz}[thin lines,row sep={0.7cm,between origins}]
                    &\gate[3,nwires={1,3},style={fill=teal!10}][1.6cm]{_{\Gamma_{1,2}}}\gateoutput{${\ket{\texttt{0}\oplus\vartheta}}$}&\qw\rstick[wires=3]{$\ket{\pmb{\vartheta}}$}\\
                    \lstick[]{$\ket{\vartheta}$}& &\qw\\
                    &\gateoutput{${\ket{\texttt{0}\oplus\vartheta}}$}&\qw
                 \end{quantikz}
                 \caption{Example of code function $\Gamma_{1,2}: \mathbb{H} \rightarrow \mathbb{H}^{\otimes 3}$.}
                 \label{fig:encoding}
                 \hrulefill
            \end{wrapfigure}
        Similarly to what is done in classical information \cite{baylis2018error}, quantum information can be protected through the introduction of an \textit{error correction} scheme \cite{mancini2020quantum}.
        This starts with the definition of a \textit{code} function $\Gamma$, which introduce redundancy to the system and exploit it to restore the original information, in case an error occurs. Formally, a code function $\Gamma_{k,m}: \mathbb{H}^{\otimes k} \rightarrow \mathbb{H}^{\otimes(k+m)}$, able to encode $k$ qubits into $k+m$ with $m > 0$. The code is said to have ratio $\frac{k}{k+m}$. When defining a code, a \textit{desideratum} is to achieve a high ratio. Another likewise important metric is the \textit{distance} of a code. Specifically, $\Gamma_{k,m}$ creates a \textit{codeword} space s.t. $|\texttt{Im}(\Gamma_{k,m})| = 2^{k+m}$ where only $2^k$ elements are valid codewords. Generally speaking a highly sparse valid space allows for higher distance among valid codewords. On contrary a dense valid space makes fuzzier the identification of a couple of valid-invalid codewords. This issue is treated more formally in Section \ref{sec:dist}, while now it is just important to observe that distance and ratio are, by their nature, inversely proportional, creating a challenging trade-off to handle.

\subsubsection{Redundancy through entanglement}
As an example, consider a code function which takes in input a generic single qubit $\ket{\vartheta} = \alpha\ket{\texttt{0}} + \beta\ket{\texttt{1}}$ and generates a logical qubit
\begin{equation}
\label{eq:logical_q}
\ket{\pmb{\vartheta}} \equiv \alpha\ket{\texttt{{0}}} + \beta\ket{\texttt{{1}}} \equiv \alpha\ket{\texttt{000}} + \beta\ket{\texttt{111}}.
\end{equation}
Such code can be implemented as in Fig. \ref{fig:encoding}.
  
Coherently with the definition of code, the system has $3$ qubits with $2^k = 2$ possible outcomes: $\ket{\texttt{000}}$ and $\ket{\texttt{111}}$.
After the application of $\Gamma_{1,2}$, some errors become detectable. For example, assume that the bit-flip noise $\mathcal{X} \mapsto \{\mathds{1},\texttt{X}\}$ affects each qubit, with i.i.d. error probability $\texttt{p}$. One can perform a projection over the even space and the odd space for qubits $1$ and $2$, and then the same for qubits $2$ and $3$. This can be done by performing the \textit{non-destructive measurement}\footnote{A possible implementation of non-destructive measurement is is given in Sec. \ref{sec:classic-codes}.}
\begin{equation}
    \texttt{Z}^{\otimes 2} = (\ket{\texttt{00}}\bra{\texttt{00}} + \ket{\texttt{11}}\bra{\texttt{11}}) - (\ket{\texttt{01}}\bra{\texttt{01}} + \ket{\texttt{10}}\bra{\texttt{10}}).
\end{equation}
The eigenvalues are $\pm 1$ and the detectable errors are represented in the table below.
\begin{center}
\begin{tabular}{cc}
\hline
\rowcolor[HTML]{FFFFFF} 
Syndrome & Detection     \\ \hline
$+1,+1$ & $\texttt{I}\otimes\texttt{I}\otimes\texttt{I}$  \\
\rowcolor[HTML]{EFEFEF} 
$-1,+1$ & $\texttt{X}\otimes\texttt{I}\otimes\texttt{I}$ \\
\rowcolor[HTML]{FFFFFF} 
$-1,-1$ & $\texttt{I}\otimes\texttt{X}\otimes\texttt{I}$ \\
\rowcolor[HTML]{EFEFEF} 
$+1,-1$ & $\texttt{I}\otimes\texttt{I}\otimes\texttt{X}$
\end{tabular}%
\end{center}  
Notice that $\texttt{Z}^{\otimes 2}$ has no effect on $\alpha\ket{\texttt{000}} + \beta\ket{\texttt{111}}$, whatever the target qubits are. For this reason its measurement can be used to detect some error, without affecting the original state.

This error correction scheme works for $\texttt{p}<\frac{1}{2}$. Without the scheme the minimum fidelity is
\[\mathfrak{f} = \underset{\forall \ket{\vartheta}}{\texttt{min}}\sqrt{\bra{\vartheta}\mathcal{X}(\ket{\vartheta}\bra{\vartheta})\ket{\vartheta}} = \sqrt{1-\texttt{p}}.\] 
The probability of getting an error on the scheme is given by $\texttt{p}_e = 3\texttt{p}^2(1-\texttt{p}) + \texttt{p}^3$, so it is required that $\sqrt{1 - \texttt{p}_e} > \sqrt{1-\texttt{p}}$, which happens when $\texttt{p}<\frac{1}{2}$.

\section{Stabilizer codes}
Generally speaking, defining an efficient code is a hard task. Here, we review a fundamental family called \textit{stabilizer codes}, which is particularly helpful as it can relate to linear codes coming from classical error correction.

A code function $\Gamma_{k,m}$ is a stabilizer code if its image is characterized by an abelian subgroup\footnote{A group of which components commute one another. I.e. $[\texttt{S}_i,\texttt{S}_j] = 0$ holds for any $ \texttt{S}_i,\texttt{S}_j \in \mathbb{S}$.} $\mathbb{S}\subseteq\mathbb{E}$ as follows
\[\texttt{Im}(\Gamma_{k,m}) \equiv \{\ket{\pmb{\vartheta}}\ :\  \texttt{S}\ket{\pmb{\vartheta}} = \ket{\pmb{\vartheta}}, \forall \texttt{S} \in \mathbb{S}\}\]
$\mathbb{S}$ must be abelian in order to stabilize a non-trivial code. Consider the case $[\texttt{S}_1,\texttt{S}_2] = 1$, then
$\texttt{S}_1\texttt{S}_2\ket{\pmb{\vartheta}} = -\texttt{S}_2\texttt{S}_1\ket{\pmb{\vartheta}} = -\ket{\pmb{\vartheta}}$, but also $\texttt{S}_1\texttt{S}_2\ket{\pmb{\vartheta}} = \ket{\pmb{\vartheta}}$; hence $\ket{\pmb{\vartheta}} = - \ket{\pmb{\vartheta}} = 0$ and $\mathbb{S}$ stabilizes the trivial code $\texttt{Im}(\Gamma_{k,m}) = \{0\}$. 

For $2^k$ possible outcomes, a stabilizer group $\mathbb{S}$ has $2^{m}$ elements and, since it is abelian, it can be specified by $m$ generators $\{\texttt{S}_1,\texttt{S}_2,\dots,\texttt{S}_{m}\}$. The benefit of using generators is that to check whether a state vector is stabilized by $\mathbb{S}$ or not, one needs only to check it for the generators. 

To see how the correction strategy works, consider an error operator $\texttt{E}\in \mathbb{E}$. Let us analyse how $\texttt{E}$ relates with a generator $\texttt{S}_i$.
\begin{enumerate}
    \item $\exists \texttt{S}_i : \texttt{E}\texttt{S}_i =  - \texttt{S}_i\texttt{E}$, then $\texttt{S}_i\texttt{E}\ket{\pmb{\vartheta}} = -\texttt{E}\texttt{S}_i\ket{\pmb{\vartheta}} = -\texttt{E}\ket{\pmb{\vartheta}}$. Therefore, $\texttt{E}\ket{\pmb{\vartheta}}$ is a $-1$ eigenvector of $\texttt{S}_i$ and \texttt{E} can be detected by measuring $\texttt{S}_i$.
    \item Otherwise $\texttt{E}\texttt{S}_i = \texttt{S}_i\texttt{E}\ \forall \texttt{S}_i$\footnote{Because any couple of $\mathbb{E}$ either commutes or anti-commutes.}, then if $\texttt{E} \in \mathbb{S}$, it clearly doesn't corrupt the state. So the problem arises when $\texttt{E} \notin \mathbb{S}$, making $\texttt{E}$ \textit{undetectable}.
\end{enumerate}
The set of undetectable errors is given by $C_{\mathbb{E}}(\mathbb{S}) \smallsetminus \mathbb{S}$, where $C$ is the \textit{centralizer} function. 
Nevertheless, a noise $\mathcal{N}$ with some undetectable operators, may still be \textit{correctable}.
Formally, a generic noise $\mathcal{N}(\sigma) = \sum_{k}p_k\texttt{E}_k\sigma\texttt{E}_k$ is correctable if any two operators $\texttt{E}_i, \texttt{E}_j \in \{\texttt{E}_k\}_k$, differ in syndrome or have the same syndrome but differ by a
stabilizer, i.e. 
\[\texttt{E}_i\texttt{E}_j \in \mathbb{S} \cup \big(\mathbb{E} \smallsetminus C_{\mathbb{E}}(\mathbb{S})\big).\]

\section{Relation with classical binary codes}
\label{sec:classic-codes}
    Instead of expressing the error detection as the measurement operator $\texttt{Z}^{\otimes 2}$, we can use the math coming from classical linear codes to define a quantum error correction scheme, e.g.
    \[H = \begin{pmatrix}
        1 & 1 & 0\\
        0 & 1 & 1
    \end{pmatrix}.\]
    $H$ can be used as parity-check matrix to detect and correct a bit flip occurring in one of the three physical qubits of $\ket{\pmb{\vartheta}}$ -- see Fig. \ref{fig:h-circuit}. Clearly, this is true as long as ancillary qubits undergoes a negligible noise .
 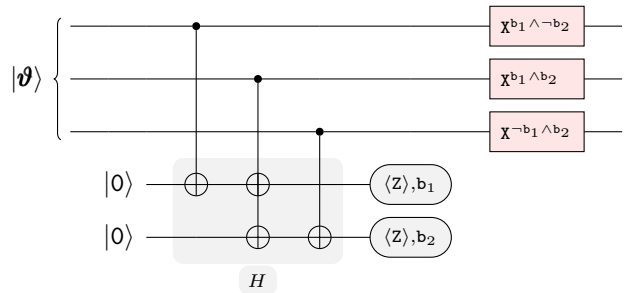
\begin{figure}[ht]
    \centering
     \begin{quantikz}[thin lines,row sep={0.7cm,between origins}]
            \lstick[wires=3]{$\ket{\pmb{\vartheta}}$}&\qw&\qw& \ctrl{3} &\qw & \qw & \qw & \gate[style={fill=red!10}]{_{\texttt{X}^{\texttt{b}_1 \land \lnot{\texttt{b}}_2}}}& \qw\\
            & \qw& \qw& \qw & \ctrl{3} & \qw & \qw & \gate[style={fill=red!10}]{_{\texttt{X}^{\texttt{b}_1 \land \texttt{b}_2\phantom{\lnot}}}} & \qw \\
            & \qw& \qw & \qw & \qw & \ctrl{2} & \qw & \gate[style={fill=red!10}]{_{\texttt{X}^{\lnot{\texttt{b}}_1\land \texttt{b}_2}}}& \qw\\
            &&\lstick[]{$\ket{\texttt{0}}$}& \targ{}\gategroup[wires=2,steps=3,style={draw =none,rounded corners,inner xsep=1pt,inner ysep=-1pt, fill=gray!10}, background,label style={rounded corners,label position=below, yshift=-0.37cm, fill=gray!10}, background]{$_{H}$} & \targ{} & \qw & \measure[style={fill=gray!10}]{_{\langle {\texttt{Z}\rangle},\texttt{b}_1}}\\
            &&\lstick[]{$\ket{\texttt{0}}$}& \qw & \targ{} & \targ{} & \measure[style={fill=gray!10}]{_{\langle {\texttt{Z}\rangle},\texttt{b}_2}}
     \end{quantikz}
     \caption{The gray coloring helps to visualize how the classical matrix representation $H$ of the detection scheme is implemented by means of auxilliary qubits.}
    \label{fig:h-circuit}
\end{figure}

Ancillary qubits allows to abstract from the hardware, while the real implementation of this scheme very depends on what kind of measurements the quantum processor supplies\footnote{See for example \cite{cramer2016repeated} for an experimental implementation, based on ancillary qubits to get non-destructive measurements.}.
Since $\texttt{E} \in \mathbb{E}$, one can represent a generic error as a $2(k+m)$-dimensional binary vector $(e_\texttt{x}|e_\texttt{z})$, such that if $e_{i,\texttt{x}} = 1$, \texttt{E} has an $\texttt{X}$ operator affecting the $i$-th qubit, or the identity otherwise. Symmetrically, for $e_{i,\texttt{z}}$ the subject operator is $\texttt{Z}$.

With the same criteria, let $(s_{i,\texttt{x}}|s_{i,\texttt{z}})$ be the binary vector representing a generator $\texttt{S}_i$. Thus one can construct an $m\times2(k+m)$ matrix $H$ such that
\begin{equation}
\label{eq:s}
H \equiv 
\begin{pmatrix}
s_{1,\texttt{x}}|s_{1,\texttt{z}}\\
\vdots\\
s_{{m},\texttt{x}}|s_{{m},\texttt{z}}
\end{pmatrix}
\end{equation}

To check if the noise $\mathcal{N}$ is fully correctable, for any couple $(e_{i,\texttt{x}}|e_{i,\texttt{z}}),(e_{j,\texttt{x}}|e_{j,\texttt{z}})$, related to the operators $\texttt{E}_i, \texttt{E}_j$, the following holds:
\[H(e_{i,\texttt{x}}+e_{j,\texttt{x}}|e_{i,\texttt{z}}+e_{j,\texttt{z}})^{\intercal} \neq 0.\]

To get a stabilizer code starting from a classical linear code function\footnote{$\mathbb{F}_2^n$ is a binary $n$-dimensional Hamming space, i.e. a vector of $n$ binary values.} $\Sigma_{k,m} : \mathbb{F}_2^k \rightarrow \mathbb{F}_2^{(k+m)}$, with $0<m<k$ and parity-check matrix $H_{\Sigma}$. The corresponding matrix for the quantum paradigm is given by
\begin{equation}
\label{eq:h}
    H_{\Gamma} \equiv 
    \begin{pmatrix}
      H_{\Sigma} & \textbf{0}\\
      \textbf{0} & H_{\Sigma}
    \end{pmatrix}
\end{equation}
If the code is self-orthogonal -- i.e. $\texttt{Im}(\Sigma_{k,m})^{\perp} \subseteq \texttt{Im}(\Sigma_{k,m})$ --, $H_{\Gamma}$ relates to a stabilizer group $\mathbb{S}$ -- in accordance with matrix \eqref{eq:s} -- for $k-m$ logical qubits and $k+m$ physical qubits\footnote{Notice the loss in the ratio -- i.e. $m/(k+m)$ -- caused by ``quantizing'' a classical code.}.
Formally, following this procedure leads to a quantum code $\Gamma_{k-m,m}$ such that
\[\texttt{Im}(\Gamma_{k-m,m}) \equiv \{\ket{\pmb{\vartheta}}\ :\  \texttt{S}\ket{\pmb{\vartheta}} = \ket{\pmb{\vartheta}}, \forall \texttt{S} \in \mathbb{S}\}.\]

A slightly more general definition considers a couple of classical linear codes $\Sigma_{k_1,m_1}, \Sigma_{k_2,m_2}$. As long as $\texttt{Im}(\Sigma_{k_2,m_2})^{\perp} \subseteq \texttt{Im}(\Sigma_{k_1,m_1})$ and $k_1 + m_1 = k_2 + m_2$ hold, we can perform the parity check, in accordance with matrix \eqref{eq:h}. The result is a quantum code $\Gamma_{k,m}$ with $k= k_1 + k_2 - m_1 - m_2$ and $m = m_1+m_2$. $\Gamma_{k,m}$ is called CSS code because of their creators Calderbank, Shor and Steane \cite{calderbank1996good, steane1996multiple}.

\section{Distance and bounds}
\label{sec:dist}
\subsection{Classical bounds}
Consider a codeword set s.t. $|\texttt{Im}(\Sigma_{k,m})| = 2^k$, defined in a $2^{k+m}$-dimensional Hamming space. An error relates to a codeword $u$, creating a new word $\tilde{u} = u + e$. Generally speaking $\tilde{u}$ can relate to more than one $u$ or $e$, making the definition of a good code a hard task.
A good code should be able to relate any error $e$ to one and one only codeword $u$. In this sense, such a code relates to each $u$ a lattice sphere, centered in $u$ with radius $r$. Each word $\tilde{u}$ in the sphere is such that $d(u,\tilde{u}) \leq r$ and $d(v,\tilde{u}) > r$ for any other codeword $v$.
In order to avoid any overlap, the radius is upper-bounded by $r\leq \floor{d/2}$, where $d$ is the minimum distance of the code. 
From this the \textit{Hamming bound} follows:
\[\sum_{i=0}^{\floor{d/2}}\binom{k+m}{i} \leq 2^{m}\]
Where $2^m$ is the maximum number of words orthogonal to the code and $\binom{k+m}{i}$ is the number of errors involving $i$ bits.
Whenever the spheres saturate this bound, without creating any overlap, the code is said to be \textit{perfect}\footnote{Despite its name, the code is still unable to apply correction whenever $u_1 + e_1 = u_2 + e_2$ occurs, with $u_1,u_2$ being codewords and $e_1,e_2$ being errors.}.

The minimum possible distance to not run into overlaps is $3$. To see that consider the basic case of two codewords $\texttt{01}$, and $\texttt{11}$. A parity-check with even result on the first codeword is indistinguishable from an odd result on the second codeword. 
It is possible to define a group of perfect codes\footnote{Known as Hamming codes.} of distance $d=3$. Namely, by setting $k = 2^h - h - 1$ and $m = h$ it results
\[\sum_{i=0}^1\binom{2^h - 1}{i} = 2^h\]
Notice how the ratio -- i.e. $1 - \frac{h}{2^h-1}$ -- rapidly grows to $1$ as $h$ grows.
More in general, for any high-dimensional code $\Sigma_{k,m}$ and minimum distance $d \propto k+m$, it is possible to prove \cite{mancini2020quantum} that the ratio $1 - \mathfrak{h}_{\texttt{2}}(d/(k+m))$ is achievable , where $\mathfrak{h}_{\texttt{2}}$ is the binary entropy.



\subsection{Quantum bounds}
For a given noise $\mathcal{N}(\sigma) = \sum_k p_k \texttt{E}_k\sigma\texttt{E}_k$, let $\{\texttt{E}_{k_i}\}_{i} \subseteq \{\texttt{E}_{k}\}_k$ be the set of undetectable errors.
Then, the code distance is given by
\[d = {\texttt{min}} \big\{\mathfrak{w}(\texttt{E})\ :\ \texttt{E} \in \{\texttt{E}_{k_i}\}_{i}\big\},\]
where $\mathfrak{w}$ is the weight function, counting the number of single-qubit components differing from the identity operator \texttt{I}.
If the code $\texttt{Im}(\Gamma_{k,m})$ is characterized by a stabilizer group $\mathbb{S}$, then 
\begin{equation}
\label{eq:d}
    d = {\texttt{min}} \big\{\mathfrak{w}(\texttt{E})\ :\ \texttt{E} \in C_{\mathbb{E}}(\mathbb{S}) \smallsetminus \mathbb{S}\big\}.
\end{equation}


Symmetrically to classical codes, the Hamming bound related to a code $\Gamma_{k,m}$ is given by
\[\sum_{i=0}^{\floor{d/2}} 3^i\binom{k+m}{i} \leq 2^{m}.\]

The new factor $3^i$ expresses the possible error combinations from $\mathbb{E}$ involving any $i$ qubits. Saturating the Hamming bound establishes a perfect code only if this is non-degenerate. 


Consider, as an example, the following stabilizer set \[\mathbb{S} = \langle\texttt{X}_{1,4}\texttt{Z}_{2,3}, \texttt{X}_{2,5}\texttt{Z}_{3,4}, \texttt{X}_{1,3}\texttt{Z}_{4,5}, \texttt{X}_{2,4}\texttt{Z}_{1,5}\rangle\]
$\mathbb{S}$ generates a code $\Gamma_{k,m}$ such that $m = \log|\mathbb{S}| = 4$ and, as the operators are defined over a $5$-dimensional system, $k=1$. Ultimately, according to \eqref{eq:d}, the distance is given by any \texttt{E} operating on at least 3 qubits, coming from the centralizer group $C_{\mathbb{E}}(\mathbb{S})$, e.g. $\texttt{E} = \texttt{Z}_{1,2,4}\texttt{X}_{4}$. To see that, consider any 1- or 2-qubits operator and check that it anti-commutes with some element from $\mathbb{S}$. Hence $d = \mathfrak{w}(\texttt{E}) = 3$. The sphere coverage is
\[\sum_{i=0}^{1} 3^i\binom{5}{i} = 16 = 2^m\]
and, therefore, the code is perfect.

\section{The role of stabilizers in computing}
There is a tight relation between communication and computing scenarios, as regards error correction. Namely, in communication, noise affects information conveyed through a physical channel. Similarly, in computing, noise affects information during the life-time of an algorithm.
Both scenarios run under the same noise model $\mathcal{N}(\sigma) = \sum_{k}p_k\texttt{E}_k\sigma\texttt{E}_k$. However, during computation, logical states demand for the definition of \textit{logical operators}.

A unitary operator $\texttt{\pmb{U}}$ is a logical operator for a code $\Gamma$ stabilized by $\mathbb{S}$ if, for any logical state $\ket{\pmb{\vartheta}}$ and any $\texttt{S} \in \mathbb{S}$, the following holds:
    \[\texttt{\pmb{U}}\ket{\pmb{\vartheta}} \in \texttt{Im}(\Gamma),\ \  \texttt{\pmb{U}S\pmb{U}}^\dagger\in \mathbb{S}.\]
To prove that, it is sufficient to show that
\[\texttt{\pmb{U}S\pmb{U}}^\dagger\texttt{\pmb{U}}\ket{\pmb{\vartheta}} = \texttt{\pmb{U}S}\ket{\pmb{\vartheta}} = \texttt{\pmb{U}}\ket{\pmb{\vartheta}}\]
holds for all $\texttt{S}$ in the generator of $\mathbb{S}$.
Notice that $\texttt{\pmb{U}S\pmb{U}}^\dagger$ stabilizes $\texttt{\pmb{U}}\ket{\pmb{\vartheta}}$. 

\subsubsection{Defining a stabilizer code from scratch}
From the above emerges a general way to build a stabilizer code by defining together a code function $\bar{\Gamma}_{k,m}$ and its stabilizer group $\bar{\mathbb{S}}$.
Formally consider the state $\ket{\pmb{\vartheta}} \equiv \ket{\vartheta} \otimes\ket{\texttt{0}}^{\otimes m}$, which is (trivially) stabilized by the group $\mathbb{S} \equiv \langle\texttt{Z}_{k+1}, \texttt{Z}_{k+2}, \dots, \texttt{Z}_{k+m}\rangle$.
Let $\texttt{U}$ be a unitary operator mapping $\mathbb{S}$ to itself\footnote{This may be any operator coming from the Clifford group, as it satisfies the closure over the Pauli group.}, then 
\[\texttt{Im}(\bar{\Gamma}_{k,m}) = \{\texttt{U}\ket{\pmb{\vartheta}}\ :\  \bar{\texttt{S}}\texttt{U}\ket{\pmb{\vartheta}} = \texttt{U}\ket{\pmb{\vartheta}}, \forall \bar{\texttt{{S}}} \in \bar{\mathbb{S}} \},\]
where $\bar{\mathbb{S}} \equiv \langle\bar{\texttt{S}}_{1}, \bar{\texttt{S}}_{2}, \dots, \bar{\texttt{S}}_{m}\rangle$ and $\bar{\texttt{S}}_{i} \equiv \texttt{U}\texttt{Z}_{k+i}\texttt{U}^{\dagger}$.

If the code is meant for computation, the only missing ingredient is the set of logical operators, which is 
\[\big\{\pmb{\texttt{{Z}}_i}, \pmb{\texttt{{X}}_i}\ :\ \pmb{\texttt{{Z}}_i} \equiv \texttt{U}\texttt{Z}_{i}\texttt{U}^{\dagger},\ \pmb{\texttt{{X}}_i} \equiv \texttt{U}\texttt{X}_{i}\texttt{U}^{\dagger}\big\}_{1\leq i\leq k}.\]

\subsubsection{Achieving fault-tolerant computing}
Let $\Gamma$ be any stabilizer code satisfying \textit{self-duality}\footnote{$\texttt{Im}(\Gamma) = \texttt{Im}(\Gamma)^{\perp}.$} and being \textit{doubly-even}\footnote{Any codeword has Hamming weight divisible by $4$.}.
Then the Clifford group generators $\land(\texttt{X}), \texttt{X}^{\sfrac{1}{2}}$ and $\texttt{Z}^{\sfrac{1}{2}}$ relate to \textit{fault-tolerant} logical operators $\pmb{\land(\texttt{{X}})}, \pmb{\texttt{X}^{\sfrac{1}{2}}}, \pmb{\texttt{Z}^{\sfrac{1}{2}}}$.

\begin{figure}[h]
    \centering
        \begin{quantikz}[thin lines, row sep={0.5cm,between origins}, column sep={0.6cm}]
                \lstick[3]{$\ket{\pmb{\vartheta}}$}\qw &\ctrl{3}&\qw&\qw&\qw\\
                \qw&\qw&\ctrl{3}&\qw&\qw\\
                \qw&\qw&\qw&\ctrl{3}&\qw\\
                \lstick[3]{$\ket{\pmb{\varphi}}$}\qw
                &\targ{}&\qw&\qw&\qw\\
                \qw &\qw&\targ{}&\qw&\qw\\
                \qw&\qw&\qw&\targ{}&\qw\\
        \end{quantikz}
        $\ \ \ \ \longrightarrow{}$
        \begin{quantikz}[row sep={0.5cm,between origins}, column sep={0.45cm}]
                \lstick[]{$\ket{\pmb{\vartheta}}$}\qw &\ctrl{1}&\qw \\
                \lstick[]{$\ket{\pmb{\varphi}}$}\qw &\targ{}&\qw 
        \end{quantikz}
    \caption{Transversal implementation of a logical $\pmb{\land(\texttt{{X}})}$.}
    \label{fig:transv_cx}
\end{figure}
These operators can be claimed to be fault-tolerant because they admit (in principle) a so-called \textit{transversal} implementation, which is very efficient in terms of circuit depth and error propagation. 
Figure \ref{fig:transv_cx} shows an example of transversal implementation of $\pmb{\land(\texttt{{X}})}$ between two logical qubits $\ket{\pmb{\vartheta}}$ and $\ket{\pmb{\varphi}}$. Its efficiency is given by the fact the each physical operator acts on independent pairs of physical qubits. For the same reason, also the noise does not mix up among the physical qubits.

\begin{wrapfigure}{r}{6.5cm}
    \centering
    \begin{quantikz}[thin lines,row sep={0.65cm,between origins}, column sep ={0.4cm}]
        \lstick[]{$_{\ket{\vartheta}}$}&\ctrl{1}&\qw&\gate[style={fill=green!10}]{_{\texttt{Z}^{\sfrac{\texttt{b}}{2}}}}&\qw \rstick[]{$_{\texttt{Z}^{\sfrac{1}{4}}\ket{\vartheta}}$}\\
        \meterD[style={fill=cyan!10}]{_{\ket{\omega}}}&\targ{}&\measure[style={fill=gray!10}]{_{\langle {\texttt{Z}\rangle},\texttt{b}}}&
    \end{quantikz}
    \caption{Example of $\texttt{Z}^{\sfrac{1}{4}}$ gate \textit{injection}.
    }
    \label{fig:inj}
    \hrulefill
\end{wrapfigure}
As regard fault-tolerance for the universal gate set $\mathbb{C}^{\texttt{+}}$ -- see Sec. \ref{sec:cliff} -- and especially for the non-Clifford operator $\texttt{Z}^{\sfrac{1}{4}}$; there are some proposal for transversal implementation for the logical operator $\pmb{\texttt{Z}^{\sfrac{1}{4}}}$, but these usually do not relate to any stabilizer group. Hence, in literature, two main branches of research emerged:
\begin{itemize}
    \item circuit manipulation with the goal of minimizing $\texttt{Z}^{\sfrac{1}{4}}$ occurrences \cite{Sel-13, AmyMasMos-14};
    \item design of $\pmb{\texttt{Z}^{\sfrac{1}{4}}}$ by means of \textit{injection} protocols \cite{yoganathan2019quantum, li2015magic,zhou2000methodology}.
\end{itemize}

A basic example of $\texttt{Z}^{\sfrac{1}{4}}$ \textit{injection} is shown in Figure \ref{fig:inj}; this performs the injection by introducing one auxiliary qubit to the processor, prepared in the state
\begin{equation}
    \label{eq:Tgate}
    \ket{\omega} = \frac{1}{\sqrt{2}}(\ket{\texttt{0}} + e^{\frac{\texttt{i}\pi}{4}}\ket{\texttt{1}}).
\end{equation}

Normal forms -- see Sec. \ref{sec:cliff} -- for universal circuits are also possible. An interesting result in this sense is available in Ref. \cite{van2021constructing}, where authors showed that non-Clifford operators can be pushed to the beginning of the circuit.


\section{Conclusion}
With this chapter we covered many important topics to know when dealing with quantum computation. Especially considering the current state-of-the-art of quantum technologies, which are commonly referred as Noisy Intermediate-Scale Quantum (NISQ) architectures \cite{Preskill2018}. With the incoming Ch. \ref{ch:compile}, we will focus on optimizing circuits by means of \textit{circuit compilation}, which preserves the circuit logic, while aiming to circuits more compliant to the hardware limitations. For practical reasons, we will consider circuit optimization without error correction schemes, as at the current stage of quantum technologies, these are to be considered at an early stage, where the gain promised by the theory struggle to be witnessed in real implementations. In fact, such schemes usually demands for more resources than the actual availability. 

In accordance with our full-stack development proposed in Ch. \ref{ch:techs}, we expect error correction scheme implementations to show up at the \textit{control system} level. Hence, a scheme will be in charge of providing a logical view of the physical resources. Because of such an organization, the compiler should not affect the logic on which the scheme relies on. This observation lead to think of new challenges specific to distributed architectures. 

Before proceeding with the investigation of distributed compilers, we conclude the chapter by making some observation on the complication arising when trying and embedding error correction schemes within a distributed system.

\subsection{Open challenges}
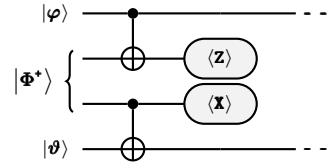
\begin{wrapfigure}{r}{5cm}
    \centering
    \begin{quantikz}[row sep={0.6cm,between origins}, column sep={0.5cm}]
        \lstick[]{$_{\ket{\pmb{\varphi}}}$}& \ctrl{1} &\qw & \qw \ _{\pmb{\text{-}\ \text{-}}} \\
        \lstick[2]{$_{\ket{\pmb{\Phi^{\texttt{+}}}}}$}& \targ{} & \measure[style={fill=gray!10}]{_{\ \langle {\pmb{\texttt{Z}}}\rangle\ }} \\
        & \ctrl{1} & \measure[style={fill=gray!10}]{_{\ \langle \pmb{\texttt{X}}\rangle\ }}\\
        \lstick[]{$_{\ket{\pmb{\vartheta}}}$}  & \targ{} & \qw & \qw \ _{\pmb{\text{-}\ \text{-}}}
    \end{quantikz}
    \caption{Transversal telegate, without post-processing.}
    \label{fig:lrcx}
    \hrulefill
\end{wrapfigure}
As already shown in Ch. \ref{ch:essentials}, telegates work by means of the generation and distribution of Bell states. Implementing a \textit{transversal logical telegate} would result into a high parallelism of each required task. However, it is not straightforward to \textit{import} the quantum error correction schemes into the distributed paradigm. To understand why, consider Fig. \ref{fig:lrcx}, it shows all the telegate steps but the last one -- i.e. the post-processing. The bold representation refer to a transversal implementation, according to the formalism we introduced -- see Fig. \ref{fig:transv_cx}. 
Unfortunately, such an implementation is not logical in general, as the Bell states may affect the system which can end up \textit{outside} the code scheme. For example, assuming:
\begin{equation}
    \label{eq:lbs}
    \ket{\pmb{\Phi^{\texttt{+}}}} \equiv \ket{\Phi^{\texttt{+}}}^{\otimes k + m}.
\end{equation}
In such a case the Bell states are independent one another and because of this, there is a time lapse during which the system lays outside the code scheme and, for this reason, it is vulnerable to undetectable errors. The time lapse starts when applying the transversal operators ${\land(\texttt{X})}$ and can only be restored by the post-processing, hoping that no error occurred in the meantime. In other words, the only detectable errors would be those caused by the post-processing, which, however, from a hardware perspective, results to be the most reliable step. Hence, this should not be considered as a practical way to proceed.

A possible way round is to generate and distribute a maximally entangled system, e.g., the generalized GHZ state:
\begin{equation}
    \label{eq:lghz}
    \ket{\pmb{\Phi^{\texttt{+}}}} \equiv \frac{1}{\sqrt{2}}\big(\ket{{\texttt{0}}}^{\otimes k+m} + \ket{{\texttt{1}}}^{\otimes k+m}\big).
\end{equation}
Since the distributed system is now fully entangled, such a system can be used not only to perform the non-local operator $\pmb{\land(\texttt{X})}$, but it results that the measurement outcome can be used to detect errors, combining so the syndrome with the post-processing.

The proposed distributed encoded system -- Eq. \eqref{eq:lghz} -- seems to be the solution to the problem. However, it has significant drawbacks from an hardware perspective. It should be already clear, this far, that generating a Bell state with high fidelity and within a reasonable time lapse is very challenging. This gets even harder, when thinking of higher degree systems, as the one of Eq. \eqref{eq:lghz}\footnote{E.g. the smallest transversal code has $k=1$ and $m=6$ \cite{eczoo_steane}, resulting in the generation and distribution of a maximally entangled system of $14$ qubits.}.

In conclusion, to model a fault-tolerant scheme for $\pmb{\land(\texttt{X})}$, Eq. \eqref{eq:lghz} may be an assumption too strong to be practical.
A more clever approach would be to define the generation and distribution protocol as to be \textbf{part of the encoding}. Very little has been done in this direction experimentally. An inspiring set-up can be found in Ref. \cite{zhang2022loss}, where authors managed to create a Shor code \cite{eczoo_shor_nine} by means of photons paired in Bell states. As being photon-based, this may bring to future experimental settings where \textit{stationary-flying hybrid systems} are considered.

    
        
\printbibliography[title=References,heading=subbibintoc]
\end{refsection}
		\begin{refsection}
\chapter{Circuit compilers on distributed architectures}
\label{ch:compile}
\thispagestyle{empty}
\newpage

As outlined in Ch. \ref{ch:techs}, a full-stack development of a distributed system for quantum computation requires to be carefully engineered. The proposed stack allows a circuit designer to focus on the logic of its algorithm, without necessary consider all the issues coming from the physical infrastructure that will take care of computing it.

In this chapter we consider the layer interfacing with the algorithm (written in circuit model). This takes care of \textit{optimizing the circuit}, adapting it to the constraints given by the underlying layers. Such a layer is commonly referred as \textit{compiler} and the corresponding optimization problem is called the \textit{compilation problem}. This is a generally tough task to solve, even on single processor, and for which an \texttt{NP}-hardness proof is available \cite{BotKisMar-18}. 
An ever growing literature arises with a variety of proposals for local computation \cite{madden2022best,hillmich2021exploiting,burgholzer2022limiting, MasFalMos-08,SirSanCol-18,WilBurZul-19,LiDinXie-19,ZulWil-19,ItoRayIma-19,ZhaZheZha-20,KarTezPet-20,MorParRes-21,MarMorRoc-21, BooDoBec-18,FerAmo-21} and for distributed computation \cite{BeaBriGra-13,ZomHouHou-18,DaeNavZom-20,FerCacAmo-21, NikMohSed-21, DadZomHos-21,DaeNavZom-21, SarZom-21, ZomDavGho-21, SunGupRam-21}.

Even if quantum processors are already available, distributed architectures are at an early stage and must be discussed from several perspectives. A key concept is that of \textit{telegates} as the fundamental inter-processor operations \cite{stephenson2020high, cuomo2020towards,VanDev-16}. We already discussed in Ch. \ref{ch:essentials} how telegates works, but we report below the main steps.

Each telegate can be decomposed into several tasks, that we group as follows: (i) the generation and distribution of entangled states among different processors, (ii) local operations and (iii) classical communications. These tasks makes the telegate an expensive resource, in terms of running time and/or fidelity. As a consequence, they have critical impact on the performance of the overall computation.
In contrast to such a limit, telegates offer remarkable opportunities of parallelization. 
In fact, much circuit manipulation is possible to keep computation independent from telegate's tasks. Therefore, we aim to model an optimization problem that embeds such opportunities.


Fig. \ref{fig:overview} provides the reader with a conceptual map concerning the main scientific contributions.

\vspace{1cm}
\begin{figure}[h]
    \centering
    \tikzset{every picture/.style={line width=0.75pt}} 

\begin{tikzpicture}[x=0.75pt,y=0.75pt,yscale=-1,xscale=1]

\draw  [color={rgb, 255:red, 0; green, 0; blue, 0 }  ,draw opacity=0 ][fill={rgb, 255:red, 189; green, 16; blue, 224 }  ,fill opacity=0.6 ] (439.77,99.84) -- (422.38,99.77) -- (422.36,104.08) -- (410.81,95.41) -- (422.44,86.84) -- (422.42,91.15) -- (439.81,91.22) -- cycle ;
\draw  [color={rgb, 255:red, 0; green, 0; blue, 0 }  ,draw opacity=0 ][fill={rgb, 255:red, 189; green, 16; blue, 224 }  ,fill opacity=0.6 ] (460.75,89.67) .. controls (460.75,87.46) and (462.54,85.67) .. (464.75,85.67) -- (575.75,85.67) .. controls (577.96,85.67) and (579.75,87.46) .. (579.75,89.67) -- (579.75,101.67) .. controls (579.75,103.88) and (577.96,105.67) .. (575.75,105.67) -- (464.75,105.67) .. controls (462.54,105.67) and (460.75,103.88) .. (460.75,101.67) -- cycle ;
\draw  [color={rgb, 255:red, 0; green, 0; blue, 0 }  ,draw opacity=0 ][fill={rgb, 255:red, 189; green, 16; blue, 224 }  ,fill opacity=0.6 ] (460.75,19.67) .. controls (460.75,17.46) and (462.54,15.67) .. (464.75,15.67) -- (575.75,15.67) .. controls (577.96,15.67) and (579.75,17.46) .. (579.75,19.67) -- (579.75,31.67) .. controls (579.75,33.88) and (577.96,35.67) .. (575.75,35.67) -- (464.75,35.67) .. controls (462.54,35.67) and (460.75,33.88) .. (460.75,31.67) -- cycle ;
\draw  [color={rgb, 255:red, 0; green, 0; blue, 0 }  ,draw opacity=0 ][fill={rgb, 255:red, 189; green, 16; blue, 224 }  ,fill opacity=0.6 ] (439.75,29.77) -- (422.36,29.69) -- (422.34,34) -- (410.79,25.33) -- (422.42,16.76) -- (422.4,21.07) -- (439.79,21.14) -- cycle ;
\draw  [color={rgb, 255:red, 0; green, 0; blue, 0 }  ,draw opacity=0 ][fill={rgb, 255:red, 74; green, 144; blue, 226 }  ,fill opacity=0.6 ] (334.45,115.39) -- (334.59,132.78) -- (338.9,132.74) -- (330.38,144.41) -- (321.66,132.88) -- (325.97,132.85) -- (325.83,115.46) -- cycle ;
\draw  [color={rgb, 255:red, 0; green, 0; blue, 0 }  ,draw opacity=0 ][fill={rgb, 255:red, 74; green, 144; blue, 226 }  ,fill opacity=0.6 ] (334.45,45.39) -- (334.59,62.78) -- (338.9,62.74) -- (330.38,74.41) -- (321.66,62.88) -- (325.97,62.85) -- (325.83,45.46) -- cycle ;
\draw  [color={rgb, 255:red, 0; green, 0; blue, 0 }  ,draw opacity=0 ][fill={rgb, 255:red, 74; green, 144; blue, 226 }  ,fill opacity=0.6 ] (271,90.33) .. controls (271,88.12) and (272.79,86.33) .. (275,86.33) -- (386,86.33) .. controls (388.21,86.33) and (390,88.12) .. (390,90.33) -- (390,102.33) .. controls (390,104.54) and (388.21,106.33) .. (386,106.33) -- (275,106.33) .. controls (272.79,106.33) and (271,104.54) .. (271,102.33) -- cycle ;
\draw  [color={rgb, 255:red, 0; green, 0; blue, 0 }  ,draw opacity=0 ][fill={rgb, 255:red, 74; green, 144; blue, 226 }  ,fill opacity=0.6 ] (270.6,20.73) .. controls (270.6,18.52) and (272.39,16.73) .. (274.6,16.73) -- (385.6,16.73) .. controls (387.81,16.73) and (389.6,18.52) .. (389.6,20.73) -- (389.6,32.73) .. controls (389.6,34.94) and (387.81,36.73) .. (385.6,36.73) -- (274.6,36.73) .. controls (272.39,36.73) and (270.6,34.94) .. (270.6,32.73) -- cycle ;
\draw  [color={rgb, 255:red, 0; green, 0; blue, 0 }  ,draw opacity=0 ][fill={rgb, 255:red, 189; green, 16; blue, 224 }  ,fill opacity=0.6 ] (220.75,21.19) -- (238.14,21.14) -- (238.13,16.83) -- (249.75,25.41) -- (238.18,34.07) -- (238.17,29.76) -- (220.78,29.81) -- cycle ;
\draw  [color={rgb, 255:red, 0; green, 0; blue, 0 }  ,draw opacity=0 ][fill={rgb, 255:red, 189; green, 16; blue, 224 }  ,fill opacity=0.6 ] (80.75,19.67) .. controls (80.75,17.46) and (82.54,15.67) .. (84.75,15.67) -- (195.75,15.67) .. controls (197.96,15.67) and (199.75,17.46) .. (199.75,19.67) -- (199.75,31.67) .. controls (199.75,33.88) and (197.96,35.67) .. (195.75,35.67) -- (84.75,35.67) .. controls (82.54,35.67) and (80.75,33.88) .. (80.75,31.67) -- cycle ;
\draw  [color={rgb, 255:red, 0; green, 0; blue, 0 }  ,draw opacity=0 ][fill={rgb, 255:red, 74; green, 144; blue, 226 }  ,fill opacity=0.6 ] (271,158.33) .. controls (271,156.12) and (272.79,154.33) .. (275,154.33) -- (386,154.33) .. controls (388.21,154.33) and (390,156.12) .. (390,158.33) -- (390,170.33) .. controls (390,172.54) and (388.21,174.33) .. (386,174.33) -- (275,174.33) .. controls (272.79,174.33) and (271,172.54) .. (271,170.33) -- cycle ;
\draw  [color={rgb, 255:red, 0; green, 0; blue, 0 }  ,draw opacity=0 ][fill={rgb, 255:red, 189; green, 16; blue, 224 }  ,fill opacity=0.6 ] (220.75,161.19) -- (238.14,161.14) -- (238.13,156.83) -- (249.75,165.41) -- (238.18,174.07) -- (238.17,169.76) -- (220.78,169.81) -- cycle ;
\draw  [color={rgb, 255:red, 0; green, 0; blue, 0 }  ,draw opacity=0 ][fill={rgb, 255:red, 189; green, 16; blue, 224 }  ,fill opacity=0.6 ] (80.75,159.67) .. controls (80.75,157.46) and (82.54,155.67) .. (84.75,155.67) -- (195.75,155.67) .. controls (197.96,155.67) and (199.75,157.46) .. (199.75,159.67) -- (199.75,171.67) .. controls (199.75,173.88) and (197.96,175.67) .. (195.75,175.67) -- (84.75,175.67) .. controls (82.54,175.67) and (80.75,173.88) .. (80.75,171.67) -- cycle ;
\draw  [color={rgb, 255:red, 0; green, 0; blue, 0 }  ,draw opacity=0 ][fill={rgb, 255:red, 74; green, 144; blue, 226 }  ,fill opacity=0.6 ] (334.45,187.39) -- (334.59,204.78) -- (338.9,204.74) -- (330.38,216.41) -- (321.66,204.88) -- (325.97,204.85) -- (325.83,187.46) -- cycle ;
\draw  [color={rgb, 255:red, 0; green, 0; blue, 0 }  ,draw opacity=0 ][fill={rgb, 255:red, 74; green, 144; blue, 226 }  ,fill opacity=0.6 ] (271,231.33) .. controls (271,229.12) and (272.79,227.33) .. (275,227.33) -- (386,227.33) .. controls (388.21,227.33) and (390,229.12) .. (390,231.33) -- (390,243.33) .. controls (390,245.54) and (388.21,247.33) .. (386,247.33) -- (275,247.33) .. controls (272.79,247.33) and (271,245.54) .. (271,243.33) -- cycle ;
\draw  [color={rgb, 255:red, 0; green, 0; blue, 0 }  ,draw opacity=0 ][fill={rgb, 255:red, 189; green, 16; blue, 224 }  ,fill opacity=0.6 ] (220.75,91.19) -- (238.14,91.14) -- (238.13,86.83) -- (249.75,95.41) -- (238.18,104.07) -- (238.17,99.76) -- (220.78,99.81) -- cycle ;
\draw  [color={rgb, 255:red, 0; green, 0; blue, 0 }  ,draw opacity=0 ][fill={rgb, 255:red, 189; green, 16; blue, 224 }  ,fill opacity=0.6 ] (80.75,89.67) .. controls (80.75,87.46) and (82.54,85.67) .. (84.75,85.67) -- (195.75,85.67) .. controls (197.96,85.67) and (199.75,87.46) .. (199.75,89.67) -- (199.75,101.67) .. controls (199.75,103.88) and (197.96,105.67) .. (195.75,105.67) -- (84.75,105.67) .. controls (82.54,105.67) and (80.75,103.88) .. (80.75,101.67) -- cycle ;
\draw  [color={rgb, 255:red, 0; green, 0; blue, 0 }  ,draw opacity=0 ][fill={rgb, 255:red, 189; green, 16; blue, 224 }  ,fill opacity=0.6 ] (439.77,169.84) -- (422.38,169.77) -- (422.36,174.08) -- (410.81,165.41) -- (422.44,156.84) -- (422.42,161.15) -- (439.81,161.22) -- cycle ;
\draw  [color={rgb, 255:red, 0; green, 0; blue, 0 }  ,draw opacity=0 ][fill={rgb, 255:red, 189; green, 16; blue, 224 }  ,fill opacity=0.6 ] (460.75,159.67) .. controls (460.75,157.46) and (462.54,155.67) .. (464.75,155.67) -- (575.75,155.67) .. controls (577.96,155.67) and (579.75,157.46) .. (579.75,159.67) -- (579.75,171.67) .. controls (579.75,173.88) and (577.96,175.67) .. (575.75,175.67) -- (464.75,175.67) .. controls (462.54,175.67) and (460.75,173.88) .. (460.75,171.67) -- cycle ;

\draw (330.5,96.33) node  [font=\fontsize{0.77em}{0.92em}\selectfont] [align=left] {{\fontfamily{pcr}\selectfont formulations}};
\draw (520.25,95.67) node  [font=\fontsize{0.77em}{0.92em}\selectfont] [align=left] {{\fontfamily{pcr}\selectfont dynamic flows \S\ref{sec:problem}}};
\draw (330.5,26.33) node  [font=\fontsize{0.77em}{0.92em}\selectfont] [align=left] {{\fontfamily{pcr}\selectfont assumptions}};
\draw (520.25,25.67) node  [font=\fontsize{0.77em}{0.92em}\selectfont] [align=left] {{\fontfamily{pcr}\selectfont logical network \S\ref{ch:techs}}};
\draw (140.25,25.67) node  [font=\fontsize{0.77em}{0.92em}\selectfont] [align=left] {{\fontfamily{pcr}\selectfont logical gates \S\ref{ch:essentials}}};
\draw (330.5,164.33) node  [font=\fontsize{0.77em}{0.92em}\selectfont] [align=left] {{\fontfamily{pcr}\selectfont evaluation}};
\draw (140.25,165.67) node  [font=\fontsize{0.77em}{0.92em}\selectfont] [align=left] {{\fontfamily{pcr}\selectfont random samples \S\ref{sec:norm-cliff}}};
\draw (330.5,237.33) node  [font=\fontsize{0.77em}{0.92em}\selectfont] [align=left] {{\fontfamily{pcr}\selectfont E-count, E-depth}};
\draw (140.25,95.67) node  [font=\fontsize{0.77em}{0.92em}\selectfont] [align=left] {{\fontfamily{pcr}\selectfont spanning trees \S\ref{sec:cliff-compiler}}};
\draw (520.25,165.67) node  [font=\fontsize{0.77em}{0.92em}\selectfont] [align=left] {{\fontfamily{pcr}\selectfont worst cases \S\ref{sec:flow_vs_tree}}};

\end{tikzpicture}
    \caption{Overview of the main contribution; blue blocks denote the main steps in the problem modeling, scanned by blue arrows. The violet blocks are the main ingredients to the entry blue blocks.}
    \label{fig:overview}
\end{figure}
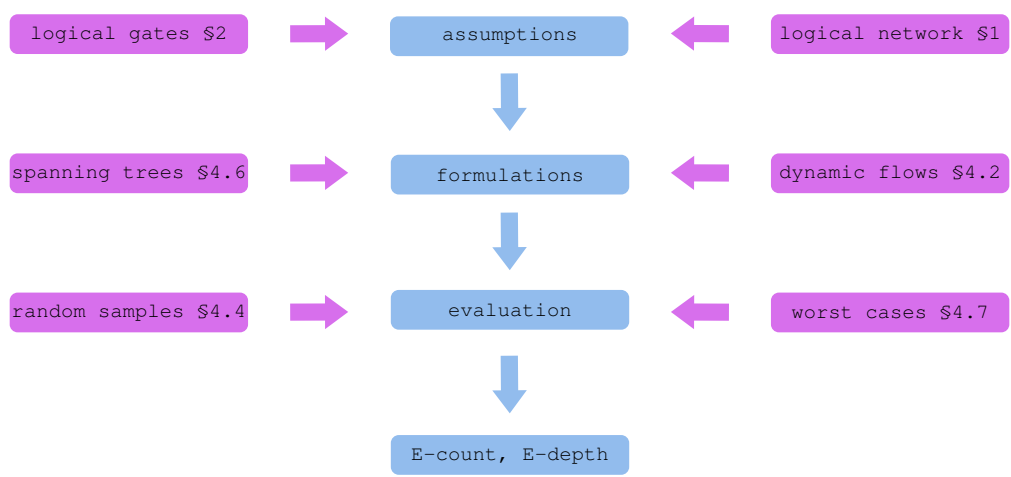


\newpage
\section{Mathematical modeling}
\label{sec:graph}
\begin{wrapfigure}{r}{7cm}
    \centering
    \input{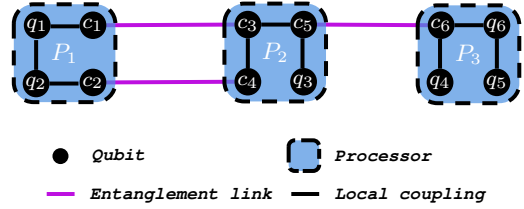}
    \caption{Toy distributed quantum architecture with 3 processors.}
    \label{fig:3qpu}
	\hrulefill
\end{wrapfigure}
According to the current trend on quantum technologies -- reported in Ch. \ref{ch:techs} -- we can give a mathematical description. Formally, let $\mathcal{N} = (V,P,F)$ be a network triple representing the architecture.
$V = Q \cup C$ is a set of nodes describing qubits, therefore it is the disjoint union of computation qubits $Q = \{q_1, q_2,\dots, q_{|Q|}\}$ and communication qubits $C = \{c_1, c_2,\dots, c_{|C|}\}$. We can represent $n$ processors by partitioning $V$ into $P = \{P_1, P_2, \dots, P_n\}$. Therefore, a sub-set $P_i$ characterizes a processor as its set of qubits/nodes.

$F = L \cup R$ is as a set of undirected edges. $L$ represents the local couplings, therefore
\[L\subseteq \bigcup_i P_i\times P_i.\]
Notice that there is no particular assumption on connectivity nor cardinality within processors. This keeps the treating hardware-independent and it allows for heterogeneous architectures.

$R$ represents entanglement links. Since entanglement links connect only communication qubits, we introduce, for each processor, a set of those qubits only; i.e., $C_i = C\cap P_i$. Therefore, we have 
\[R \subseteq \bigcup_{i,j\ :\ i \neq j} C_i \times C_j.\]

Fig. \ref{fig:3qpu} shows an exemplary architecture, with three processors in $P$, six computation qubits in $Q$, six communication qubits in $C$, three entanglement links in $R$ and ten local couplings in $L$.

Concerning minimal assumptions, we only care about architectures actually able to perform any operation. This translated into a simple connection assumption.

\section{Distributed quantum circuit compilation problem}
\label{sec:problem}

Usually, in the literature dealing with compiler design \cite{ZulWil-19, ItoRayIma-19, WilBurZul-19, FerCacAmo-21}, a circuit is encoded as a set of \textit{layers}. Formally, a layer is a set $\ell$ of independent operators, meaning that each operator in $\ell$ acts on a different collection of qubits. A circuit is an enumeration of layers $\mathcal{L} = \{\ell_1, \ell_2, \dots, \ell_{|\mathcal{L}|}\}$, where the cardinality is also commonly referred as circuit \textit{depth}. 


A quantum programmer writes a logical circuit, abstracting from the real architecture and assuming that qubits are fully connected, i.e., any couple of qubits can perform a $\land(\texttt{X})$ operation -- defined as in Eq. \eqref{eq:cu} -- directly.
Such an abstraction holds also when stepping to distributed architectures. 

However, NISQ architectures do not always provide full coupling. This is especially true in the distributed scenarios. As a consequence, there must be an interface -- namely, a compiler -- able to map an abstract circuit to an equivalent one, but meeting the available coupling. In general, such a mapping implies overhead in terms of circuit depth. Therefore, finding a mapping with minimum depth overhead is an optimization problem. We refer to it as the \textit{quantum circuit compilation} problem (\texttt{QCC}), which is proved to be \texttt{NP}-hard \cite{BotKisMar-18}. 
Its version on distributed architectures, 
which we refer to as the \textit{distributed quantum circuit compilation} problem (\texttt{DQCC}), is likely 
to be at least as hard as \texttt{QCC}. In fact, while in \texttt{QCC} we deal with local connectivity restrictions, in \texttt{DQCC} local connectivity stands alongside with remote connectivity -- i.e., the entanglement links --, which is less dense than the local one\footnote{Because the more communication qubits there are, the less computing resources are available.}. Furthermore, performing a remote operation is much more time consuming than a local operation. Just consider that a remote operation relies on communication of both quantum and classical information. 

The above reasons make telegates the bottleneck in distributed computing. Therefore, they are worth of dedicated analysis to minimize their impact. 



\subsection{Objective function}
\label{sec:of}
To optimize a circuit, the first thing we need to do is choosing an objective function to rate the expected performance of a circuit. A common approach is to evaluate only those operators which are somehow a bottleneck to computation. Considering the universal gate set $\mathbb{C}^{\texttt{+}}$ -- defined in Sec. \ref{sec:cliff}, in the context of fault-tolerant quantum computing \cite{Got-98}, the bottleneck is the $\texttt{Z}^{\sfrac{1}{4}}$ operator \cite{Sel-13,AmyMasMos-14}, since error correction protocols are designed for the Clifford group $\mathbb{C}$. Conversely, on current NISQ technologies, the bottleneck lies in the interaction between qubits -- as for the case of $\land(\texttt{X})$. The relevant metric can either be the number of occurrences of some operator $\texttt{O}$, namely the $\texttt{O}$-count, or the number of layers containing $\texttt{O}$ at least once, namely the $\texttt{O}$-depth.
To rate a compiled circuit on distributed architectures, we do something along the lines of this approach. Specifically, the bottleneck are the non-local $\land(\texttt{X})$ (and $\land(\texttt{Z})$) operators, each of which implies one occurrence of entanglement generation and distribution stage. We refer to such a stage as the \texttt{E} operator. Therefore, we will rate a circuit by means of its \texttt{E}-depth and \texttt{E}-count.

\subsection{Modeling the time domain}
\label{sec:time}
It should be clear that \texttt{E} has central interest in our treating. In fact, we are also going to model the time by scanning it as \texttt{E} occurs.
Specifically, notice that link generations among different couples of qubits are independent. For this reason we assume that all the possible links generate simultaneously and, as soon as all the states are measured, a new round of simultaneous generations begins.

Clearly, after that a measurement generates a boolean \texttt{b}, there is at least one post-processing operator that need to wait for that boolean to arrive. Generally speaking, the longer the path the more time \texttt{b} takes to reach its destination.  
We need to account for that by a proper model. To this aim, we do some observations.
\begin{remark}Consider a generic single-qubit unitary operator $\texttt{U}$. The time required to perform $\texttt{U}^{\texttt{b}}$ is given by the sum of the travel time of \texttt{b} plus the time to perform $\texttt{U}$. However, the traveling of \texttt{b} is independent from computation and any operation preceding $\texttt{U}^{\texttt{b}}$ can run. Hence, we compactly refer to the post-processing waiting-time as $\Delta_{\texttt{U}^{\texttt{b}}}$. A second observation is that the travel of \texttt{b} is also independent by entanglement link creations, which we assume to take time $\Delta_{\texttt{E}}$. It is, therefore, also reasonable to assume $\Delta_{\texttt{U}^{\texttt{b}}} \lesssim \Delta_{\texttt{E}}$ because of the following observation: even if \texttt{b} need to cover a longer distance than the one covered by \texttt{E}, \texttt{b} relies on classical technologies, which are way more efficient\footnote{The design of a distributed quantum architecture can easily adapt to satisfy requirements coming from assumptions on classical technologies, since these are very advanced.} than entanglement generation and distribution protocols. 
For this reason, in our treating we neglect $\Delta_{\texttt{U}^{\texttt{b}}}$, since it happens in parallel with $\Delta_{\texttt{E}}$. Furthermore, in Secs. \ref{sec:norm-cliff} and \ref{sec:cliff-compiler} we will focus on groups of circuits where all the post-processing operations are fully separated from the quantum computation.
\end{remark}


Stemming from this, we can model the time domain as a discrete set of steps $\tau \in \{1,2, \dots, d\}$, where $d$ is an unknown time horizon, which is also the \texttt{E}-depth. At the beginning of each time step $\tau$, the whole set of entanglement links is available for telegates.
Notice that most of the local operators are expected to run during the creation of the links. Because we relate them to the following inequality
\begin{equation}
\label{eq:time}
    \textcolor{black}{\Delta_{\texttt{E}} \gg \Delta_{\land(\texttt{U})}, \Delta_{\texttt{U}}},
\end{equation}
where \texttt{U} is a single-qubit unitary operator. Therefore, since \texttt{E} is independent from local operators, we can always attempt to run these while \texttt{E} is running -- and also while classical bits \texttt{b} are traveling, as explained in Sec. \ref{sec:e-path}.



\subsection{Modeling the distributed architecture}
\begin{wrapfigure}{r}{6cm}
    \centering
    \tikzset{every picture/.style={line width=0.75pt}} 

\begin{tikzpicture}[x=0.6pt,y=0.6pt,yscale=-1,xscale=1]

\draw [color={rgb, 255:red, 189; green, 16; blue, 224 }  ,draw opacity=1 ][line width=1.5]    (158.57,44) -- (196.69,44) ;
\draw [color={rgb, 255:red, 189; green, 16; blue, 224 }  ,draw opacity=1 ][line width=1.5]    (87.49,43.84) -- (126.26,44) ;
\draw  [color={rgb, 255:red, 0; green, 0; blue, 0 }  ,draw opacity=1 ][fill={rgb, 255:red, 74; green, 144; blue, 226 }  ,fill opacity=0.7 ][dash pattern={on 5.63pt off 4.5pt}][line width=1.5]  (55.5,43.84) .. controls (55.5,35.09) and (62.66,28) .. (71.49,28) .. controls (80.33,28) and (87.49,35.09) .. (87.49,43.84) .. controls (87.49,52.59) and (80.33,59.68) .. (71.49,59.68) .. controls (62.66,59.68) and (55.5,52.59) .. (55.5,43.84) -- cycle ;
\draw  [color={rgb, 255:red, 0; green, 0; blue, 0 }  ,draw opacity=1 ][fill={rgb, 255:red, 74; green, 144; blue, 226 }  ,fill opacity=0.7 ][dash pattern={on 5.63pt off 4.5pt}][line width=1.5]  (126.26,44) .. controls (126.26,35.16) and (133.49,28) .. (142.41,28) .. controls (151.33,28) and (158.57,35.16) .. (158.57,44) .. controls (158.57,52.84) and (151.33,60) .. (142.41,60) .. controls (133.49,60) and (126.26,52.84) .. (126.26,44) -- cycle ;
\draw  [color={rgb, 255:red, 0; green, 0; blue, 0 }  ,draw opacity=1 ][fill={rgb, 255:red, 74; green, 144; blue, 226 }  ,fill opacity=0.7 ][dash pattern={on 5.63pt off 4.5pt}][line width=1.5]  (196.69,44) .. controls (196.69,35.16) and (203.92,28) .. (212.85,28) .. controls (221.77,28) and (229,35.16) .. (229,44) .. controls (229,52.84) and (221.77,60) .. (212.85,60) .. controls (203.92,60) and (196.69,52.84) .. (196.69,44) -- cycle ;

\draw (71.49,43.84) node  [font=\scriptsize,color={rgb, 255:red, 255; green, 255; blue, 255 }  ,opacity=1 ]  {$p_{1}$};
\draw (142.41,44) node  [font=\scriptsize,color={rgb, 255:red, 255; green, 255; blue, 255 }  ,opacity=1 ]  {$p_{2}$};
\draw (106.87,43.92) node [anchor=south] [inner sep=2pt]  [font=\scriptsize]  {$2$};
\draw (212.85,44) node  [font=\scriptsize,color={rgb, 255:red, 255; green, 255; blue, 255 }  ,opacity=1 ]  {$p_{3}$};
\draw (177.63,44) node [anchor=south] [inner sep=2pt]  [font=\scriptsize]  {$1$};

\end{tikzpicture}
    \caption{Quotient graph derived from Fig. \ref{fig:3qpu}. The processors become the nodes, the entanglement links between a couple of processors gather into one edge, with capacity equal to the number of original links \cite{cuomo2021optimized}. 
    }
    \label{fig:quotient}
    \hrulefill
\end{wrapfigure}
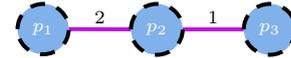
In light of the above observations, it is reasonable and convenient to consider the whole processor as a network node, and define a function $c$ that provides the number of available links between two processors. Specifically, we first formalized a distributed architecture as the network graph $\mathcal{N} = (V,P,F)$ introduced in subsection \ref{sec:graph}; this step was important to understand the interior behavior of remote operations from a qubit perspective. However, now it is useful to re-state it to a more compact encoding, which highlights the main bottleneck of a distributed quantum architecture, the entanglement links. Formally speaking, we will consider a \textit{quotient graph} of $\mathcal{N}$. 

To not further weigh down the formalism, we re-model the instance, by considering as main nodes, the processors, corresponding to an enumeration for the partition $P$, i.e., $P = \{p_1,p_2,\dots,p_n\}$.
All the entanglement links, connecting the same couple of processors, now collapse to an only edge with integer capacity $c$, describing how many parallel entanglement links the two processors supplies\footnote{Notice that this has no impact when the quantum processor is based has a full-connected topology, as in the case of Ion-traps.}. We refer to this sets of edges as 
\[E \subseteq \bigcup_{i,j\ :\ i\neq j} p_i\times p_j.\]
Hence, the new undirected graph is $\mathcal{Q} = (P,E,c)$. With this reformulation a remote operation will refer to a \textit{control processor} and a \textit{target processors} -- e.g., $\land(\texttt{X})_{u,v}$ with $p_u,p_v \in P$.

In Fig. \ref{fig:quotient} we show the quotient graph related to the toy architecture of Fig. \ref{fig:3qpu}. 



\subsection{Single layer formulation}
\label{sec:layer}
Consider a basic circuit expressed as the singleton $\mathcal{L} = \{\ell\}$.
Assume that in $\ell$ there occur $k$ $\land(\texttt{X})$ -- or $\land(\texttt{Z})$ -- operators. From a logical perspective, all the $k$ operators can run in parallel, by definition of layer. 
In other words, if the architecture connectivity had infinite capacity -- i.e., $c(e) = \infty,\ \forall e \in E$ -- we could run $\mathcal{L}$ with \texttt{E}-depth $1$, that is optimal.
As the capacity values decrease, the optimal \texttt{E}-depth value grows, up to \texttt{E}-depth $k$ in the worst-case.

Let us formulate an optimization problem for the single-layer case -- we will introduce a generalization to any circuit in subsection \ref{sec:layers}.
Specifically, the \textit{quickest multi-commodity flow} \cite{FleSku-02} wraps this basic scenario. 

In brief, the goal is to find a flow over time which satisfy the constraints imposed by a set of so-called
commodities, which are going to represent the $\land(\texttt{X})$ of a quantum circuit. The less time the flow takes, the better. 
To formalize this problem, one can directly model an objective function that evaluates a flow by the time it takes. This is an approach employed in Ref. \cite{LinJai-14}, but for single commodity. Alternatively, authors in Ref. \cite{FleSku-02} propose to start from a formulation of the \textit{multi-commodity flow} problem over time \texttt{MCF}$_d$, where $d$ is a given \textit{time horizon}\footnote{The choice of using letter $d$ should highlight that the time horizon is going to be the \texttt{E}-depth.}, namely a maximal number of time steps in which the flow is constrained. We prefer this latter way because dynamic flows like \texttt{MCF}$_d$ has been deeply studied since long time ago \cite{ForFul-58, FulFor-58}. Furthermore, even if this approach has an important drawback, explained at the end of this sub-section, it does not apply to our scenario.

\subsubsection{Commodities}
To formulate \texttt{MCF}$_d$, first, we enumerate the occurrences of two-qubit remote operators in $\mathcal{L}$ as a set of commodities $[k] = \{1,2,\dots, k\}$. A set of couples source-sink nodes associates to the commodities. To do that, let $\mathbf{s} = ({s_1}, {s_2}, \dots {s_k})$ and $\mathbf{t} = ({t_1}, {t_2}, \dots {t_k})$ be two vectors induced by the operators $\land(\texttt{X})$ -- or $\land(\texttt{Z})$ -- in $\mathcal{L}$\footnote{We need to use vector notation to admit repetitions.} such that,
\begin{equation*}
    \land(\texttt{X})_{s_i,t_i} \in \ell \iff \exists{i}\in [k]\ :\ p_{s_i}, p_{t_i} \in P.
\end{equation*}
Namely, $p_{s_i}$ ($p_{t_i}$) is the processor where the control (target) qubit of operation $i$ occurs.

\subsubsection{Decision variables}
The decision variables of the optimization problem are the time-dependent functions  $f_{e,i}(\tau) \in \{0,1\}$, indicating the flow on edge $e \in E$ dedicated to operation $i \in [k]$ at time $\tau$. The function has a binary co-domain because an operation $i$ uses at most one entanglement link.


\subsubsection{Constraints}
As usual, the first constraint we introduce is the \textit{flow conservation} constraint. Formally, $\forall i\in [k]$, $\forall \tau \in [d]$ and $\forall p_j \in P\smallsetminus\{p_{s_i},p_{t_i}\}$ the following holds:
\begin{equation}
\label{c1}
    \sum_{e \in \delta^{-}(p_j)}f_{e,i}(\tau) - \sum_{e \in \delta^{+}(p_j)} f_{e,i}(\tau) = 0
\end{equation}
where $\delta^-, \delta^+ : P \rightarrow E$ are the standard functions outputting the set of entering and exiting edges of the input node, respectively.

Since a flow $f_{e,i}(\tau) = 1$ identifies the usage of an entanglement link in $e$ to perform $i$, we need to guarantee that the flow going through intermediate links of a path does not stop there. Conversely, whenever an end point of the path occurs in the control or target processor -- i.e., $p_{s_i}$ or $p_{t_i}$ --, the \textit{operation demand} -- or \textit{commodity demand} -- constraint holds instead of the conservation constraint.
Namely, $\forall i \in [k]$, this can be written as:

\begin{equation}
\label{c2.1}
\sum_{e \in \delta^{-}(p_{s_i})}\sum_{\tau \in [d]} f_{e,i}(\tau) - \sum_{e \in \delta^{+}(p_{s_i})}\sum_{\tau \in [d]} f_{e,i}(\tau) = -1
\end{equation}

\begin{equation}
\label{c2.2}
\sum_{e \in \delta^{-}(p_{t_i})}\sum_{\tau \in [d]} f_{e,i}(\tau) - \sum_{e \in \delta^{+}(p_{t_i})}\sum_{\tau \in [d]} f_{e,i}(\tau) = +1
\end{equation}
The above constraint explicitly requests that a flow dedicate to $i$ reaches its target $p_{t_i}$, without exiting. Symmetrically, it leaves its control processor $p_{s_i}$ without returning.

Notice that constraint \eqref{c1} forces the operation demand to be satisfied within a single time-step.

The last constraint ensures that, at any time step, the number of operations does not exceed the entanglement resources. Hence, $\forall e \in E$ and $\forall \tau \in [d]$, we introduce a \textit{capacity bound}:
\begin{equation}
    \label{c3}
    \sum_{i \in [k]}{f_{e,i}(\tau)} \leq c(e)
\end{equation}

Ultimately, the objective function is the total flow $f = \sum_{e \in E}\sum_{i \in [k]}\sum_{\tau}{f_{e,i}(\tau)}$.

By gathering the above equations, we obtain the Integer Linear Programming formulation \eqref{mcf}, which models \texttt{MCF}$_d$. 
A flow $f$ perfectly matches a set of entanglement paths used by the telegates. 

\begin{figure}[ht]
\centering
   \begin{mini}
    {}{f = \sum_{e \in E}\sum_{i \in [k]}\sum_{\tau \in [d]}{f_{e,i}(\tau)}}{}{}
    \label{mcf}
    \addConstraint{\sum_{e \in \delta^{-}(p_j)}f_{e,i}(\tau) - \sum_{e \in \delta^{+}(p_j)} f_{e,i}(\tau) = 0}{}{\forall i \in [k], \forall{\tau}\in [d], \forall p_j \in P\smallsetminus\{p_{s_i},p_{t_i}\}}
    \addConstraint{\sum_{e \in \delta^{-}(p_{s_i})}\sum_{\tau \in [d]} f_{e,i}(\tau) - \sum_{e \in \delta^{+}(p_{s_i})}\sum_{\tau\in [d]} f_{e,i}(\tau) = -1\ \ \ }{}{\forall i \in [k]} \addConstraint{\sum_{e \in \delta^{-}(p_{t_i})}\sum_{\tau \in [d]} f_{e,i}(\tau) - \sum_{e \in \delta^{+}(p_{t_i})}\sum_{\tau\in [d]} f_{e,i}(\tau) = +1\ \ \ }{}{\forall i \in [k]}
    \addConstraint{\sum_{i \in [k]}{f_{e,i}(\tau)} \leq c(e)}{}{\forall e \in E,  \forall \tau \in [d]}  
    \end{mini}
\end{figure}



\begin{wrapfigure}{R}{8cm}
\begin{algorithm}[H]
\DontPrintSemicolon
\KwIn{$\mathcal{Q}, [k]$}
\KwOut{$d$}
  $L \leftarrow 1, R \leftarrow k$\;
  \While{$L \leq R$}{
        {
            $\bar{d} \leftarrow \floor{\frac{L + R}{2}}$\;
            $S \leftarrow$ \texttt{MCF}$_{\bar{d}}(\mathcal{Q}, [k])$\;
            \uIf{$S$ \textbf{is feasible}}
            {
                $d \leftarrow \bar{d}$\;
                $R \leftarrow \bar{d}-1$\;
            }
            \Else
            {
                $L \leftarrow \bar{d} + 1$\;
            }
        }
    }
\caption{\small{Quickest multi-commodity flow}}
\label{algo:binary}
\end{algorithm}
\end{wrapfigure}
Notice that solutions with cycles are in general feasible, but are senseless in our scenario.
By expressing the problem as a minimization of $f$, a solver will avoid any cycle and will try to use as few entanglement links as possible.
Once defined a solver for $\texttt{MCF}_d$, we just need to use it as proposed in Ref. \cite{FleSku-02}, namely the solver occurs as sub-routine within a binary research on the minimum time where a feasible solution exists. 
Since the research space is over time, the algorithm is, in general, pseudo-logarithmic. 

\noindent{}Specifically to our case, we already know that the worst solution is where all the operations run in sequence -- i.e., \texttt{E}-depth equal to the amount $k$ of telegates. Therefore, the time horizon is upper-bounded by $k$ and the binary search has $\log{k}$ calls to the sub-routine.
Algorithm \ref{algo:binary} shows the steps. Notice that it makes use of an undetermined solver for $\texttt{MCF}_d$. Since we are facing an \texttt{NP}-hard problem, this means that a real implementation would generally look for sub-optimal solutions.



Unfortunately, standard $\texttt{MCF}_d$ cannot catch the whole features of \texttt{DQCC} when $\mathcal{L} = \{\ell_1, \ell_2, \dots, \ell_{|\mathcal{L}|}\}$; we need to consider that operations in $[k]$ are somehow related each other by a logic determined by $\mathcal{L}$. Hence, in Sec. \ref{sec:layers} we will model such relations by introducing extra constraints.

\subsubsection{Transformation to direct graph}
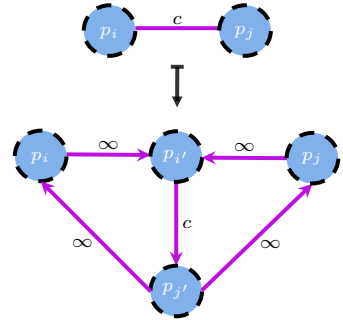
\begin{wrapfigure}{r}{6.5cm}
    \centering
    \tikzset{every picture/.style={line width=0.5pt}} 

\begin{tikzpicture}[x=0.6pt,y=0.6pt,yscale=-1,xscale=1]

\draw [color={rgb, 255:red, 189; green, 16; blue, 224 }  ,draw opacity=1 ][line width=1.5]    (178.77,40.1) -- (126.47,40.1) ;
\draw  [color={rgb, 255:red, 0; green, 0; blue, 0 }  ,draw opacity=1 ][fill={rgb, 255:red, 74; green, 144; blue, 226 }  ,fill opacity=0.7 ][dash pattern={on 5.63pt off 4.5pt}][line width=1.5]  (94.24,41.6) .. controls (94.24,32.81) and (101.46,25.69) .. (110.36,25.69) .. controls (119.25,25.69) and (126.47,32.81) .. (126.47,41.6) .. controls (126.47,50.38) and (119.25,57.5) .. (110.36,57.5) .. controls (101.46,57.5) and (94.24,50.38) .. (94.24,41.6) -- cycle ;
\draw  [color={rgb, 255:red, 0; green, 0; blue, 0 }  ,draw opacity=1 ][fill={rgb, 255:red, 74; green, 144; blue, 226 }  ,fill opacity=0.7 ][dash pattern={on 5.63pt off 4.5pt}][line width=1.5]  (178.77,41.6) .. controls (178.77,32.81) and (185.98,25.69) .. (194.88,25.69) .. controls (203.78,25.69) and (211,32.81) .. (211,41.6) .. controls (211,50.38) and (203.78,57.5) .. (194.88,57.5) .. controls (185.98,57.5) and (178.77,50.38) .. (178.77,41.6) -- cycle ;
\draw [color={rgb, 255:red, 189; green, 16; blue, 224 }  ,draw opacity=1 ][line width=1.5]    (132.05,119.8) -- (83.47,119.43) ;
\draw [shift={(136.05,119.83)}, rotate = 180.44] [fill={rgb, 255:red, 189; green, 16; blue, 224 }  ,fill opacity=1 ][line width=0.08]  [draw opacity=0] (8.75,-4.2) -- (0,0) -- (8.75,4.2) -- (5.81,0) -- cycle    ;
\draw  [color={rgb, 255:red, 0; green, 0; blue, 0 }  ,draw opacity=1 ][fill={rgb, 255:red, 74; green, 144; blue, 226 }  ,fill opacity=0.7 ][dash pattern={on 5.63pt off 4.5pt}][line width=1.5]  (51.24,120.43) .. controls (51.24,111.64) and (58.46,104.52) .. (67.36,104.52) .. controls (76.25,104.52) and (83.47,111.64) .. (83.47,120.43) .. controls (83.47,129.21) and (76.25,136.33) .. (67.36,136.33) .. controls (58.46,136.33) and (51.24,129.21) .. (51.24,120.43) -- cycle ;
\draw [color={rgb, 255:red, 189; green, 16; blue, 224 }  ,draw opacity=1 ][line width=1.5]    (151.96,185.24) -- (151.96,136.94) ;
\draw [shift={(151.96,189.24)}, rotate = 270] [fill={rgb, 255:red, 189; green, 16; blue, 224 }  ,fill opacity=1 ][line width=0.08]  [draw opacity=0] (8.75,-4.2) -- (0,0) -- (8.75,4.2) -- (5.81,0) -- cycle    ;
\draw  [color={rgb, 255:red, 0; green, 0; blue, 0 }  ,draw opacity=1 ][fill={rgb, 255:red, 74; green, 144; blue, 226 }  ,fill opacity=0.7 ][dash pattern={on 5.63pt off 4.5pt}][line width=1.5]  (135.84,120.83) .. controls (135.84,112.05) and (143.06,104.92) .. (151.96,104.92) .. controls (160.85,104.92) and (168.07,112.05) .. (168.07,120.83) .. controls (168.07,129.61) and (160.85,136.73) .. (151.96,136.73) .. controls (143.06,136.73) and (135.84,129.61) .. (135.84,120.83) -- cycle ;
\draw  [color={rgb, 255:red, 0; green, 0; blue, 0 }  ,draw opacity=1 ][fill={rgb, 255:red, 74; green, 144; blue, 226 }  ,fill opacity=0.7 ][dash pattern={on 5.63pt off 4.5pt}][line width=1.5]  (135.84,205.36) .. controls (135.84,196.57) and (143.06,189.45) .. (151.96,189.45) .. controls (160.85,189.45) and (168.07,196.57) .. (168.07,205.36) .. controls (168.07,214.14) and (160.85,221.26) .. (151.96,221.26) .. controls (143.06,221.26) and (135.84,214.14) .. (135.84,205.36) -- cycle ;
\draw [color={rgb, 255:red, 189; green, 16; blue, 224 }  ,draw opacity=1 ][line width=1.5]    (221.05,121.83) -- (171.84,121.55) ;
\draw [shift={(167.84,121.52)}, rotate = 0.33] [fill={rgb, 255:red, 189; green, 16; blue, 224 }  ,fill opacity=1 ][line width=0.08]  [draw opacity=0] (8.75,-4.2) -- (0,0) -- (8.75,4.2) -- (5.81,0) -- cycle    ;
\draw [color={rgb, 255:red, 189; green, 16; blue, 224 }  ,draw opacity=1 ][line width=1.5]    (167.86,205.07) -- (234.09,140.73) ;
\draw [shift={(236.96,137.94)}, rotate = 135.83] [fill={rgb, 255:red, 189; green, 16; blue, 224 }  ,fill opacity=1 ][line width=0.08]  [draw opacity=0] (8.75,-4.2) -- (0,0) -- (8.75,4.2) -- (5.81,0) -- cycle    ;
\draw  [color={rgb, 255:red, 0; green, 0; blue, 0 }  ,draw opacity=1 ][fill={rgb, 255:red, 74; green, 144; blue, 226 }  ,fill opacity=0.7 ][dash pattern={on 5.63pt off 4.5pt}][line width=1.5]  (220.84,121.83) .. controls (220.84,113.05) and (228.06,105.92) .. (236.96,105.92) .. controls (245.85,105.92) and (253.07,113.05) .. (253.07,121.83) .. controls (253.07,130.61) and (245.85,137.73) .. (236.96,137.73) .. controls (228.06,137.73) and (220.84,130.61) .. (220.84,121.83) -- cycle ;
\draw [color={rgb, 255:red, 189; green, 16; blue, 224 }  ,draw opacity=1 ][line width=1.5]    (136.05,205.65) -- (70.17,139.17) ;
\draw [shift={(67.36,136.33)}, rotate = 45.26] [fill={rgb, 255:red, 189; green, 16; blue, 224 }  ,fill opacity=1 ][line width=0.08]  [draw opacity=0] (8.75,-4.2) -- (0,0) -- (8.75,4.2) -- (5.81,0) -- cycle    ;
\draw [color={rgb, 255:red, 0; green, 0; blue, 0 }  ,draw opacity=0.8 ][line width=1.5]    (152.62,64.38) -- (152.52,86) ;
\draw [shift={(152.5,90)}, rotate = 270.27] [fill={rgb, 255:red, 0; green, 0; blue, 0 }  ,fill opacity=0.8 ][line width=0.08]  [draw opacity=0] (6.97,-3.35) -- (0,0) -- (6.97,3.35) -- cycle    ;
\draw [shift={(152.62,64.38)}, rotate = 270.27] [color={rgb, 255:red, 0; green, 0; blue, 0 }  ,draw opacity=0.8 ][line width=1.5]    (0,4.02) -- (0,-4.02)   ;

\draw (110.36,43.1) node  [font=\scriptsize,color={rgb, 255:red, 255; green, 255; blue, 255 }  ,opacity=1 ]  {$p_{i}$};
\draw (194.88,43.1) node  [font=\scriptsize,color={rgb, 255:red, 255; green, 255; blue, 255 }  ,opacity=1 ]  {$p_{j}$};
\draw (67.36,121.43) node  [font=\scriptsize,color={rgb, 255:red, 255; green, 255; blue, 255 }  ,opacity=1 ]  {$p_{i}$};
\draw (151.96,120.83) node  [font=\scriptsize,color={rgb, 255:red, 255; green, 255; blue, 255 }  ,opacity=1 ]  {$p_{i'}$};
\draw (151.96,205.36) node  [font=\scriptsize,color={rgb, 255:red, 255; green, 255; blue, 255 }  ,opacity=1 ]  {$p_{j'}$};
\draw (236.96,121.83) node  [font=\scriptsize,color={rgb, 255:red, 255; green, 255; blue, 255 }  ,opacity=1 ]  {$p_{j}$};
\draw (152.62,43.1) node [anchor=south] [inner sep=3pt]  [font=\scriptsize]  {$c$};
\draw (151.96,163.09) node [anchor=west] [inner sep=2pt]  [font=\scriptsize]  {$c$};
\draw (194.45,122.68) node [anchor=south] [inner sep=3pt]  [font=\scriptsize]  {$\infty $};
\draw (109.76,121.63) node [anchor=south] [inner sep=3pt]  [font=\scriptsize]  {$\infty $};
\draw (101.7,170.99) node [anchor=north east] [inner sep=0.75pt]  [font=\scriptsize]  {$\infty $};
\draw (202.41,171.51) node [anchor=north west][inner sep=0.75pt]  [font=\scriptsize]  {$\infty $};

\end{tikzpicture}
    \caption{Mapping from an undirected graph to a directed one working for any multi-commodity flow problem. The transformation undergoes with a constant overhead in the number of nodes and edges.}
    \label{fig:digraph}
    \hrulefill
\end{wrapfigure}
Since the literature dealing with multi-commodity flows usually assume a direct graph, we here report a mapping method from an undirected graph to an equivalent one with direct edges. This would bring just a constant overhead in the space, while it would not affect any approximation factor which a solver would rely on. Fig. \ref{fig:digraph} comes from \cite{ahuja1988network}. It is a fast approach to map an undirected multi-commodity flow problem to a directed one. Specifically, for each couple of nodes $p_i,p_j$ connected by an edge with capacity $c$, one have to introduce two extra nodes, say $p_{i'},p_{j'}$ and connect them with the direct edge $(p_{i'},p_{j'})$ of capacity $c$. The last step is creating directed cycles of infinite capacity, where the only bottleneck is $c$.

\subsection{Any layer formulation}
\label{sec:layers}
As mentioned, the formulation we just gave is not enough to model the \texttt{DQCC} problem to any $\mathcal{L} = \{\ell_1,\ell_2,\dots, \ell_{|\mathcal{L}|}\}$, 
because a circuit generally follows a logic which is related on the order of occurrence given by $\mathcal{L}$. Therefore, even if it might happen that two operations could run in any order, in general this is not true. One needs to define an order relation which is consistent with the logic of the circuit. From an optimization point of view, a critical matter is to choose an order relation \textcolor{black}{that} either wraps most of the good solutions or is prone to optimization algorithms. For this reason and for the sake of clarity, we here refer to a generic, irreflexive, order relation $\prec$ defined over $[k]$, without giving it a unique definition.
Formally, for any $i,j \in [k]$, $j\prec i$ means that to run $i$ we need to ensure that $j$ already ran. 
Starting from $\prec$, we can define a constraint to add to formulation \eqref{mcf}.
Namely, $\forall i \in [k], \forall e \in \delta^-(p_{t_i})$ the following holds:
\begin{equation}
\label{c4}
    f_{e,i}(\tau) \leq \underset{j \prec i}{\text{min}} \sum_{\bar{\tau}< \tau}{f_{e,j}(\bar{\tau})}
\end{equation}
The right part of the inequality is a value in $\{0,1\}$ and takes value $1$ only if all the operations logically preceding $i$ already ran. Notice that constraint \eqref{c4} is linear, as it takes the minimum value among linear functions, and it can be easily mapped to a set of independent constraints $f_{e,i}(\bar{\tau}) \leq \sum_{\bar{\tau}< \tau}{f_{e,j}(\bar{\tau})}$,  $\forall j : j \prec i$.

The formulation now models \texttt{DQCC}. But we will refine inequality \eqref{c4} to get a better solution space -- see Sec. \ref{sec:quasi}.

\section{Increasing the parallelism}
\label{sec:quasi}
\begin{wrapfigure}{r}{4cm}
    \centering
    \begin{quantikz}[thin lines,row sep={0.6cm,between origins}]
        &\ctrl{1}\gategroup[wires=2,steps=1,style={draw=none,rounded corners,inner xsep=1pt,inner ysep=-0.3pt, fill=teal!10}, background, label style={rounded corners,label position=above,  yshift=0.15cm, fill=teal!10}, background]{$i$} & \qw & \qw\\
        &\targ{} & \ctrl{1}\gategroup[wires=2,steps=1,style={draw=none,rounded corners,inner xsep=1pt,inner ysep=-0.3pt, fill=teal!10}, background, label style={rounded corners,label position=above,  yshift=-1.7cm, fill=teal!10}, background]{$j$} & \qw\\
        & \qw & \control{} & \qw
    \end{quantikz}
    \caption{Telegates in logical conflict.
    }
    \label{fig:conflict}
	\hrulefill
\end{wrapfigure}
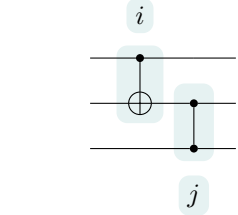
As before, from an optimization point of view, we are interested in considering as many good solutions as possible. To this aim, we propose an interesting approach which should enlarge the space of good solutions. Specifically, we notice that even if two operations $i,j \in [k]$ are such that $i\prec j$, this does not necessarily mean that they must run at different time steps. They, indeed, may run at the same time step and still respecting the logic imposed by $\prec$.

Consider the example from Fig. \ref{fig:conflict}. Since operations $i$ and $j$ operates over a common qubit, they are in logical conflict. Hence, it is reasonable to think that $i\prec j$ should hold.
However, when considering $i$ and $j$ in their extended form -- i.e., where communication qubits are explicit -- we notice that their logical conflict does not map over all the operations involved.
As Fig. \ref{fig:extended} shows, the left part of the equivalence is a naive implementation of $i$ followed by $j$, where the extended form completely inherits the logical conflict. Instead, the right part of the equivalence is way more efficient and it is still an implementation of circuit of Fig. \ref{fig:conflict}.
As consequence, even if $i$ and $j$ are in logical conflict, they can run at the same time step.
We refer to this property as \textit{quasi-parallelism}. For this reason we introduce a new binary relation between operations in $[k]$, which we refer to with the intuitive symbol $\shortparallel$. 
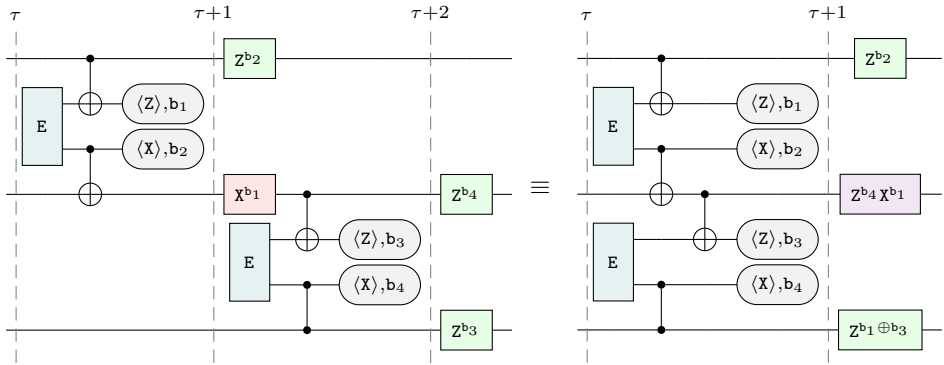
\begin{figure}[h]
    \centering
    \begin{quantikz}[thin lines,row sep={0.6cm,between origins},column sep=0.26cm]
		\slice[style=gray]{$_{\tau}$}&\qw & \ctrl{1} & \qw \slice[style=gray]{$_{\tau+1}$}& \gate[style={fill = green!10}]{_{\texttt{Z}^{\texttt{b}_2}}} &  \qw &\qw\slice[style=gray]{$_{\tau+2}$}&\qw&\qw\\
		& \gate[2, nwires = {1,2}, style={fill = teal!10}]{_{\texttt{E}}} & \targ{} & \measure[style={fill = gray!10}]{_{\langle\texttt{Z}\rangle,\texttt{b}_1}} & \\
		& & \ctrl{1} &  \measure[style={fill = gray!10}]{_{\langle\texttt{X}\rangle,\texttt{b}_2}} &\\
		& \qw & \targ{} & \qw & \gate[style={fill = red!10}]{_{\texttt{X}^{\texttt{b}_1}}} & \ctrl{1}& \qw & \gate[style={fill = green!10}]{_{\texttt{Z}^{\texttt{b}_4}}}&\qw\\
		& & & &\gate[2, nwires = {1,2}, style={fill = teal!10}]{_{\texttt{E}}} & \targ{} & \measure[style={fill = gray!10}]{_{\langle\texttt{Z}\rangle,\texttt{b}_3}} & \\
		& & & & & \ctrl{1} & \measure[style={fill = gray!10}]{_{\langle\texttt{X}\rangle,\texttt{b}_4}} &\\
		& \qw& \qw& \qw & \qw& \control{} & \qw  & \gate[style={fill = green!10}]{_{\texttt{Z}^{\texttt{b}_3}}} & \qw
	\end{quantikz}
	$\ \equiv$
    \begin{quantikz}[thin lines,row sep={0.6cm,between origins},column sep=0.26cm]
		\slice[style=gray]{$_{\tau}$}&\qw & \ctrl{1} & \qw & \qw\slice[style=gray]{$_{\tau+1}$} & \gate[style={fill = green!10}]{_{\texttt{Z}^{\texttt{b}_2}}} &  \qw\\
		& \gate[2, nwires = {1,2}, style={fill = teal!10}]{_{\texttt{E}}} & \targ{} & \qw & \measure[style={fill = gray!10}]{_{\langle\texttt{Z}\rangle,\texttt{b}_1}} & \\
		& & \ctrl{1} & \qw & \measure[style={fill = gray!10}]{_{\langle\texttt{X}\rangle,\texttt{b}_2}} &\\
		& \qw & \targ{} & \ctrl{1}& \qw & \gate[style={fill=violet!10}]{_{\texttt{Z}^{\texttt{b}_4}\texttt{X}^{\texttt{b}_1}}}&\qw\\
		&\gate[2, nwires = {1,2}, style={fill = teal!10}]{_{\texttt{E}}} & \qw &\targ{} & \measure[style={fill = gray!10}]{_{\langle\texttt{Z}\rangle,\texttt{b}_3}} & \\
		& & \ctrl{1} &\qw & \measure[style={fill = gray!10}]{_{\langle\texttt{X}\rangle,\texttt{b}_4}} &\\
		& \qw& \control{} & \qw  & \qw& \gate[style={fill = green!10}]{_{\texttt{Z}^{\texttt{b}_1 \oplus\texttt{b}_3}}} & \qw
	\end{quantikz}
    \caption{Example of how to achieve quasi-parallelism for two telegates in logical conflict.}
    \label{fig:extended}
\end{figure}

\noindent{}As before, we do not give here a unique definition of $\shortparallel$. Specifically, for any $i,j \in [k]$, we write $i \shortparallel j$ to mean that operations $i$ and $j$ can run at the same time step, but we did not fix a criterion to establish when $\shortparallel$ holds. Clearly, operations $i,j \in [k]$ which can run in full parallelism, are a special case of quasi-parallelism and $i \shortparallel j$ holds.

We can now split the constraint \eqref{c4}, by discriminating between operations which can run in quasi-parallelism and the ones which cannot. Formally, $\forall i \in [k], \forall e \in \delta^-(p_{t_i})$ we introduce two new constraints
\begin{equation}
\label{c5}
    f_{e,i}(\tau) \leq \underset{j \prec i \land j \nshortparallel i}{\text{min}} \sum_{\bar{\tau}< \tau}{f_{e,j}(\bar{\tau})}
\end{equation}
\begin{equation}
\label{c6}
    f_{e,i}(\tau) \leq \underset{j \prec i \land j \shortparallel i}{\text{min}} \sum_{\bar{\tau}\leq \tau}{f_{e,j}(\bar{\tau})}
\end{equation}

To sum up, we propose \eqref{dqcc} as Integer Linear Programming formulation of the \texttt{DQCC} problem. $\mathcal{C}$ is the set of constraints coming from the standard $\texttt{MCF}$ formulation given in \eqref{mcf}.
In what follows we propose a characterization for relation $\shortparallel$.

\begin{figure}[h]
\centering
   \begin{mini}
    {}{f = \sum_{e \in E}\sum_{i \in [k]}\sum_{\tau\in [d]}{f_{e,i}(\tau)}}{}{}
    \label{dqcc}
    \addConstraint{\mathcal{C}}
    \addConstraint{f_{e,i}(\tau) \leq \underset{j \prec i \land j \nshortparallel i}{\text{min}} \sum_{\bar{\tau}< \tau}{f_{e,j}(\bar{\tau})}}{}{\ \ \ \ \forall i \in [k], \forall e \in \delta^-(p_{t_i}), \forall \tau \in [d]}
    \addConstraint{f_{e,i}(\tau) \leq \underset{j \prec i \land j \shortparallel i}{\text{min}} \sum_{\bar{\tau}\leq \tau}{f_{e,j}(\bar{\tau})}}{}{\ \ \ \ \forall i \in [k], \forall e \in \delta^-(p_{t_i}), \forall \tau \in [d]}
    \end{mini}
\end{figure}

\textcolor{black}{For an exemplary characterization of $\shortparallel$, please refer to \cite{cuomo2021optimized} where we introduced a computationally efficient predicate, which states whenever two remote operations $i,j$ can run in quasi-parallelism. We opted to not report this part and rather focus on ways of getting rid of the extra constraints -- see Sec. \ref{sec:norm-cliff}. To this aim we investigate different groups of algorithms and what normal forms they offer. This allows to give a much wider perspective on what kind of shapes the compiler can get in input.}

\color{black}
\section{{The role of Clifford group in distributed architectures}}
\label{sec:norm-cliff}
In our model, we showed that by postponing the Pauli-corrections, we get the combined advantage of (i) parallelizing remote operations and (ii) delaying the correction, which amortizes the impact of the traveling time that a boolean value takes to reach its destination(s). An ideal result would be to push all the corrections to the end of the circuit. In fact, as already discussed in Sec. \ref{sec:e-comp}, if the corrections reach the end of the circuit, they could be replaced by classical computation. Driven by this goal, we now investigate the properties of quantum circuits to find when such a condition is satisfied, starting from the Clifford group $\mathbb{C}$.

The interest in the Clifford group derives from the fact that it covers a wide spectrum of circuits and, to be universal, it needs only one extra operator. In Sec. \ref{sec:cliff}, we referred to such an extension as the group $\mathbb{C}^{\texttt{+}} \equiv \langle \land(\texttt{X}), \texttt{X}^{\sfrac{1}{2}}, \texttt{Z}^{\sfrac{1}{2}}, \texttt{Z}^{\sfrac{1}{4}}\rangle$.
For this reason, it makes sense to represent an arbitrary circuit in terms of a Clifford circuit plus as little $\texttt{Z}^{\sfrac{1}{4}}$ as possible. This is, indeed, an active branch of research \cite{Sel-13, AmyMasMos-14}.


\subsection{\textcolor{black}{Circuit normal forms and implications on the post-processing}}
\label{sec:zx}
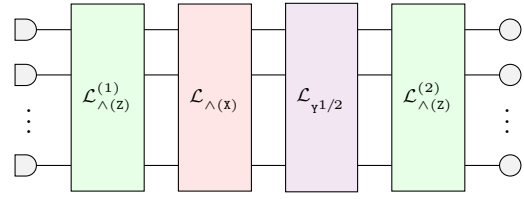
\begin{wrapfigure}{r}{7cm}
    \centering
    \begin{quantikz}[thin lines,row sep={0.6cm,between origins},column sep=0.45cm]
        \meterD[style={fill=gray!10}]{}&\gate[4,nwires={3},style={fill=green!10}]{_{\mathcal{L}^{(1)}_{\land(\texttt{Z})}}}&\gate[4,nwires={3},style={fill=red!10}]{_{\mathcal{L}^{\phantom{(0)}}_{\land(\texttt{X})}}}&\gate[4,nwires={3},style={fill=violet!10}]{_{\mathcal{L}^{\phantom{(0)}}_{\texttt{Y}^{\sfrac{1}{2}}}}}&\gate[4,nwires={3},style={fill=green!10}]{_{\mathcal{L}^{(2)}_{\land(\texttt{Z})}}}&\measure[style={fill=gray!10}]{}\\
        \meterD[style={fill=gray!10}]{}&&&&&\measure[style={fill=gray!10}]{}\\\
        &\lstick[]{$\overset{{\vdots}}{{\phantom{c}}}\ \ \ $}&&&\rstick[]{$\ \ \ \overset{{\vdots}}{\phantom{c}}$}&\\
        \meterD[style={fill=gray!10}]{}&&&&&\measure[style={fill=gray!10}]{}
    \end{quantikz}
    \caption{Normal form coming from the \texttt{ZX}-rules applied in Ref. \cite{duncan2020graph}.}
    \label{fig:norm}
    \hrulefill
\end{wrapfigure}
As said at the beginning of Sec.~\ref{sec:cliff}, important benefits could be achieved by postponing the post-processing to the end of the circuit, where they can be computed classically. An attempt in this direction is available in Ref. \cite{litinski2019game}, where authors delay Pauli operations together with non-Pauli ones. Instead, our approach is to show that the result can always be achieved on the Clifford group, by relying on the \textbf{normal forms} \cite{duncan2020graph,aaronson2004improved,dehaene2003clifford,maslov2018shorter,bataille2020reducing,bravyi2021hadamard}. Such a form results particularly useful for distributed computing and, more in general, for \textit{measurement-based} computation. It has been shown in \cite{duncan2020graph} that any Clifford gate acting on a Pauli state can be represented in the normal form depicted in Fig. \ref{fig:norm}. This normal form is of practical interest as it can be obtained starting from any Clifford circuit, which is in general not in normal form. Such a result comes from the employment of a \textit{\texttt{ZX}-calculus} reasoner, e.g. \cite{kissinger2020Pyzx}.
\texttt{ZX}-calculus \cite{duncan2020graph, van2020zx} is a graphical language, arisen as an optimizer for quantum circuits, that translates a quantum circuit into a \texttt{ZX}\textit{-diagram}. 
The main difference between the diagram and the original circuit is that the former works with \texttt{ZX}\textit{-rules}, which serve as a reasoning tool to smartly generate a new circuit, equivalent to the original one.
\texttt{ZX}-calculus was introduced in the literature in 2007 \cite{coeckegraphical}, with the main objective of minimizing a circuit gate-depth, and its potentiality is still being explored, raising increasing interest for its versatility. In fact, we use it here to perform architecture-compliant optimization. 

Coming back to Fig. \ref{fig:norm}, we use the circuit symbol
{\begin{quantikz}[thin lines,column sep=0.25cm]
    \meterD[style={fill=gray!10}]{}&\qw
\end{quantikz}}\ to express a generic Pauli state preparation. Similarly, the symbol
\begin{quantikz}[thin lines,column sep=0.25cm]
    &\measure[style={fill=gray!10}]{}
\end{quantikz}\ expresses a generic Pauli measurement.
$\mathcal{L}_{\texttt{O}}$ is a set of layers where only the \texttt{O} operator occurs. For example $\mathcal{L}_{\land(\texttt{Z})}$ encodes a circuit composed by $\land(\texttt{Z})$ operators.

For the subject normal form we need to define only a few \textit{pushing rules}. As regard the circuit $\mathcal{L}_{\land(\texttt{X})}$, the following rules always apply:
\begin{itemize}
    \item $\land(\texttt{X})\cdot\texttt{X}^{\texttt{b}} \otimes \mathds{1} \equiv \texttt{X}^{\texttt{b}} \otimes \texttt{X}^{\texttt{b}} \cdot\land(\texttt{X})$
    \item $\land(\texttt{X})\cdot\mathds{1} \otimes \texttt{Z}^{\texttt{b}} \equiv \texttt{Z}^{\texttt{b}} \otimes \texttt{Z}^{\texttt{b}}\cdot\land(\texttt{X})$
    \item $\land(\texttt{X}) \cdot \mathds{1} \otimes \texttt{X}^{\texttt{b}} \equiv \mathds{1} \otimes \texttt{X}^{\texttt{b}} \cdot\land(\texttt{X})$
    \item $\land(\texttt{X})\cdot \texttt{Z}^{\texttt{b}} \otimes \mathds{1} \equiv \texttt{Z}^{\texttt{b}} \otimes \mathds{1}\cdot\land(\texttt{X})$
\end{itemize}
Similarly, for $\mathcal{L}_{\land(\texttt{Z})}$ circuits, we can use the following rules:
\begin{itemize}
    \item $\land(\texttt{Z})\cdot\texttt{X}^{\texttt{b}} \otimes \mathds{1} \equiv \texttt{X}^{\texttt{b}} \otimes \texttt{Z}^{\texttt{b}}\cdot\land(\texttt{Z})$
    \item $\land(\texttt{Z})\cdot\texttt{Z}^{\texttt{b}} \otimes \mathds{1} \equiv \texttt{Z}^{\texttt{b}} \otimes \mathds{1}\cdot\land(\texttt{Z})$
\end{itemize}
Finally, the last single layer circuit $\mathcal{L}_{\texttt{Y}^{\sfrac{1}{2}}}$ can be handled as follows:
\begin{itemize}
    \item $\texttt{Y}^{\sfrac{1}{2}}\cdot\texttt{X}^{\texttt{b}} \cong \texttt{Z}^{\texttt{b}}\cdot\texttt{Y}^{\sfrac{1}{2}}$
    \item $\texttt{Y}^{\sfrac{1}{2}}\cdot\texttt{Z}^{\texttt{b}} \equiv \texttt{X}^{\texttt{b}}\cdot\texttt{Y}^{\sfrac{1}{2}}$
\end{itemize}

\begin{remark}
By means of the above rules, all the post-processing operations can be pushed forward, up to end of the circuit and they can be computed efficiently by a classical computer. Furthermore, since no post-processing occurs during quantum computation, the entanglement path length has negligible impact to the running-time (thanks to the non-locality of the operations).
\end{remark}



The normal form suggests that the problem can be separated into three parts, corresponding to $\mathcal{L}^{(1)}_{\land(\texttt{Z})}$, $\mathcal{L}_{\land(\texttt{X})}$ and $\mathcal{L}^{(2)}_{\land(\texttt{Z})}$. For two of them -- i.e., $\mathcal{L}^{(1)}_{\land(\texttt{Z})}$ and $\mathcal{L}^{(2)}_{\land(\texttt{Z})}$ -- the order relation is trivial (as all $\land(\texttt{Z})$ commute), and therefore we can use any quickest multi-commodity flow solver to get a feasible compilation. On the contrary, the optimal characterization of the order relation for $\mathcal{L}^{\phantom{(0)}}_{\land(\texttt{X})}$ is more complex. Indeed, a set of relations with minimal size may not be the best characterization from a practical point of view, if many of the relations involve remote qubits. The topic of optimal $\land(\texttt{X})$ order relations deserves a dedicated analysis. Hence we first evaluate what we achieved so far, by evaluating our model on $\mathcal{L}_{\land(\texttt{Z})}$, while we investigate later $\mathcal{L}_{\land(\texttt{X})}$ circuits. -- see Sec \ref{sec:cliff-compiler}.

Let us emphasize the importance of $\mathcal{L}_{\land(\texttt{Z})}$ circuits, by pointing out some facts from Ref. \cite{maslov2018shorter}. The authors therein introduce the \textit{Boolean degrees of freedom} as a way to count how many different algorithms can be implemented with a class of gates, and show that a generic $\mathcal{L}_{\land(\texttt{Z})}$ ``has roughly half the number of the degrees of freedom'' compared to a generic $\mathcal{L}_{\land(\texttt{X})}$, and roughly a quarter compared to the Clifford group. 

$\mathcal{L}_{\land(\texttt{Z})}$ circuits represent also an important group for efficient syntheses \cite{maslov2018use,bravyi2022constant,grzesiak2022efficient,bassler2022synthesis}, where these are used to maximize the efficiencies by means of parity check sequences.

We validate our compiler performance by solving $\mathcal{L}_{\land(\texttt{Z})}$ circuits on different architectures in Sec. \ref{sec:cliff-compiler}. So, being able to exploit normal forms to isolate two highly expressive blocks $\mathcal{L}^{(1)}_{\land(\texttt{Z})}$ and $\mathcal{L}^{(2)}_{\land(\texttt{Z})}$ that can be compiled without recurring to order relations, is a very relevant result.

\subsection{Analysis on the upper-bounds and future perspective}
There is a fair doubt arising from the employment of normal forms for compilation: do we know the overhead cause by mapping any Clifford circuit to some normal form? If yes, is it reasonable?

The answer is positive to both questions. By working with normal forms, we are not only able to work with a circuit with known shape, but we can also upper-bound the overhead for the number of introduced operations. Depending on whether or not ancillae are considered, the system get more complex in terms of space or run-time. In Ref. \cite{bravyi2022constant}, authors treat both cases, and prove linear upper-bounds.

Normal forms unlock also better opportunities from an hardware perspective. Specifically, dealing with well defined circuit allows to extend the gate set with more practical operators, as the $\land(\texttt{X}^{\otimes m})$ introduced in Ch. \ref{ch:essentials}. In Sec. \ref{sec:cliff-compiler} we refer to these gates as \textit{fan-in} (and \textit{fan-out}) gates. From a hardware perspective, these are also commonly referred as \textit{global gates}, as they may act over a large $m$ simultaneously \cite{lu2019global,casanova2012quantum,ivanov2015efficient,martinez2016compiling}. Citing \cite{lu2019global}: ``It has been suggested that polynomial or exponential speedups can be obtained with global [gates]''.

Other results in terms of overhead can be found in Ref. \cite{van2021constructing}, where authors proved that any $n$-qubit Clifford circuit can be synthesised to $4n-6$ global gates and any $n$-qubit circuit with $\dot{n}$ non-Clifford gates can be synthesised with no more than $2\dot{n} + \mathcal{O}(\sfrac{n}{\log{n}})$ global gates.

Ultimately, our choice to employ the normal form of Fig. \ref{fig:norm} has several benefits, besides the ones we already discussed:
\begin{itemize}
    \item It is practical, as the open-source \texttt{pyzx} \cite{kissinger2020Pyzx} provides the tools to perform the mapping.
    \item It is efficient, as the \texttt{pyzx} engine works to minimize the number of two-qubit gates.
    \item It has a good shape, as $\mathcal{L}_{\land(\texttt{Z})}$ circuits are generally easier than $\mathcal{L}_{\land(\texttt{X})}$ ones.
\end{itemize}

Let us make a final remark on \texttt{ZX}-calculus. We introduced it in the context of the Clifford group, but it is designed to work more broadly with any circuit \cite{jeandel2018complete, carette2021completeness, backens2014zx, kissinger2019reducing}. Therefore, we aim to expand our analysis in future works, by investigating normal forms for universal circuits. An interesting result in this direction is available in Ref. \cite{heyfron2018efficient}, where authors split a universal circuit into the following three steps:
\begin{enumerate}
    \item the system is prepared in a \textit{non-Clifford state} \cite{van2021constructing}, this involves auxiliary qubits which will do the work of injecting non-Clifford phases;
    \item an $\mathcal{L}_{\land(\texttt{X})}$ circuit;
    \item a measurement-based sequence of Clifford operations -- which can still be treated with \texttt{ZX}-calculus \cite{duncan2012graphical}.
\end{enumerate}

\section{Commuting circuits compiler}
\label{sec:cz-comp}
As distributed quantum architectures are still at an early stage, it is hard to predict with confidence what kind of connectivity and resources they will supply. It is therefore of interest to investigate on what kind of topology a distributed architecture should have. For this reason we now report a compiler for $\mathcal{L}_{\land(\texttt{Z})}$ and use it to evaluate the performance given by different topologies. Thanks to this evaluation, we could choose the topology most performing with our experiments\footnote{The same topology is also employed in \cite{pant2019routing}, where authors deal with unreliable optical links to create entanglement and dynamically choose a multi-path solution in order to maximise the entanglement success-rate.}.

Here we evaluate the \textit{rectangle lattice} topology -- see Figs. \ref{fig:lowgrid}, \ref{fig:biggrid} -- and comparing it with a \textit{hexagon lattice} topology -- see Fig \ref{fig:hexa}. We therefore verify the compiler performance for both the lattices in terms of:
\begin{itemize}
    \item solution quality;
    \item robustness to scale-up.
\end{itemize}
We conclude the comparison with the possible implications of the results.

\subsection{An approximation-based implementation}
\label{sec:approx}
\begin{wrapfigure}{R}{7cm}
\begin{algorithm}[H]
\DontPrintSemicolon
\KwIn{$\mathcal{Q}, [k]$}
\KwOut{$d$}
    $S \leftarrow [k]$\;
    $d \leftarrow 0$\;
    \While{$S \neq \varnothing$}{
        {
            $\bar{S} \leftarrow  \texttt{MCF}(\mathcal{Q}, S)$\;
            $S \leftarrow S \smallsetminus \bar{S}$\;
            $d \leftarrow d + 1$\;
        }
    }
\caption{\small{Iterative compiler}}
\label{algo:approx}
\end{algorithm}
\end{wrapfigure}
We already discussed in Sec. \ref{sec:layer} how to tackle \texttt{DQCC} as a particular case of quickest multi-commodity flow. In this way we managed to reduce the problem on the resolution of one or more static instances of \texttt{MCF}. In Refs.  \cite{kleitman1970matching,kleitman1971algorithm} it has been shown that whenever each commodity is a source (or a target) for any other node, then solving it through \texttt{LP}-relaxation outputs an optimal solution to \texttt{MCF}. This result can be of interest when treating \textit{fully entangling circuits}.

To keep the compiler more general, we opted to investigate algorithms with approximation boundary guaranteed \cite{mcf-edp, martens2009simple, kolman2002improved}. Specifically, we implemented the pseudo-code outlined in Ref. \cite{approx}. This is followed by a proof on the approximation quality for the case of capacity $c = 1$ and $c > 1$. We focus on the case $c=1$, but it can be extended to $c > 1$.

By using our formalism, the approximation algorithm aims to run as many non-local operators -- i.e. satisfying commodities demand -- as possible. A computed solution is a sub-set $S \subseteq [k]$. The optimal solution is $S^* \subseteq [k]$ and $|S| \leq |S^*|$. It follows the (optimal) approximation boundary \cite{martens2009simple, approx}:
\begin{equation}
    \label{eq:approx}
    |S| \geq \frac{|S^*|}{\mathcal{O}(\sqrt{m})},\ m = |E|
\end{equation}

Notice that the solution quality is inversely proportional to the number of entanglement links. It means that we cannot estimate an optimal solution to the \texttt{DQCC}, as for a given time horizon, this affects the quality of the solution space. Furthermore, the time-expansion increases the number of edges and so does the distance $|S^*| - |S|$. Ultimately, even if the allocated space by the time-expansion grows at most linearly with the number of non-local operations -- see Sec. \ref{sec:layer} --, this can seriously affect the performance when such an amount is very big\footnote{Better upper-bounds for the worst-case solution should be investigated.}. On contrary, it is possible to keep the time-expansion \textit{abstract} and compiling iteratively as many operations as possible at each time-step. This method is detailed in Algorithm \ref{algo:approx}.
Notice that each iteration guarantees the boundary of equation \eqref{eq:approx} and, above all, since the instance decreases in size, the distance $|S^*| - |S|$ tends to decrease as well.

\subsection{\textcolor{black}{Set-up}}
To compare the compiler performance on different topologies, we make use of a \textit{generator factor} $g \in \mathbb{N}\smallsetminus\{0\}$. The number of nodes and edges of each lattice will be expressed as a function of $g$. Because the two lattices differ by definition, it is not trivial to settle a fair comparison. 
To do that, we first generate\footnote{With the help of the \texttt{networkx} library \cite{HagSchSwa-08}.} a sample of hexagon lattices $\mathcal{H}$ such that
\begin{equation}
    \label{eq:hexa_size}
    |P| = \sfrac{1}{2}\cdot g^2 + 3g + \mathcal{O}(1),\ |E| = \sfrac{3}{4}\cdot g^2 + \sfrac{7}{2}\cdot g \pm \mathcal{O}(1).
\end{equation}
We compare $\mathcal{H}$ with two rectangle lattices, say $\mathcal{R}_{_{\blacktriangledown}}$ and $\mathcal{R}_{_{\blacktriangle}}$, that have sizes respectively lower and higher than $\mathcal{H}$ for each $g$ -- see Fig. \ref{fig:lattices}. Hence, $\mathcal{R}_{_{\blacktriangledown}}$ is such that
\begin{equation}
    \label{eq:lowgrid_size}
    |P| = \sfrac{1}{4}\cdot g^2 + \sfrac{3}{2}\cdot g + \mathcal{O}(1),\ |E| = \sfrac{1}{2}\cdot g^2 + 2g \pm \mathcal{O}(1).
\end{equation}
while $\mathcal{R}_{_{\blacktriangle}}$ is such that
\begin{equation}
    \label{eq:biggrid_size}
    |P| = 2g^2 + 2g,\ |E| = g^2 + 2g \pm \mathcal{O}(1).
\end{equation}
We show in the next subsection that $\mathcal{R}_{_{\blacktriangle}}$ and $\mathcal{R}_{_{\blacktriangledown}}$ perform better than $\mathcal{H}$ in terms of resulting \texttt{E}-depth. This implies that the rectangle lattice is a better design for distributed quantum computers, assuming that our compiler performs equally well on different topologies.
\begin{figure}[hb]
     \centering
     \begin{subfigure}[b]{0.32\textwidth}
         \centering
         \includegraphics[width=0.4\textwidth]{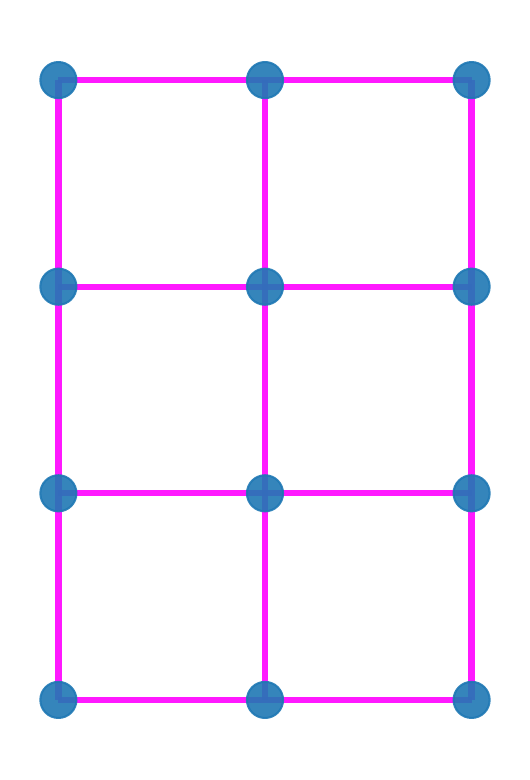}
         \caption{rectangle lattice $\mathcal{R}_{_{\blacktriangledown}}$.}
         \label{fig:lowgrid}
     \end{subfigure}
     \hfill
     \begin{subfigure}[b]{0.32\textwidth}
         \centering
         \includegraphics[width=0.68\textwidth]{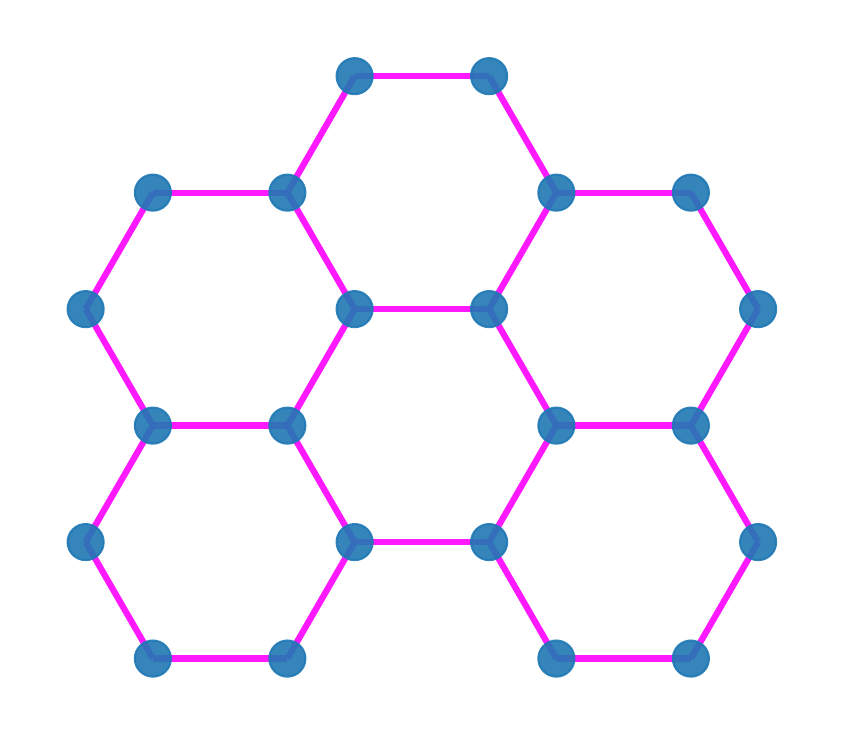}
         \caption{Hexagon lattice $\mathcal{H}$.}
         \label{fig:hexa}
     \end{subfigure}
     \hfill
     \begin{subfigure}[b]{0.32\textwidth}
         \centering
         \includegraphics[width=0.59\textwidth]{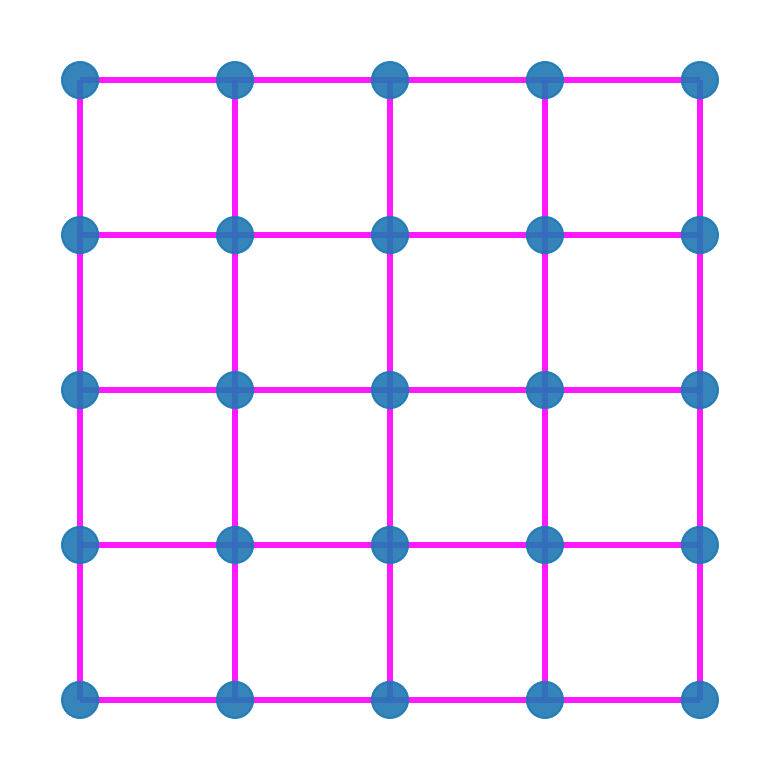}
         \caption{rectangle lattice $\mathcal{R}_{_{\blacktriangle}}$.}
         \label{fig:biggrid}
     \end{subfigure}
     \caption{\textcolor{black}{Example of lattices used for the experimental evaluation; they all come from generator $g = 4$.}}
     \label{fig:lattices}
\end{figure}

Since we use Algorithm \ref{algo:approx}, capacities are assumed to be 1. We already pointed out that such an algorithm can be extended to the case $c>1$.

Notice that different node degrees imply different assumptions on the processor units $P_i$. The hexagon lattice has node degree upper-bounded by $3$ and lower-bounded by $2$, which means that $P_i$ has $2$ to $3$ communication qubits. Similarly, the rectangle lattice has degree upper-bounded by $4$. Hence, the communication qubits per unit are $2$ to $4$. Since our focus here is on distributed compilation, we will assume that $P_i$ has $1$ computation qubit. This is especially reasonable when considering that real implementation of distributed architecture may use most of their local resources as \textit{auxiliary qubits}, meant to keep the computation fault-tolerant.


For the numerical evaluation we use a generating vector $\mathbf{g} = (1,2,\dots,11)$. Hence, when the generator is fixed to $11$, the size of $\mathcal{H}$ reaches $|P| = 96$ and $|E| = 131$, $\mathcal{R}_{_{\blacktriangledown}}$ reaches $|P| = 49$ and $|E| = 84$, while $\mathcal{R}_{_{\blacktriangledown}}$ reaches  $|P| = 144$ and $|E| = 264$.


We generate three samples classified by their size (or number of occurring operators). Each sample is composed by $10$ random circuits in order to average the results. The size of the samples are $256$, $512$ and $1024$.


\subsection{Architecture evaluation}
\begin{figure}[ht]
     \centering
     \begin{subfigure}[b]{0.3\textwidth}
         \centering
         \includegraphics[width=\textwidth]{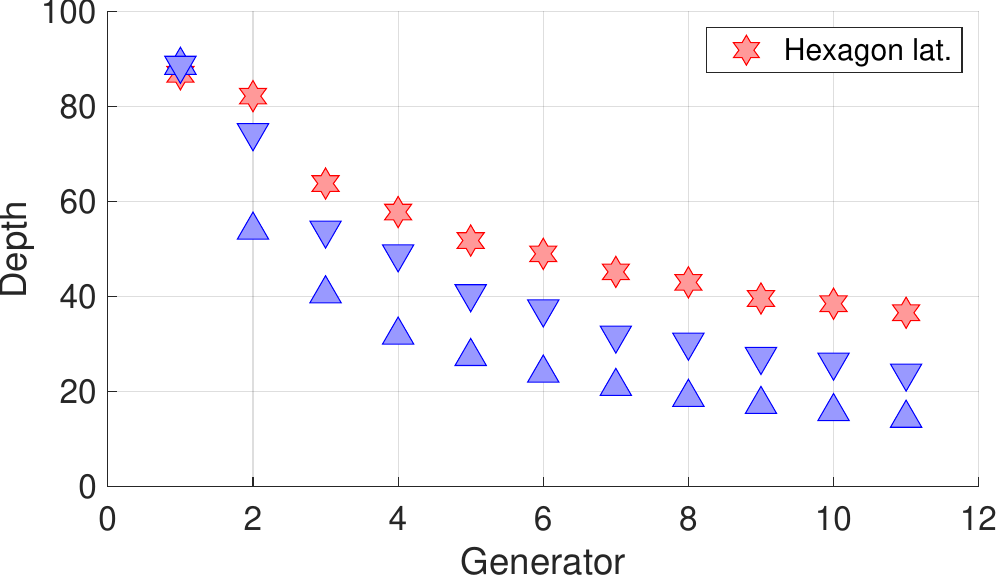}
         \caption{256 $\land(\texttt{Z})$.}
        \label{fig:depth256}
     \end{subfigure}
     \hfill
     \begin{subfigure}[b]{0.3\textwidth}
         \centering
         \includegraphics[width=\textwidth]{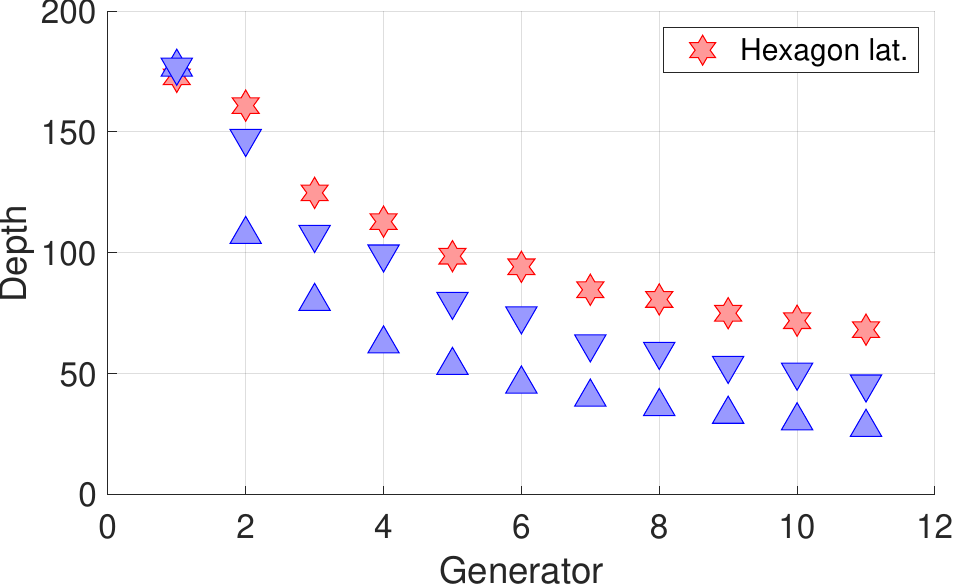}
         \caption{512 $\land(\texttt{Z})$.}
        \label{fig:depth512}
     \end{subfigure}
     \hfill
     \begin{subfigure}[b]{0.3\textwidth}
         \centering
         \includegraphics[width=\textwidth]{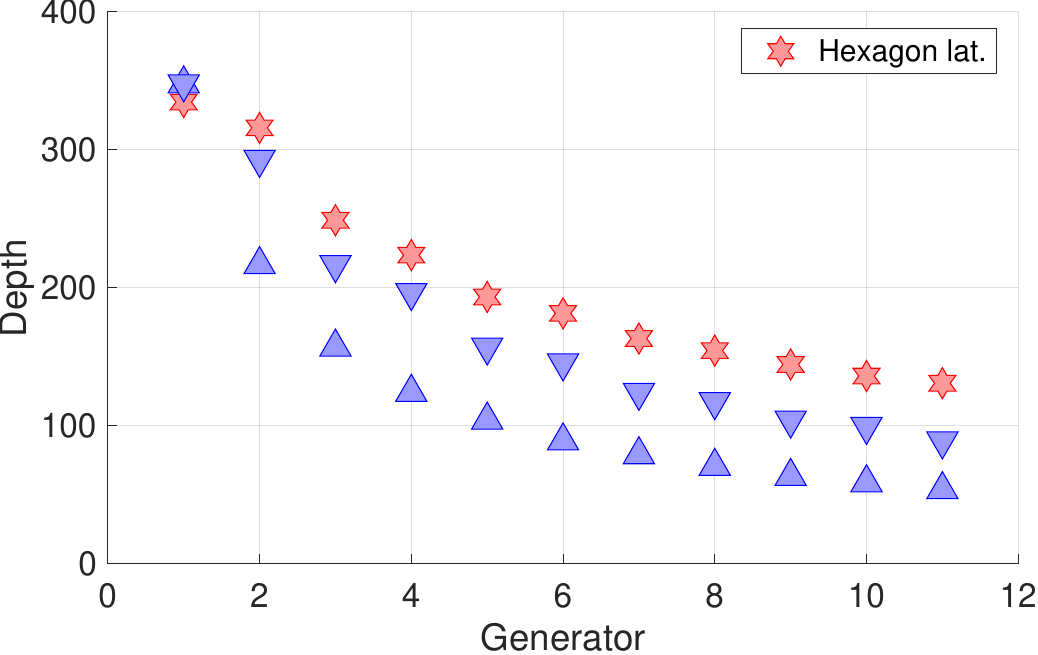}
         \caption{1024 $\land(\texttt{Z})$.}
        \label{fig:depth1024}
     \end{subfigure}\\
     \begin{subfigure}[b]{0.3\textwidth}
         \centering
         \includegraphics[width=1.028\textwidth]{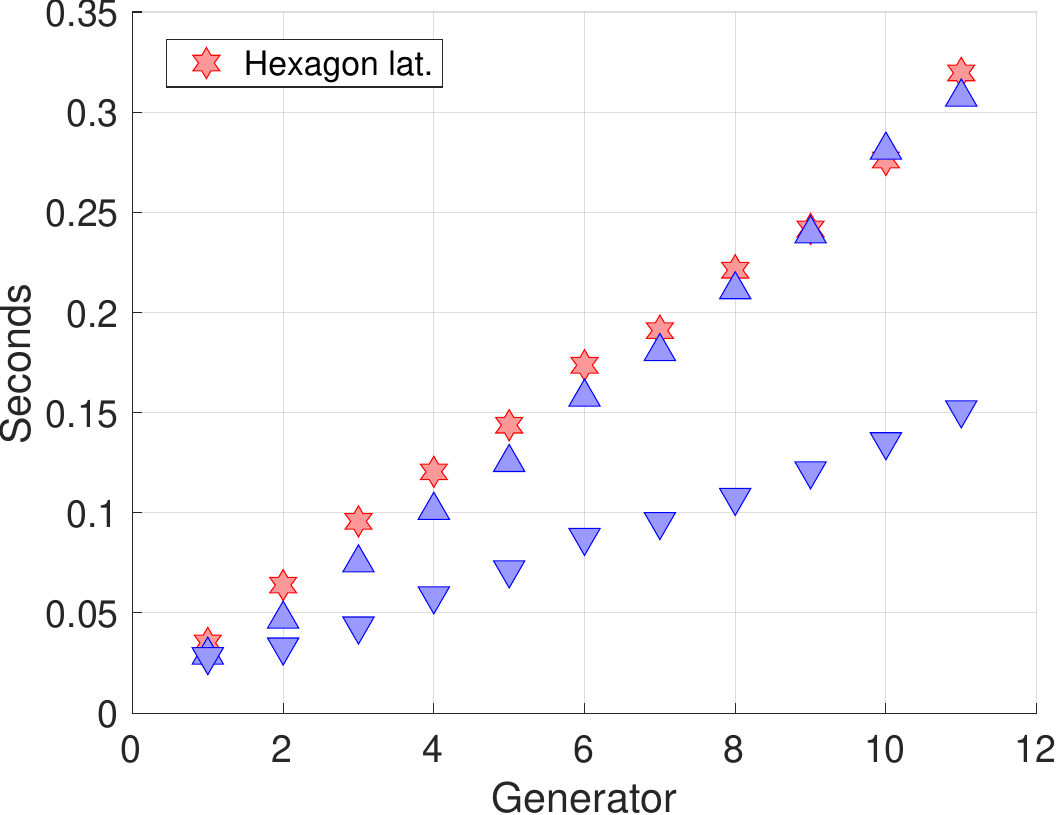}
         \caption{256 $\land(\texttt{Z})$.}
        \label{fig:time256}
     \end{subfigure}
     \hfill
     \begin{subfigure}[b]{0.3\textwidth}
         \centering
         \includegraphics[width=1.01\textwidth]{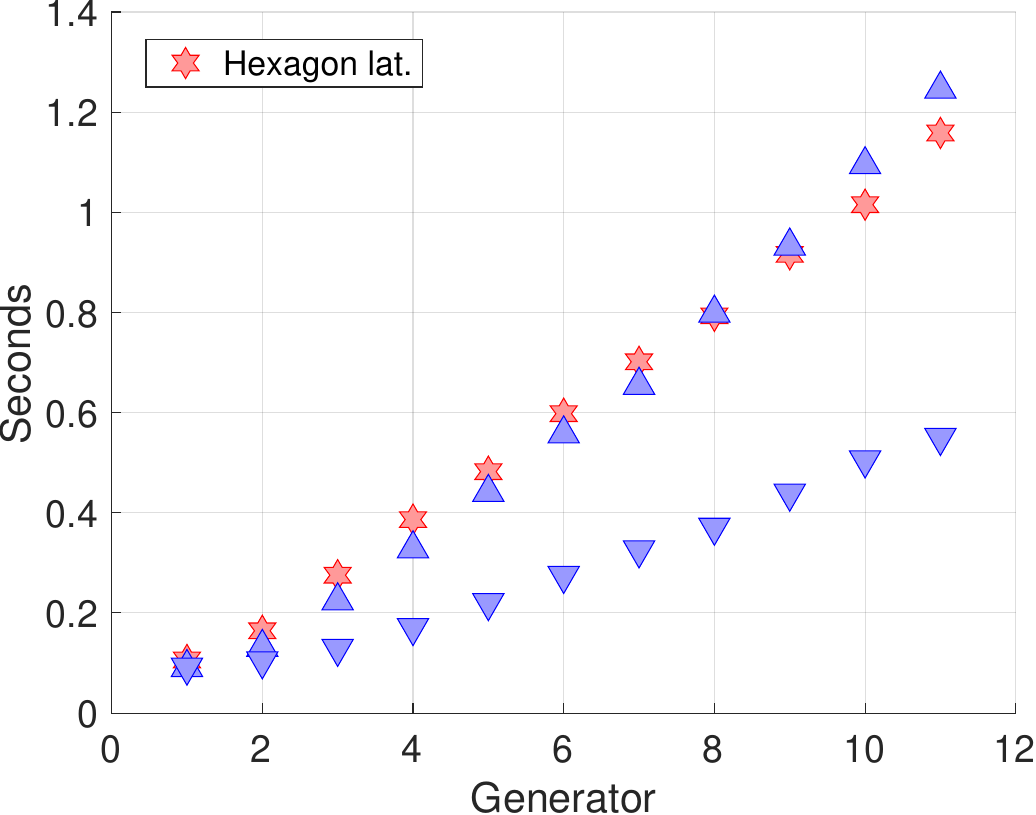}
         \caption{512 $\land(\texttt{Z})$.}
        \label{fig:time512}
     \end{subfigure}
     \hfill
     \begin{subfigure}[b]{0.3\textwidth}
         \centering
         \includegraphics[width=0.98\textwidth]{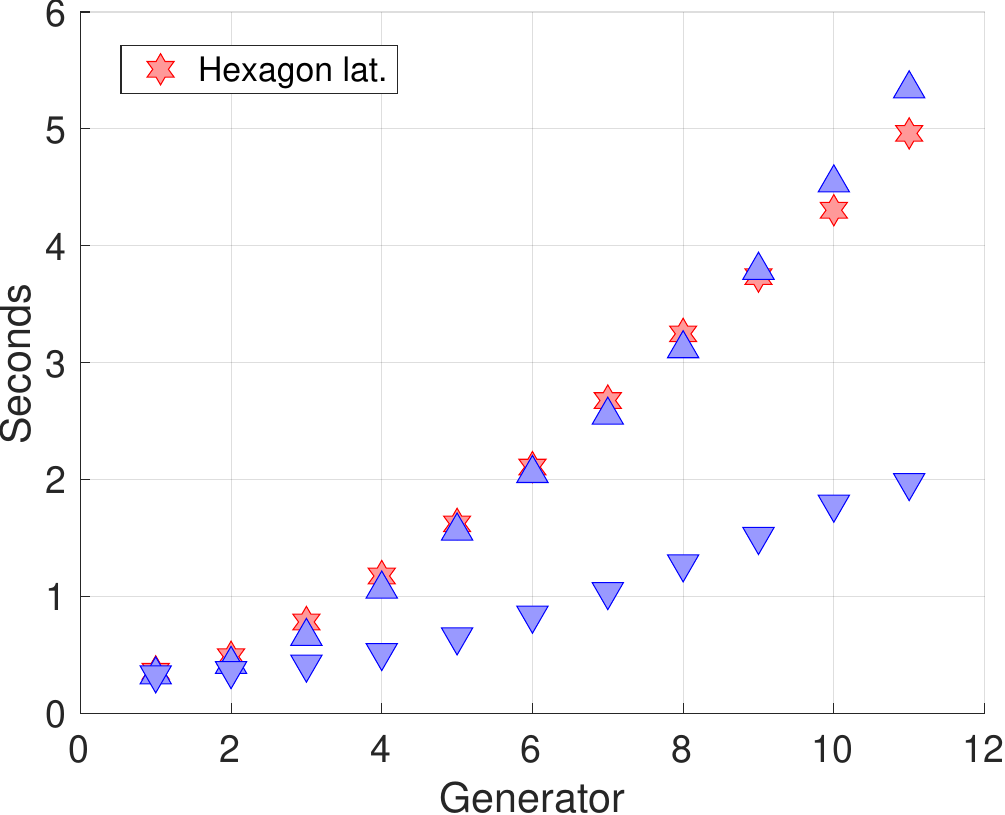}
         \caption{1024 $\land(\texttt{Z})$.}
        \label{fig:time1024}
     \end{subfigure}
        \caption{\textcolor{black}{Quality and time scale comparison.}}
\end{figure}
To evaluate the results we used the \texttt{matlab} environment \cite{MATLAB:2021}. The employed architecture is a MacBook Air - M1, 2020, 8GB RAM.

The first result -- shown in Figs. \ref{fig:depth256}, \ref{fig:depth512} and \ref{fig:depth1024} -- is a comparison on the solution quality, a.k.a. the \texttt{E}-depth. 

As anticipated, the plots show that a rectangle lattice gives better solutions, for any problem size. We can relate this behavior to the \textit{ratio edges-to-nodes}. Formally, let $\mathfrak{r}_{\mathcal{Q}} = \frac{|E|}{|P|}$ be such a ratio for a graph $\mathcal{Q}$. Then it results that rectangle lattices have ratio:
\begin{equation}
    \label{eq:sqr_ratio}
    \lim_{g\to\infty}\mathfrak{r}_{\mathcal{R}} = 2.
\end{equation}
Instead hexagon lattices have a lower ratio:
\begin{equation}
    \label{eq:hex_ratio}
    \lim_{g\to\infty}\mathfrak{r}_{\mathcal{H}} = \sfrac{3}{2}.
\end{equation}
This suggests that the bigger the ratio, the better the solutions. The plots also show that the depth achieved by the different lattices may be ruled by the same polynomial function (up to some constant factor). This is in line with the intuition that a more connected topology allows for shorter depth. Furthermore, we already mentioned in Sec. \ref{sec:approx} that, even if the approximation algorithm depends on the edges size, this is called as a subroutine that performs better and better at each iteration. All this may mean that the compiler has a convergence to an optimal depth. On contrary, if the compiler was affected by the number of edges, the functions should swap at some point, but we never observed such phenomenon.


To conclude our evaluation, we took the average times for each sample. The results are shown in Figs. \ref{fig:time256}, \ref{fig:time512} and \ref{fig:time1024}. Differently from what we got in the solution quality evaluation -- where we noticed a similar behaviour for each architecture -- the time-scale gives new perspectives in the lattices comparison. In fact, $\mathcal{H}$ and $\mathcal{R}_{_{\blacktriangle}}$ seems to need approximately the same time to compile any circuit, with $\mathcal{R}_{_{\blacktriangle}}$ performing slightly worse, which is coherent with the size difference between the twos. 
Instead, $\mathcal{R}_{_{\blacktriangledown}}$ outperforms the others lattices. Furthermore, it seems that it is more resistant to scale-up as the scaling seems to follow a lower degree function.

Thanks to the given experimental evaluation, we could give a first evaluation of our model, by implementing it for an important group of commuting circuits -- i.e. the $\mathcal{L}_{\land(\texttt{Z})}$. At the same time we could find a good topology for our next step, which is compiling Clifford circuits.


\section{Clifford circuits compiler}
\label{sec:cliff-compiler}
In Sec. \ref{sec:cz-comp} we evaluated our model on $\mathcal{L}_{\land(\texttt{Z})}$ circuits. Even if they represent an important group of circuits, both theoretically \cite{maslov2018shorter} and practically \cite{bravyi2022constant,grzesiak2022efficient,bassler2022synthesis,maslov2018use,van2021constructing}  -- see Sec. \ref{sec:cz-comp}-- we now aim to extend our model to cover any Clifford circuit. Hence, since we managed to split the Clifford compilation in 3 independent problems -- see Sec. \ref{sec:norm-cliff} -- the final step is to define a compiler for $\mathcal{L}_{\land(\texttt{X})}$ circuits.

\subsection{Parity check circuits}
Any $\mathcal{L}_{\land(\texttt{X})}$ can be interpreted as a \textit{parity check} circuit, e.g. \cite{duncan2020graph}:
\[\ket{\texttt{b}_1,\texttt{b}_2,\texttt{b}_3,\texttt{b}_4}\ \mapsto\  \ket{\texttt{b}_1\oplus\texttt{b}_2,\texttt{b}_1\oplus\texttt{b}_3,\texttt{b}_4,\texttt{b}_3}.\]
We highlight such a relation to make more intuitive how we are now going to envision a generic $\mathcal{L}_{\land(\texttt{X})}$. It is common in parity check circuits to layering the circuit in operations of two kinds: \textit{fan-in} or \textit{fan-out} \cite{maslov2018use}. These two kinds are shown in circuit representation in Figs. \ref{fig:fan-in} and \ref{fig:fan-out}. Hence, one can see an $\mathcal{L}_{\land(\texttt{X})}$ circuit as a sequence of $\land(\texttt{X}^{\otimes m})$ operations, eventually interleaved by a layer of local gates $\mathcal{L}_{\texttt{Y}^{\sfrac{1}{2}}}$. $m$ denotes the number of target qubits and respects $m < n$, with $n$ being the total number of qubits.

We already showed in Sec. \ref{sec:rcxx} how to efficiently implement -- in terms of \texttt{E}-count -- an $\land(\texttt{X}\otimes\texttt{X})$ distributed over three processors. But we didn't detailed the general case $\land(\texttt{X}^{\otimes m})$; because it depends on the connectivity of the network.

Assuming to work with rectangular lattices $\mathcal{R}_{_{\blacktriangledown}}$ -- since we showed in Sec. \ref{sec:cz-comp} having the best performance -- it is convenient to generalize the definition of entanglement path -- given in Sec. \ref{sec:rcxx} -- to the concept of \textit{entanglement tree}.
\begin{figure}[h]
    \centering
    \begin{subfigure}{0.49\textwidth}
        \centering
        \begin{quantikz}[thin lines, row sep={0.65cm,between origins}, column sep={0.15cm}]
            &\ctrl{1}&\ctrl{2}&\qw& \cdots\phantom{c}  & \ctrl{2} & \qw\\
            &\targ{} & \qw &\qw & \cdots\phantom{c} & \qw & \qw \\
            & \qw & \targ{}&\qw& \cdots\phantom{c} &\qw\vqw{1} &\qw &\\
            & & & & &\overset{{\vdots}}{{\phantom{c}}} & \\
            &\qw &\qw &\qw& \cdots{\phantom{c}} & \targ{}\vqw{-1} & \qw
        \end{quantikz}
        $\ \equiv$
        \begin{quantikz}[thin lines, row sep={0.65cm,between origins}, column sep={0.15cm}]
            &\ctrl{2}&\qw\\
            &\targ{}&\qw\\
            &\targ{}\vqw{1}&\qw\\
            &\overset{{\vdots}}{{\phantom{c}}}&\\
            &\targ{}\vqw{-1}&\qw
        \end{quantikz}
        \caption{Fan-in}
        \label{fig:fan-in}
    \end{subfigure}
    \hfill
    \begin{subfigure}{0.49\textwidth}
        \centering
        \begin{quantikz}[thin lines, row sep={0.65cm,between origins}, column sep={0.15cm}]
            &\targ{}&\targ{}&\qw& \cdots\phantom{c}  & \targ{}\vqw{2} & \qw\\
            &\ctrl{-1} & \qw &\qw & \cdots\phantom{c} & \qw & \qw \\
            & \qw & \ctrl{-2}&\qw& \cdots\phantom{c} &\qw\vqw{1} &\qw &\\
            & & & & &\overset{{\vdots}}{{\phantom{c}}} & \\
            &\qw &\qw &\qw& \cdots{\phantom{c}} & \control{}\vqw{-1} & \qw
        \end{quantikz}
        $\ \equiv$
        \begin{quantikz}[thin lines, row sep={0.65cm,between origins}, column sep={0.15cm}]
            &\gate[style={fill = gray!10}]{_{\texttt{Y}^{\sfrac{1}{2}}}}&\ctrl{2}&\gate[style={fill = gray!10}]{_{\texttt{Y}^{\sfrac{1}{2}}}}&\qw\\
            &\gate[style={fill = gray!10}]{_{\texttt{Y}^{\sfrac{1}{2}}}}&\targ{}&\gate[style={fill = gray!10}]{_{\texttt{Y}^{\sfrac{1}{2}}}}&\qw\\
            &\gate[style={fill = gray!10}]{_{\texttt{Y}^{\sfrac{1}{2}}}}&\targ{}\vqw{1}&\gate[style={fill = gray!10}]{_{\texttt{Y}^{\sfrac{1}{2}}}}&\qw\\
            &&\overset{{\vdots}}{{\phantom{c}}}&\\
            &\gate[style={fill = gray!10}]{_{\texttt{Y}^{\sfrac{1}{2}}}}&\targ{}\vqw{-1}&\gate[style={fill = gray!10}]{_{\texttt{Y}^{\sfrac{1}{2}}}}&\qw
        \end{quantikz}
        \caption{Fan-out}
        \label{fig:fan-out}
    \end{subfigure}
    \caption{Generic fan-in and fan-out operations}
\end{figure}
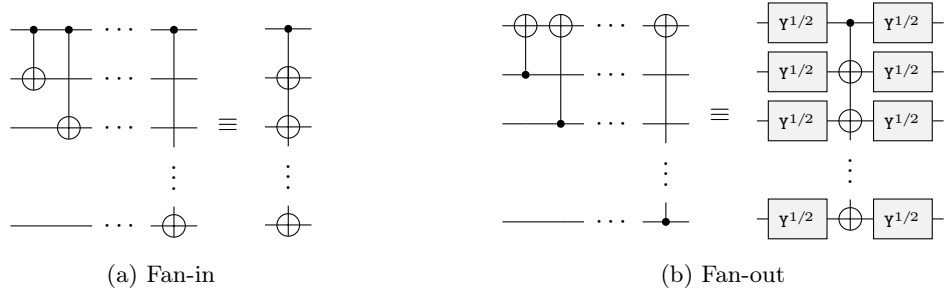

\subsection{Entanglement trees}
While it is always possible to find an entanglement path passing through all the qubits involved by a $\land(\texttt{X}^{\otimes m})$ operators, this may not be the best approach to run the subject operator. In fact, the shortest entanglement path does not necessarily finds the minimal \texttt{E}-count. In alternative, consider that a fan-in (fan-out) operator is rooted in the control (target) qubit and it doesn't actually constrains the targets (controls) to be covered by a single entanglement path. 
Hence, we can opt to compute an \textit{entanglement tree}, which may result in better solutions. In\ fact, the search for a tree covering $\land(\texttt{X}^{\otimes m})$ can be expressed as a generalization of the \textit{minimum spanning tree} problem \cite[Chapter~4.5]{kleinberg2006algorithm}. Such a generalization is known as \textit{minumum Steiner tree} problem \cite{cieslik2013steiner}, which is, in general, not tractable. Nevertheless efficient approximation ratio have been achieved \cite{kou1981fast,berman20091,chlebik2008steiner} and it can be used for any topology.
Since a lattice is a special case of the Euclidean plane, a further interesting result is that, for such a group of topologies, the problem admits a \textit{polynomial-time approximation scheme} \cite{byrka2010improved}. Finally, we report a result important to us within the following remark \cite{brazil1996minimal,brazil1997minimal}:
\begin{remark}
    The minimal Steiner tree on rectangular lattices $\mathcal{R}$ can be found in polynomial time.
\end{remark}

\begin{figure}[h]
    \centering
    \begin{subfigure}{0.48\textwidth}
        \centering
        \includegraphics[width=0.5\textwidth]{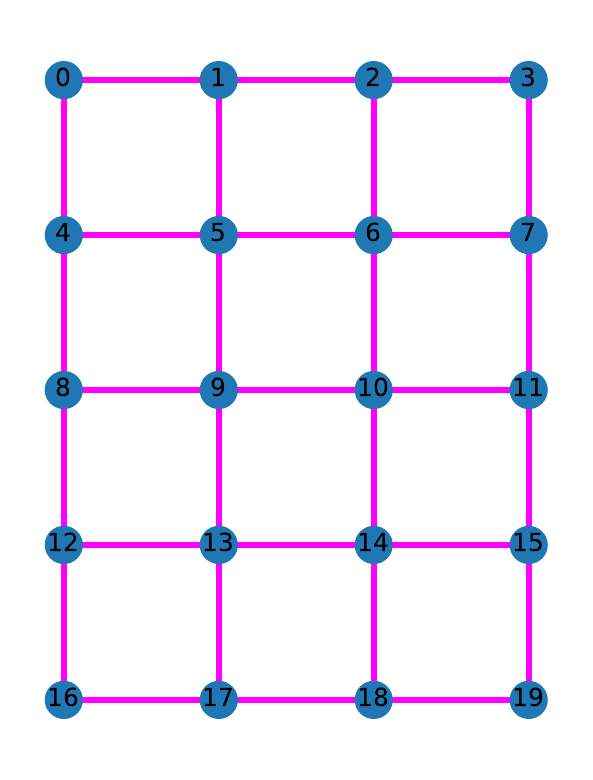}
        \caption{Rectangular lattice $\mathcal{R}_{_{\blacktriangledown}}$ with generator $g=6$.}
    \end{subfigure}
    \hfill
    \begin{subfigure}{0.48\textwidth}
        \centering
        \includegraphics[width=0.5\textwidth]{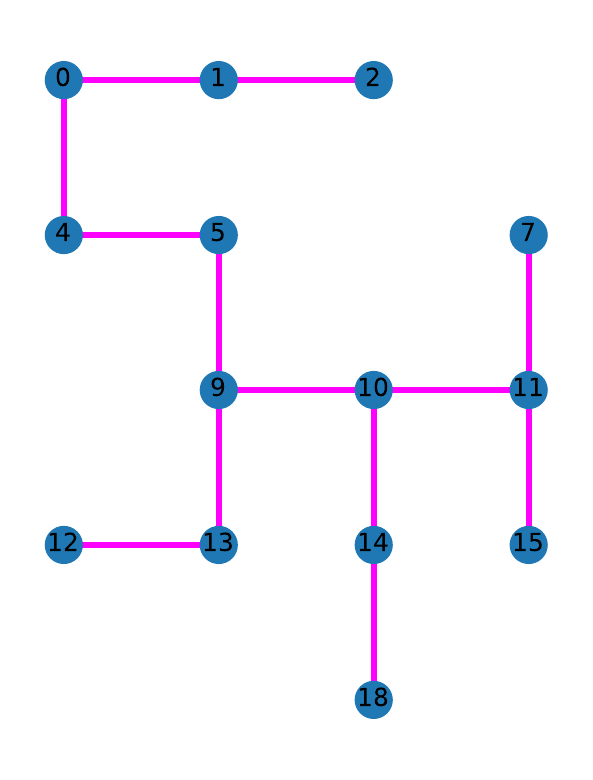}
        \caption{A Steiner tree with \texttt{E}-count of 13.}
    \end{subfigure}
    \caption{A Steiner tree computed from a network in the rectangular lattice topology. The tree is able to perform any non-local $\land(\texttt{X}^{\otimes m})$, spread over thirteen processors. }
    \label{fig:stainer}
\end{figure}
The above remark tells us that we can find, for any fan-in (fan-out) operation, the minimum \texttt{E}-count possible, by constructing an entanglement tree treated as a minimum Steiner tree problem. Fig. \ref{fig:stainer} shows an example computed with the method provided by the \texttt{networkx} library \cite{HagSchSwa-08}.

\subsection{Circuit construction and partitioning}
\begin{wrapfigure}{r}{6cm}
    \centering
    \begin{quantikz}[thin lines, row sep={0.65cm,between origins}, column sep={0.3cm}]
            \lstick{$_{q_1}$}&\ctrl{2}&\qw&\qw\ \cdots\ & \qw&\qw&\qw\\
            \lstick{$_{q_2}$}&\targ{}&\ctrl{1}&\qw\ \cdots\ & \qw&\qw&\qw\\
            \lstick{$_{q_3}$}&\targ{}\vqw{1}&\targ{}\vqw{1}&\qw\ \cdots\ & \qw&\qw&\qw\\
            &\overset{{\vdots}}{{\phantom{c}}}&\overset{{\vdots}}{{\phantom{c}}} &\\
            \lstick{$_{q_{n-2}}$}&\targ{}\vqw{-1}&\targ{}\vqw{-1}&\qw\ \cdots\ &\ctrl{2}&\qw&\qw\\
            \lstick{$_{q_{n-1}}$}&\targ{}\vqw{-1}&\targ{}\vqw{-1}&\qw\ \cdots\ &\targ{}&\ctrl{1}&\qw\\
            \lstick{$_{q_{n}}$}&\targ{}\vqw{-1}&\targ{}\vqw{-1}&\qw\ \cdots\ &\targ{}&\targ{}&\qw
        \end{quantikz}
    \caption{Hardest fan-in circuit composed by $\binom{n}{2}$ telegates, with $n$ being the number of qubits/processors.}
    \label{fig:hard-fan-in}
    \hrulefill
\end{wrapfigure}
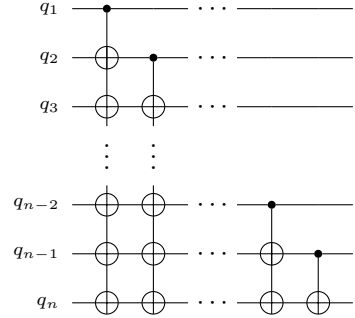
To prove the quality of our proposal, we start by considering \textit{dense fan-in (fan-out) circuits}, where with dense we mean that the layers of a circuit $\mathcal{L}_{\land(\texttt{X})} = \{\ell_1,\ell_2,\dots,\ell_d\}$ are such that each $\ell_{\dot{d}} \in \mathcal{L}_{\land(\texttt{X})}$ is a fan-in (fan-out) involving as many qubits as possible\footnote{Notice that including further operators would turn up to just create repeating pairs of $\land(\texttt{X})$, which cancel out each other.}$^,$\footnote{One may then generalize the definition to let more than one fan-in (or fan-out) within the same layer, as long as they operate on different qubits.}. The rationale behind this choice is that, when writing a parity circuit, it is convenient to aggregate $\land(\texttt{X})$ operators to form fan-in operators, i.e. $\land(\texttt{X}^{\otimes m})$. \noindent{}This does not restrict the kind of circuits the compiler will be able to process, since also the two-qubit operator $\land(\texttt{X})$ is a basic case of fan-in. By proceeding this way, we can push our compiler to solve the most complex parity check circuits, in terms both of number of operations and commutativity. Specifically, the hardest fan-in circuit has $\binom{n}{2}$ operations and $n-1$ non-commuting layers. For this reason, the \texttt{E}-depth corresponds to the number of layers, while we can optimize the \texttt{E}-count by computing the minimal Steiner tree for each layer. Circuit in Fig. \ref{fig:hard-fan-in} shows the criterion.

In terms of our formulation \eqref{dqcc} of \texttt{DQCC}, this means that we are fixing the order relation for any pair $\land(\texttt{X})_{s_i,t_i} \in \ell_{\dot{d}}$ and $\land(\texttt{X})_{s_j,t_j}  \in \ell_{\ddot{d}}$
to satisfy the following statement:
\begin{equation}
    i\prec j \iff \dot{d} < \ddot{d}
\end{equation}

Since we are facing a circuit composed by $\binom{n}{2}$ non-local operations, this is also a lower-bound on the optimal \texttt{E}-count. One last step is required to make the compiler reach this lower-bound. Specifically, for a given set of qubits $Q = \{q_1, q_2, \dots, q_n\}$ and processors $P = \{P_1, P_2, \dots, P_n\}$\footnote{Please refer to Sec. \ref{sec:graph} for the employed formalism.}, then $q_i \in P_i,\ \forall i$.
A remark follows:
\begin{remark}
    The compiler hits the lower-bound of $\binom{n}{2}$ for the hardest fan-in (fan-out) circuit, which means the \texttt{E}-count is optimal.
\end{remark}
From this result we also notice that we can apply the same idea to solve $\mathcal{L}_{\land(\texttt{Z})}$ circuits, achieving even better results. Specifically, first notice that fan-in and fan-out circuits do not limit to $\land(\texttt{X}^{\otimes m})$ operator, but also to $\land(\texttt{Z}^{\otimes m})$. Furthermore, since $\mathcal{L}_{\land(\texttt{Z})}$ are commuting circuit, one can always reduce it to a dense fan-in circuit, by proceeding in a greedy fashion outlined below.
\begin{enumerate}
    \item For each qubits, compute the number of $\land(\texttt{Z})$ in which they are involved;
    \item Enumerate the qubits by decreasing order and partition them such that $q_i\in P_i$;
    \item iterate over the enumeration and associate the maximal fan-in coming from $\mathcal{L}_{\land(\texttt{Z})}$;
    \item solve each layer with the spanning tree based compiler.
\end{enumerate}
In this way, any $\mathcal{L}_{\land(\texttt{Z})}$ circuit is expressed as a dense fan-in circuit, for which we can compute the optimal \texttt{E}-count. In other words:
\begin{remark}
    The minimal \texttt{E}-count for any $\mathcal{L}_{\land(\texttt{Z})}$ circuit can be computed efficiently with at most $n-1$ fan-in layers.
\end{remark}
\noindent{}Such a result relates to the same upper-bound of $n-1$ fan-in (fan-out) given in Ref. \cite{van2021constructing}.

Notice that, with this last remark, we now have all the ingredients to construct an efficient compiler for Clifford circuits\footnote{Which may also be expanded to universal circuits with methods as the one outlined in \cite{van2021constructing}.}. In fact, even if so far we proceeded under the assumption that each processor has only one computational qubit and link capacity one. This helped us to let out the potentiality of distributed architectures. We indeed managed to show very promising results. However, this approach does not limit the compiler to only distributed operations and can be extended to differ between local and non-local fan-in (fan-out) and even mixtures. Also the capacity can be generalized as the minimum spanning tree problem already works to minimize the total capacity.

It is also important to remember that the normal form we showed in Sec. \ref{sec:norm-cliff} is only one of several \cite{duncan2020graph,aaronson2004improved,dehaene2003clifford,maslov2018shorter,bataille2020reducing,bravyi2021hadamard}; each with implications on the structure of $\mathcal{L}_{\land(\texttt{Z})}$ and $\mathcal{L}_{\land(\texttt{X})}$ circuits. Hence, despite the appealing properties of the spanning tree approach, whenever the circuit is naturally composed by a few gates, a Steiner tree over fan-in layers may still be beaten by a quickest multi-commodity flow solver.

As an example, consider that for any $\mathcal{L}_{\land(\texttt{X})}$ circuit, it has already been discovered an algorithm to compute the minimum number of $\land(\texttt{X})$ operators \cite{patel2008optimal}. Even if the minimum number of $\land(\texttt{X})$ is an important result that, in perspective, closes positively the problem, this does not mean that optimal \texttt{E}-depth and \texttt{E}-count reside here. Hence, other techniques may be more practical -- e.g. see \cite{van2021constructing}.

The conclusion of our analysis is that a good compiler should be \textit{tunable}; meaning that it is able to use one or the other approach depending on the circuits in input\footnote{It may be also convenient to further refine the Steiner tree approach, e.g. whenever a single layer has more than one fan-in (fan-out) acting on independent qubits, then the layer can be solved by computing the Steiner three of the induced sub-graphs.}. For this reason we think a good conclusion of our investigation is a comparison between the two approaches. But before that, let us sum up the steps of a \textit{tunable compiler}:
\begin{enumerate}
    \item Transform the circuit in input into some normal form.
    \item Split the problem into $\mathcal{L}_{\land(\texttt{X})}$ and $\mathcal{L}_{\land(\texttt{Z})}$ sub-circuits.
    \item Analyse the structure of the sub-circuits (fan-in density, number of operators, etc.).
    \item For each sub-circuit, eventually reduce to a fan-in structure.
    \item For each sub-circuit, solve with the multi-commodity flow or the spanning tree approach.
\end{enumerate}

\section{Multi-commodity flow vs. Steiner trees}
\label{sec:flow_vs_tree}
Thanks to this new perspective for circuits -- i.e. as dense fan-in (fan-out) circuits -- we can now evaluate the performance of our previous compiler -- see Algorithm \ref{algo:approx} -- with the new one. In fact, while with the quickest multi-commodity flow we focused on minimizing the \texttt{E}-depth, we are now minimizing the \texttt{E}-count. By means of a careful construction of dense fan-in circuits, we can't do much optimization on the \texttt{E}-depth, while we can still work on the \texttt{E}-count.

To this aim, we focus on $\mathcal{L}_{\land(\texttt{Z})}$ circuits, where both compilers match with the constraints. However, we consider again the worst case scenario  --  i.e., fan-in circuits of $\binom{n}{2}$ operators -- where we expect the multi-commodity flow approach suffering the most. We already showed the optimality in the \texttt{E}-count for the spanning tree approach, but let us see now how the multi-commodity flow approach work in such a hard case, compared to Steiner trees.

\begin{figure}[h]
    \centering
    \begin{subfigure}{0.48\textwidth}
        \centering
        \includegraphics[width=0.85\textwidth]{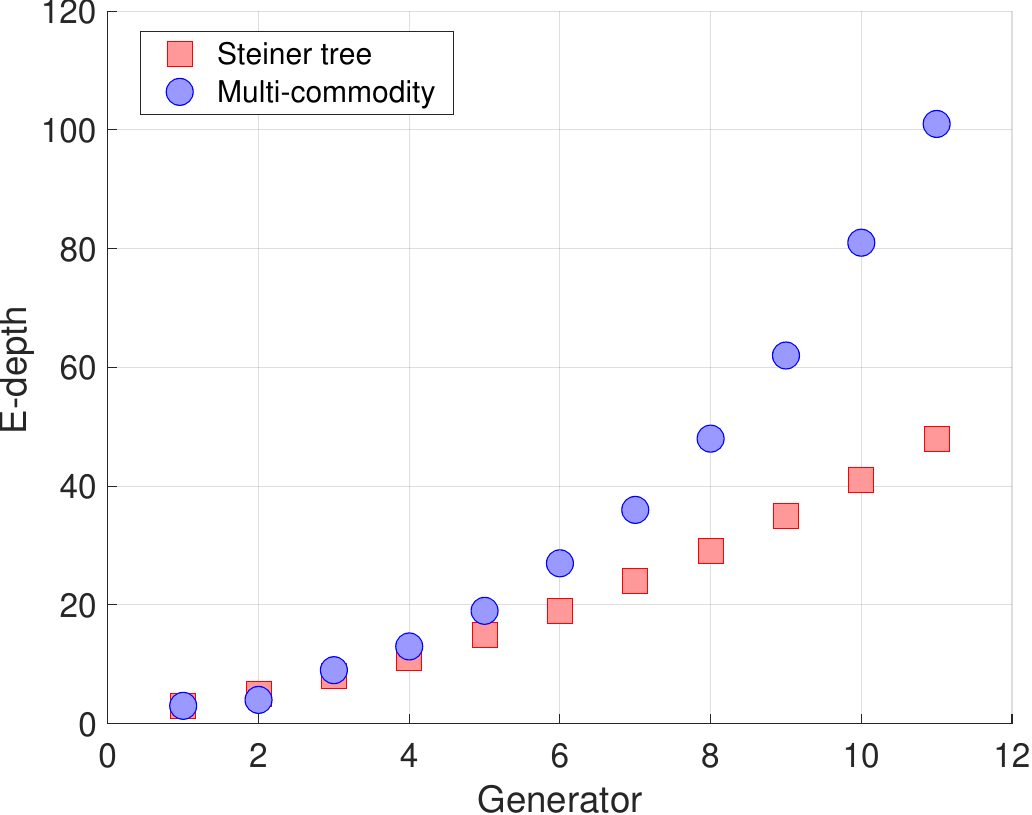}
        \caption{\texttt{E}-depth comparison.}
        \label{fig:e-depths}
    \end{subfigure}
    \hfill
    \begin{subfigure}{0.48\textwidth}
        \centering
        \includegraphics[width=0.85\textwidth]{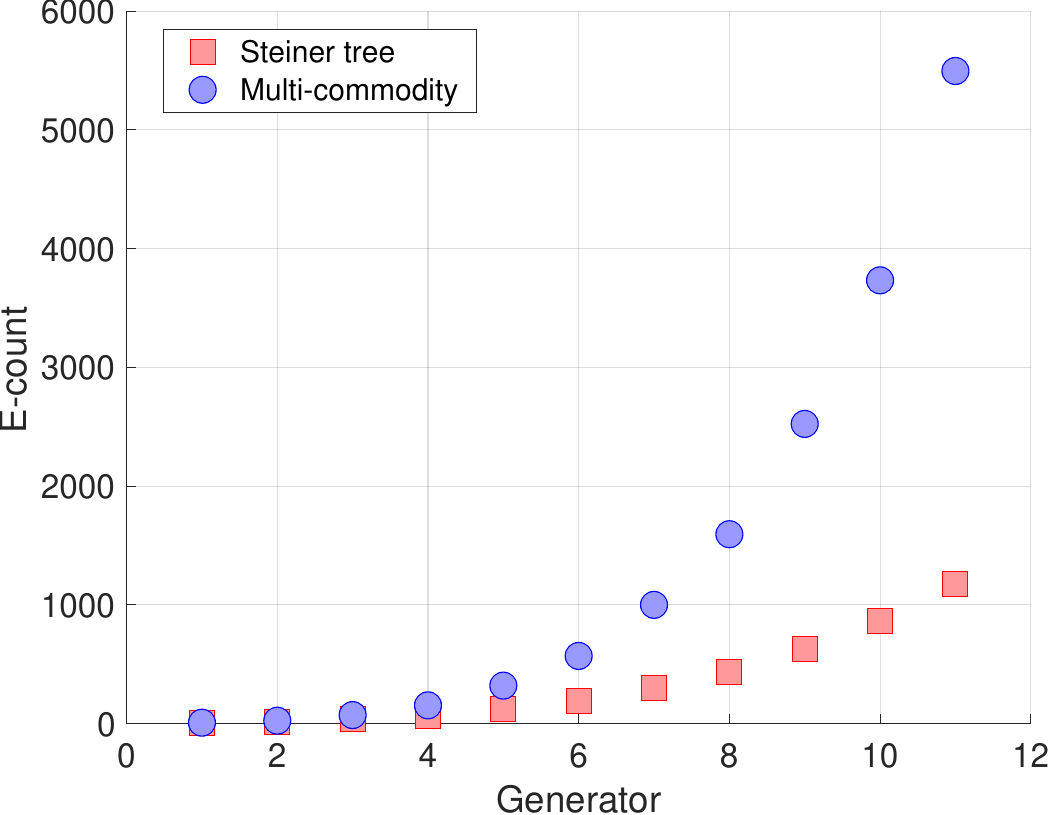}
        \caption{\texttt{E}-count comparison.}
        \label{fig:e-counts}
    \end{subfigure}
    \caption{This is a comparison between a compiler based on multi-commodity flow with one based on Steiner trees.}
    \label{fig:mcf-mst}
\end{figure}
The plot in Fig. \ref{fig:e-depths} shows that the \texttt{E}-depths achieved are still comparable, which preserve a side interest into employing the multi-commodity flow approach. However, from the plot in  Fig. \ref{fig:e-depths} emerges a neat advantage in terms of \texttt{E}-count. This is probably related to the formulation of the multi-commodity flow, which has primal interest into minimizing the \texttt{E}-depth, while the \texttt{E}-count is minimized only in the case this is convenient in terms of \texttt{E}-depth. Such an analysis brings the attention to different formulations, called \textit{path-based}, while the one we gave -- i.e. formulation \eqref{dqcc} -- is \textit{edge-based}. An interesting branch of research in this direction can be found in Ref. \cite{rabbie2022designing}.

\section{Conclusion: the importance of a compiler}
\label{sec:importance}
According to our envision of the full-stack development -- see Sec. \ref{sec:stack} -- a compiler will take care of creating a logical circuit which is compliant with the hardware. To understand how helpful a compiler can be, it is important to keep in mind that the word ``logical" should be interpreted in two different ways.
\begin{enumerate}
    \item The lower layers guarantee a reliable abstraction on which the compiler can operate logically.
    \item The compiler lighten the lower layers from all the tasks which are merely logical.
\end{enumerate}

Specifically to the second interpretation, consider that each layer of the stack has important and complex tasks to care about. Lower layers are more prone to treat real-time problems. In fact, the scheduler cares about synchronization and connectivity maximization. The hardware layer cares of being efficient in terms of, e.g., pushing the technologies at their maximal performance to get high fidelities and high success rates.
The above tasks are for sure hard. For this reason, identify what can be \textbf{delegated} to a logical reasoner -- i.e. the compiler -- may bring critical advantages to the overall architecture.

To get a practical intuition of what we are stating, consider for example our assumption for the \texttt{E} operator. We used this assumption to provide the reader with a model relatively easy to understand. But with a careful knowledge of the underlying architecture from an information theory perspective, something better can be achieved.
As an example, consider the stationary-flying system -- see Ch. \ref{ch:techs} -- generating and distributing entangled states. When it succeeds, it generally means that a heralded Bell state has been produced -- i.e. $\{\ket{\Psi^{\texttt{+}}},\ket{\Psi^{\texttt{-}}},\ket{\Phi^{\texttt{+}}},\ket{\Phi^{\texttt{-}}} \}$ --.

Now; if the reader believe, as we do, that delegating to the compiler to analyse post-processing will bring advantages to the general performance of the quantum computation, then it is clear that this approach does not stop to the model we created throughout this chapter. As a matter of fact, Bell states differ one another by Pauli corrections, which are usually treated at the hardware layer to provide the upper layers with $\ket{\Phi^{\texttt{+}}}$. But, even if local operations are very efficient, when it comes to multiple repeated steps -- as for the case of quantum computation -- every single gate avoided has a positive impact on the final fidelity. For the case we are considering now, the presence of a compiler enable the hardware to delegate the corrections, which now includes them to the big set of logical instructions to optimize. As basic example consider the circuit in Fig. \ref{fig:bell-rcx}, where a $\land(\texttt{X})$ runs with a different Bell state.
\begin{figure}[h]
    \centering
    \begin{quantikz}[thin lines, row sep={0.7cm,between origins}, column sep={0.4cm}]
        \lstick[]{$_{\ket{\varphi}}$}& \ctrl{1}\gategroup[wires=2,steps=3,style={draw=none,rounded corners,inner xsep=8pt,inner ysep=-1pt, fill=lime!10}, background, label style={rounded corners,label position=above,  yshift=0.08cm, fill=lime!10}, background]{$_{P_i}$} &\qw & \gate[style={fill=lime!10}]{_{\texttt{Z}^{\lnot\texttt{b}_2}}} &  \qw \\
        \lstick[2]{$_{\ket{\Psi^{\texttt{-}}}}$}& \targ{} & \measure[style={fill=lime!10}]{_{\langle {\texttt{Z}}\rangle,\ \texttt{b}_1}} \\
        & \ctrl{1}\gategroup[wires=2,steps=3,style={draw=none,rounded corners,inner xsep=8pt,inner ysep=-1pt, fill=cyan!10}, background,label style={rounded corners,label position=below, yshift=-0.37cm, fill=cyan!10}, background]{$_{P_j}$} & \measure[style={fill=cyan!10}]{_{\langle {\texttt{X}}\rangle,\ \texttt{b}_2}}\\
        \lstick[]{$_{\ket{\vartheta}}$}  & \targ{} & \qw & \gate[style={fill=cyan!10}]{_{\texttt{X}^{\lnot\texttt{b}_1}}} & \qw
    \end{quantikz}
    $\ \ \equiv$
    \begin{quantikz}[thin lines, row sep={0.7cm,between origins}, column sep={0.4cm}]
        \lstick[]{$_{\ket{\varphi}}$}&\ctrl{1}\gategroup[wires=1,steps=1,style={draw=none,rounded corners,inner xsep=8pt,inner ysep=5pt, fill=lime!10}, background, label style={rounded corners,label position=above,  yshift=0.08cm, fill=lime!10}, background]{$_{P_i}$} & \qw\\
        \lstick[]{$_{\ket{\vartheta}}$}& \targ{}\gategroup[wires=1,steps=1,style={draw=none,rounded corners,inner xsep=8pt,inner ysep=2.5pt, fill=cyan!10}, background,label style={rounded corners,label position=below, yshift=-0.37cm, fill=cyan!10}, background]{$_{P_j}$} & \qw
    \end{quantikz}
    \caption{Non-local $\land(\texttt{X})$ performed by means of $\ket{\Psi^{\texttt{-}}}$. This example shows how to avoid Pauli corrections.}
    \label{fig:bell-rcx}
\end{figure}

Let us see numerically what happens without delegating the correction to the compiler. Consider a single-qubit gate error of probability $\texttt{p} = 10^{-n}$. Then, for $10\cdot m$ non-local operations, the probability get worse of $m$ orders of magnitude, i.e. $10^{-n+m}$. On contrary, the compiler eliminates all the corrections, preserving the error probability to $10^{-n}$. The same reasoning applies to the running-time.

With our models we covered several interesting group of circuits\footnote{In Sec. \ref{sec:norm-cliff} we also gave a perspective on how our approach would eventually bring us to cover universal groups as well.}, for which the compiler eliminates every single correction from the quantum computation. The Bell state correction is no different. Specifically, instead of performing Pauli-corrections, the compiler keeps track of the logical error propagation caused by avoiding it. At the end of the computation the compiler provides the classical computer the necessary bit-flip corrections to get the right final state, just like we have done so far, by means of the rules introduced in Sec. \ref{sec:zx}.



\printbibliography[title=References,heading=subbibintoc]

@mastersthesis{de2019fault,
    author = {de Jong, Jarn},
    institution = {Delft University of Technology},
    title = {{Fault-tolerant quantum computation: Implementation of a fault-tolerant SWAP operation on the IBM 5-qubit device}},
    year = 2019
}

@book{nielsen2010quantum,
  title={Quantum Computation and Quantum Information},
  author={Nielsen, Michael A and Chuang, Isaac L},
  year={2010},
  publisher={Cambridge University Press}
}

@book{mancini2020quantum,
  title={A Quantum Leap in Information Theory},
  author={Mancini, Stefano and Winter, Andreas},
  year={2020},
  publisher={World Scientific}
}

@book{Wil-13,
  title={{Quantum Information Theory}.},
  author={Wilde, Mark M},
  year={2013},
  publisher={Cambridge University Press}
}

@article{dennis2002topological,
  title={Topological quantum memory},
  author={Dennis, Eric and Kitaev, Alexei and Landahl, Andrew and Preskill, John},
  journal={Journal of Mathematical Physics},
  volume={43},
  number={9},
  pages={4452--4505},
  year={2002},
  publisher={American Institute of Physics}
}

@article{choi1975completely,
  title={Completely positive linear maps on complex matrices},
  author={Choi, Man-Duen},
  journal={Linear algebra and its applications},
  volume={10},
  number={3},
  pages={285--290},
  year={1975},
  publisher={Elsevier}
}

@article{gottesman1998theory,
  title={Theory of fault-tolerant quantum computation},
  author={Gottesman, Daniel},
  journal={Physical Review A},
  volume={57},
  number={1},
  pages={127},
  year={1998},
  publisher={APS}
}

@article{magesan2012characterizing,
  title={Characterizing quantum gates via randomized benchmarking},
  author={Magesan, Easwar and Gambetta, Jay M and Emerson, Joseph},
  journal={Physical Review A},
  volume={85},
  number={4},
  pages={042311},
  year={2012},
  publisher={APS}
}

@article{procopio2015experimental,
  title={Experimental superposition of orders of quantum gates},
  author={Procopio, Lorenzo M and Moqanaki, Amir and Ara{\'u}jo, Mateus and Costa, Fabio and Alonso Calafell, Irati and Dowd, Emma G and Hamel, Deny R and Rozema, Lee A and Brukner, {\v{C}}aslav and Walther, Philip},
  journal={Nature communications},
  volume={6},
  number={1},
  pages={1--6},
  year={2015},
  publisher={Nature Publishing Group}
}

@article{gidney2021stim,
  title={Stim: a fast stabilizer circuit simulator},
  author={Gidney, Craig},
  journal={Quantum},
  volume={5},
  pages={497},
  year={2021},
  publisher={Verein zur F{\"o}rderung des Open Access Publizierens in den Quantenwissenschaften}
}

@article{ChiDarPer-13,
  title={Quantum computations without definite causal structure.},
  author={Chiribella, Giulio and D'Ariano, Giacomo Mauro and Perinotti, Paolo and Valiron, Benoit},
  journal={Physical Review A},
  volume={88},
  number={2},
  pages={022318},
  year={2013},
  publisher={APS}
}

@article{Sti-55,
 author = {W. Forrest Stinespring},
 journal = {Proceedings of the American Mathematical Society.},
 number = {2},
 pages = {211--216},
 publisher = {American Mathematical Society},
 title = {Positive Functions on {C}*-Algebras},
 volume = {6},
 year = {1955}
}

@book{RiePol-11,
  title={Quantum computing: A gentle introduction.},
  author={Rieffel, Eleanor G and Polak, Wolfgang H},
  year={2011},
  publisher={MIT Press}
}

@article{cross2019validating,
  title={Validating quantum computers using randomized model circuits},
  author={Cross, Andrew W and Bishop, Lev S and Sheldon, Sarah and Nation, Paul D and Gambetta, Jay M},
  journal={Physical Review A},
  volume={100},
  number={3},
  pages={032328},
  year={2019},
  publisher={APS}
}

@article{colnaghi2012quantum,
  title={Quantum computation with programmable connections between gates},
  author={Colnaghi, Timoteo and DʼAriano, Giacomo Mauro and Facchini, Stefano and Perinotti, Paolo},
  journal={Physics Letters A},
  volume={376},
  number={45},
  pages={2940--2943},
  year={2012},
  publisher={Elsevier}
}

@article{Sel-13,
 title={{Quantum circuits of \texttt{T}-depth one}},
  author={Selinger, Peter},
  journal={Physical Review A},
  volume={87},
  number={4},
  pages={042302},
  year={2013},
  publisher={APS}
}

@article{AmyMasMos-14,
 title={{Polynomial-time \texttt{T}-depth optimization of Clifford+\texttt{T} circuits via matroid partitioning}},
  author={Amy, Matthew and Maslov, Dmitri and Mosca, Michele},
  journal={IEEE Transactions on Computer-Aided Design of Integrated Circuits and Systems},
  volume={33},
  number={10},
  pages={1476--1489},
  year={2014},
  publisher={IEEE}
}

@article{renner2022computational,
  title={Computational Advantage from a Quantum Superposition of Qubit Gate Orders},
  author={Renner, Martin J and Brukner, {\v{C}}aslav},
  journal={Physical Review Letters},
  volume={128},
  number={23},
  pages={230503},
  year={2022},
  publisher={APS}
}

@article{vilasini2022embedding,
  title={Embedding cyclic causal structures in acyclic spacetimes: no-go results for process matrices},
  author={Vilasini, V and Renner, Renato},
  journal={arXiv preprint arXiv:2203.11245},
  year={2022}
}

@article{paunkovic2020causal,
  title={Causal orders, quantum circuits and spacetime: distinguishing between definite and superposed causal orders},
  author={Paunkovi{\'c}, Nikola and Vojinovi{\'c}, Marko},
  journal={Quantum},
  volume={4},
  pages={275},
  year={2020},
  publisher={Verein zur F{\"o}rderung des Open Access Publizierens in den Quantenwissenschaften}
}

@book{hentenryck2006online,
  title={Online stochastic combinatorial optimization},
  author={Hentenryck, Pascal Van and Bent, Russell},
  year={2006},
  publisher={The MIT Press}
}

@misc{ibm-gateset,
  author = {IBM},
  title = {{IBM Quantum backends}},
  howpublished = "\url{https://github.com/Qiskit/ibmq-device-information/tree/master/backends}",
  note = "[Online; accessed 14-november-2022]"
}

@misc{ibm-resonance,
  author = {IBM},
  title = {{A high-fidelity, two-qubit cross-resonance gate using interference couplers}},
  howpublished = "\url{https://research.ibm.com/publications/a-high-fidelity-two-qubit-cross-resonance-gate-using-interference-couplers}",
  note = "[Online; accessed 29-november-2022]"
}

@misc{rzx-gate,
  author = {Qiskit},
  title = {{The cross-resonance gate for superconducting qubits implements a $\texttt{Z}\otimes\texttt{X}$ interaction}},
  howpublished = "\url{https://qiskit.org/documentation/stubs/qiskit.circuit.library.RZXGate.html}",
  note = "[Online; accessed 29-november-2022]"
}

@article{rigetti2010fully,
  title={Fully microwave-tunable universal gates in superconducting qubits with linear couplings and fixed transition frequencies},
  author={Rigetti, Chad and Devoret, Michel},
  journal={Physical Review B},
  volume={81},
  number={13},
  pages={134507},
  year={2010},
  publisher={APS}
}

@misc{dwave-opt,
  author = {D-Waver},
  title = {{What is quantum annealing?}},
  howpublished = "\url{https://docs.dwavesys.com/docs/latest/c_gs_2.html}",
  note = "[Online; accessed 14-november-2022]"
}

@article{vaughan1942william,
  title={Complete sets of logical functions},
  author={Wernick, William},
  journal={The Journal of Symbolic Logic},
  volume={7},
  number={2},
  pages={99--99},
  year={1942},
  publisher={Cambridge University Press}
}

@inproceedings{toffoli1980reversible,
  title={Reversible computing},
  author={Toffoli, Tommaso},
  booktitle={International colloquium on automata, languages, and programming},
  pages={632--644},
  year={1980},
  organization={Springer}
}

@article{shi2003both,
  title={Both Toffoli and controlled-NOT need little help to do universal quantum computing},
  author={Shi, Yaoyun},
  journal={Quantum Information \& Computation},
  volume={3},
  number={1},
  pages={84--92},
  year={2003},
  publisher={Rinton Press, Incorporated Paramus, NJ}
}

@article{aharonov2003simple,
  title={A simple proof that Toffoli and Hadamard are quantum universal},
  author={Aharonov, Dorit},
  journal={arXiv preprint quant-ph/0301040},
  year={2003}
}

@article{shende2009cnot,
  title={On the CNOT-cost of TOFFOLI gates},
  author={Shende, Vivek V and Markov, Igor L},
  journal={Quantum Information \& Computation},
  volume={9},
  number={5},
  pages={461--486},
  year={2009},
  publisher={Rinton Press, Incorporated Paramus, NJ}
}

@article{feynman1985quantum,
  title={Quantum mechanical computers},
  author={Feynman, Richard P},
  journal={Optics news},
  volume={11},
  number={2},
  pages={11--20},
  year={1985},
  publisher={Optica Publishing Group}
}

@article{barenco1995universal,
  title={A universal two-bit gate for quantum computation},
  author={Barenco, Adriano},
  journal={Proceedings of the Royal Society of London. Series A: Mathematical and Physical Sciences},
  volume={449},
  number={1937},
  pages={679--683},
  year={1995},
  publisher={The Royal Society London}
}

@article{divincenzo1995two,
  title={Two-bit gates are universal for quantum computation},
  author={DiVincenzo, David P},
  journal={Physical Review A},
  volume={51},
  number={2},
  pages={1015},
  year={1995},
  publisher={APS}
}

@article{barenco1995elementary,
  title={Elementary gates for quantum computation},
  author={Barenco, Adriano and Bennett, Charles H and Cleve, Richard and DiVincenzo, David P and Margolus, Norman and Shor, Peter and Sleator, Tycho and Smolin, John A and Weinfurter, Harald},
  journal={Physical review A},
  volume={52},
  number={5},
  pages={3457},
  year={1995},
  publisher={APS}
}

@article{cuomo2020towards,
  title={Towards a distributed quantum computing ecosystem},
  author={Cuomo, Daniele and Caleffi, Marcello and Cacciapuoti, Angela Sara},
  journal={IET Quantum Communication},
  volume={1},
  number={1},
  pages={3--8},
  year={2020},
  publisher={Wiley Online Library}
}

@article{cuomo2021optimized,
author = {Cuomo, Daniele and Caleffi, Marcello and Krsulich, Kevin and Tramonto, Filippo and Agliardi, Gabriele and Prati, Enrico and Cacciapuoti, Angela Sara},
title = {Optimized Compiler for Distributed Quantum Computing},
year = {2023},
publisher = {Association for Computing Machinery},
journal = {ACM Transactions on Quantum Computing}
}

@article{cacciapuoti2019quantum,
  title={Quantum internet: networking challenges in distributed quantum computing},
  author={Cacciapuoti, Angela Sara and Caleffi, Marcello and Tafuri, Francesco and Cataliotti, Francesco Saverio and Gherardini, Stefano and Bianchi, Giuseppe},
  journal={IEEE Network},
  volume={34},
  number={1},
  pages={137--143},
  year={2019},
  publisher={IEEE}
}

@article{cacciapuoti2020entanglement,
  title={When entanglement meets classical communications: Quantum teleportation for the quantum internet},
  author={Cacciapuoti, Angela Sara and Caleffi, Marcello and Van Meter, Rodney and Hanzo, Lajos},
  journal={IEEE Transactions on Communications},
  volume={68},
  number={6},
  pages={3808--3833},
  year={2020},
  publisher={IEEE}
}

@article{ferrari2021compiler,
  title={Compiler Design for Distributed Quantum Computing},
  author={Ferrari, Davide and Cacciapuoti, Angela Sara and Amoretti, Michele and Caleffi, Marcello},
  journal={IEEE Transactions on Quantum Engineering},
  volume={2},
  pages={1--20},
  year={2021},
  publisher={IEEE}
}

@article{nielsen2003quantum,
  title={Quantum computation by measurement and quantum memory},
  author={Nielsen, Michael A},
  journal={Physics Letters A},
  volume={308},
  number={2-3},
  pages={96--100},
  year={2003},
  publisher={Elsevier}
}

@article{EblSalChi-18,
  title={Enhanced communication with the assistance of indefinite causal order},
  author={Ebler, Daniel and Salek, Sina and Chiribella, Giulio},
  journal={Physical review letters},
  volume={120},
  number={12},
  pages={120502},
  year={2018},
  publisher={APS}
}

@article{SalEblChi-18,
  title={Quantum communication in a superposition of causal orders.},
  author={Salek, Sina and Ebler, Daniel and Chiribella, Giulio},
  journal={arXiv preprint arXiv:1809.06655},
  year={2018}
}

@article{CalCac-20,
  title={{Quantum Switch for the Quantum Internet: Noiseless Communications through Noisy Channels}.},
  author={Caleffi, Marcello and Cacciapuoti, Angela Sara},
  journal={IEEE Journal on Selected Areas in Communications},
  volume={38},
  number={3},
  pages={575-588},
  year={2020},
  publisher={IEEE}
}

@article{koudia2022deep,
  title={How deep the theory of quantum communications goes: Superadditivity, superactivation and causal activation},
  author={Koudia, Seid and Cacciapuoti, Angela Sara and Simonov, Kyrylo and Caleffi, Marcello},
  journal={IEEE Communications Surveys \& Tutorials},
  year={2022},
  publisher={IEEE}
}

@article{illiano2022quantum,
  title={Quantum Internet: from Medium Access Control to Entanglement Access Control},
  author={Illiano, Jessica and Viscardi, Michele and Koudia, Seid and Caleffi, Marcello and Cacciapuoti, Angela Sara},
  journal={arXiv preprint arXiv:2205.11923},
  year={2022}
}

@article{Chi-12,
  title={Perfect discrimination of no-signalling channels via quantum superposition of causal structures.},
  author={Chiribella, Giulio},
  journal={Physical Review A},
  volume={86},
  number={4},
  pages={040301},
  year={2012},
  publisher={APS}
}

@article{AraCosBru-14,
  title={Computational advantage from quantum-controlled ordering of gates.},
  author={Ara{\'u}jo, Mateus and Costa, Fabio and Brukner, {\v{C}}aslav},
  journal={Physical Review Letters},
  volume={113},
  number={25},
  pages={250402},
  year={2014},
  publisher={APS}
}

@article{ProMogAra-15,
  title={Experimental superposition of orders of quantum gates.},
  author={Procopio, Lorenzo M and Moqanaki, Amir and Ara{\'u}jo, Mateus and others},
  journal={Nature Communications},
  volume={6},
  number={1},
  pages={1--6},
  year={2015},
  publisher={Nature Publishing Group}
}

@article{RubRozFei-17,
  title={Experimental verification of an indefinite causal order.},
  author={Rubino, Giulia and Rozema, Lee A and Feix, Adrien and Ara{\'u}jo, Mateus and Zeuner, Jonas M and Procopio, Lorenzo M and Brukner, {\v{C}}aslav and Walther, Philip},
  journal={Science Advances},
  volume={3},
  number={3},
  pages={e1602589},
  year={2017},
  publisher={American Association for the Advancement of Science}
}

@article{GuoHuHou-20,
  title={Experimental transmission of quantum information using a superposition of causal orders},
  author={Guo, Yu and Hu, Xiao-Min and Hou, Zhi-Bo and Cao, Huan and Cui, Jin-Ming and Liu, Bi-Heng and Huang, Yun-Feng and Li, Chuan-Feng and Guo, Guang-Can and Chiribella, Giulio},
  journal={Physical Review Letters},
  volume={124},
  number={3},
  pages={030502},
  year={2020},
  publisher={APS}
}

@article{FitJonVed-15,
  title={Quantum correlations which imply causation.},
  author={Fitzsimons, Joseph F and Jones, Jonathan A and Vedral, Vlatko},
  journal={Scientific Reports},
  volume={5},
  number={1},
  pages={1--7},
  year={2015},
  publisher={Nature Publishing Group}
}

@article{GosGiaKew-18,
  title={Indefinite causal order in a quantum switch},
  author={Goswami, Kaumudibikash and Giarmatzi, Christina and Kewming, Michael and Costa, Fabio and Branciard, Cyril and Romero, Jacquiline and White, Andrew G},
  journal={Physical Review Letters},
  volume={121},
  number={9},
  pages={090503},
  year={2018},
  publisher={APS}
}

@inproceedings{cuomo2021experiencing,
  title={Experiencing the communication advantage of the Superposition of Causal Orders},
  author={Cuomo, Daniele and Caleffi, Marcello and Cacciapuoti, Angela Sara},
  booktitle={2021 IEEE 22nd International Workshop on Signal Processing Advances in Wireless Communications (SPAWC)},
  pages={181--185},
  year={2021},
  organization={IEEE}
}

@misc{ibmq,
    title = {{IBM Quantum}},
    year = {2021},
    howpublished = "\url{https://quantum-computing.ibm.com/}"
}

@article{Cas-17, 
	author={D. Castelvecchi}, 
	journal={Nature}, 
	title={{IBM's Quantum Cloud Computer goes Commercial}.}, 
	year={2017}, 
	volume={543}, 
	number={7644}, 
	pages={159}, 
	month={03}
}

@article{Preskill2018,
    title = {{Quantum Computing in the NISQ era and beyond}.},
    author = {Preskill, J.},
    journal = {Quantum},
    volume = {2},
    number = {79},
    year = {2018}
}

@article{anderson2014fault,
  title={Fault-tolerant conversion between the steane and reed-muller quantum codes},
  author={Anderson, Jonas T and Duclos-Cianci, Guillaume and Poulin, David},
  journal={Physical review letters},
  volume={113},
  number={8},
  pages={080501},
  year={2014},
  publisher={APS}
}

@article{dawson2005solovay,
  title={The solovay-kitaev algorithm},
  author={Dawson, Christopher M and Nielsen, Michael A},
  journal={arXiv preprint quant-ph/0505030},
  year={2005}
}

@article{kliuchnikov2013synthesis,
  title={Synthesis of unitaries with Clifford+ T circuits},
  author={Kliuchnikov, Vadym},
  journal={arXiv preprint arXiv:1306.3200},
  year={2013}
}

@article{vatan2004optimal,
  title={Optimal quantum circuits for general two-qubit gates},
  author={Vatan, Farrokh and Williams, Colin},
  journal={Physical Review A},
  volume={69},
  number={3},
  pages={032315},
  year={2004},
  publisher={APS}
}

@article{chiribella2008quantum,
  title={Quantum circuit architecture},
  author={Chiribella, Giulio and D’Ariano, G Mauro and Perinotti, Paolo},
  journal={Physical review letters},
  volume={101},
  number={6},
  pages={060401},
  year={2008},
  publisher={APS}
}

@article{ChiKri-19,
  title={{Quantum Shannon theory with superpositions of trajectories}.},
  author={Chiribella, Giulio and Kristj{\'a}nsson, Hl{\'e}r},
  journal={Proceedings of the Royal Society A},
  volume={475},
  number={2225},
  pages={20180903},
  year={2019},
  publisher={The Royal Society Publishing}
}

@article{berg2022probabilistic,
  title={Probabilistic error cancellation with sparse Pauli-Lindblad models on noisy quantum processors},
  author={Berg, Ewout van den and Minev, Zlatko K and Kandala, Abhinav and Temme, Kristan},
  journal={arXiv preprint arXiv:2201.09866},
  year={2022}
}

@book{breuer2002theory,
  title={The theory of open quantum systems},
  author={Breuer, Heinz-Peter and Petruccione, Francesco and others},
  year={2002},
  publisher={Oxford University Press on Demand}
}

@misc{dep-prob,
  author = {Craig Gidney},
  title = {{Decorrelated Depolarization}},
  howpublished = "\url{https://algassert.com/post/2001}"
}

@article{zavatta2011high,
  title={A high-fidelity noiseless amplifier for quantum light states},
  author={Zavatta, Alessandro and Fiur{\'a}{\v{s}}ek, Jarom{\i}r and Bellini, Marco},
  journal={Nature Photonics},
  volume={5},
  number={1},
  pages={52--56},
  year={2011},
  publisher={Nature Publishing Group}
}

@article{zalka1999grover,
  title={Grover’s quantum searching algorithm is optimal},
  author={Zalka, Christof},
  journal={Physical Review A},
  volume={60},
  number={4},
  pages={2746},
  year={1999},
  publisher={APS}
}

@article{kim2008scheme,
  title={Scheme for proving the bosonic commutation relation using single-photon interference},
  author={Kim, MS and Jeong, H and Zavatta, A and Parigi, V and Bellini, M},
  journal={Physical review letters},
  volume={101},
  number={26},
  pages={260401},
  year={2008},
  publisher={APS}
}

@article{zavatta2009experimental,
  title={Experimental demonstration of the bosonic commutation relation via superpositions of quantum operations on thermal light fields},
  author={Zavatta, A and Parigi, V and Kim, MS and Jeong, H and Bellini, M},
  journal={Physical Review Letters},
  volume={103},
  number={14},
  pages={140406},
  year={2009},
  publisher={APS}
}

@book{baylis2018error,
  title={Error Correcting Codes: A Mathematical Introduction},
  author={Baylis, D J},
  year={2018},
  publisher={Routledge}
}

@book{schlosshauer2007decoherence,
  title={Decoherence: and the quantum-to-classical transition},
  author={Schlosshauer, Maximilian A},
  year={2007},
  publisher={Springer Science \& Business Media}
}

@inproceedings{cacciapuoti2019toward,
  title={Toward the quantum Internet: A directional-dependent noise model for quantum signal processing},
  author={Cacciapuoti, Angela Sara and Caleffi, Marcello},
  booktitle={ICASSP 2019-2019 IEEE International Conference on Acoustics, Speech and Signal Processing (ICASSP)},
  pages={7978--7982},
  year={2019},
  organization={IEEE}
}

@article{cramer2016repeated,
  title={Repeated quantum error correction on a continuously encoded qubit by real-time feedback},
  author={Cramer, Julia and Kalb, Norbert and Rol, M Adriaan and Hensen, Bas and Blok, Machiel S and Markham, Matthew and Twitchen, Daniel J and Hanson, Ronald and Taminiau, Tim H},
  journal={Nature communications},
  volume={7},
  number={1},
  pages={1--7},
  year={2016},
  publisher={Nature Publishing Group}
}

@article{calderbank1996good,
  title={Good quantum error-correcting codes exist},
  author={Calderbank, Robert and Shor, Peter},
  journal={Physical Review A},
  volume={54},
  number={2},
  pages={1098},
  year={1996},
  publisher={APS}
}

@article{steane1996multiple,
  title={Multiple-particle interference and quantum error correction},
  author={Steane, Andrew},
  journal={Proceedings of the Royal Society of London. Series A: Mathematical, Physical and Engineering Sciences},
  volume={452},
  number={1954},
  pages={2551--2577},
  year={1996},
  publisher={The Royal Society London}
}

@article{van2020zx,
  title={\texttt{ZX}-calculus for the working quantum computer scientist},
  author={van de Wetering, John},
  journal={arXiv preprint arXiv:2012.13966},
  year={2020}
}

@article{carette2021completeness,
  title={Completeness of Graphical Languages for Mixed State Quantum Mechanics},
  author={Carette, Titouan and Jeandel, Emmanuel and Perdrix, Simon and Vilmart, Renaud},
  journal={ACM Transactions on Quantum Computing},
  volume={2},
  number={4},
  pages={1--28},
  year={2021},
  publisher={ACM New York, NY}
}

@article{backens2014zx,
  title={The \texttt{ZX}-calculus is complete for stabilizer quantum mechanics},
  author={Backens, Miriam},
  journal={New Journal of Physics},
  volume={16},
  number={9},
  pages={093021},
  year={2014},
  publisher={IOP Publishing}
}

@article{maslov2018shorter,
  title={Shorter stabilizer circuits via Bruhat decomposition and quantum circuit transformations},
  author={Maslov, Dmitri and Roetteler, Martin},
  journal={IEEE Transactions on Information Theory},
  volume={64},
  number={7},
  pages={4729--4738},
  year={2018},
  publisher={IEEE}
}

@article{guha2020thermodynamic,
  title={Thermodynamic advancement in the causally inseparable occurrence of thermal maps},
  author={Guha, Tamal and Alimuddin, Mir and Parashar, Preeti},
  journal={Physical Review A},
  volume={102},
  number={3},
  pages={032215},
  year={2020},
  publisher={APS}
}

@article{simonov2022work,
  title={Work extraction from coherently activated maps via quantum switch},
  author={Simonov, Kyrylo and Francica, Gianluca and Guarnieri, Giacomo and Paternostro, Mauro},
  journal={Physical Review A},
  volume={105},
  number={3},
  pages={032217},
  year={2022},
  publisher={APS}
}

@article{felce2020quantum,
  title={Quantum refrigeration with indefinite causal order},
  author={Felce, David and Vedral, Vlatko},
  journal={Physical review letters},
  volume={125},
  number={7},
  pages={070603},
  year={2020},
  publisher={APS}
}

@inproceedings{hillmich2021exploiting,
  title={Exploiting quantum teleportation in quantum circuit mapping},
  author={Hillmich, Stefan and Zulehner, Alwin and Wille, Robert},
  booktitle={2021 26th Asia and South Pacific Design Automation Conference (ASP-DAC)},
  pages={792--797},
  year={2021},
  organization={IEEE}
}

@inproceedings{zulehner2019compiling,
  title={Compiling $U(4)$ quantum circuits to IBM QX architectures},
  author={Zulehner, Alwin and Wille, Robert},
  booktitle={Proceedings of the 24th Asia and South Pacific Design Automation Conference},
  pages={185--190},
  year={2019}
}

@article{o2019generalized,
  title={Generalized swap networks for near-term quantum computing},
  author={O'Gorman, Bryan and Huggins, William J and Rieffel, Eleanor G and Whaley, K Birgitta},
  journal={arXiv preprint arXiv:1905.05118},
  year={2019}
}

@article{yoganathan2019quantum,
  title={Quantum advantage of unitary Clifford circuits with magic state inputs},
  author={Yoganathan, Mithuna and Jozsa, Richard and Strelchuk, Sergii},
  journal={Proceedings of the Royal Society A},
  volume={475},
  number={2225},
  pages={20180427},
  year={2019},
  publisher={The Royal Society Publishing}
}

@article{li2015magic,
  title={A magic state’s fidelity can be superior to the operations that created it},
  author={Li, Ying},
  journal={New Journal of Physics},
  volume={17},
  number={2},
  pages={023037},
  year={2015},
  publisher={IOP Publishing}
}

@article{zhou2000methodology,
  title={Methodology for quantum logic gate construction},
  author={Zhou, Xinlan and Leung, Debbie W and Chuang, Isaac L},
  journal={Physical Review A},
  volume={62},
  number={5},
  pages={052316},
  year={2000},
  publisher={APS}
}

@article{heyfron2018efficient,
  title={An efficient quantum compiler that reduces \texttt{T} count},
  author={Heyfron, Luke E and Campbell, Earl T},
  journal={Quantum Science and Technology},
  volume={4},
  number={1},
  pages={015004},
  year={2018},
  publisher={IOP Publishing}
}

@article{yimsiriwattana2004generalized,
  title={Generalized GHZ states and distributed quantum computing},
  author={Yimsiriwattana, Anocha and Lomonaco Jr, Samuel J},
  journal={arXiv preprint quant-ph/0402148},
  year={2004}
}

@article{andres2019automated,
  title={Automated distribution of quantum circuits via hypergraph partitioning},
  author={Andres-Martinez, Pablo and Heunen, Chris},
  journal={Physical Review A},
  volume={100},
  number={3},
  pages={032308},
  year={2019},
  publisher={APS}
}

@article{PirBra-16,
 title={Physics: Unite to build a quantum Internet},
  author={Pirandola, Stefano and Braunstein, Samuel L},
  journal={Nature News},
  volume={532},
  number={7598},
  pages={169},
  year={2016}
}

@article{Cas-18,
 title={The quantum internet has arrived (and it hasn't)},
  author={Castelvecchi, Davide},
  journal={Nature},
  volume={554},
  number={7690},
  pages={289--293},
  year={2018},
  publisher={Nature Publishing Group}
}

@article{Kim-08,
 title={The quantum internet},
  author={Kimble, H Jeff},
  journal={Nature},
  volume={453},
  number={7198},
  pages={1023--1030},
  year={2008},
  publisher={Nature Publishing Group}
}

@article{WehElkHan-18,
 title={Quantum internet: A vision for the road ahead},
  author={Wehner, Stephanie and Elkouss, David and Hanson, Ronald},
  journal={Science},
  volume={362},
  number={6412},
  year={2018},
  publisher={American Association for the Advancement of Science}
}

@article{FerCacAmo-21,
 title={Compiler Design for Distributed Quantum Computing},
  author={Ferrari, Davide and Cacciapuoti, Angela Sara and Amoretti, Michele and Caleffi, Marcello},
  journal={IEEE Transactions on Quantum Engineering},
  volume={2},
  pages={1--20},
  year={2021},
  publisher={IEEE}
}

@article{DaeNavZom-20,
 title={Optimized Quantum Circuit Partitioning},
  author={Daei, Omid and Navi, Keivan and Zomorodi-Moghadam, Mariam},
  journal={International Journal of Theoretical Physics},
  volume={59},
  number={12},
  pages={3804--3820},
  year={2020},
  publisher={Springer}
}

@article{ZomHouHou-18,
 title={Optimizing teleportation cost in distributed quantum circuits},
  author={Zomorodi-Moghadam, Mariam and Houshmand, Mahboobeh and Houshmand, Monireh},
  journal={International Journal of Theoretical Physics},
  volume={57},
  number={3},
  pages={848--861},
  year={2018},
  publisher={Springer}
}

@article{DadZomHos-21,
 title={A New Approach for Optimization of Distributed Quantum Circuits},
  author={Dadkhah, Davood and Zomorodi, Mariam and Hosseini, Seyed Ebrahim},
  journal={International Journal of Theoretical Physics},
  volume={60},
  number={9},
  pages={3271--3285},
  year={2021},
  publisher={Springer}
}

@article{DaeNavZom-21,
 title={Improving the Teleportation Cost in Distributed Quantum Circuits Based on Commuting of Gates},
  author={Daei, Omid and Navi, Keivan and Zomorodi, Mariam},
  journal={International Journal of Theoretical Physics},
  volume={60},
  number={9},
  pages={3494--3513},
  year={2021},
  publisher={Springer}
}

@article{SarZom-21,
 title={A general protocol for distributed quantum gates},
  author={Sarvaghad-Moghaddam, Moein and Zomorodi, Mariam},
  journal={Quantum Information Processing},
  volume={20},
  number={8},
  pages={1--14},
  year={2021},
  publisher={Springer}
}

@article{ZomDavGho-21,
 title={Connectivity matrix model of quantum circuits and its application to distributed quantum circuit optimization},
  author={Zomorodi-Moghadam, Mariam and Davarzani, Zohreh and Ghodsollahee, Ismail and others},
  journal={Quantum Information Processing},
  volume={20},
  year={2021}
}

@article{NikMohSed-21,
 title={Automated window-based partitioning of quantum circuits},
  author={Nikahd, Eesa and Mohammadzadeh, Naser and Sedighi, Mehdi and Zamani, Morteza Saheb},
  journal={Physica Scripta},
  volume={96},
  number={3},
  pages={035102},
  year={2021},
  publisher={IOP Publishing}
}

@inproceedings{SunGupRam-21,
 title={Efficient Distribution of Quantum Circuits},
  author={G Sundaram, Ranjani and Gupta, Himanshu and Ramakrishnan, CR},
  booktitle={35th International Symposium on Distributed Computing},
  year={2021},
  organization={Schloss Dagstuhl-Leibniz-Zentrum f{\"u}r Informatik}
}

@article{BeaBriGra-13,
 title={Efficient distributed quantum computing},
  author={Beals, Robert and Brierley, Stephen and Gray, Oliver and Harrow, Aram W and Kutin, Samuel and Linden, Noah and Shepherd, Dan and Stather, Mark},
  journal={Proceedings of the Royal Society A: Mathematical, Physical and Engineering Sciences},
  volume={469},
  number={2153},
  pages={20120686},
  year={2013},
  publisher={The Royal Society Publishing}
}

@inproceedings{LinJai-14,
 title={On the quickest flow problem in dynamic networks -- {A} parametric min-cost flow approach},
  author={Lin, Maokai and Jaillet, Patrick},
  booktitle={Proceedings of the 26th annual ACM-SIAM symposium on discrete algorithms},
  pages={1343--1356},
  year={2014},
  organization={SIAM}
}

@article{ForFul-58,
 title={Constructing maximal dynamic flows from static flows},
  author={Ford Jr, Lester R and Fulkerson, Delbert Ray},
  journal={Operations research},
  volume={6},
  number={3},
  pages={419--433},
  year={1958},
  publisher={INFORMS}
}

@article{FulFor-58,
 title={A suggested computation for Maximal Multi-Commodity Network Flows},
  author={Ford Jr, Lester R and Fulkerson, D.R.},
  journal={Management Science},
  volume={5},
  number={1},
  pages={97},
  year={1958},
  publisher={Institute for Operations Research and the Management Sciences}
}

@article{CacCalTaf-19,
 title={Quantum internet: networking challenges in distributed quantum computing},
  author={Cacciapuoti, Angela Sara and Caleffi, Marcello and Tafuri, Francesco and Cataliotti, Francesco Saverio and Gherardini, Stefano and Bianchi, Giuseppe},
  journal={IEEE Network},
  volume={34},
  number={1},
  pages={137--143},
  year={2019},
  publisher={IEEE}
}

@inproceedings{FleSku-02,
 title={The quickest multicommodity flow problem},
  author={Fleischer, Lisa and Skutella, Martin},
  booktitle={International Conference on Integer Programming and Combinatorial Optimization},
  pages={36--53},
  year={2002},
  organization={Springer}
}

@inproceedings{BotKisMar-18,
 title={On the complexity of quantum circuit compilation},
  author={Botea, Adi and Kishimoto, Akihiro and Marinescu, Radu},
  booktitle={Eleventh annual symposium on combinatorial search},
  year={2018}
}

@inproceedings{ZulWil-19,
 title={{Compiling $SU(4)$ quantum circuits to IBM QX architectures}},
  author={Zulehner, Alwin and Wille, Robert},
  booktitle={Proceedings of the 24th Asia and South Pacific Design Automation Conference},
  pages={185--190},
  year={2019}
}

@article{MasFalMos-08,
 title={Quantum circuit placement},
  author={Maslov, Dmitri and Falconer, Sean M and Mosca, Michele},
  journal={IEEE Transactions on Computer-Aided Design of Integrated Circuits and Systems},
  volume={27},
  number={4},
  pages={752--763},
  year={2008},
  publisher={IEEE}
}

@article{VanDev-16,
 title={The path to scalable distributed quantum computing},
  author={Van Meter, Rodney and Devitt, Simon J},
  journal={Computer},
  volume={49},
  number={9},
  pages={31--42},
  year={2016},
  publisher={IEEE}
}

@article{Got-98,
 title={Theory of fault-tolerant quantum computation},
  author={Gottesman, Daniel},
  journal={Physical Review A},
  volume={57},
  number={1},
  pages={127},
  year={1998},
  publisher={APS}
}

@inproceedings{WilBurZul-19,
 title={Mapping quantum circuits to IBM QX architectures using the minimal number of \texttt{SWAP} and \texttt{H} operations},
  author={Wille, Robert and Burgholzer, Lukas and Zulehner, Alwin},
  booktitle={2019 56th ACM/IEEE Design Automation Conference},
  pages={1--6},
  year={2019},
  organization={IEEE}
}

@inproceedings{LiDinXie-19,
 title={Tackling the qubit mapping problem for {NISQ}-era quantum devices},
  author={Li, Gushu and Ding, Yufei and Xie, Yuan},
  booktitle={Proceedings of the 24th International Conference on Architectural Support for Programming Languages and Operating Systems},
  pages={1001--1014},
  year={2019}
}

@inproceedings{SirSanCol-18,
 title={Qubit allocation},
  author={Siraichi, Marcos Yukio and Santos, Vin{\'\i}cius Fernandes dos and Collange, Sylvain and Pereira, Fernando Magno Quint{\~a}o},
  booktitle={Proceedings of the 2018 International Symposium on Code Generation and Optimization},
  pages={113--125},
  year={2018}
}

@inproceedings{ItoRayIma-19,
 title={Quantum circuit compilers using gate commutation rules},
  author={Itoko, Toshinari and Raymond, Rudy and Imamichi, Takashi and Matsuo, Atsushi and Cross, Andrew W},
  booktitle={Proceedings of the 24th Asia and South Pacific Design Automation Conference},
  pages={191--196},
  year={2019}
}

@techreport{KozWehVan-21,
    number =	{draft-irtf-qirg-principles-03},
    type =		{Internet-Draft},
    institution =	{Internet Engineering Task Force},
    publisher =	{Internet Engineering Task Force},
    note =		{Work in Progress},
    author =	{Wojciech Kozlowski and Stephanie Wehner and Rodney Van Meter and Bruno Rijsman and Angela Sara Cacciapuoti and Marcello Caleffi},
    title =		{{Architectural Principles for a Quantum Internet.}},
    pagetotal =	39,
    year =		2021,
    month =		jun,
    day =		4
}

@article{DurLamHeu-17,
 title={Towards a quantum internet},
  author={D{\"u}r, Wolfgang and Lamprecht, Raphael and Heusler, Stefan},
  journal={European Journal of Physics},
  volume={38},
  number={4},
  pages={043001},
  year={2017},
  publisher={IOP Publishing}
}

@article{mckay2017efficient,
  title={Efficient Z gates for quantum computing},
  author={McKay, David C and Wood, Christopher J and Sheldon, Sarah and Chow, Jerry M and Gambetta, Jay M},
  journal={Physical Review A},
  volume={96},
  number={2},
  pages={022330},
  year={2017},
  publisher={APS}
}

@article{ZhoWanZou-20,
 title={Proposal for heralded generation and detection of entangled microwave--optical-photon pairs},
  author={Zhong, Changchun and Wang, Zhixin and Zou, Changling and Zhang, Mengzhen and Han, Xu and Fu, Wei and Xu, Mingrui and Shankar, S and Devoret, Michel H and Tang, Hong X and others},
  journal={Physical review letters},
  volume={124},
  number={1},
  pages={010511},
  year={2020},
  publisher={APS}
}

@article{KraRanHam-21,
 title={Optically Heralded Entanglement of Superconducting Systems in Quantum Networks},
  author={Krastanov, Stefan and Raniwala, Hamza and Holzgrafe, Jeffrey and Jacobs, Kurt and Lon{\v{c}}ar, Marko and Reagor, Matthew J and Englund, Dirk R},
  journal={Physical Review Letters},
  volume={127},
  number={4},
  pages={040503},
  year={2021},
  publisher={APS}
}

@article{MorParRes-21,
  title={Quantum Compiling by {Deep Reinforcement Learning}},
  author={Moro, Lorenzo and Paris, Matteo GA and Restelli, Marcello and Prati, Enrico},
  journal={Nature Communications Physics},
  volume={4},
  number={178},
  year={2021}
}

@inproceedings{BooDoBec-18,
  title={Comparing and integrating constraint programming and temporal planning for quantum circuit compilation},
  author={Booth, Kyle EC and Do, Minh and Beck, J Christopher and Rieffel, Eleanor and Venturelli, Davide and Frank, Jeremy},
  booktitle={28th international conference on automated planning and scheduling},
  year={2018}
}

@incollection{MarMorRoc-21,
  title={Quantum compiling},
  author={Maronese, Marco and Moro, Lorenzo and Rocutto, Lorenzo and Prati, Enrico},
  booktitle={Quantum Computing Environments},
  pages={39--74},
  year={2022},
  publisher={Springer}
}

@article{ZhaZheZha-20,
  title={Topological quantum compiling with reinforcement learning},
  author={Zhang, Yuan-Hang and Zheng, Pei-Lin and Zhang, Yi and Deng, Dong-Ling},
  journal={Physical Review Letters},
  volume={125},
  number={17},
  pages={170501},
  year={2020},
  publisher={APS}
}

@inproceedings{FerAmo-21,
  title={Noise-adaptive quantum compilation strategies evaluated with application-motivated benchmarks},
  author={Ferrari, Davide and Amoretti, Michele},
  booktitle={Proceedings of the 19th ACM International Conference on Computing Frontiers},
  pages={237--243},
  year={2022}
}

@article{KarTezPet-20,
  title={A quantum-classical cloud platform optimized for variational hybrid algorithms},
  author={Karalekas, Peter J and Tezak, Nikolas A and Peterson, Eric C and Ryan, Colm A and da Silva, Marcus P and Smith, Robert S},
  journal={Quantum Science and Technology},
  volume={5},
  number={2},
  pages={024003},
  year={2020},
  publisher={IOP Publishing}
}

@InProceedings{HagSchSwa-08,
  author =       {Aric A. Hagberg and Daniel A. Schult and Pieter J. Swart},
  title =        {Exploring network structure, dynamics, and function using \texttt{networkx}},
  booktitle =   {Proceedings of the 7th Python in Science Conference},
  pages =     {11 - 15},
  address = {Pasadena, CA USA},
  year =      {2008}
}

@article{duncan2020graph,
  title={Graph-theoretic Simplification of Quantum Circuits with the \texttt{ZX}-calculus},
  author={Duncan, Ross and Kissinger, Aleks and Perdrix, Simon and Van De Wetering, John},
  journal={Quantum},
  volume={4},
  pages={279},
  year={2020},
  publisher={Verein zur F{\"o}rderung des Open Access Publizierens in den Quantenwissenschaften}
}

@inproceedings{kissinger2020Pyzx,
    author = {Kissinger, Aleks and van de Wetering, John},
    title = {{PyZX: Large Scale Automated Diagrammatic Reasoning}},
    year = {2020},
    booktitle = {{\rm Proceedings 16th International Conference on} Quantum Physics and Logic},
    volume = {318},
    pages = {229-241},
    publisher = {Open Publishing Association}
}

@article{kissinger2019reducing,
  title={Reducing \texttt{T}-count with the \texttt{ZX}-calculus},
  author={Kissinger, Aleks and van de Wetering, John},
  journal={arXiv preprint arXiv:1903.10477},
  year={2019}
}

@article{aaronson2004improved,
  title={Improved simulation of stabilizer circuits},
  author={Aaronson, Scott and Gottesman, Daniel},
  journal={Physical Review A},
  volume={70},
  number={5},
  pages={052328},
  year={2004},
  publisher={APS}
}

@inproceedings{martens2009simple,
  title={A simple greedy algorithm for the k-disjoint flow problem},
  author={Martens, Maren},
  booktitle={International Conference on Theory and Applications of Models of Computation},
  pages={291--300},
  year={2009},
  organization={Springer}
}

@inproceedings{kolman2002improved,
  title={Improved bounds for the unsplittable flow problem},
  author={Kolman, Petr and Scheideler, Christian},
  booktitle={SODA},
  volume={2},
  pages={184--193},
  year={2002}
}

@article{ahuja1988network,
  title={Network flows},
  author={Ahuja, Ravindra K and Magnanti, Thomas L and Orlin, James B},
  year={1988},
  publisher={Cambridge, Mass.: Alfred P. Sloan School of Management, Massachusetts~…}
}

@book{MATLAB:2021,
year = {2021},
author = {MATLAB},
title = {R2021b},
publisher = {The MathWorks Inc.},
address = {Natick, Massachusetts}
}

@inproceedings{jeandel2018complete,
  title={A complete axiomatisation of the \texttt{ZX}-calculus for Clifford+\texttt{T} quantum mechanics},
  author={Jeandel, Emmanuel and Perdrix, Simon and Vilmart, Renaud},
  booktitle={Proceedings of the 33rd Annual ACM/IEEE Symposium on Logic in Computer Science},
  pages={559--568},
  year={2018}
}

@article{bataille2020reducing,
  title={Reducing stabilizer circuits without the symplectic group},
  author={Bataille, Marc},
  journal={arXiv preprint arXiv:2012.09224},
  year={2020}
}

@article{patel2008optimal,
  title={Optimal synthesis of linear reversible circuits.},
  author={Patel, Ketan N and Markov, Igor L and Hayes, John P},
  journal={Quantum Inf. Comput.},
  volume={8},
  number={3},
  pages={282--294},
  year={2008},
  publisher={Citeseer}
}

@article{gold2021entanglement,
  title={Entanglement across separate silicon dies in a modular superconducting qubit device},
  author={Gold, Alysson and Paquette, JP and Stockklauser, Anna and Reagor, Matthew J and Alam, M Sohaib and Bestwick, Andrew and Didier, Nicolas and Nersisyan, Ani and Oruc, Feyza and Razavi, Armin and others},
  journal={npj Quantum Information},
  volume={7},
  number={1},
  pages={1--10},
  year={2021},
  publisher={Nature Publishing Group}
}

@inproceedings{liu2022design,
  title={Design of interacting superconducting quantum circuits with quasi-lumped models},
  author={Liu, Yehan and Minev, Zlatko and McConkey, Thomas G and Gambetta, Jay},
  booktitle={American Physical Society (March Meeting)},
  year={2022}
}

@article{eddins2022doubling,
  title={Doubling the size of quantum simulators by entanglement forging},
  author={Eddins, Andrew and Motta, Mario and Gujarati, Tanvi P and Bravyi, Sergey and Mezzacapo, Antonio and Hadfield, Charles and Sheldon, Sarah},
  journal={PRX Quantum},
  volume={3},
  number={1},
  pages={010309},
  year={2022},
  publisher={APS}
}

@misc{gambetta2020ibm,
  title={Expanding the IBM Quantum roadmap to anticipate the future of quantum-centric supercomputing},
  author={Gambetta, Jay},
  year={2022},
  publisher={IBM Research}
}

@article{madden2022best,
  title={\textcolor{black}{Best approximate quantum compiling problems}},
  author={Madden, Liam and Simonetto, Andrea},
  journal={ACM Transactions on Quantum Computing},
  volume={3},
  number={2},
  pages={1--29},
  year={2022},
  publisher={ACM New York, NY}
}

@inproceedings{burgholzer2022limiting,
  title={Limiting the Search Space in Optimal Quantum Circuit Mapping},
  author={Burgholzer, Lukas and Schneider, Sarah and Wille, Robert},
  booktitle={2022 27th Asia and South Pacific Design Automation Conference (ASP-DAC)},
  pages={466--471},
  year={2022},
  organization={IEEE}
}

@article{pant2019routing,
  title={Routing entanglement in the quantum internet},
  author={Pant, Mihir and Krovi, Hari and Towsley, Don and Tassiulas, Leandros and Jiang, Liang and Basu, Prithwish and Englund, Dirk and Guha, Saikat},
  journal={npj Quantum Information},
  volume={5},
  number={1},
  pages={1--9},
  year={2019},
  publisher={Nature Publishing Group}
}

@article{harty2014high,
  title={\textcolor{black}{High-fidelity preparation, gates, memory, and readout of a trapped-ion quantum bit}},
  author={Harty, TP and Allcock, DTC and Ballance, C J̃ and Guidoni, L and Janacek, HA and Linke, NM and Stacey, DN and Lucas, DM},
  journal={Physical review letters},
  volume={113},
  number={22},
  pages={220501},
  year={2014},
  publisher={APS}
}

@article{ballance2014high,
  title={\textcolor{black}{High-fidelity two-qubit quantum logic gates using trapped calcium-43 ions}},
  author={Ballance, CJ and Harty, TP and Linke, NM and Lucas, DM},
  journal={arXiv preprint arXiv:1406.5473},
  year={2014}
}

@article{monroe2014large,
  title={Large-scale modular quantum-computer architecture with atomic memory and photonic interconnects},
  author={Monroe, C and Raussendorf, R and Ruthven, A and Brown, KR and Maunz, P and Duan, L-M and Kim, J},
  journal={Physical Review A},
  volume={89},
  number={2},
  pages={022317},
  year={2014},
  publisher={APS}
}

@article{kielpinski2002architecture,
  title={Architecture for a large-scale ion-trap quantum computer},
  author={Kielpinski, David and Monroe, Chris and Wineland, David J},
  journal={Nature},
  volume={417},
  number={6890},
  pages={709--711},
  year={2002},
  publisher={Nature Publishing Group}
}

@article{litinski2019game,
  title={A game of surface codes: Large-scale quantum computing with lattice surgery},
  author={Litinski, Daniel},
  journal={Quantum},
  volume={3},
  pages={128},
  year={2019},
  publisher={Verein zur F{\"o}rderung des Open Access Publizierens in den Quantenwissenschaften}
}

@article{duncan2012graphical,
  title={\textcolor{black}{A graphical approach to measurement-based quantum computing}},
  author={Duncan, Ross},
  journal={arXiv preprint arXiv:1203.6242},
  year={2012}
}

@article{kleitman1970matching,
  title={\textcolor{black}{A matching theorem for graphs}},
  author={Kleitman, D and Martin-L{\"o}f, A and Rothschild, B and Whinston, A},
  journal={Journal of Combinatorial Theory},
  volume={8},
  number={1},
  pages={104--114},
  year={1970},
  publisher={Elsevier}
}

@article{kleitman1971algorithm,
  title={\textcolor{black}{An algorithm for certain multi-commodity flow problems}},
  author={Kleitman, Daniel J},
  journal={Networks},
  volume={1},
  number={1},
  pages={75--90},
  year={1971},
  publisher={Wiley Online Library}
}

@misc{approx,
  author = {Fei, Li},
  title = {\textcolor{black}{Multicommodity Flows and Disjoint Paths Problem}},
  howpublished = "\url{https://cs.gmu.edu/~lifei/teaching/cs684spring17/lec8.pdf}",
  year = {2017}
}

@inproceedings{mcf-edp,
    author = {Korte, Bernhard and Vygen, Jens},
    title = {\textcolor{black}{{Multicommodity Flows and Edge-Disjoint Paths}}},
    year = {2006},
    booktitle = {Combinatorial Optimization: Theory and Algorithms},
    publisher = {Springer}
}

@article{wang2021single,
  title={\textcolor{black}{Single ion qubit with estimated coherence time exceeding one hour}},
  author={Wang, Pengfei and Luan, Chun-Yang and Qiao, Mu and Um, Mark and Zhang, Junhua and Wang, Ye and Yuan, Xiao and Gu, Mile and Zhang, Jingning and Kim, Kihwan},
  journal={Nature communications},
  volume={12},
  number={1},
  pages={1--8},
  year={2021},
  publisher={Nature Publishing Group}
}

@article{stephenson2020high,
  title={\textcolor{black}{High-rate, high-fidelity entanglement of qubits across an elementary quantum network}},
  author={Stephenson, LJ and Nadlinger, DP and Nichol, BC and An, S and Drmota, P and Ballance, TG and Thirumalai, K and Goodwin, JF and Lucas, DM and Ballance, CJ},
  journal={Physical review letters},
  volume={124},
  number={11},
  pages={110501},
  year={2020},
  publisher={APS}
}

@article{kalb2017entanglement,
  title={\textcolor{black}{Entanglement distillation between solid-state quantum network nodes}},
  author={Kalb, Norbert and Reiserer, Andreas A and Humphreys, Peter C and Bakermans, Jacob JW and Kamerling, Sten J and Nickerson, Naomi H and Benjamin, Simon C and Twitchen, Daniel J and Markham, Matthew and Hanson, Ronald},
  journal={Science},
  volume={356},
  number={6341},
  pages={928--932},
  year={2017},
  publisher={American Association for the Advancement of Science}
}

@article{rabbie2022designing,
  title={\textcolor{black}{Designing quantum networks using preexisting infrastructure}},
  author={Rabbie, Julian and Chakraborty, Kaushik and Avis, Guus and Wehner, Stephanie},
  journal={npj Quantum Information},
  volume={8},
  number={1},
  pages={1--12},
  year={2022},
  publisher={Nature Publishing Group}
}

@article{dehaene2003clifford,
  title={Clifford group, stabilizer states, and linear and quadratic operations over GF (2)},
  author={Dehaene, Jeroen and De Moor, Bart},
  journal={Physical Review A},
  volume={68},
  number={4},
  pages={042318},
  year={2003},
  publisher={APS}
}

@book{kleinberg2006algorithm,
  title={Algorithm design},
  author={Kleinberg, Jon and Tardos, Eva},
  year={2006},
  publisher={Pearson Education India}
}

@book{cieslik2013steiner,
  title={Steiner minimal trees},
  author={Cieslik, Dietmar},
  volume={23},
  year={2013},
  publisher={Springer Science \& Business Media}
}

@inproceedings{berman20091,
  title={1.25-approximation algorithm for steiner tree problem with distances 1 and 2},
  author={Berman, Piotr and Karpinski, Marek and Zelikovsky, Alexander},
  booktitle={Workshop on Algorithms and Data Structures},
  pages={86--97},
  year={2009},
  organization={Springer}
}

@article{kou1981fast,
  title={A fast algorithm for Steiner trees},
  author={Kou, Lawrence and Markowsky, George and Berman, Leonard},
  journal={Acta informatica},
  volume={15},
  number={2},
  pages={141--145},
  year={1981},
  publisher={Springer}
}

@article{chlebik2008steiner,
  title={The Steiner tree problem on graphs: Inapproximability results},
  author={Chleb{\'\i}k, Miroslav and Chleb{\'\i}kov{\'a}, Janka},
  journal={Theoretical Computer Science},
  volume={406},
  number={3},
  pages={207--214},
  year={2008},
  publisher={Elsevier}
}

@inproceedings{byrka2010improved,
  title={An improved LP-based approximation for Steiner tree},
  author={Byrka, Jaroslaw and Grandoni, Fabrizio and Rothvo{\ss}, Thomas and Sanita, Laura},
  booktitle={Proceedings of the forty-second ACM symposium on Theory of computing},
  pages={583--592},
  year={2010}
}

@article{brazil1996minimal,
  title={Minimal Steiner trees for 2k$\times$ 2ksquare lattices},
  author={Brazil, Marcus and Cole, T and Rubinstein, J Hyam and Thomas, Doreen A and Weng, Jia F and Wormald, Nicholas C},
  journal={journal of combinatorial theory, Series A},
  volume={73},
  number={1},
  pages={91--110},
  year={1996},
  publisher={Elsevier}
}

@article{brazil1997minimal,
  title={Minimal Steiner trees for rectangular arrays of lattice points},
  author={Brazil, Marcus and Rubinstein, J Hyam and Thomas, Doreen A and Weng, Jia F and Wormald, Nicholas C},
  journal={journal of combinatorial theory, Series A},
  volume={79},
  number={2},
  pages={181--208},
  year={1997},
  publisher={Academic Press}
}

@article{bravyi2022constant,
  title={Constant-cost implementations of Clifford operations and multiply controlled gates using global interactions},
  author={Bravyi, Sergey and Maslov, Dmitri and Nam, Yunseong},
  journal={arXiv preprint arXiv:2207.08691},
  year={2022}
}

@article{bravyi2021hadamard,
  title={Hadamard-free circuits expose the structure of the Clifford group},
  author={Bravyi, Sergey and Maslov, Dmitri},
  journal={IEEE Transactions on Information Theory},
  volume={67},
  number={7},
  pages={4546--4563},
  year={2021},
  publisher={IEEE}
}

@article{grzesiak2022efficient,
  title={Efficient quantum programming using EASE gates on a trapped-ion quantum computer},
  author={Grzesiak, Nikodem and Maksymov, Andrii and Niroula, Pradeep and Nam, Yunseong},
  journal={Quantum},
  volume={6},
  pages={634},
  year={2022},
  publisher={Verein zur F{\"o}rderung des Open Access Publizierens in den Quantenwissenschaften}
}

@article{bassler2022synthesis,
  title={Synthesis and compilation with time-optimal multi-qubit gates},
  author={Ba{\ss}ler, Pascal and Zipper, Matthias and Cedzich, Christopher and Heinrich, Markus and Huber, Patrick and Johanning, Michael and Kliesch, Martin},
  journal={arXiv preprint arXiv:2206.06387},
  year={2022}
}

@article{maslov2018use,
  title={Use of global interactions in efficient quantum circuit constructions},
  author={Maslov, Dmitri and Nam, Yunseong},
  journal={New Journal of Physics},
  volume={20},
  number={3},
  pages={033018},
  year={2018},
  publisher={IOP Publishing}
}

@article{glaudell2021optimal,
  title={Optimal two-qubit circuits for universal fault-tolerant quantum computation},
  author={Glaudell, Andrew N and Ross, Neil J and Taylor, Jacob M},
  journal={npj Quantum Information},
  volume={7},
  number={1},
  pages={1--11},
  year={2021},
  publisher={Nature Publishing Group}
}

@article{van2021constructing,
  title={Constructing quantum circuits with global gates},
  author={van de Wetering, John},
  journal={New Journal of Physics},
  volume={23},
  number={4},
  pages={043015},
  year={2021},
  publisher={IOP Publishing}
}

@article{lauk2020perspectives,
  title={Perspectives on quantum transduction},
  author={Lauk, Nikolai and Sinclair, Neil and Barzanjeh, Shabir and Covey, Jacob P and Saffman, Mark and Spiropulu, Maria and Simon, Christoph},
  journal={Quantum Science and Technology},
  volume={5},
  number={2},
  pages={020501},
  year={2020},
  publisher={IOP Publishing}
}

@article{fang2014one,
  title={One-dimensional waveguide coupled to multiple qubits: photon-photon correlations},
  author={Fang, Yao-Lung L and Zheng, Huaixiu and Baranger, Harold U},
  journal={EPJ Quantum Technology},
  volume={1},
  number={1},
  pages={1--13},
  year={2014},
  publisher={SpringerOpen}
}

@article{bruzewicz2019trapped,
  title={Trapped-ion quantum computing: Progress and challenges},
  author={Bruzewicz, Colin D and Chiaverini, John and McConnell, Robert and Sage, Jeremy M},
  journal={Applied Physics Reviews},
  volume={6},
  number={2},
  pages={021314},
  year={2019},
  publisher={AIP Publishing LLC}
}

@article{moody20222022,
  title={2022 Roadmap on integrated quantum photonics},
  author={Moody, Galan and Sorger, Volker J and Blumenthal, Daniel J and Juodawlkis, Paul W and Loh, William and Sorace-Agaskar, Cheryl and Jones, Alex E and Balram, Krishna C and Matthews, Jonathan CF and Laing, Anthony and others},
  journal={Journal of Physics: Photonics},
  volume={4},
  number={1},
  pages={012501},
  year={2022},
  publisher={IOP Publishing}
}

@article{wehner2018quantum,
  title={Quantum internet: A vision for the road ahead},
  author={Wehner, Stephanie and Elkouss, David and Hanson, Ronald},
  journal={Science},
  volume={362},
  number={6412},
  pages={eaam9288},
  year={2018},
  publisher={American Association for the Advancement of Science}
}

@article{gao2021optimisation,
  title={Optimization of Scalable Ion-Cavity Interfaces for Quantum Photonic Networks},
  author={Gao, Shaobo and Blackmore, Jacob A and Hughes, William J and Doherty, Thomas H and Goodwin, Joseph F},
  journal={Physical Review Applied},
  volume={19},
  number={1},
  pages={014033},
  year={2023},
  publisher={APS}
}

@article{salmon2022gauge,
  title={Gauge-independent emission spectra and quantum correlations in the ultrastrong coupling regime of open system cavity-QED},
  author={Salmon, Will and Gustin, Chris and Settineri, Alessio and Di Stefano, Omar and Zueco, David and Savasta, Salvatore and Nori, Franco and Hughes, Stephen},
  journal={Nanophotonics},
  volume={11},
  number={8},
  pages={1573--1590},
  year={2022},
  publisher={De Gruyter}
}

@article{lutkenhaus1999bell,
  title={Bell measurements for teleportation},
  author={L{\"u}tkenhaus, Norbert and Calsamiglia, John and Suominen, K-A},
  journal={Physical Review A},
  volume={59},
  number={5},
  pages={3295},
  year={1999},
  publisher={APS}
}

@article{nadlinger2022experimental,
  title={Experimental quantum key distribution certified by Bell's theorem},
  author={Nadlinger, DP and Drmota, P and Nichol, BC and Araneda, G and Main, D and Srinivas, R and Lucas, DM and Ballance, CJ and Ivanov, K and Tan, EY-Z and others},
  journal={Nature},
  volume={607},
  number={7920},
  pages={682--686},
  year={2022},
  publisher={Nature Publishing Group}
}

@article{oi2006scalable,
  title={Scalable error correction in distributed ion trap computers},
  author={Oi, Daniel KL and Devitt, Simon J and Hollenberg, Lloyd CL},
  journal={Physical Review A},
  volume={74},
  number={5},
  pages={052313},
  year={2006},
  publisher={APS}
}

@article{valivarthi2014efficient,
  title={Efficient Bell state analyzer for time-bin qubits with fast-recovery WSi superconducting single photon detectors},
  author={Valivarthi, Raju and Lucio-Martinez, Itzel and Rubenok, Allison and Chan, Philip and Marsili, Francesco and Verma, Varun B and Shaw, Matthew D and Stern, JA and Slater, Joshua A and Oblak, Daniel and others},
  journal={Optics express},
  volume={22},
  number={20},
  pages={24497--24506},
  year={2014},
  publisher={Optica Publishing Group}
}

@article{mattle1996dense,
  title={Dense coding in experimental quantum communication},
  author={Mattle, Klaus and Weinfurter, Harald and Kwiat, Paul G and Zeilinger, Anton},
  journal={Physical Review Letters},
  volume={76},
  number={25},
  pages={4656},
  year={1996},
  publisher={APS}
}

@phdthesis{stephenson2019entanglement,
  title={Entanglement between nodes of a quantum network},
  author={Stephenson, Laurent},
  year={2019},
  school={University of Oxford}
}

@article{rozpkedek2018optimizing,
  title={Optimizing practical entanglement distillation},
  author={Rozp{\k{e}}dek, Filip and Schiet, Thomas and Elkouss, David and Doherty, Andrew C and Wehner, Stephanie and others},
  journal={Physical Review A},
  volume={97},
  number={6},
  pages={062333},
  year={2018},
  publisher={APS}
}

@article{hu2021long,
  title={Long-distance entanglement purification for quantum communication},
  author={Hu, Xiao-Min and Huang, Cen-Xiao and Sheng, Yu-Bo and Zhou, Lan and Liu, Bi-Heng and Guo, Yu and Zhang, Chao and Xing, Wen-Bo and Huang, Yun-Feng and Li, Chuan-Feng and others},
  journal={Physical Review Letters},
  volume={126},
  number={1},
  pages={010503},
  year={2021},
  publisher={APS}
}

@article{paul1990electromagnetic,
  title={Electromagnetic traps for charged and neutral particles},
  author={Paul, Wolfgang},
  journal={Reviews of modern physics},
  volume={62},
  number={3},
  pages={531},
  year={1990},
  publisher={APS}
}

@article{gibney2019quantum,
  title={The quantum gold rush},
  author={Gibney, Elizabeth},
  journal={Nature},
  volume={574},
  number={7776},
  pages={22--24},
  year={2019},
  publisher={Macmillan Publishers Ltd., London, England}
}

@inproceedings{cleve2000fast,
  title={Fast parallel circuits for the quantum Fourier transform},
  author={Cleve, Richard and Watrous, John},
  booktitle={Proceedings 41st Annual Symposium on Foundations of Computer Science},
  pages={526--536},
  year={2000},
  organization={IEEE}
}

@article{avis2022analysis,
  title={Analysis of Multipartite Entanglement Distribution using a Central Quantum-Network Node},
  author={Avis, Guus and Rozp{\k{e}}dek, Filip and Wehner, Stephanie},
  journal={arXiv preprint arXiv:2203.05517},
  year={2022}
}

@article{wallnofer2019multipartite,
  title={Multipartite state generation in quantum networks with optimal scaling},
  author={Walln{\"o}fer, Julius and Pirker, Alexander and Zwerger, Michael and D{\"u}r, Wolfgang},
  journal={Scientific reports},
  volume={9},
  number={1},
  pages={1--18},
  year={2019},
  publisher={Nature Publishing Group}
}

@inproceedings{van2022quantum,
  title={A quantum internet architecture},
  author={Van Meter, Rodney and Satoh, Ryosuke and Benchasattabuse, Naphan and Teramoto, Kentaro and Matsuo, Takaaki and Hajdu{\v{s}}ek, Michal and Satoh, Takahiko and Nagayama, Shota and Suzuki, Shigeya},
  booktitle={2022 IEEE International Conference on Quantum Computing and Engineering (QCE)},
  pages={341--352},
  year={2022},
  organization={IEEE}
}

@article{lu2019global,
  title={Global entangling gates on arbitrary ion qubits},
  author={Lu, Yao and Zhang, Shuaining and Zhang, Kuan and Chen, Wentao and Shen, Yangchao and Zhang, Jialiang and Zhang, Jing-Ning and Kim, Kihwan},
  journal={Nature},
  volume={572},
  number={7769},
  pages={363--367},
  year={2019},
  publisher={Nature Publishing Group}
}

@article{casanova2012quantum,
  title={Quantum simulation of interacting fermion lattice models in trapped ions},
  author={Casanova, Jorge and Mezzacapo, Antonio and Lamata, Lucas and Solano, Enrique},
  journal={Physical review letters},
  volume={108},
  number={19},
  pages={190502},
  year={2012},
  publisher={APS}
}

@article{ivanov2015efficient,
  title={Efficient construction of three-and four-qubit quantum gates by global entangling gates},
  author={Ivanov, Svetoslav S and Ivanov, Peter A and Vitanov, Nikolay V},
  journal={Physical Review A},
  volume={91},
  number={3},
  pages={032311},
  year={2015},
  publisher={APS}
}

@article{martinez2016compiling,
  title={Compiling quantum algorithms for architectures with multi-qubit gates},
  author={Martinez, Esteban A and Monz, Thomas and Nigg, Daniel and Schindler, Philipp and Blatt, Rainer},
  journal={New Journal of Physics},
  volume={18},
  number={6},
  pages={063029},
  year={2016},
  publisher={IOP Publishing}
}

@article{schutz2017universal,
  title={Universal quantum transducers based on surface acoustic waves},
  author={Sch{\"u}tz, Martin JA and Sch{\"u}tz, Martin JA},
  journal={Quantum Dots for Quantum Information Processing: Controlling and Exploiting the Quantum Dot Environment},
  pages={143--196},
  year={2017},
  publisher={Springer}
}

@article{brubaker2022optomechanical,
  title={Optomechanical ground-state cooling in a continuous and efficient electro-optic transducer},
  author={Brubaker, Benjamin M and Kindem, Jonathan M and Urmey, Maxwell D and Mittal, Sarang and Delaney, Robert D and Burns, Peter S and Vissers, Michael R and Lehnert, Konrad W and Regal, Cindy A},
  journal={Physical Review X},
  volume={12},
  number={2},
  pages={021062},
  year={2022},
  publisher={APS}
}

@article{mckenna2020cryogenic,
  title={Cryogenic microwave-to-optical conversion using a triply resonant lithium-niobate-on-sapphire transducer},
  author={McKenna, Timothy P and Witmer, Jeremy D and Patel, Rishi N and Jiang, Wentao and Van Laer, Rapha{\"e}l and Arrangoiz-Arriola, Patricio and Wollack, E Alex and Herrmann, Jason F and Safavi-Naeini, Amir H},
  journal={Optica},
  volume={7},
  number={12},
  pages={1737--1745},
  year={2020},
  publisher={Optica Publishing Group}
}

@article{zeuthen2020figures,
  title={Figures of merit for quantum transducers},
  author={Zeuthen, Emil and Schliesser, Albert and S{\o}rensen, Anders S and Taylor, Jacob M},
  journal={Quantum Science and Technology},
  volume={5},
  number={3},
  pages={034009},
  year={2020},
  publisher={IOP Publishing}
}

@misc{kaya2022cavity,
  author = {Ümit Kaya},
  publisher={LibreTexts},
  title = {{The Laser Optical Cavity}},
  howpublished = "\url{https://chem.libretexts.org/@go/page/13651}"
}

@misc{coeckegraphical,
  author = {Coecke, Bob and Duncan, Ross},
  title = {A graphical calculus for quantum observables},
  howpublished = "\url{https://zxcalculus.com/publications.html}",
  year = {2007}
}

@incollection{eczoo_steane,
title={\([[7,1,3]]\) Steane code},
booktitle={The Error Correction Zoo},
year={2022},
editor={Albert, Victor V. and Faist, Philippe},
url={https://errorcorrectionzoo.org/c/steane}
}

@incollection{eczoo_shor_nine,
title={\([[9,1,3]]\) Shor code},
booktitle={The Error Correction Zoo},
year={2022},
editor={Albert, Victor V. and Faist, Philippe},
url={https://errorcorrectionzoo.org/c/shor_nine}
}

@article{zhang2022loss,
  title={Loss-tolerant all-photonic quantum repeater with generalized Shor code},
  author={Zhang, Rui and Liu, Li-Zheng and Li, Zheng-Da and Fei, Yue-Yang and Yin, Xu-Fei and Li, Li and Liu, Nai-Le and Mao, Yingqiu and Chen, Yu-Ao and Pan, Jian-Wei},
  journal={Optica},
  volume={9},
  number={2},
  pages={152--158},
  year={2022},
  publisher={Optica Publishing Group}
}
\end{refsection}
		\appendix
	}
		
	\backmatter
	
	
\end{document}